\newcommand{\e}{\varepsilon}
\newcommand{\iy}{\infty}
\newcommand{\st}{\ : \ }
\renewcommand{\leq}{\leqslant}
\renewcommand{\geq}{\geqslant}
\newcommand{\gM}{\mathsf{M}}
\newcommand{\gL}{\mathsf{L}}
\newcommand{\gPos}{\mathsf{Pos}}
\renewcommand{\P}{\mathbf{P}}
\newcommand{\R}{\mathbf{R}}
\newcommand{\C}{C} % for consistency in our notation 
\newcommand{\CC}{\mathscr{C}}
\newcommand{\Id}{\mathrm{Id}}
\newcommand{\Herm}{\mathbb{H}}
\newcommand{\PSD}{\mathrm{PSD}}
\DeclareMathOperator{\tr}{Tr}
\DeclareMathOperator{\conv}{\mathrm{conv}}
\DeclareMathOperator{\area}{\mathrm{area}}
\DeclareMathOperator{\mathspan}{\mathrm{span}}
\DeclareMathOperator{\aff}{\mathrm{aff}}
\DeclareMathOperator{\inter}{int}
\newcommand{\tmin}{\varodot}
\newcommand{\tmax}{\varoast}
\newcommand{\texteq}[1]{\stackrel{\mathclap{\scriptsize \mbox{#1}}}{=}}
\newcommand{\textleq}[1]{\stackrel{\mathclap{\scriptsize \mbox{#1}}}{\leq}}
\newcommand{\textgeq}[1]{\stackrel{\mathclap{\scriptsize \mbox{#1}}}{\geq}}
\newcommand{\textsubseteq}[1]{\stackrel{\mathclap{\scriptsize \mbox{#1}}}{\subseteq}}
\newcommand\floor[1]{\left\lfloor#1\right\rfloor}
\newcommand{\sumno}{\sum\nolimits}
\newcommand{\erob}{E_{\mathrm{rob}}}
\DeclareMathOperator{\cone}{\mathrm{cone}}
\DeclareMathOperator{\relint}{relint}
\theoremstyle{plain}
\newtheorem{theorem}{Theorem}
\newtheorem{proposition}[theorem]{Proposition}
\newtheorem{lemma}[theorem]{Lemma}
\newtheorem{corollary}[theorem]{Corollary}
\newtheorem{result}[theorem]{Result}
\theoremstyle{definition}
\newtheorem{definition}{Definition}
\newtheorem{fact}[definition]{Fact}
\newtheorem*{note}{Note}
\newtheorem*{example}{Example}
\newcommand{\tcb}[1]{{\color{blue} #1}}
\newenvironment{manualthm}[1]{%
  \manualthminner
}{\endmanualthminner}
\newenvironment{manualprop}[1]{%
  \manualpropinner
}{\endmanualpropinner}
\newenvironment{manuallemma}[1]{%
  \manuallemmainner
}{\endmanuallemmainner}
\begin{document}

\author{Guillaume Aubrun}
\email{aubrun@math.univ-lyon1.fr}
\affiliation{Institut Camille Jordan, Universit\'e Claude Bernard Lyon 1, 43 boulevard du 11 novembre 1918, 69622 Villeurbanne CEDEX, France}

\author{Ludovico Lami}
\email{ludovico.lami@gmail.com}
\affiliation{School of Mathematical Sciences and Centre for the Mathematics and Theoretical Physics of Quantum Non-Equilibrium Systems, University of Nottingham, University Park, Nottingham NG7 2RD, United Kingdom}
\affiliation{Institute of Theoretical Physics and IQST, Universit\"{a}t Ulm, Albert-Einstein-Allee 11D-89069 Ulm, Germany}

\author{Carlos Palazuelos}
\email{cpalazue@ucm.es}
\affiliation{Departamento de An\'alisis Matem\'atico y Matem\'atica Aplicada, Universidad Complutense de Madrid, Plaza de Ciencias
s/n 28040 Madrid, Spain,}
\affiliation{Instituto de Ciencias Matem\'aticas, C/ Nicol\'as Cabrera, 13-15, 28049
Madrid, Spain}

\title{Universal entangleability of non-classical theories}

\begin{abstract}
Inspired by its fundamental importance in quantum mechanics, we define and study the notion of entanglement for abstract physical theories, investigating its profound connection with the concept of superposition. We adopt the formalism of general probabilistic theories (GPTs), encompassing all physical models whose predictive power obeys minimal requirements. Examples include classical theories, which do not exhibit superposition and whose state space has the shape of a simplex, quantum mechanics, as well as more exotic models such as Popescu--Rohrlich boxes. We call two GPTs entangleable if their composite admits either entangled states or entangled measurements, and conjecture that any two non-classical theories are in fact entangleable. We present substantial evidence towards this conjecture by proving it (1)~for the simplest case of $3$-dimensional theories; (2)~when the local state spaces are discrete, which covers foundationally relevant cases; (3)~when one of the local theories is quantum mechanics. Furthermore, (4)~we envision the existence of a quantitative relation between local non-classicality and global entangleability, explicitly describing it in the geometrically natural case where the local state spaces are centrally symmetric.
%\tcr{Approximately 150 words, unreferenced. Used: 174 words}
\end{abstract}

\maketitle

%Whole paper: up to 5000 words and 10 display items.

The discovery that physical systems can be entangled can be deemed one of the main scientific achievements of the past century. While entanglement emerges naturally as a mathematical by-product of the formalism of quantum mechanics, its status has been the subject of an intense debate, with Einstein famously seeing it as the cause of the `spooky action at a distance' that inexorably affects the theory~\cite{EPR}. It was not until the work of Bell~\cite{Bell} that entanglement was promoted from a mere manifestation of the alleged incompleteness of quantum mechanics to a fully-fledged physical phenomenon, whose ultimate consequences for the non-locality of physics~\cite{Brunner-review} are testable and turn out to be rooted in experimental evidence~\cite{Aspect}. We nowadays conceive and study it as a fundamental feature that separates quantum and classical theory~\cite{Horodecki-review}.
%on the one hand, and as the cause of experimentally verifiable and ultimately inescapably real effects on the other.

Yet comparing its theoretical and empirical status, we note a striking difference: while entanglement is regarded as a purely quantum phenomenon on the theoretical level, the fact that we see non-locality in experiments implies that it must characterise every successful future theory of Nature, thus suggesting that its conceptual importance goes well beyond present-day quantum models. We however seem to lack a precise understanding of what it may even mean outside the well-studied quantum formalism. This is to be contrasted for instance with the satisfying model-independent definition that we have for non-locality~\cite{Brunner-review}.

In this paper we present a unified theory of \emph{universal entanglement} that treats it in a fully model-independent fashion. We look at a general definition of entangled state in a general bipartite physical system, arguing that any experimentally detectable non-local effect must come from some form of entanglement. We then proceed to study the connection between two seemingly different yet somehow intimately connected features of physical theories: their deviation from classicality\footnote{Here, a system is said to be \emph{classical} if there is a finite number of special pure states, and every other state can be thought of as resulting from a uniquely defined statistical ensemble of those pure states.} at the single-system level on the one hand, and their \emph{entangleability} at the level of bipartite systems on the other. Physicists have long sensed that these concepts may be related, as the example of quantum mechanics shows -- coherent superpositions of orthogonal pure states are the distinct signature of quantumness of a single system, and lead directly to quantum entanglement when constructed from product states. However, only in recent times this connection has been the subject of systematic investigation. Equipped with our rigorous theory of universal entanglement, we show that classicality and entangleability are directly related concepts at a foundational level. More precisely, we prove that under certain natural assumptions every system composed of a pair of non-classical models admits either entangled states or entangled measurements. Our results cover but are not limited to the cases of both state spaces being discrete, i.e.\ admitting only a finite number of pure states, or satisfying some regularity conditions. We conjecture that analogous statements would hold without these assumptions for the most general pair of non-classical theories. 
Our analysis shows that foundational questions such as those concerning the a priori role of entanglement in physical systems are -- rather surprisingly -- deeply rooted in convex geometry and functional analysis.

%\vspace{4ex}
%Words used: 577.

%\vspace{2ex}
%Up to around 600 words.

\section{Results}

\subsection{General probabilistic theories}

The question of what features a set of axioms should possess in order to be considered a fully-fledged physical theory has become particularly controversial with the rise of quantum theory. Without going too much into the philosophical aspects of the problem, a minimalist answer is as follows: \emph{a physical theory is a set of rules that allow to deduce a probabilistic prediction of the outcome of an experiment given the detailed description of its preparation.} Remarkably, it is possible to translate this general idea into rigorous axioms, from which a unified theoretical framework can be derived~\cite{Ludwig-1, LUDWIG, lamiatesi}.
The resulting formalism of \emph{general probabilistic theories} (GPTs) encompasses classical probability theory and quantum mechanics as special cases, but includes also a wealth of other models that may or may not be relevant for future physics. We now set out to describe briefly the setup, referring the reader to the many presentations available in the literature for further details~\cite{telep-in-GPT, Pfister-no-disturbance, lamiatesi}.

The basic ingredient is a \emph{state space}, i.e.\ a convex and compact subset $\Omega$ of some finite-dimensional real vector space. Throughout this paper we will always make the exquisitely technical assumption of finite dimension; extending the theory to infinite dimension does not require any conceptual modification yet makes it significantly more involved~\cite[Chapter~1]{lamiatesi}. States, that is, preparation procedures for a given physical system, are represented by points $\omega\in\Omega$. The convexity of $\Omega$ serves to model the existence of stochastic preparation procedures: flipping a coin with outcome probabilities $p$ (head) and $1-p$ (tail), preparing the system according to $\omega$ (head) or $\tau$ (tail), and subsequently forgetting the outcome of the coin should result in the system being in a state $p\omega + (1-p) \tau\in\Omega$.

For reasons that will become clear soon, it is useful to enlarge the vector space where $\Omega$ lives, increasing its dimension by one. The resulting vector space $V$ can be thought of as comprising all real multiples of physical states in $\Omega$, as depicted in Figure~\ref{cone with section}. The \emph{dimension} $d$ of the GPT is by definition the dimension of $V$ as a vector space, i.e.\ $d\coloneqq \dim V = \dim \Omega +1$.  The `normalising' functional $u$, also called \emph{order unit}, satisfies $u(\omega)\equiv 1$ for all $\omega\in \Omega$. The cone $C\coloneqq \{\lambda \omega:\, \lambda\geq 0,\, \omega\in \Omega\}$ is called the \emph{cone of (unnormalised) states}. We will usually assume that $C$ enjoys some basic properties that correspond to our intuitive notion of a well-behaved cone, which we signify by calling it \emph{proper} (see the Methods section for a rigorous definition).
Observe that $\Omega=C\cap u^{-1}(1)$. Mathematically, this gives $V$ the structure of an \emph{ordered vector space}: the ordering is defined for $x,y\in V$ by saying that $x\leq y$ if $y-x\in C$. Observe that, unlike that between real numbers, this ordering is not \emph{total}, i.e.\ it is possible that neither $x\leq y$ nor $y\leq x$.

A physical theory needs measurements in addition to states. The probability that a certain outcome of a given measurement occurs is a function of the state, called an \emph{effect} and denoted by $e:\Omega\to [0,1]$. It can be shown that in order to preserve our interpretation of stochastic state preparations $e$ must in fact be convex-linear, which can be expressed mathematically by requiring that $e(p\omega + (1-p) \tau) = p\,e(\omega) + (1-p) e(\tau)$ for all $\omega,\tau\in\Omega$ and all probabilities $p\in [0,1]$. It is then possible to make $e$ linear by extending its action from $\Omega$ to the whole $V$. Linear functionals acting on $V$ form themselves a vector space, called the \emph{dual} of $V$ and denoted with $V^*$. It is possible to make $V^*$ an ordered vector space in a natural way: for $f,g\in V^*$, we say that $f\leq g$ if $f(\omega)\leq g(\omega)$ for all $\omega\in \Omega$ (equivalently, for all $\omega\in C$). Positive functionals form again a cone, called the \emph{dual cone} to $C$ and denoted with $C^*$.
The operational requirement that each effect should produce a probability when evaluated on a physical state can then be very naturally rephrased as the two-fold inequality $0\leq e\leq u$, to be understood as holding with respect to the ordering of $V^*$. Hence, in the GPT formalism a \emph{measurement} is a (finite) collection of effects $(e_i)_{i\in I}$, where each $e_i\in V^*$ satisfies $e_i\geq 0$, and the completeness condition for the outcome probabilities further imposes that $\sum_{i\in I} e_i = u$. The probability of obtaining the outcome $i$ when measuring the state $\omega$ is given by $e_i(\omega)$.

It is a separate question, related to the physics of the system under consideration, whether \emph{any} collection of effects with the above properties may be implemented as an actual physical measurement. This assumption is usually called the \emph{no-restriction hypothesis}~\cite{no-restriction}. As classical and quantum mechanics both satisfy it, we will henceforth include it in our list of assumptions. In fact, it will play an important role in deriving the implications of our results for the foundations of physics.
With the no-restriction hypothesis, the list of rules to translate physical experiments into the mathematical formalism -- and vice versa -- and to make probabilistic predictions about their results is complete. Since all we need is the triple $A=(V,C,u)$, where $V$ is the host vector space, $C$ the cone of states, and $u$ the order unit, we will often identify GPTs with said triples.

\begin{figure}[htbp]
\begin{tikzpicture}[scale=1]
\draw[fill=gray!30,thick] (0,3) circle [x radius=1.5cm, y radius=0.45cm];
\coordinate (origin) at (0,0);
\filldraw (0,0) circle (2pt);
\node at (0,-0.33) {\large $0$};
%\draw[thick] (origin) -- (-1.493,3.95);
%\draw[thick] (origin) -- (1.493,3.95);
\draw[thick] (origin) -- (-1.491,2.95);
\draw[thick] (origin) -- (1.491,2.95);
\draw[thick,dotted] (1.491,2.95) -- (2.2365, 4.425);
\draw[thick,dotted] (-1.491,2.95) -- (-2.19177, 4.3365);
\draw (-4.2,2) -- (1.7,2) -- (4.2,4) -- (-1.7,4) -- (-4.2,2);
\node (Omega) at (0.7,3) {\large $\Omega$};
\node (C) at (1,0.9) {\large $C$};
\node (u) at (-2.4,2.5) {\large $u=1$};
\node (V) at (-3.5,3.5) {\large $V$};
\end{tikzpicture}
\caption{The basic ingredients of a GPT are a real finite-dimensional vector space $V$ and a cone $C$. The order unit functional $u$ defines a hyperplane $u^{-1}(1)$, whose intersection with $C$ identifies the state space $\Omega$.}
\label{cone with section}
\end{figure}
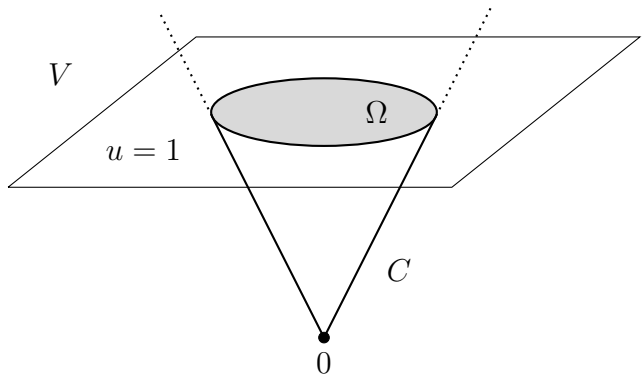

\begin{figure}[htbp]
\begin{tikzpicture}[scale=1]
\coordinate (origin) at (0,0);
\coordinate (u) at (0,4);
\filldraw (0,4) circle (2pt);
\filldraw (0,0) circle (2pt);
\node at (0,-0.33) {\large $0$};
\node at (0,4.33) {\large $u$};
\draw[fill=gray!30,thick] (0,2) circle [x radius=3cm, y radius=0.45cm];
\draw[thick] (origin) -- (2.945,1.913);
\draw[thick] (origin) -- (-2.945,1.913);
\draw[thick] (u) -- (2.945,2.087);
\draw[thick] (u) -- (-2.945,2.087);
\draw[thick,dotted] (2.945,1.913) -- (4.32915, 2.81211);
\draw[thick,dotted] (-2.945,1.913) -- (-4.32915, 2.81211);
\node at (0,3) {\large effects};
\node at (4,2) {\large $C^*$};
\node (V*) at (-3.5,3.5) {\large $V^*$};
\end{tikzpicture}
\caption{The dual space $V^*$, comprising the dual cone $C^*$, the order unit $u$, and the set of effects defined by the order interval $[0,u]=\{e\in V^*:\, 0\leq e\leq u\}$.}
\label{dual cone}
\end{figure}
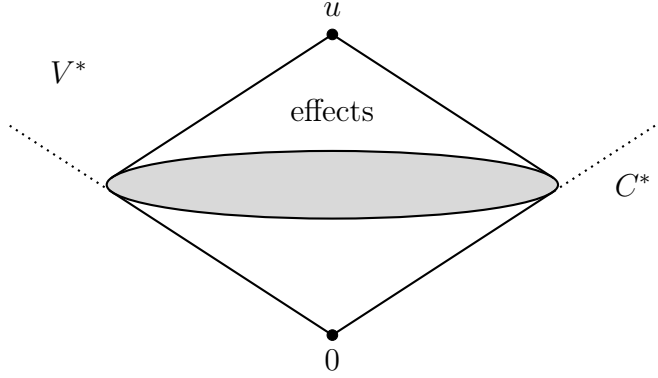

%\vspace{4ex}
%Words used: around 1039.

%\vspace{2ex}
%Up to around 1000 words.

\subsection{Universal definition of entanglement} \label{subsec universal entanglement}

Given that our GPT machinery is supposed to account for real physics, it is very natural to wonder what it tells us as far as multipartite systems are concerned. Namely, given two physical systems represented as GPTs $A=(V_1, C_1, u_1)$ and $B=(V_2, C_2, u_2)$, is there a way to represent also the joint system as a GPT $AB=\left(V_{12}, C_{12}, u_{12}\right)$?
Under some natural assumptions on physical composites, it can be shown that vector spaces and order units obey a simple tensor product rule~\cite{tensor-rule-1, tensor-rule-2}:
\begin{equation}
V_{12} = V_1\otimes V_2\, ,\quad u_{12}=u_1\otimes u_2\, .
\label{tensor rule space}
\end{equation}
The main hidden hypothesis that leads to Eq.~\eqref{tensor rule space} is the \emph{local tomography principle}, i.e.\ the assumption that any state of the joint system is completely determined by the statistics it yields under local measurements.

The status of the cone $C_{12}$ is far more delicate. Some natural constraints come from the requirement that convex combinations of product states be allowed as legitimate states of the joint system, and -- dually -- that local measurement be allowed as legitimate measurements on the joint system. These considerations lead to the two-fold bound
\begin{equation}
C_1 \tmin C_2 \subseteq C_{12} \subseteq C_1 \tmax C_2\, .
\label{double bound}
\end{equation}
Here, $C_1 \tmin C_2$ is called the \emph{minimal tensor product}, and contains product states and convex combinations thereof. The \emph{maximal tensor product} $C_1 \tmax C_2$, instead, includes all those tensors that are positive on product effects~\cite{Peressini-minmax, Hulanicki-minmax}. In formulae,
\begin{align}
C_1\tmin C_2 &\coloneqq \conv \left\{ x\otimes y:\ x\in C_1,\, y\in C_2\right\} , \label{C min} \\
C_1\tmax C_2 &\coloneqq \left\{ z\in V_1\otimes V_2:\ (f\otimes g)(z)\geq 0\ \forall\, f\in C_1^*,\, g\in C_2^*\right\} , \label{C max}
\end{align}
where $\conv$ denotes the convex hull.

All cones satisfying~\eqref{double bound} should be regarded as identifying a priori valid composition rules. As there appears to be no general and indisputable physical principle that is capable of singling out a special one among them, a sensible decision requires a deeper investigation of the physics of the system under examination. In some sense, a composite is more than the sum of its parts.

\begin{example}[Classical theories as GPTs] \label{example classical}
At this point it is instructive to discuss some notable examples of GPTs. A classical theory has by definition a fixed (finite) number $d$ of perfectly distinguishable configurations $\omega_1,\ldots, \omega_d$, and its state is described by the probabilities of it being in each of those configurations. This means that the state space $\Omega$ has the geometric shape of a \emph{simplex}, and that the corresponding cone $C$, called a \emph{classical cone}, has the form $C=\left\{\sumno_i \lambda_i \omega_i:\, \lambda_i\geq 0\ \forall\, i\right\}$, with $\{\omega_i\}_i$ being a vector basis of the space $V$. The order functional $u$ encodes the normalisation, and it acts as $u\left(\sum_i x_i \omega_i\right) = \sum_i x_i$.

Classical theories are special as far as composites are concerned. In fact, it can be shown that if $C$ is a classical cone, then for \emph{all} other cones $C'$ one has that $C \tmin C' = C \tmax C'$, so that the chain of inclusions in Eq.~\eqref{double bound} collapses and leads to no ambiguity.
\end{example}

\begin{example}[Quantum mechanics as a GPT] \label{example quantum}
An $n$-level quantum system can also be described by means of the GPT formalism. %We refer once again to Figure~\ref{cone with section} for a pictorial representation.
In this case, the host vector space $V$ is simply the real vector space $\Herm_n$ of $n\times n$ Hermitian matrices. Unnormalised states form the cone $\PSD_n$ of positive semidefinite matrices. Since physical states, a.k.a.\ density matrices, are obtained by further normalising the trace to $1$, we see that the order unit $u$ is nothing but the trace functional. We are thus left with the GPT $\mathrm{QM}_n \coloneqq \left( \Herm_n,\, \PSD_n,\, \tr\right)$.

The quantum composition rule is easily expressed in words: the bipartite system obtained from an $n$-level and an $m$-level quantum system is simply an $nm$-level quantum system. This definition is easily seen to satisfy Eq.~\eqref{tensor rule space}, so we move on to Eq.~\eqref{double bound}. States in the minimal tensor product are precisely those that do not exhibit \emph{entanglement}, a.k.a.\ \emph{separable states}~\cite{Werner}. At the opposite end of the spectrum, Hermitian operators in the maximal tensor product are known as \emph{entanglement witnesses}~\cite{HorodeckiPPT, Terhal2000}. As is well known, the cone $\PSD_{nm}$ of physical quantum states pertaining to the bipartite system is neither of the two: it includes entangled states that are not in $\PSD_n \tmin \PSD_m$, such as the maximally entangled state, yet it leaves out certain non-positive matrices that lie inside $\PSD_n\tmax \PSD_m$, e.g.\ the flip operator $F$~\cite{Werner}. Therefore, quantum theory demonstrates that a composition rule that makes both inclusions in Eq.~\eqref{double bound} strict may be the one prescribed by Nature.
\end{example}

By analogy with the quantum concept, for any two cones $C_1,C_2$ we call elements of $C_1\tmin C_2$ \emph{separable} and elements of $C_1 \tmax C_2$ that are not in $C_1\tmin C_2$ \emph{entangled}. Exactly as in quantum mechanics, separable states of a bipartite GPT can be prepared with local operations and shared randomness on separated systems. Observe that not all entangled elements of $C_1 \tmax C_2$ necessarily represent legitimate states in the GPT interpretation, as it appears from Eq.~\eqref{double bound}. However, if it holds that
\begin{equation}
C_1 \tmin C_2 \neq C_1 \tmax C_2\, ,
\label{min neq max}
\end{equation}
then \emph{the corresponding GPTs must exhibit entanglement, either at the level of states or at the level of measurements}. In fact, Eq.~\eqref{min neq max} implies that every physically allowed cone $C_{12}$ satisfying Eq.~\eqref{double bound} is such that either $C_{12}\supsetneq C_1\tmin C_2$, i.e.\ there are entangled states, or $C_{12}^*\supsetneq C_1^*\tmin C_2^*$, i.e.\ there are entangled measurements. A pair of cones $(C_1,C_2)$ is called \emph{entangleable} if it satisfies Eq.~\eqref{min neq max}. This notion of entangleability of physical theories as modelled by GPTs -- or, more generally, of entangleability of cones -- plays a central role in our work.

%\vspace{4ex}
%Words used: around 1208.

%\vspace{2ex}
%Up to around 1200 words.

\subsection{Main question and findings}

The fundamental question we investigate here concerns the connection between the notion of classicality defined in Example~\ref{example classical} and the above notion of entangleability. The nature of such a connection is apparent in the quantum mechanical formalism: the possibility of constructing superpositions of states at the single-system level, which is a manifestation of non-classicality, directly implies the existence of entangled states in bipartite systems. The problem we pose here is whether this implication is just an accident of quantum theory, or on the contrary it is a universal feature of the general logical rules governing composition of physical theories. Thanks to the discussion at the end of Subsection~\ref{subsec universal entanglement}, we can formulate the question in precise mathematical terms:
\begin{center} \it
Which pairs of general probabilistic theories are entangleable? Mathematically, can we characterise all entangleable pairs of proper cones?
\end{center}

The following easily established fact was mentioned above: \emph{if either $C_1$ or $C_2$ is a classical cone, then the pair $(C_1,C_2)$ is not entangleable}~\cite{NP}. This corresponds to the rather intuitive statement that classical systems cannot become entangled with any other system. A partial converse to this was proved long ago by Namioka and Phelps: \emph{if a cone $C_1$ is such that $(C_1,C_2)$ is not entangleable for every other cone $C_2$, then $C_1$ must be classical}~\cite{NP}. This latter result is conceptually important because it provides a partial answer to the above question. However, it does not allow us to conclude anything for a \emph{single} pair of theories, which is arguably the most significant case if one wants to establish universality of entanglement as a physical phenomenon. %\tcr{Carlos: I know we already discussed this, but I add a comment: I think we should\sect at least mention, in one way or in other, that this question has been posed in the context of convex geometry and it is an open question since ....} 
%Now taken care of

Our main results answer the above question for a wide class of GPTs that encompasses most physically reasonable models. We start by looking at the simplest case of all, to wit, that of two $3$-dimensional GPTs. Besides being interesting on its own, its solution will turn out to be critical to the understanding of more general cases. %\tcr{Carlos: Following my previous comment, we could at least mention that, beyond the physical interest, our result provides the first answer to ... or something like this.}

\begin{result} \label{3-dim result}
Let $(C_1,C_2)$ be two $3$-dimensional cones. Then $(C_1,C_2)$ is entangleable if and only if neither $C_1$ nor $C_2$ is classical.
\end{result}

It has been proposed that due to the allegedly discrete nature of space-time~\cite{Feynman1982}, physical state spaces may themselves be ultimately discrete~\cite{Bunyi2005}, meaning that the number of accessible pure states in a finite-dimensional system may be finite. This would offer some advantages on the interpretational side~\cite{Bunyi2006}, although it would require modification of the post-measurement collapse rule~\cite{Pfister-no-disturbance}. These speculations motivate us to answer the above question for the special case of discrete state spaces. From the mathematical standpoint, a state space hosting only a finite number of pure states is modelled by a convex set with only finitely many extreme points, i.e.\ a polytope. The corresponding cone of states will then be a \emph{polyhedral cone}. Our next result then reads as follows.

\begin{result} \label{polyhedral result}
Let $(C_1,C_2)$ be two proper polyhedral cones. Then $(C_1,C_2)$ is entangleable if and only if neither $C_1$ nor $C_2$ is classical.
\end{result}

We are also able to tackle the important special case of one of the two GPTs being quantum theory. This problem has been considered before in~\cite{Fritz2017, Huber-Netzer, Passer2018}, with an entirely different motivation. Our next result improves upon~\cite[Theorem~4.1]{Passer2018}, answering our main question in yet another case.

\begin{result} \label{semiquantum result}
Let $\PSD_n$ be the cone of $n\times n$ positive semidefinite matrices. For a proper cone $C$ in dimension $d$, the pair $(C, \PSD_n)$ is entangleable if and only if $C$ is not classical, provided that $\floor{\log_2 n} \geq \frac{d-1}{2}$.
\end{result}

From Results~\ref{3-dim result}--\ref{semiquantum result} it is apparent that local non-classicality is intimately connected with global entangleability, as we discuss more thoroughly below. So far, we have explored this connection in a fundamentally qualitative way. However, it is also possible to ask a quantitative version of our main question: \emph{given any two GPTs that are non-classical to some quantifiable extent, can we estimate their degree of entangleability, that is, the maximum possible entanglement exhibited by global states?}
Note that the answer will in general depend on the measures we employ to gauge the global entanglement and the local non-classicality. Although it is significantly more complex than its qualitative counterpart, a solution to this problem can nevertheless be found for all those theories -- called \emph{symmetric} -- whose state space is centrally symmetric with respect to some centre. Classical theories of dimension $d>2$ are automatically ruled out by this assumption, which makes the problem more tractable. However, in spite of its geometric appeal, central symmetry is perhaps not a natural requirement from a physical perspective, as e.g.\ no quantum system besides that of a single qubit has a symmetric state space. Yet, it is remarkable that a %\tcr{why almost? We could say complete solution (in some sense)}
complete solution can be found for such a general class of examples.

\begin{result} \label{symmetric result}
Given any pair of symmetric GPTs of dimensions $n+1,\, m+1\geq 3$, their maximal tensor product contains a state whose entanglement robustness~\cite{VidalTarrach} is at least $\erob (n,m) \geq (r(n,m)-1)/2$, where $r(n,m)$ is the universal function called `projective/injective ratio' and defined in~\cite[Eq.~(15)]{XOR}. In particular, $\erob(n,m)\geq 1/36$ for all $n,m\geq 2$,  and asymptotically $\erob (n,m)\geq c \min\{n,m\}^{1/8-o(1)}$ for some constant $c>0$. Hence, all pairs of symmetric GPTs are entangleable, with the maximal robustness of entanglement growing unboundedly with the minimum local dimension.
\end{result}

Our conceptual contributions extend far beyond providing an answer to our main question in several physically interesting cases, which marks in itself some tangible progress in a long-standing open problem. In fact, owing to their versatility and generality, the techniques we develop constitute per se a significant step forward, both conceptually and mathematically. These techniques include e.g.\ an innovative use of the order-theoretic concept of retract, which allows us to study constrained GPTs, a general framework to construct and detect general entangled states in a bipartite GPT, and a systematic connection with a recently developed functional-analytic theory of tensor norm ratios~\cite{XOR}.

%Another interesting property that a GPT may possess is \emph{self-duality}, which may be intuitively understood as the requirement that the cone of states $C$ and its corresponding dual cone $C^*$ generated by the effects have the same geometrical shape. Far from being only a mathematically desirable property, self-duality has a plethora of operational implications~\cite{telep-in-GPT, Barnum-steering} and is necessary for basic computational primitives to exist~\cite{Mueller2012}. It is thus natural to wonder whether two copies of a non-classical and self-dual system are entangleable. Our final result states that this is indeed the case.  \tcr{The following result is not new. We should say that explicitly. Otherwise, it seems that we are saying that we prove it for the first time...}

%\begin{result} \label{self-dual result} A pair of cones of the form $(C,C^*)$, where $C^*$ is the dual cone to $C$, is entangleable if and only if $C$ is not classical. \end{result}

%\vspace{4ex}
%Words used: around 1010.

%\vspace{2ex}
%Up to around 800 words.

\section{Discussion}

Our Results~\ref{3-dim result}--\ref{symmetric result} demonstrate that there is a profound connection between the notion of non-classicality and that of entanglement. While it was long known that the former is a necessary condition for the latter, we have proved that the two are actually equivalent for a large class of cases of immediate interest for the foundations of physics. We have shown that this connection goes far beyond quantum mechanics, and characterises instead all theories that can be modelled within the GPT formalism. Let us remark in passing that this type of model-independent approach to the study of operational features of physical theories has a long history~\cite{PR-boxes, Barnum-no-broad, Barrett-original, Barnum2008, telep-in-GPT, ultimate, Sikora2017}.
For the case study of symmetric GPTs, we have been able to make the aforementioned connection quantitative. What our results suggest is that, in a bipartite system whose local components exhibit some non-classical behaviour, entanglement of states or measurements becomes logically unavoidable. We can conjecture that this is a fully general behaviour: \emph{all pairs of non-classical GPTs may be entangleable.} 

The mathematical translation of this conjecture is that \emph{all pairs of non-classical proper cones may be entangleable.} Interestingly, this same problem was formulated long ago by Barker, with an entirely different and purely mathematical motivation, and has been open since~\cite{Barker1976, Barker-review}. We have provided the first convincing evidence that the above conjecture may be true in general, proving it in the first nontrivial case of dimension $3$ (Result~\ref{3-dim result}) and for polyhedral cones (Result~\ref{polyhedral result}). Remarkably, this question is already implicit in previous work by Namioka and Phelps~\cite{NP}, of which Barker seem to have been unaware. The same sort of problem was again rediscovered more recently, in a somewhat limited setting in which one of the two theories is set to be quantum mechanics~\cite{Fritz2017, Huber-Netzer, Passer2018}. There, the motivation is again entirely different, coming from operator system theory. Once reformulated in our language, the results from~\cite{Fritz2017, Huber-Netzer, Passer2018} state that: (a) for a given polyhedral cone $C$, the pair $(C, \PSD_n)$ is entangleable if and only if $C$ is non-classical; (b) for any cone $C$ in a $d$-dimensional space, provided that $\log_2 n\geq d-2$, it holds that $(C,\PSD_n)$ is entangleable if and only if $C$ is non-classical. Our techniques lead to a more direct proof of (a), as well as showing that (b) holds under the weaker condition $\floor{\log_2 n} \geq (d-1)/2$ (Result~\ref{semiquantum result}).

From the mathematical standpoint, our main question %these questions on the entangleability of pairs of cones
connects very naturally to the problem of evaluating the minimal constant of domination of the injective over the projective tensor norm for Banach spaces of fixed local dimensions~\cite{XOR}. In fact, given any finite-dimensional Banach space, we can construct a GPT by declaring its unit ball to be our state space~\cite[\S~2.3.3]{lamiatesi}. The entanglement robustness of certain bipartite states can then be expressed by means of the ratio between projective and injective tensor norm of the corresponding tensors. Consequently, the maximum entanglement robustness in a bipartite system is directly linked to the constant of domination of the latter over the former norm, which can be estimated using the techniques of~\cite{XOR} (Result~\ref{symmetric result}).

Result~\ref{symmetric result} is affected by the geometrically natural yet physically questionable restriction to symmetric models, and should therefore be regarded more as the starting point of a quantitative investigation of our main question.
Ultimately, we envision the existence of \emph{a general lower bound on the minimal amount of entanglement in the maximal tensor product of two GPTs in terms of their local non-classicality.} To prove such a statement one would need to construct: (i)~a suitable measure of non-classicality, i.e.\ a functional $\nu$ that assigns to every GPT $A=(V,C,u)$ a non-negative real number $\nu(A)$, in such a way that $\nu(A)=0$ if and only if $C$ is a classical cone; and (ii)~a general measure of entanglement for bipartite states in GPTs, i.e.\ a function $E$ that, given two local GPTs $A,B$, assigns a non-negative real number $E(\omega_{AB})$ to every state $\omega_{AB}$ in the state space $\Omega_{A\tmax B}$ corresponding to the maximal tensor product of the cones, in such a way that $E(\omega_{AB})=0$ if and only if $\omega_{AB}\in \Omega_{A\tmin B}$ is a separable state. Within this framework, a quantitative relation between local non-classicality and global entanglement would read
\begin{equation}
    \max_{\omega_{AB}\,\in\, \Omega_{A\tmax B}} E(\omega_{AB}) \geq F(\nu (A), \nu(B))\, ,
    \label{quantitative entangleability}
\end{equation}
where $F:\R_+\times \R_+\to \R_+$ is some universal function with the property that $F(x,y)=0$ only when either $x$ or $y$ equals $0$. In the statement of Result~\ref{symmetric result} we chose as $E$ the entanglement robustness~\cite{VidalTarrach, Takagi2019}. We believe that of all entanglement measures constructed in the field of quantum information~\cite{HAYASHI}, the entanglement robustness stands out as a natural candidate to appear in Eq.~\eqref{quantitative entangleability}, as it relies only on the convex character of the theory, and as such it carries over swiftly to the GPT formalism. On the contrary, we do not yet have such a clear ansatz for the non-classicality measure $\nu$.

%\tcr{By Ludo: do we want to add some suggestions on how to construct non-classicality measures e.g.\ based on Banach--Mazur distances? See also below, where I talk about the incompatibility degree.}

The problem we study here admits many possible variations. For instance, a stronger question to ask would be whether in any pair of non-classical GPTs one can violate a Bell inequality~\cite[Definition~2.14]{lamiatesi}. Since Bell inequalities can be violated only by entangled states, this would immediately imply that the two GPTs are entangleable. Solving such a problem would lead us to the stronger conclusion that the entanglement exhibited by non-classical theories can also be experimentally accessed in the form of some non-local correlations, thus enabling device-independent information theory in GPTs~\cite{Brunner-review}. A quantitative answer to this question would translate to an inequality analogous to Eq.~\eqref{quantitative entangleability}, with a measure of non-locality such as the maximal violation of a CHSH-type inequality~\cite{CHSH} or other more general measures (see e.g.~\cite{Pal2014}) displacing the entanglement measure $E$. For some partial results in this direction, see e.g.~\cite[Theorem~2.39]{lamiatesi}, which builds upon previous works~\cite{Wolf-incompatible, Banik-2013, Jencova2017}. 

%In the same spirit, we could make one step forward and ask for a quantitative relation between violation of Bell inequalities by bipartite states and non-classicality of the local GPTs. This would lead to an inequality analogous to Eq.~\eqref{quantitative entangleability}, with a measure of non-locality such as the maximal violation of a CHSH-type inequality~\cite{CHSH} or other more general measures (see for instance~\cite{Pal2014}) displacing the entanglement measure $E$. In the simple case of one of the two local state spaces being shaped as a two-dimensional square~\cite{NP}, the validity of such a modified inequality has been reported in~\cite[Theorem~2.39]{lamiatesi}, building upon previous works~\cite{Wolf-incompatible, Banik-2013, Jencova2017}. In that setting, the non-classicality measure that is being used is the maximum `incompatibility degree' of binary measurements in the theory.

%\vspace{2ex}
%Words used: around 1097.

%\vspace{2ex}
%Up to around 1400 words.

\section{Methods}

%Up to 3000 words, more if necessary.

In this section we discuss the proof ideas of Results~\ref{3-dim result}--\ref{symmetric result}. For a complete presentation with all the technical details, we refer the reader to the Supplementary Information.

\subsection{Technical background}
As discussed above, a GPT is a triple $(V,C,u)$, where $V$ is a finite-dimensional real vector space, $C\subset V$ is a proper cone, and $u\in \inter(C^*)$ is a strictly positive functional on $C$. A subset $C\subseteq V$ is called a cone if $\lambda C=C$ for all $\lambda>0$; a cone is said to be salient if $C\cap (-C)=\{0\}$, generating if $C-C=\mathspan(C)=V$, and proper if it is convex, topologically closed, salient, and generating. A classical cone $C$ is one that is generated by a basis of $V$, i.e.\ $C=\left\{\sumno_i \lambda_i e_i:\, \lambda_i\geq 0\ \forall\, i\right\}$ for some basis $\{e_i\}_i$. Any proper cone $C\subset V$ can be declared to be the set of positive vectors of $V$ and thus induces an ordering on $V$. Positive functionals in the dual space $V^*$ form the dual cone $C^*$. If $C$ is proper, then $C^{**}=C$ modulo the identification $V^{**}=V$.

Two cones $C_1,C_2$ can be combined according to either the minimal or the maximal tensor product, defined in Eq.~\eqref{C min} and~\eqref{C max}, respectively. Observe that $C_1\tmin C_2\subseteq C_1\tmax C_2$, because products of positive functionals take on positive values when evaluated on products of positive vectors. Remember that we call the pair $(C_1,C_2)$ entangleable if this inclusion is strict. The following easily verified and well-known fact~\cite{Barker1976, Barker-review} is a cornerstone of our intuition concerning these products, so we present a proof for the benefit of the reader.

\begin{lemma} \label{lemma:easy-direction}
Let $C_1,C_2$ be proper cones, at least one of which is classical. Then $(C_1,C_2)$ is not entangleable.
\end{lemma}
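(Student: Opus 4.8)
The plan is to show that when one cone, say $C_1$, is classical, then $C_1 \tmin C_2 = C_1 \tmax C_2$ for any proper cone $C_2$, which immediately gives non-entangleability. Since $C_1$ is classical, write $C_1 = \left\{\sumno_i \lambda_i e_i : \lambda_i \geq 0\right\}$ for a basis $\{e_i\}_{i=1}^{d_1}$ of $V_1$. Let $\{e_i^*\}_i$ denote the dual basis of $V_1^*$; the key observation is that the dual cone $C_1^*$ is then generated precisely by the functionals $e_i^*$, i.e.\ $C_1^*$ is itself classical (with respect to the dual basis). First I would record this fact, and note that consequently $e_i^* \in C_1^*$ for each $i$.

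Next I would take an arbitrary $z \in C_1 \tmax C_2 \subseteq V_1 \otimes V_2$ and expand it in the basis $\{e_i\}$ of the first factor: there exist unique $y_i \in V_2$ with $z = \sum_i e_i \otimes y_i$. The goal is to prove each $y_i \in C_2$, for then $z = \sum_i e_i \otimes y_i$ is manifestly a nonnegative combination of product vectors $e_i \otimes y_i$ with $e_i \in C_1$, $y_i \in C_2$, hence $z \in C_1 \tmin C_2$. To see $y_i \in C_2$, fix an index $i$ and an arbitrary $g \in C_2^*$; apply the product functional $e_i^* \otimes g \in C_1^* \otimes C_2^*$ to $z$. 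Since $z \in C_1 \tmax C_2$ and $e_i^* \in C_1^*$, $g \in C_2^*$, the definition~\eqref{C max} gives $(e_i^* \otimes g)(z) \geq 0$. On the other hand, by bilinearity and $e_i^*(e_j) = \delta_{ij}$, we get $(e_i^* \otimes g)(z) = \sum_j e_i^*(e_j)\, g(y_j) = g(y_i)$. Hence $g(y_i) \geq 0$ for all $g \in C_2^*$, which by $C_2^{**} = C_2$ (using that $C_2$ is proper) forces $y_i \in C_2$.

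Combining the two steps yields $C_1 \tmax C_2 \subseteq C_1 \tmin C_2$; the reverse inclusion always holds, so the two coincide and $(C_1, C_2)$ is not entangleable. The case where $C_2$ rather than $C_1$ is classical follows by the obvious symmetry (or by applying the flip $V_1 \otimes V_2 \to V_2 \otimes V_1$, which swaps the two tensor products). I expect the only mildly delicate point to be the identification of $C_1^*$ as a classical cone generated by the dual basis; this is standard but worth stating cleanly, since the whole argument hinges on having the product functionals $e_i^* \otimes g$ available to "read off" the components $y_i$. Everything else is a short bilinear computation plus bipolarity, so there is no real obstacle.
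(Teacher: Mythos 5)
Your proposal is correct and follows essentially the same route as the paper's proof: decompose $z\in C_1\tmax C_2$ along the basis generating the classical cone, test against $e_i^*\otimes g$ with $g\in C_2^*$ (noting $e_i^*\in C_1^*$), and invoke $C_2^{**}=C_2$ to conclude each component lies in $C_2$. The only cosmetic difference is that you explicitly record that $C_1^*$ is the classical cone on the dual basis, whereas the paper only needs (and implicitly uses) the weaker fact $e_i^*\in C_1^*$.
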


\begin{proof}
Assume that $C_1$ is generated by a basis $\{e_i\}_i$ of $V_1$, and consider the dual basis $\{e_i^*\}_i$ of $V_1^*$, which satisfies $e_i^*(e_j)=\delta_{i,j}$. Decompose an arbitrary $z \in C_1 \tmax C_2$ as $z = \sumno_i e_i \otimes x_i$, where $x_i\in V_2$. By definition of maximal tensor product, for every $f \in C_2^*$ we have that $0 \leq (e_i^* \otimes f)(z) = f(x_i)$ for all $i$. This shows that $x_i \in C_2^{**} = C_2$, where we used the fact that $C_2$ is proper. Hence, $z \in C_1 \tmin C_2$, and consequently $C_1\tmax C_2 = C_1 \tmin C_2$.
\end{proof}

\subsection{Proof of Result~\ref{3-dim result}}

Here we shall prove that any pair $(C_1,C_2)$ of non-classical $3$-dimensional proper cones is entangleable. Our strategy can be summarised as follows.
\begin{enumerate}[(i)]
\item We will apply linear isomorphisms $\Phi_i$ to bring both cones $C_i$ into `standard' forms, for which we can find simpler cones $C_i', C_i''$ such that $C_i'\subseteq \Phi_i(C_i)\subseteq C_i''$. Note that $(C_1,C_2)$ is entangleable if and only if $(\Phi(C_1),\Phi(C_2))$ is such.
\item We will then lower bound $C_1\tmax C_2\supseteq C_1'\tmax C_2'$, and upper bound $C_1\tmin C_2\subseteq C_1''\tmin C_2''$. Assuming by contradiction that $C_1\tmax C_2\subseteq C_1\tmin C_2$, it follows that $C_1'\tmax C_2'\subseteq C_1''\tmin C_2''$.
\item However, using the relatively simple structure of $C_i',C_i''$, we will explicitly show that $C_1'\tmax C_2'\subsetneq C_1''\tmin C_2''$.
\end{enumerate}

A natural way to construct a cone is through one of its sections. Namely, given a convex set $K\subseteq V$, let us define the cone
\begin{equation} \label{eq:cone-over-K}
\CC(K) = \{ (tx,t) \st x \in K, \ t \in \R_+ \} \subseteq V\times \R\, .
\end{equation}
One can verify that $\CC(K)$ is a proper cone if and only if $K\subset V$ is a convex body (compact convex set with non-empty interior), and that it is non-classical if and only if $K$ is not a simplex. Moreover, every proper cone in dimension $d$ is linearly isomorphic to $\CC(K)$ for some $(d-1)$-dimensional convex body $K$.\footnote{We already knew this from the GPT setting: we used proper cones in the definition of a GPT precisely because they admit suitable sections -- namely, state spaces.}
In our case, to generate $3$-dimensional cones we need to look at $2$-dimensional convex bodies $K$, which allows us to use good old planar geometry to tackle the problem. We start by defining two special convex sets: the \emph{kite} with centre $(a,b)$ (where $-1<a,b<1$) is constructed as 
\begin{equation}
    T_{a,b} \coloneqq \conv \{ (a,\pm 1), (\pm 1,b) \};
    \label{kite}
\end{equation}
the \emph{blunt square}, instead, is simply the unit square without its corners: 
\begin{equation}
    S = [-1,1]^2 \setminus \{-1,1\}^2.
    \label{blunt square}
\end{equation}
For a pictorial representation of these two sets, see Figure~\ref{kite-blunt-square}.

\begin{figure}[htbp] \begin{center}
\begin{tikzpicture}[scale=2]
	\coordinate (a0) at (1,1);
	\coordinate (a1) at (0.95,1);
	\coordinate (a2) at (1,0.95);
	\coordinate (b0) at (1,-1);
	\coordinate (b1) at (0.95,-1);
	\coordinate (b2) at (1,-0.95);
	\coordinate (c0) at (-1,-1);
	\coordinate (c1) at (-0.95,-1);
	\coordinate (c2) at (-1,-0.95);
	\coordinate (d0) at (-1,1) ;
	\coordinate (d1) at (-0.95,1) ;
	\coordinate (d2) at (-1,0.95);
	\coordinate (A) at (0.6,1);
	\coordinate (B) at (1,0.2);
	\coordinate (C) at (0.6,-1);
	\coordinate (D) at (-1,0.2) ;
	\draw (A)--(B)--(C)--(D)--(A);
	\draw (a1)--(d1);
	\draw (a2)--(b2);
	\draw (b1)--(c1);
	\draw (c2)--(d2);
	\draw (A) node {$\bullet$};
	\draw (A) node[above] {$(a,1)$};
    \draw (B) node {$\bullet$};
	\draw (B) node[right] {$(1,b)$};
    \draw (C) node {$\bullet$};
	\draw (C) node[below] {$(a,-1)$};
    \draw (D) node {$\bullet$};
	\draw (D) node[left] {$(-1,b)$};
	\draw (a0) node[right] {$(1,1)$};
	\draw (b0) node[right] {$(1,-1)$};
	\draw (c0) node[left] {$(-1,-1)$};
	\draw (d0) node[left] {$(-1,1)$};
	\draw (0.2,0.4) node[left] {\large $T_{a,b}$};
	\draw (-0.3,0.75) node[left] {\large $S$};
    \end{tikzpicture} \end{center}
\caption{The kite $T_{a,b}$ and the blunt square $S$, defined in Eq.~\eqref{kite} and~\eqref{blunt square}, respectively.} \label{kite-blunt-square}
\end{figure}
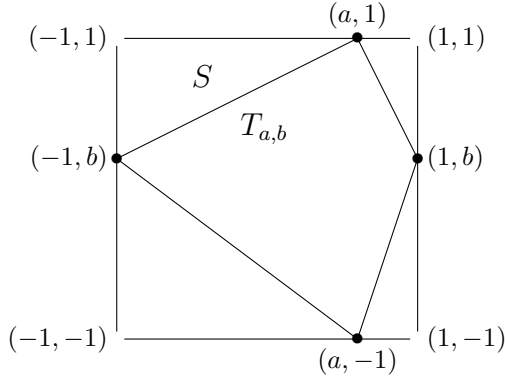

The reason why we are interested in kites and blunt squares is that, apart from triangles, any $2$-dimensional convex set can be inscribed between one and the other by the application of a suitable linear isomorphism. This analogue of Auerbach's lemma~\cite[Vol~I, \S~1.c.3]{LINDENSTRAUSS} for $2$-dimensional convex bodies can be formalised as follows.

\begin{proposition} \label{proposition:auerbach}
Let $V$ be a $3$-dimensional vector space, and $C \subset V$ a proper cone which is not classical. 
There exist $(a,b) \in (-1,1)^2$ and a linear bijection $\Phi : V \to \R^2 \times \R$ such that
\[ \CC(T_{a,b}) \subseteq \Phi(C) \subseteq \CC(S) .\]
\end{proposition}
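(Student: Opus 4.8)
The plan is to transfer the statement to planar convex geometry and then build $\Phi$ from a maximal-area inscribed quadrilateral of the state space, in the spirit of Auerbach's lemma. As recalled above, every $3$-dimensional proper cone is linearly isomorphic to $\CC(K)$ for a planar convex body $K$, and $\CC(K)$ fails to be classical exactly when $K$ is not a triangle; moreover a planar affine bijection $\Psi$ lifts to a linear bijection of $\R^2\times\R$ carrying $\CC(X)$ onto $\CC(\Psi(X))$, and $\CC$ is monotone for inclusion. Hence it suffices to show: \emph{if $K\subset\R^2$ is a convex body that is not a triangle, then there is an affine bijection $\Psi$ of $\R^2$ and $(a,b)\in(-1,1)^2$ with $T_{a,b}\subseteq\Psi(K)\subseteq S$.} Since $S=[-1,1]^2\setminus\{-1,1\}^2$ and $\Psi(K)$ is compact, it is in turn enough to arrange that $\Psi(K)\subseteq[-1,1]^2$, that $\Psi(K)$ contains a vertical segment and a horizontal segment each of length $2$ (say at $x=a$ and $y=b$), and that $\Psi(K)$ avoids the four corners of the square: the four points $(a,\pm1),(\pm1,b)$ then lie in $\Psi(K)$, so $T_{a,b}=\conv\{(a,\pm1),(\pm1,b)\}\subseteq\Psi(K)\subseteq S$, and $a,b\in(-1,1)$ precisely because the corners are missed.

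For the construction, let $Q$ be an inscribed quadrilateral of $K$ of maximal area; it exists by compactness, and since $K$ is not a triangle its area strictly exceeds that of the largest inscribed triangle, so $Q$ is a genuine (non-degenerate) convex quadrilateral. Label its vertices $v_1,v_2,v_3,v_4$ cyclically and let $d_1=\overline{v_1v_3}$, $d_2=\overline{v_2v_4}$ be the diagonals. Holding $v_2,v_3,v_4$ fixed, $\tfrac12\det(v_3-v_1,\,v_4-v_2)$ is an affine function of $v_1$, so the area of $v_1v_2v_3v_4$ is half its absolute value; a standard first-order analysis of $\lvert\,\cdot\,\rvert$ restricted to $\partial K$ then forces $v_1$ to be extreme in a direction orthogonal to $d_2$ or to lie on an edge of $K$ parallel to $d_2$ (the vanishing case being excluded as $Q$ has positive area), so in all cases $K$ has a supporting line $L_1$ at $v_1$ parallel to $d_2$. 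Arguing likewise at the other vertices gives supporting lines $L_3\parallel d_2$ at $v_3$ and $L_2,L_4\parallel d_1$ at $v_2,v_4$. Since the diagonals of a convex quadrilateral cross, $d_1\not\parallel d_2$; hence $L_1\neq L_3$, $L_2\neq L_4$ and $L_1\not\parallel L_2$, so $\{L_1,L_3\}$ and $\{L_2,L_4\}$ are two transversal pairs of parallel supporting lines bounding a parallelogram $P\supseteq K$ touched by $K$ at $v_1,\dots,v_4$. Let $\Psi$ be the affine bijection taking $P$ onto $[-1,1]^2$ with $L_1,L_3\mapsto\{y=\pm1\}$ and $L_2,L_4\mapsto\{x=\pm1\}$. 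Then $\Psi(K)\subseteq[-1,1]^2$; since $d_1\parallel L_2,L_4$ while $v_1\in L_1$, $v_3\in L_3$, the segment $\Psi(\overline{v_1v_3})$ is vertical with endpoints on $y=1$ and $y=-1$ (a vertical chord of length $2$), and similarly $\Psi(\overline{v_2v_4})$ is a horizontal chord of length $2$.

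The remaining, and main, point is that $\Psi(K)$ avoids the four corners, i.e.\ that $K$ contains none of the points $L_i\cap L_j$ with $i\in\{1,3\}$, $j\in\{2,4\}$. Suppose for contradiction that, say, $V:=L_1\cap L_2\in K$. Then $V$ lies in the faces $L_1\cap K$ and $L_2\cap K$; neither can be a single point, since $L_1\cap K=\{v_1\}$ would give $v_1=V\in L_2$, forcing $L_2=\overline{v_1v_3}$ and hence $v_1,v_2,v_3$ collinear (and symmetrically for $L_2\cap K=\{v_2\}$). So $L_1\cap K$ and $L_2\cap K$ are edges of $K$, parallel to $d_2$ and $d_1$ respectively, meeting at the vertex $V$, with $v_1$ on the first and $v_2$ on the second. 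Feeding this configuration back into the maximality of $\area(Q)$ — by exhibiting an inscribed quadrilateral of strictly larger area, which is possible unless $K$ is a triangle — gives the contradiction. I expect this to be the hard part: it is where the hypothesis that $K$ is not a triangle (i.e.\ that $C$ is non-classical) is genuinely used, and it needs care, since when $K$ admits several maximal inscribed quadrilaterals one may have to select $Q$ appropriately so that the corners are avoided, the obstruction to being able to do so again being precisely that $K$ is a triangle. Once this is settled, the reduction of the first paragraph yields $T_{a,b}\subseteq\Psi(K)\subseteq S$ with $(a,b)\in(-1,1)^2$, and pulling back through the linear isomorphism $C\cong\CC(K)$ completes the proof.
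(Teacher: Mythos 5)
Your reduction to the planar statement and the first half of the construction --- a maximal-area inscribed quadrilateral, supporting lines at its vertices parallel to the diagonals, and an affine normalisation of the resulting parallelogram onto $[-1,1]^2$ --- is exactly the route taken in the paper, and up to the inclusions $T_{a,b}\subseteq\Psi(K)\subseteq[-1,1]^2$ your argument is complete. The genuine gap is the corner-avoidance step, which you yourself flag as the hard part and leave open; moreover the contradiction you sketch there fails as stated. For an \emph{arbitrary} maximal-area quadrilateral $Q$, a corner of the bounding parallelogram can lie in $K$ without any inscribed quadrilateral of strictly larger area existing, even when $K$ is far from a triangle. Concretely, let $K=\conv\{(0,1),(1,1),(1,0),(0,-1),(-1,0)\}$ and $Q=\conv\{(0,\pm 1),(\pm 1,0)\}$. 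One checks (the maximum of the hull-area over each vertex separately is attained at extreme points of $K$, so it suffices to test quadrilaterals with vertices among the five vertices of the pentagon) that $Q$ has maximal area $2$, the associated parallelogram is $[-1,1]^2$, and the corner $(1,1)$ belongs to $K$; yet replacing the vertex $(0,1)$ by $(1,1)$, or by any point of the top edge, only produces further quadrilaterals of the \emph{same} area, so maximality of $Q$ alone gives no contradiction. Note also that the non-triangle hypothesis is not what saves this step: it is used only to guarantee that the maximal quadrilateral is non-degenerate.

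What is missing is a selection rule among the (generally many) maximal quadrilaterals, together with an argument that exploits it. The paper's choice is to take, among all quadrilaterals of maximal area inscribed in $K$, one for which as few vertices as possible are extreme points of $K$. In the normalised picture with $A=(a,1)$, $B=(1,b)$, $C=(a,-1)$, $D=(-1,b)$ and a corner $E=(1,1)\in K$, every point $X$ of the segment $[A,E]$ satisfies $\area(XBCD)=\area(ABCD)$, while interior points of that segment are not extreme in $K$; the minimality of the selection therefore forces $A$ itself to be non-extreme, so there exists $F=(a',1)\in K$ with $a'<a$, and only then does one obtain the strict inequality $\area(FECD)>\area(AECD)=\area(ABCD)$, contradicting maximality. (In the pentagon example above, this rule discards the diamond and picks a maximizer with a vertex in the interior of the top edge, whose bounding parallelogram does avoid the corners.) Without this secondary selection and the two-step argument, the crucial half of the proposition remains unproved.
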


We refer to the Supplementary Information for a proof. By the discussion at the beginning of the section, it should be clear that a statement such as Proposition~\ref{proposition:auerbach} allows to focus our effort on the pairs of lower and upper bounds rather than on the original cones. The main technical contribution of this section completes the analysis by studying the properties of minimal and maximal tensor products of cones generated by kites and blunt squares.

\begin{proposition} \label{proposition:sticks-out}
Take four numbers $-1<a_1,a_2,b_1,b_2<1$. Then
\begin{equation}
\CC(T_{a_1,b_1}) \tmax \CC(T_{a_2,b_2}) \not\subseteq \CC(S) \tmin \CC(S).
\end{equation}
In other words, there exists $\omega \in \CC(T_{a_1,b_1}) \tmax \CC(T_{a_2,b_2})$ such that $\omega \not\in \CC(S) \tmin \CC(S)$.
\end{proposition}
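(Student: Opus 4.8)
The plan is to exhibit an explicit $\omega\in\CC(T_{a_1,b_1})\tmax\CC(T_{a_2,b_2})$ that fails to lie in $\CC(S)\tmin\CC(S)$. Throughout identify $V_1=V_2=\R^2\times\R$, so that tensors in $V_1\otimes V_2$ are $3\times 3$ matrices, and set $B\coloneqq[-1,1]^2$. The starting point is the elementary remark that each of the four vertices $(a,\pm1),(\pm1,b)$ of $T_{a,b}$ lies on an edge of $B$ but, as $-1<a,b<1$, is distinct from every corner of $B$; since a corner of $B$ is an extreme point it cannot lie in a convex hull of non-corner boundary points, so $T_{a,b}\subseteq S\subseteq B$ and hence $\CC(T_{a,b})\subseteq\CC(S)\subseteq\CC(B)$. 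Two facts then frame the argument. First, $\CC(T_{a,b})$ and $\CC(T_{a,b})^*$ are polyhedral with exactly four extreme rays each — the former spanned by the kite's vertices lifted to height $1$, the latter by one functional per edge of the kite, all explicit rational functions of $a,b$ — so $\omega\in\CC(T_1)\tmax\CC(T_2)$ amounts to sixteen scalar inequalities. Second, $S$ is dense in $B$, so $\CC(S)^*=\CC(B)^*$ and $\overline{\CC(S)\tmin\CC(S)}=\CC(B)\tmin\CC(B)$ (a closed polyhedral cone generated by the sixteen products $c_i\otimes c_j$ of corners of $B$); the point of using $S$ rather than $B$ is precisely that $\CC(S)\tmin\CC(S)$ omits elements that use the corner rays.

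Next I would construct $\omega$ as a ``twisted identity'' from $\CC(T_{a_2,b_2})^*$ to $\CC(T_{a_1,b_1})$. The four dual extreme rays $f_1,\dots,f_4$ of $\CC(T_{a_2,b_2})^*$ satisfy a unique (up to scale) linear dependence, and so do the four vertices $v_1,\dots,v_4$ of $\CC(T_{a_1,b_1})$; in each case the coefficients are products of the factors $1\pm a_i,\,1\pm b_i$ and — as always for the affine dependence among the four vertices of a convex quadrilateral, and dually among its four edge-functionals — they occur with two positive and two negative values in alternating cyclic order. Because the two sign patterns agree, I can match the positive $f_j$'s to the positive $v_k$'s and the negatives to the negatives, and choose positive weights $\lambda_j$ (determined up to one common scalar by compatibility of the two dependences) so that the linear map $\omega\colon V_2^*\to V_1$, $\omega(f_j)=\lambda_j v_{\sigma(j)}$, is well defined. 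Then $\omega$ sends every generator of $\CC(T_{a_2,b_2})^*$ into $\CC(T_{a_1,b_1})$, hence $\omega(\CC(T_{a_2,b_2})^*)\subseteq\CC(T_{a_1,b_1})$, i.e.\ $\omega\in\CC(T_{a_1,b_1})\tmax\CC(T_{a_2,b_2})$. (For $a_1=b_1=a_2=b_2=0$ the kites are the inscribed diamond of $B$, whose dual is $\CC(B)$, and $\omega$ is a $45^\circ$-plus-$90^\circ$ rotation taking the corners of $B$ to the vertices of the diamond.)

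The crux is to show $\omega\notin\CC(S)\tmin\CC(S)$. If for skewed kites $\omega$ already lies outside $\CC(B)\tmin\CC(B)$, one is done immediately by a separating functional; the interesting case (already the symmetric one) is when $\omega$ is square-separable, i.e.\ $\omega=\sum_{(i,j)\in\mathcal F}\mu_{ij}\,c_i\otimes c_j$ with all $\mu_{ij}>0$. Here I would argue via the minimal face of $\CC(B)\tmin\CC(B)$ containing $\omega$: it is $\cone\{c_i\otimes c_j:(i,j)\in\mathcal F'\}$ for a set $\mathcal F'\supseteq\mathcal F$ that can be pinned down using the single relation $c_1-c_2+c_3-c_4=0$, and the key combinatorial claim is that $\mathcal F'$, read as a bipartite graph on (corners)$\times$(corners), contains no $K_{2,2}$ — in the symmetric case it is a single $8$-cycle. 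Granting this: in any finite representation $\omega=\sum_k\lambda_k\,u_k\otimes v_k$ with $u_k,v_k\in\CC(S)$, each $u_k\otimes v_k$ lies in that face, and since $u_k\in\CC(S)$ is not a corner ray it is a positive combination of at least two corners of $B$ (and likewise $v_k$); expanding in corners, the corner-pairs so produced fill out a $K_{2,2}$ inside $\mathcal F'$, a contradiction. Hence $\omega\notin\CC(S)\tmin\CC(S)$, which is exactly the Proposition.

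The hard part is this last step carried out uniformly in the four parameters: writing down the square-separable decomposition of the skewed twisted identity and checking that the minimal face it determines never contains a $K_{2,2}$. The whole argument also genuinely uses the bluntness of $S$ — $\omega$ lies on the boundary of $\CC(B)\tmin\CC(B)$ and is in fact square-separable, so what rules out a representation over $\CC(S)$ is only the absence of the corner rays; the density/duality remarks of the first paragraph and the ``thin face'' of the third are what make that obstruction rigorous.
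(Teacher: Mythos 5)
Your first half is solid and is in substance the paper's own construction: the paper's explicit matrix $\Omega=M_{a_2,b_2}H^{-1}M_{a_1,b_1}$ is precisely a positive map carrying the extreme rays of one kite cone's dual onto the extreme rays of the other kite cone, so the membership $\omega\in\CC(T_{a_1,b_1})\tmax\CC(T_{a_2,b_2})$ is established by the same mechanism you describe (your sign-pattern/weight compatibility argument for the "twisted identity" is correct). The genuine gap is in the second half, and you flag it yourself: writing $B=[-1,1]^2$, everything hinges on the claim that the minimal face of $\CC(B)\tmin\CC(B)$ containing $\omega$ is spanned by a $K_{2,2}$-free set of corner pairs \emph{uniformly in $a_1,b_1,a_2,b_2$}. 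You verify this only in the symmetric case and explicitly defer the general case as "the hard part". Moreover, for skewed parameters your $\omega$ is not pinned down (the matching $\sigma$ and the weights $\lambda_j$ are left implicit, and different admissible choices give genuinely different tensors), you do not show that it lies in $\CC(B)\tmin\CC(B)$, and you do not identify its minimal face — so there is nothing concrete on which to run the combinatorial step. Granting the face claim, the $K_{2,2}$ deduction itself is fine (faces of the polyhedral cone $\CC(B)\tmin\CC(B)$ are conical hulls of the extreme rays they contain, each corner pair spans an extreme ray, and a non-corner point of $S$ needs at least two corners in any nonnegative corner decomposition), but the input claim is exactly the content of the proposition and is missing.

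The way the paper closes this — and the cleanest way to complete your argument — is a single explicit supporting functional: the CHSH form $\phi(a)=2a_{33}-(a_{11}+a_{12}+a_{21}-a_{22})$ is strictly positive on every nonzero element of $\CC(S)\tmin\CC(S)$ (the strictness is precisely the bluntness of $S$: a point of $S$ cannot have both coordinates of modulus $1$), while the chosen $\Omega$ satisfies the identity $\omega_{11}+\omega_{12}+\omega_{21}-\omega_{22}=2\omega_{33}$ for all four parameters. That one identity certifies $\omega\notin\CC(S)\tmin\CC(S)$ directly, and it would also deliver your face claim as a by-product, since the face of $\CC(B)\tmin\CC(B)$ cut out by $\phi$ is the $8$-cycle of corner pairs attaining CHSH value $2$. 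Without either such an identity for your normalisation of the twisted identity, or an independent proof of the $K_{2,2}$-free face claim for all parameters, your argument remains a programme rather than a proof.
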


The proof of Proposition~\ref{proposition:sticks-out} is constructive: we exhibit an explicit tensor $\omega\in \R^3\otimes \R^3$ and show that it belongs to $\CC(T_{a_1,b_1}) \tmax \CC(T_{a_2,b_2})$ but not to $ \CC(S) \tmin \CC(S)$. The former fact can be proved by a direct computation. For the latter, instead, we construct a Bell-type expression that is strictly less than $2$ on the whole $\CC(S) \tmin \CC(S)$, yet it evaluates precisely to $2$ on $\omega$. With these tools at hands, we are now in position to prove our first main result.

\begin{proof}[Proof of Result~\ref{3-dim result}]
Considering two $3$-dimensional non-classical proper cones $C_1$, $C_2$, we show by contradiction that the pair $(C_1,C_2)$ is entangleable. Up to the application of local linear isomorphisms on $C_1,C_2$, and using Proposition~\ref{proposition:auerbach}, we may assume that
\begin{equation}
\CC(T_{a_1,b_1}) \subseteq C_1 \subseteq \CC(S) \textnormal{ and } \CC(T_{a_2,b_2}) \subseteq C_2 \subseteq \CC(S)
\end{equation}
for some numbers $-1<a_1,a_2,b_1,b_2<1$. Since $\tmin$ and $\tmax$ are increasing operations with respect to set inclusion, it follows that
\begin{equation}
    \CC(T_{a_1,b_1}) \tmax \CC(T_{a_2,b_2}) \subseteq C_1 \tmax C_2 = C_1 \tmin C_2 \subseteq \CC(S) \tmin \CC(S),
\end{equation}
which contradicts the conclusion of Proposition~\ref{proposition:sticks-out}.
\end{proof}

\subsection{Proof of Result~\ref{polyhedral result}}

Throughout this section we will prove that all pairs of non-classical polyhedral cones are entangleable. We call a cone $C$ \emph{polyhedral} if there are finitely many vectors $\{v_i\}_i$ such that $C = \left\{\sumno_i \lambda_i v_i:\, \lambda_i\geq 0\ \forall\ i \right\}$. One of the main tools we employ here is the concept of retract. Given two vector spaces $V,V'$ ordered by proper cones $C,C'$, a linear map $\Phi:V\to V'$ is called \emph{positive} if $\Phi(C)\subseteq C'$. We then say that $C'$ is a \emph{retract} of $C$ if there are positive maps $\Phi:V\to V'$ and $\Psi:V'\to V$ such that $\Phi \circ \Psi = \Id_{V'}$. When this happens, $C'$ can be seen as a sub-cone of $C$ that is also the image of a positive projection. For example, we shall see that facets of polyhedral cones are always retracts. %For more intuition regarding this notion, we refer the reader to the Supplementary Information.

To appreciate the importance of retracts for the study of entangleability, we first need to familiarise ourselves with the transformation properties of minimal and maximal tensor products under local positive maps. Consider cones $C_i\subseteq V_i$ and positive maps $\Phi_i:V_i\to V'_i$ ($i=1,2$). Then
\begin{align}
    \left( \Phi_1\otimes \Phi_2\right) \left( C_1\tmin C_2 \right) &\subseteq \Phi_1(C_1) \tmin \Phi_2(C_2)\, , \label{positive action tmin} \\
    \left( \Phi_1\otimes \Phi_2\right) \left( C_1\tmax C_2 \right) &\subseteq \Phi_1(C_1) \tmax \Phi_2(C_2)\, . \label{positive action tmax}
\end{align}
The former inclusion can be verified directly by means of the decomposition of tensors in $C_1\tmin C_2$. As for the latter, take $z\in C_1\tmax C_2$ and a pair of functionals $f_i$ such that $f_i(\Phi_i(x_i))\geq 0$ for all $x_i\in C_i$. Using the concept of adjoint map,\footnote{The adjoint of a linear map $\Phi:V\to W$ is the linear map $\Phi^*:W^*\to V^*$, where $V^*,W^*$ are the dual spaces to $V,W$, uniquely defined by $(\Phi^*f)(x)\equiv f\left( \Phi(x)\right)$, for all $x\in V$ and $f\in W^*$.} we can express this condition as $\Phi_i^*(f_i)\in C_i^*$, where $C_i^*$ is the dual cone to $C_i$. Hence,
\begin{equation*}
    (f_1\otimes f_2)\left(\left( \Phi_1\otimes \Phi_2\right)(z)\right) = \left(\Phi_1^*(f_1)\otimes \Phi_2^*(f_2)\right)(z) \geq 0\, ,
\end{equation*}
which proves Eq.~\eqref{positive action tmax}.

\begin{proposition}[Entangleability from retracts] \label{proposition:retracts-nuclear}
For $i=1,2$, let $C_i\subset V_i$ be proper cones with retracts $C'_i\subset V'_i$. If $(C'_1,C'_2)$ is entangleable, then so is $(C_1,C_2)$.
\end{proposition}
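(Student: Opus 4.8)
The plan is to prove the statement directly, by transporting a witness of entangleability from the retracts back to the original cones. Fix positive maps $\Phi_i : V_i \to V_i'$ and $\Psi_i : V_i' \to V_i$ realising $C_i'$ as a retract of $C_i$, so that $\Phi_i \circ \Psi_i = \Id_{V_i'}$, $\Phi_i(C_i) \subseteq C_i'$ and $\Psi_i(C_i') \subseteq C_i$. Since $(C_1', C_2')$ is entangleable, choose $z' \in (C_1' \tmax C_2') \setminus (C_1' \tmin C_2')$. The candidate witness for $(C_1, C_2)$ will be $z \coloneqq (\Psi_1 \otimes \Psi_2)(z') \in V_1 \otimes V_2$; I will show $z \in (C_1 \tmax C_2) \setminus (C_1 \tmin C_2)$.

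First I would check that $z \in C_1 \tmax C_2$. Applying Eq.~\eqref{positive action tmax} to the positive maps $\Psi_i : (V_i', C_i') \to (V_i, C_i)$ gives $(\Psi_1 \otimes \Psi_2)(C_1' \tmax C_2') \subseteq \Psi_1(C_1') \tmax \Psi_2(C_2')$. Then, because $\Psi_i(C_i') \subseteq C_i$, monotonicity of the maximal tensor product under inclusion of cones yields $\Psi_1(C_1') \tmax \Psi_2(C_2') \subseteq C_1 \tmax C_2$; this monotonicity is immediate from the dual description~\eqref{C max}, since a smaller cone has a larger dual cone and hence imposes a more restrictive positivity condition. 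Chaining the two inclusions gives $z \in C_1 \tmax C_2$.

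Next I would show $z \notin C_1 \tmin C_2$ by contradiction. Suppose $z \in C_1 \tmin C_2$. Applying $\Phi_1 \otimes \Phi_2$ and using Eq.~\eqref{positive action tmin} together with $\Phi_i(C_i) \subseteq C_i'$ and the (obvious, from~\eqref{C min}) monotonicity of $\tmin$, we get $(\Phi_1 \otimes \Phi_2)(z) \in \Phi_1(C_1) \tmin \Phi_2(C_2) \subseteq C_1' \tmin C_2'$. But $(\Phi_1 \otimes \Phi_2)(z) = \big((\Phi_1 \circ \Psi_1) \otimes (\Phi_2 \circ \Psi_2)\big)(z') = (\Id_{V_1'} \otimes \Id_{V_2'})(z') = z'$, so $z' \in C_1' \tmin C_2'$, contradicting the choice of $z'$. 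Hence $z \in (C_1 \tmax C_2) \setminus (C_1 \tmin C_2)$, and $(C_1, C_2)$ is entangleable.

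The argument is essentially a short diagram chase through the functoriality of $\tmin$ and $\tmax$, so I do not expect a genuine obstacle; the only points needing care are getting the direction of each inclusion right and noting that the image cones $\Phi_i(C_i)$, $\Psi_i(C_i')$ need not be closed (hence not proper) — but neither Eqs.~\eqref{positive action tmin}--\eqref{positive action tmax} nor the two monotonicity statements require closedness, the $\tmax$-monotonicity resting only on the set-theoretic implication $D \subseteq C \Rightarrow D^* \supseteq C^*$.
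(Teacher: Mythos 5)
Your proof is correct and is essentially the paper's argument run in the contrapositive direction: the paper assumes $(C_1,C_2)$ is not entangleable and chases the chain $C'_1\tmax C'_2 \subseteq (\Phi_1\otimes\Phi_2)(C_1\tmax C_2) = (\Phi_1\otimes\Phi_2)(C_1\tmin C_2) \subseteq C'_1\tmin C'_2$ using exactly the same ingredients (Eqs.~\eqref{positive action tmin}--\eqref{positive action tmax}, monotonicity of $\tmin$ and $\tmax$, and $\Phi_i\circ\Psi_i=\Id$). Transporting an explicit witness $z'=(\Phi_1\otimes\Phi_2)(z)$ rather than whole cones is a cosmetic repackaging, though it has the minor virtue of exhibiting a concrete entangled element of $C_1\tmax C_2$; your closing remark that closedness of the image cones is never needed is also accurate.
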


\begin{proof}
Let us denote by $\Phi_i:V_i\to V'_i$ and $\Psi_i:V'_i\to V_i$ the linear maps associated to the corresponding retracts. Assume that $(C_1,C_2)$ is not entangleable, so that $C_1\tmin C_2=C_1\tmax C_2$. Then
\begin{align*}
    C'_1\tmax C'_2 &= \left((\Phi_1\circ \Psi_1) \otimes (\Phi_2\circ \Psi_2)\right) \left( C'_1\tmax C'_2\right) \\
    &\textsubseteq{(i)} (\Phi_1 \otimes \Phi_2) \left( \Psi_1(C'_1) \tmax \Psi_2(C'_2) \right) \\
    &\textsubseteq{(ii)} (\Phi_1 \otimes \Phi_2) \left( C_1 \tmax C_2 \right) \\
    &\texteq{(iii)} (\Phi_1 \otimes \Phi_2) \left( C_1 \tmin C_2 \right) \\
    &\textsubseteq{(iv)} \Phi_1(C_1) \tmin \Phi_2(C_2) \\
    &\textsubseteq{(v)} C'_1\tmin C'_2\, .
\end{align*}
Note that (i) comes from Eq.~\eqref{positive action tmax}, (ii) from the positivity of $\Psi_i$, (iii) from the unentangleability of $(C_1,C_2)$, (iv) from Eq.~\eqref{positive action tmin}, and finally (v) from the positivity of $\Phi_i$. Since we have shown that $C'_1\tmax C'_2\subseteq C'_1\tmin C'_2$ and the opposite inclusion is trivial, we conclude that $(C'_1,C'_2)$ is not entangleable.
\end{proof}

A possible strategy for demonstrating the entangleability of a pair of cones is then as follows: if we are able to exhibit two local retracts that are entangleable, then Proposition~\ref{proposition:retracts-nuclear} guarantees that so were the original cones.
In the case of polyhedral cones, the job of finding retracts is facilitated by the following lemma.

\begin{lemma} \label{lemma:facet-retract}
Let $F$ be a facet of a proper polyhedral cone $C$. Then $F$ is a retract of $C$.
\end{lemma}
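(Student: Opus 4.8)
The plan is to exhibit $F$ as the image of a positive projection of $C$ whose kernel is one-dimensional. Since $F$ is a facet of the $d$-dimensional proper cone $C$, there is a functional $\phi\in C^*\setminus\{0\}$ with $F=C\cap\ker\phi$; as $\dim F=d-1$, the hyperplane $H\coloneqq\ker\phi$ equals $\mathspan(F)$, and inside $H$ the cone $F$ is again proper (it is closed, salient, and generating in $H$ because it spans $H$). The retraction data will be the inclusion $\Psi\colon H\to V$, which is positive because $F\subseteq C$, together with a linear projection $\Phi\colon V\to H$ onto $H$ along a line $\R v$ with $\phi(v)=1$, i.e.\ $\Phi(x)=x-\phi(x)v$. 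For any such $v$ one checks immediately that $\Phi$ takes values in $H$, that $\Phi(x)=x$ whenever $x\in H$, and hence that $\Phi\circ\Psi=\Id_H$. The entire content of the lemma is therefore to choose $v$ so that $\Phi$ becomes positive, that is, so that $\Phi(C)\subseteq F$.

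To do this, write $C=\cone\{r_1,\dots,r_k\}$ and split the generators according to the sign of $\phi$: those with $\phi(r_i)=0$ already lie in $F$, while the remaining ones — there is at least one, since $C$ is generating — may be rescaled so that $\phi(r_i)=1$; denote the latter $r_{m+1},\dots,r_k$ and fix one of them, $r_{j_0}$. Pick $c_0\in\relint(F)$, so that $\{h\in H:\|h-c_0\|<\epsilon\}\subseteq F$ for some $\epsilon>0$, and set $v\coloneqq r_{j_0}-\lambda c_0$ for a parameter $\lambda>0$ to be fixed. Then $\phi(v)=1$, so $v\notin H$ and $\Phi$ is well defined. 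On generators: if $\phi(r_i)=0$ then $\Phi(r_i)=r_i\in F$; if $i>m$ then $\Phi(r_i)=r_i-v=(r_i-r_{j_0})+\lambda c_0$, where $r_i-r_{j_0}\in H$ since both vectors have $\phi$-value $1$, and $\|r_i-r_{j_0}\|$ is bounded uniformly over the finitely many such indices. Hence, once $\lambda$ is large enough, $(r_i-r_{j_0})+\lambda c_0=\lambda\bigl(c_0+\lambda^{-1}(r_i-r_{j_0})\bigr)\in\lambda F=F$. As $\Phi$ is linear and $F$ is a convex cone, $\Phi(C)\subseteq F$, so $\Phi$ is positive; together with $\Psi$ this witnesses that $F$ is a retract of $C$.

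The only delicate step — and the main obstacle — is producing the direction $v$, equivalently a vector with $\phi(v)=1$ that is a common lower bound, in the order induced by $C$, of the normalised off-facet generators. The construction above obtains it by pushing $r_{j_0}$ deep into the relative interior of the facet: a large dilation of any fixed interior point of $F$ absorbs the bounded correction terms $r_i-r_{j_0}$. Polyhedrality enters precisely here, guaranteeing finitely many generators and hence a uniform bound on these corrections; for a general proper cone the slice $C\cap\phi^{-1}(1)$ may be unbounded, and no such $v$ need exist. Finally, the degenerate case $d=1$, where $F=\{0\}$ and $H=\{0\}$, is trivial, both maps being zero.
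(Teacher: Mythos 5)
Your proof is correct and follows essentially the same route as the paper's: both exhibit the exposing functional $\phi$ (resp.\ $f$) for the facet, take the inclusion $\mathspan(F)\hookrightarrow V$ as one half of the retraction, and build the other half as a projection onto $\mathspan(F)$ whose one-dimensional kernel is tilted so deep toward a relative-interior point of $F$ that the finitely many generators of $C$ are all mapped into $F$ for $\lambda$ large. Your map $x\mapsto x-\phi(x)(r_{j_0}-\lambda c_0)$ is just a reparametrisation of the paper's $\Phi_\lambda=\pi+\lambda f(\cdot)\,x$, and you correctly identify polyhedrality (a uniform bound over finitely many generators) as the point where the argument uses the hypothesis.
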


The main idea of the proof of Lemma~\ref{lemma:facet-retract} is that it is always possible to `illuminate' a polyhedral cone with a collinear beam in such a way that its whole shadow lies inside one of its facets. %Note that if the facet is relatively small, one will need to make the light rays almost parallel to it.
The rigorous proof is relegated to the Supplementary Information. We now move on to the other main ingredient of the proof.
%Lemma~\ref{lemma:facet-retract} offers us a glimpse at a possible proof strategy: if every non-classical polyhedral cone were to possess a facet that is also non-classical (and hence a non-classical polyhedral cone), we may proceed by induction and bring the problem back to the $3$-dimensional case we have already solved. An immediate obstacle to this programme is that it is easy to come up with examples of non-classical polyhedral cones that only have classical facets. For example, all facets of an octahedron-based $4$-dimensional cone $C_{\mathrm{oct}}$ are $3$-dimensional classical cones. The solution to this issue is to go to the dual picture: the dual to $C_{\mathrm{oct}}$ is a cube-based cone $C_{\mathrm{cube}}$, whose facets are square-based and hence non-classical cones. That this is possible in general is guaranteed by a classical result in the theory of polytopes, which we rephrase here as follows.

\begin{lemma} \label{lemma:simple-simplicial}
Let $C$ be a non-classical proper polyhedral cone with $\dim (C) \geq 4$. Then either $C$ or its dual $C^*$ has a facet which is non-classical.
\end{lemma}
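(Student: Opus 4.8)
The plan is to argue by contradiction: suppose that \emph{every} facet of $C$ is classical and \emph{every} facet of $C^*$ is classical, and derive that $C$ itself must be classical, contradicting the hypothesis. The intuition is that a classical facet of $C^*$ corresponds, by duality, to a classical ``co-facet'' structure on $C$ — i.e.\ an extreme ray of $C$ through which the local cone looks like a classical cone — so controlling both the facets and the extreme rays of $C$ should pin down $C$ completely.

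First I would set up the facet/extreme-ray dictionary. Write $d = \dim(C) \geq 4$ and let $F_1,\dots,F_k$ be the facets of $C$, each $F_j = C \cap H_j$ for a supporting hyperplane $H_j = f_j^{-1}(0)$ with $f_j \in \Ext(C^*)$ (the extreme rays of $C^*$ are exactly the facet functionals of $C$, since $C$ is proper polyhedral). The hypothesis says each $F_j$ is a classical cone of dimension $d-1$, hence generated by a basis $\{v_{j,1},\dots,v_{j,d-1}\}$ of $H_j$. Dually, the facets of $C^*$ are indexed by the extreme rays of $C$: if $r_1,\dots,r_m$ are the extreme rays of $C$ (spanned by $\rho_1,\dots,\rho_m$), then $C^* \cap \rho_i^{-1}(0)$ is a facet of $C^*$, and by assumption it too is classical. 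So both $C$ and $C^*$ have the property that slicing off any facet leaves a simplex-cone.

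Next I would exploit the classicality of the facets of $C$ to constrain the combinatorics. A $(d-1)$-dimensional classical cone has exactly $d-1$ extreme rays and exactly $d-1$ facets. So each facet $F_j$ of $C$ contains exactly $d-1$ extreme rays of $C$, and each facet of $F_j$ (a $(d-2)$-dimensional face of $C$) is itself simplicial. This forces $C$ to be a \emph{simplicial-facet} polytope-cone in a strong sense; combined with the dual statement (each facet of $C^*$ simplicial, equivalently each vertex figure of $C$ simplicial), one is looking at a cone that is simultaneously simplicial and ``cosimplicial'' on its facets. I would then try to show, by a counting/Euler-relation argument on the face lattice — or more directly by a rigidity argument — that the only such cone in dimension $d\geq 4$ is the classical cone (the $d$-simplex cone), which is exactly the desired contradiction. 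A clean way to organize this: pick a facet $F = F_1$, which is a simplex cone on rays $\rho_1,\dots,\rho_{d-1}$; pick any extreme ray $\rho_0$ of $C$ not on $F$; the facet of $C^*$ dual to $\rho_0$ being simplicial tells us $\rho_0$ lies on exactly $d-1$ facets of $C$, and tracking which rays those facets share with $F$ should force $C = \cone\{\rho_0,\rho_1,\dots,\rho_{d-1}\}$, a simplex cone.

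The main obstacle I expect is the last step — turning ``all facets of $C$ classical and all facets of $C^*$ classical'' into ``$C$ classical'' — because a priori there could be a sporadic polyhedral cone in some dimension with all proper faces simplicial on both sides without being a simplex; ruling this out requires a genuine combinatorial argument rather than a local computation. I would attack it by induction on $d$: the hypothesis ``facet of $C$ is non-classical'' is what fails, but note that a facet $F$ of $C$, being classical, is trivially a cone all of whose facets are classical, so the inductive leverage must come from the interaction between $C$ and $C^*$ — specifically, a non-classical cone of dimension $\geq 4$ with all facets classical would have a facet $F\cong$ simplex cone, and then analyzing the ridges (codimension-$2$ faces) shared between $F$ and the neighboring facets, together with simpliciality of vertex figures, should close the induction. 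An alternative, possibly cleaner route is to use the fact that a proper polyhedral cone all of whose facets are simplicial and all of whose vertex figures are simplicial must have face lattice isomorphic to that of a simplex (a statement provable by a short double-counting on flags), and a polyhedral cone combinatorially equivalent to a simplex cone is linearly isomorphic to one, hence classical.
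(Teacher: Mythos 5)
Your proposal is the contrapositive of the paper's argument and follows essentially the same route: passing to a base polytope $P$ (so $\dim P = d-1 \geq 3$), observing that ``all facets of $C$ classical'' means $P$ is simplicial, that ``all facets of $C^*$ classical'' means the dual polytope is simplicial, i.e.\ $P$ is simple, and concluding that a simple and simplicial polytope of dimension $\geq 3$ must be a simplex. Your facet/extreme-ray duality dictionary is correct. The one step you leave open --- and explicitly flag as the ``main obstacle,'' worrying about a sporadic simple-and-simplicial non-simplex --- is precisely the classical theorem the paper invokes by citation (Br{\o}ndsted, Theorem 12.19): for $d\geq 3$, a $d$-polytope that is both simple and simplicial is a simplex. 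So your worry is unfounded, but your proof as written does not actually establish this fact; the two attacks you sketch (induction on ridges, flag double-counting) are only gestured at. If you either cite that theorem or supply its short proof, your argument is complete and coincides with the paper's.
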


Before we can apply Lemma~\ref{lemma:simple-simplicial} to our setting, we need to observe that \emph{retracts dualise.} This means that \emph{$C_1$ is a retract of $C_2$ if and only if $C_1^*$ is a retract of $C_2^*$,} for all pairs of proper cones $C_1,C_2$.

\begin{proof}[Proof of Result~\ref{polyhedral result}]
Let $C_1,C_2$ be non-classical proper polyhedral cones. Let us assume that e.g.\ $d_1\coloneqq \dim (C_1)\geq 4$, otherwise the claim follows from Result~\ref{3-dim result}. Thanks to Lemma~\ref{lemma:simple-simplicial}, either $C_1$ or $C_1^*$ has a non-classical facet. Then, by Lemma~\ref{lemma:facet-retract} either $C_1$ or $C_1^*$ has a non-classical retract of dimension $d_1-1$, which is naturally another proper polyhedral cone. Since retracts dualise, these two facts are actually equivalent. Hence $C_1$ has a non-classical proper polyhedral retract of dimension $d_1-1$. Continuing in this way, we can reduce the dimensions $d_1$ and $d_2$ of $C_1$ and $C_2$, until we achieve $d_1=d_2=3$. The statement then follows from Result~\ref{3-dim result}.
\end{proof}

\subsection{Proof of Result~\ref{semiquantum result}} \label{sec:proof-semiquantum}

In this section we consider pairs of cones, where one element of the pair is the cone $\PSD_n$ of $n \times n$ positive semidefinite matrices with complex entries. In other words, we look at bipartite systems $AB$, where system $A$ is described by usual quantum mechanics and system $B$ is an arbitrary GPT.

Remarkably, the problem of whether such a pair of cones is entangleable is equivalent to a recently emerged question about operator systems, formulated either in terms of \emph{operator systems} or of \emph{matrix convex sets}. Before presenting our methods, we quickly review this connection. The content of the next paragraph is not essential to the understanding of the proof of Result~\ref{semiquantum result}.

As explained in~\cite{Fritz2017,Passer2018}, an operator system in $d$ variables can be described by a sequence $\mathcal{W} = (W_n)_{n \geq 1}$ of proper cones, where $W_n$ lives in the space $\Herm_n^d$ of $d$-tuples of $n \times n$ matrices. Such a sequence is asked to satisfy compatibility conditions under the action of completely positive maps. As it turns out, given a proper cone $W \subset \R^d$, there is a minimal operator system $\mathcal{W}^{\min}$ and a maximal operator system $\mathcal{W}^{\max}$ satisfying the condition $W_1^{\min} = W_1^{\max} = W$. This means that any operator system $(W_n)$ such that $W_1=W$ must satisfy $W_n^{\min} \subseteq W_n \subseteq W_n^{\max}$. Moreover, the minimal and maximal operator systems are constructed using the minimal and maximal tensor product:
\begin{align*}
W_n^{\min} &= \PSD_n \tmin W, \\
W_n^{\max} &= \PSD_n \tmax W.
\end{align*}
A major result in~\cite{Passer2018} is the proof of the fact that the equality $\mathcal{W}^{\max}=\mathcal{W}^{\min}$ between operator systems (i.e.\ between sequences of cones) is equivalent to the starting cone $W$ being classical. In other words, any non-classical theory, when coupled with quantum mechanics $\mathrm{QM}_n$ for $n$ large enough, forms an entangleable pair. Our Result~\ref{semiquantum result} lowers the value of $n$ needed to guarantee entangleability, coming closer to the conjectured value $n=2$.

Our proof of Result~\ref{semiquantum result} relies on an extremal property of the simplex in convex geometry: \emph{the simplex is the convex shape which is most different from the round ball.} Here is a precise formulation of this property. We denote by $B_d$ the unit ball in the standard Euclidean space $\R^d$. Given a convex body $K \subset \R^d$, one defines its \emph{asphericity} $a(K)$ as the ratio between the radii of inscribed and circumscribed homothetic Euclidean balls, after preprocessing by applying a suitable affine map
\begin{equation*}
a(K) \coloneqq \inf \{ r>1 \st \textnormal{there is an affine map } \Phi : \R^d \to \R^d \textnormal{ such that } B_d \subseteq \Phi(K) \subseteq rB_d \}.
\end{equation*}
The minimal value $a(K)=1$ of the asphericity corresponds to the case when $K$ is an ellipsoid, i.e.\ an affine image of $B_d$. At the other side of the spectrum, the maximal value of asphericity is achieved for simplices. 

\begin{theorem}[Simplices maximize asphericity] \label{theorem:simplices-maximize-asphericity}
Any convex body $K \subset \R^d$ satisfies the inequality $a(K) \leq d$. Moreover, $a(K)=d$ if and only if $K$ is a simplex.
\end{theorem}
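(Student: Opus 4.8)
The plan is to reduce everything to John's classical theorem on the ellipsoid of maximal volume inscribed in a convex body. Asphericity is affine-invariant and its very definition quantifies over affine maps, so I first normalise $K$ so that its inscribed ellipsoid of maximal volume (the John ellipsoid) is the Euclidean ball $B_d$; then automatically $B_d\subseteq K$. With this normalisation the theorem splits into two parts: (a) the inclusion $K\subseteq dB_d$, which immediately gives $a(K)\le d$; and (b) the rigidity, namely that every simplex attains the value $d$, while $a(K)=d$ forces $K$ to be a simplex.

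For part (a) I would use John's decomposition: since $B_d$ is the maximal-volume inscribed ellipsoid of $K$, there are contact points $u_1,\dots,u_m\in\partial K\cap\partial B_d$ and weights $c_i>0$ with $\sum_i c_i u_i=0$ and $\sum_i c_i\,u_iu_i^{\mathsf T}=\Id$; taking traces gives $\sum_i c_i=d$, and each contact point yields a supporting hyperplane $\{\langle\cdot,u_i\rangle=1\}$ of $K$ (it is the tangent hyperplane of $B_d$ at $u_i$, and $B_d\subseteq K$), so $\langle x,u_i\rangle\le 1$ for every $x\in K$. Fix $x\in K$ and, after a rotation, write $x=\lambda e_1$ with $\lambda=|x|$. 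Set $\beta_i\coloneqq 1-\lambda\langle u_i,e_1\rangle$. The bounds $\langle u_i,e_1\rangle\le 1/\lambda$ and $\langle u_i,e_1\rangle\ge -1$ give $0\le\beta_i\le 1+\lambda$, hence $\beta_i^2\le(1+\lambda)\beta_i$; summing against the $c_i$ and using $\sum_i c_i\beta_i=d$ and $\sum_i c_i\beta_i^2=d+\lambda^2$ (both immediate from the two John relations) one obtains $d+\lambda^2\le(1+\lambda)d$, i.e.\ $\lambda\le d$. This is a short, self-contained route to $a(K)\le d$.

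For the simplex bound, let $\Delta$ be a $d$-simplex and suppose $\Phi$ is affine with $B_d\subseteq\Phi(\Delta)\subseteq rB_d$; I must show $r\ge d$. Since $B_d\subseteq\Phi(\Delta)$ we have $0\in\inter\Phi(\Delta)$, so I may write $\Phi(\Delta)=\{x:\langle x,w_i\rangle\le h_i,\ i=0,\dots,d\}$ with $|w_i|=1$ and, because $B_d\subseteq\Phi(\Delta)$, with $h_i\ge 1$. As $w_0,\dots,w_d$ are the facet normals of a simplex, there are $\mu_i>0$ with $\sum_i\mu_i w_i=0$, normalised so that $\sum_i\mu_i=1$. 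Pairing the vertex $v_j$ (the unique point with $\langle v_j,w_i\rangle=h_i$ for all $i\ne j$) against $\sum_i\mu_i w_i=0$ gives $\langle v_j,w_j\rangle=-\tfrac1{\mu_j}\sum_{i\ne j}\mu_i h_i\le-\tfrac{1-\mu_j}{\mu_j}$, hence $|v_j|\ge\tfrac1{\mu_j}-1$; choosing $j$ with $\mu_j\le\tfrac1{d+1}$ (which exists since the $d+1$ numbers $\mu_i$ sum to $1$) yields $r\ge\max_j|v_j|\ge d$. Together with part (a) this shows $a(\Delta)=d$.

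It remains to prove that $a(K)=d$ forces $K$ to be a simplex. With the John normalisation above one has $a(K)\le\max_{x\in K}|x|\le d$, so equality forces a point $x_0\in K$ with $|x_0|=d$. Feeding $x_0$ into the computation of part (a), every inequality there must be tight: in particular $\beta_i^2=(1+\lambda)\beta_i$ for each contact point $u_i$, which pins down $\langle u_i,x_0\rangle\in\{1,-d\}$ for all $i$. From here I would argue that this rigidity, combined with the John relations $\sum_i c_iu_i=0$, $\sum_i c_iu_iu_i^{\mathsf T}=\Id$ and the fact that the contact hyperplanes both support $K$ and are tangent to $B_d$, forces $K$ to coincide with the intersection of exactly $d+1$ hyperplanes tangent to $B_d$ whose unit normals sum to zero, i.e.\ a regular simplex (up to the affine normalisation applied at the start). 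Extracting this full combinatorial structure of the simplex from the equality conditions in John's theorem is the delicate point and the step I expect to be the main obstacle; alternatively one may invoke the known description of the equality case in the L\"owner--John theorem.
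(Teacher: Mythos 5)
The paper does not actually prove this theorem: it imports the bound $a(K)\leq d$ from John's 1948 paper and the equality case from Leichtweiss (rediscovered by Palmon), so any self-contained argument you give is already "more" than what the authors wrote. Your treatment of the first part is correct and is essentially the standard John-decomposition proof: the identities $\sum_i c_i\beta_i=d$ and $\sum_i c_i\beta_i^2=d+\lambda^2$ follow directly from $\sum_i c_iu_i=0$, $\sum_i c_iu_iu_i^{\mathsf T}=\Id$ and $\tr\Id=d$, and the pointwise bound $\beta_i^2\leq(1+\lambda)\beta_i$ then gives $\lambda\leq d$. Your lower bound $a(\Delta)\geq d$ for a simplex, via the Minkowski relation $\sum_i\mu_iw_i=0$ among the facet normals and pigeonholing a small $\mu_j$, is also correct and pleasantly short.

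The genuine gap is the rigidity direction: $a(K)=d\Rightarrow K$ is a simplex. Your equality analysis only constrains the \emph{contact points} $u_i$ of $K$ with its John ellipsoid (each must satisfy $\langle u_i,x_0\rangle\in\{1,-d\}$); it says nothing yet about the rest of $K$, and passing from this to the full combinatorial identification of $K$ with a simplex requires the iterative argument of Leichtweiss/Palmon (one isolates the contact point $-x_0/d$, shows the remaining contact points lie on a lower-dimensional sphere, and recurses). You flag this honestly, but be aware that this is precisely the direction the paper actually \emph{uses}: the load-bearing statement in the Supplementary Information is Theorem~\ref{theorem:leichtweiss} (if $K$ is not a simplex then it can be sandwiched with ratio $r<n$), which feeds Corollary~\ref{corollary:leichtweiss} and Proposition~\ref{proposition:ice-cream}. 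So the one part you leave to a citation is the only part the main results depend on. Since the authors themselves resolve it by citing Leichtweiss and Palmon, doing the same is legitimate; just do not present the equality analysis of the contact points as if it already yielded the conclusion.
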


The first part of Theorem~\ref{theorem:simplices-maximize-asphericity} is well-known~\cite{John48}, while the second part was proved in~\cite{Leichtweiss} and later rediscovered in~\cite{Palmon}. 

Since the asphericity is defined by comparison with a Euclidean ball, the cones over a Euclidean ball with different radii play a central role when applying Theorem~\ref{theorem:simplices-maximize-asphericity}. We introduce them as Lorentz cones, defined for $r > 0$ as
\[ \mathsf{L}_d(r) \coloneqq \left\{ (x_1,\dots,x_{d+1}) \in \R^{d+1} \, : \, \sqrt{x_1^2 + \cdots + x_d^2} \leq r x_{d+1} \right\} .\]
Note that $\mathsf{L}_d(r)$ is the cone over the ball $r B_d$, and is thus symmetric in the sense of Section~\ref{sec:proof-symmetric-result}.

\begin{lemma} \label{lemma:tensor-lorentz}
The inclusion $\mathsf{L}_d(1) \tmax \mathsf{L}_d(1) \subseteq \mathsf{L}_d(1) \tmin \mathsf{L}_d(r)$ holds if and only if $r \geq d$.
\end{lemma}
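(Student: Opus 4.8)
The plan is to translate everything into the language of linear maps. Using the self-duality of Lorentz cones, $\mathsf{L}_d(r)^{*}=\mathsf{L}_d(1/r)$, one identifies $\mathsf{L}_d(1)\tmax\mathsf{L}_d(1)$ with the set of $(d+1)\times(d+1)$ real matrices $M$ that map $\mathsf{L}_d(1)$ into itself, and $\mathsf{L}_d(1)\tmin\mathsf{L}_d(r)$ with the set of finite conic combinations $\sum_j\lambda_j\,(x_j,s_j)\otimes(y_j,t_j)$ with $\lambda_j\ge0$, $\|x_j\|\le s_j$ and $\|y_j\|\le rt_j$; recall also that this minimal tensor product of proper cones is closed. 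Since $\mathsf{L}_d(d)\subseteq\mathsf{L}_d(r)$ whenever $r\ge d$ and $\tmin$ is monotone under inclusion, it is enough to prove the inclusion for $r=d$; moreover the case $d=1$ is immediate, since $\mathsf{L}_1(1)$ is a classical cone and Lemma~\ref{lemma:easy-direction} applies, so I assume $d\ge2$.

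For the ``only if'' direction a single witness suffices. The identity matrix $\Id_{\R^{d+1}}$ maps $\mathsf{L}_d(1)$ into itself and hence lies in $\mathsf{L}_d(1)\tmax\mathsf{L}_d(1)$. On the other hand, for any element $\sum_j\lambda_j(x_j,s_j)\otimes(y_j,t_j)$ of $\mathsf{L}_d(1)\tmin\mathsf{L}_d(r)$ the upper-left $d\times d$ block $\sum_j\lambda_j x_jy_j^{T}$ has trace norm (sum of singular values) at most $\sum_j\lambda_j\|x_j\|\,\|y_j\|\le r\sum_j\lambda_j s_jt_j$, which equals $r$ times its lower-right entry $\sum_j\lambda_j s_jt_j$. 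For $\Id_{\R^{d+1}}$ the block is $I_d$, of trace norm $d$, while the lower-right entry is $1$; so membership in $\mathsf{L}_d(1)\tmin\mathsf{L}_d(r)$ forces $r\ge d$, and the inclusion fails for $r<d$.

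For the ``if'' direction ($r=d$), let $M$ map $\mathsf{L}_d(1)$ into itself. First I would normalise so that $Me_{d+1}=e_{d+1}$: when $Me_{d+1}$ is strictly timelike one picks an automorphism $\Lambda$ of $\mathsf{L}_d(1)$ (a positive scalar times a Lorentz transformation) with $\Lambda(Me_{d+1})=e_{d+1}$ and replaces $M$ by $\Lambda M$, which is harmless since $\Lambda\otimes\Id$ carries both $\mathsf{L}_d(1)\tmax\mathsf{L}_d(1)$ and $\mathsf{L}_d(1)\tmin\mathsf{L}_d(d)$ onto themselves (only the first factor, bearing $\mathsf{L}_d(1)$ in both products, is altered; cf.\ Eqs.~\eqref{positive action tmin} and~\eqref{positive action tmax}). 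The degenerate case $Me_{d+1}$ null is deferred to the end and handled by the perturbation $M+\varepsilon\,e_{d+1}e_{d+1}^{T}$, which is still positive, has strictly timelike action on $e_{d+1}$, and converges to $M$, using that $\mathsf{L}_d(1)\tmin\mathsf{L}_d(d)$ is closed. After the normalisation, $M=\begin{pmatrix}A&0\\ c^{T}&1\end{pmatrix}$, and positivity of $M$ is equivalent to $\|Aq\|+|c^{T}q|\le1$ for every $\|q\|\le1$ (apply $\|Aq\|\le c^{T}q+1$ to $q$ and to $-q$). The key consequence is the sharp estimate $\|A\|_{S_1}+\|c\|\le d$: summing the inequalities $\|A\hat q_i\|+|c^{T}\hat q_i|\le1$ over an orthonormal basis $\{\hat q_i\}_{i=1}^{d}$ of right-singular vectors of $A$ gives $\|A\|_{S_1}+\sum_i|c^{T}\hat q_i|\le d$, and $\sum_i|c^{T}\hat q_i|\ge\|c\|$ by Parseval.

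It then remains to realise $M$ inside $\mathsf{L}_d(1)\tmin\mathsf{L}_d(d)$, which I would do by exhibiting a probability measure $\mu$ on $B_d\times B_d$ such that $M=\int(u,1)\otimes(dv,1)\,d\mu(u,v)$; comparing the blocks, this amounts to $\int u\,d\mu=0$, $\int v\,d\mu=c/d$ and $\int uv^{T}\,d\mu=A/d$. Take $\mu=\alpha\mu_1+(1-\alpha)\,\delta_{(0,\,c/\|c\|)}$ with $\alpha=1-\|c\|/d\in(0,1]$ (and $\mu=\mu_1$ if $c=0$), where $\mu_1$ is a probability measure on $B_d\times B_d$ invariant under $(u,v)\mapsto(-u,-v)$ — so that both its marginals are centred — with cross-moment $\int uv^{T}\,d\mu_1=A/(d-\|c\|)$. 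Such a $\mu_1$ exists precisely because $\|A\|_{S_1}\le d-\|c\|$ shows $A/(d-\|c\|)$ to be a convex combination of rank-one matrices $\hat u\hat v^{T}$ with $\|\hat u\|=\|\hat v\|=1$, which one symmetrises over sign flips (padding, if necessary, with sign-balanced mass contributing nothing to the moments). A direct check then shows that $\mu$ works, hence $M\in\mathsf{L}_d(1)\tmin\mathsf{L}_d(d)$; together with the density step this yields the inclusion for all $r\ge d$. I expect the two delicate points to be the one-sided normalisation $Me_{d+1}=e_{d+1}$ (and its degenerate companion), and the estimate $\|A\|_{S_1}+\|c\|\le d$ — the latter being exactly what leaves room to absorb the off-centre part $c$ while the tensor still fits inside the $d$-fold dilate of $\mathsf{L}_d(1)$ on the second leg.
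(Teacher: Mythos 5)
Your proposal is correct, and it in fact proves more than the paper's own written argument. For the ``only if'' direction you use exactly the paper's idea (Lemma~\ref{min-max ice cream} in the Supplementary Information): the diagonal tensor $\sum_i e_i\otimes e_i$ — your identity matrix — lies in $\mathsf{L}_d(1)\tmax\mathsf{L}_d(1)$, and any decomposition inside $\mathsf{L}_d(1)\tmin\mathsf{L}_d(r)$ forces the trace norm of the upper-left $d\times d$ block (which is $d$) to be at most $r$ times the $(d+1,d+1)$ entry (which is $1$), so $r\geq d$; this is the same witness and the same trace-norm/operator-norm estimate as in the paper. The difference is in the ``if'' direction: the Supplementary Information proves only the implication needed for Result~\ref{semiquantum result} and attributes the full equivalence to~\cite[Lemma~19]{ultimate}, whereas you give a self-contained proof of the converse inclusion at $r=d$. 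Your route — identify elements of the maximal tensor product with maps $M$ preserving $\mathsf{L}_d(1)$, normalise $Me_{d+1}=e_{d+1}$ by a Lorentz automorphism acting on the first factor (which indeed carries $\mathsf{L}_d(1)$ in both products, so Eqs.~\eqref{positive action tmin}--\eqref{positive action tmax} apply in both directions), derive $\|A\|_{S_1}+\|c\|\leq d$ from $\|Aq\|+|c^Tq|\leq 1$ summed over an orthonormal basis of right singular vectors, and then match the moments $\int u\,d\mu=0$, $\int v\,d\mu=c/d$, $\int uv^T d\mu=A/d$ with a sign-symmetrised measure built from the singular value decomposition — is sound; the degenerate case is correctly absorbed by the perturbation $M+\varepsilon e_{d+1}e_{d+1}^T$ together with closedness of the minimal tensor product (Fact~\ref{fact:tmin-is-proper}), and positivity already gives $\|c\|\leq 1<d$, so your $\alpha=1-\|c\|/d$ is indeed strictly positive for $d\geq 2$, with $d=1$ dispatched by Lemma~\ref{lemma:easy-direction}. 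In short: the paper's argument is shorter because only the lower bound on $r$ is needed downstream; your argument additionally establishes the sharpness of the critical value $r=d$ claimed in the statement.
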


We only give here intuition behind the critical value $r=d$ which appears in Lemma~\ref{lemma:tensor-lorentz}, and refer to the Supplementary Information for a complete proof. The minimal tensor product of Lorentz cones is intimately connected with the operator norm $\|\cdot\|_{\infty}$ on matrices, and similarly the maximal tensor product of Lorentz cones is connected with the trace norm $\|\cdot\|_1$. It is well known that the inequalities 
\begin{equation} \label{eq:holder} \|\cdot\|_{\iy} \leq \|\cdot\|_1 \leq d \|\cdot\|_{\iy} \end{equation}
hold for $d \times d$ matrices, and that the value $d$ cannot be changed into a smaller number. This can be shown by plugging in Eq.~\eqref{eq:holder} the identity matrix. This is the primary reason for the appearance of the value $d$ in Lemma~\ref{lemma:tensor-lorentz}.

Our approach to prove Result~\ref{semiquantum result} uses another ingredient, which relates the Lorentz cone with the cone of positive semidefinite matrices. As in the proof of Result~\ref{polyhedral result}, retracts are a key concept.

\begin{proposition} \label{prop:lorentz-retract-PSD}
If $d \leq 2n$, then the cone $\mathsf{L}_{d}$ is a retract of $\PSD_{2^n}$.
\end{proposition}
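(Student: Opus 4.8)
The plan is to realise $\mathsf{L}_d=\mathsf{L}_d(1)$ explicitly as a positively-projected sub-cone of $\PSD_{2^n}$, using the spin (Clifford-algebra) representation. Concretely, I would construct a positive embedding $\Psi:\R^{d+1}\to\Herm_{2^n}$ and a positive surjection $\Phi:\Herm_{2^n}\to\R^{d+1}$ with $\Phi\circ\Psi=\Id_{\R^{d+1}}$, which by the definition of retract recalled above gives the claim.

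\emph{Step 1: the embedding $\Psi$.} Because $d\le 2n$ we have $\lceil d/2\rceil\le n$, and hence one can fit into $\Herm_{2^n}$ a family $\Gamma_1,\dots,\Gamma_d$ of Hermitian matrices that pairwise anticommute, satisfy $\Gamma_j^2=I$ for all $j$, and are traceless. (Existence: the complex Clifford algebra on $d$ generators has an irreducible $*$-representation of dimension $2^{\lfloor d/2\rfloor}\le 2^n$; amplifying it by tensoring with an identity on an auxiliary factor produces such a family inside $\Herm_{2^n}$ with tracelessness manifest. For $d\ge 2$ tracelessness is in any case forced, since $\tr\Gamma_i=\tr(\Gamma_j^2\Gamma_i)=\tr(\Gamma_j\Gamma_i\Gamma_j)=-\tr\Gamma_i$ for $j\ne i$ by cyclicity and anticommutation.) I would then set $\Psi(x_1,\dots,x_d,x_{d+1})\coloneqq x_{d+1}I+\sum_{j=1}^d x_j\Gamma_j$. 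For a unit vector $\theta\in\R^d$ the anticommutation relations give $\big(\sum_j\theta_j\Gamma_j\big)^2=\big(\sum_j\theta_j^2\big)I=I$, so $\sum_j\theta_j\Gamma_j$ has spectrum contained in $\{-1,+1\}$; consequently $\Psi(x)$ has eigenvalues $x_{d+1}\pm\sqrt{x_1^2+\cdots+x_d^2}$. Therefore $\Psi(x)\succeq 0$ precisely when $x\in\mathsf{L}_d$, so in particular $\Psi(\mathsf{L}_d)\subseteq\PSD_{2^n}$ and $\Psi$ is positive.

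\emph{Step 2: the projection $\Phi$ and the identity $\Phi\circ\Psi=\Id$.} Cyclicity of the trace together with anticommutation yields the Hilbert--Schmidt orthogonality relations $\tr(\Gamma_i\Gamma_j)=2^n\,\delta_{ij}$ and $\tr\Gamma_i=0$. I would define $\Phi(M)\coloneqq\frac{1}{2^n}\big(\tr(\Gamma_1 M),\dots,\tr(\Gamma_d M),\tr M\big)$. Substituting $M=\Psi(x)$ and expanding, the orthogonality relations kill all cross terms and one reads off $\Phi(\Psi(x))=x$ componentwise, i.e.\ $\Phi\circ\Psi=\Id_{\R^{d+1}}$.

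\emph{Step 3: positivity of $\Phi$.} Let $M\succeq 0$. For any Hermitian $A$ one has $|\tr(AM)|\le\|A\|_\infty\,\tr M$ (diagonalise $M$ in an orthonormal eigenbasis with nonnegative eigenvalues). Applying this to $A=\sum_j\theta_j\Gamma_j$, which by Step 1 has operator norm $1$ for every unit vector $\theta$, and invoking the variational formula $\sqrt{\sum_j t_j^2}=\sup_{|\theta|=1}\sum_j\theta_j t_j$ with $t_j=\tr(\Gamma_j M)$, I get $\sqrt{\sum_{j=1}^d\tr(\Gamma_j M)^2}=\sup_{|\theta|=1}\tr\!\big((\textstyle\sum_j\theta_j\Gamma_j)M\big)\le\tr M$, which is exactly the inequality defining membership of $\Phi(M)$ in $\mathsf{L}_d$. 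Hence $\Phi(\PSD_{2^n})\subseteq\mathsf{L}_d$, so $\Phi$ is positive. With $\Phi$ and $\Psi$ positive and $\Phi\circ\Psi=\Id$, the proposition follows. The only genuinely nontrivial ingredient is the combinatorial input in Step 1 — that $d$ pairwise anticommuting Hermitian involutions fit into $\Herm_{2^n}$ under the bound $d\le 2n$ — while Steps 2 and 3 are short trace and eigenvalue manipulations.
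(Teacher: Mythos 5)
Your proof is correct and follows essentially the same route as the paper's: both construct a family of pairwise anticommuting traceless Hermitian involutions in $\Herm_{2^n}$, take $\Psi$ to be the affine combination with the identity, take $\Phi$ to be the tuple of trace coefficients, and verify positivity of $\Phi$ via the duality $|\tr(AM)|\leq \|A\|_\infty \tr M$ (the paper writes $\Phi\circ\Psi = 2^n\,\Id$ and rescales, whereas you absorb the factor $2^{-n}$ into $\Phi$ from the start — an immaterial difference).
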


The construction behind Proposition~\ref{prop:lorentz-retract-PSD} is based on a well-known fact: one can find $2n$ trace zero unitary matrices of size $2^n$ which pairwise anticommute. This can be either derived from the theory of Clifford algebras, or constructed by hand as tensor products of Pauli matrices.

\begin{proof}[Proof of Result~\ref{semiquantum result}]
The fact that classicality prevents entangleability is the easy direction (Lemma~\ref{lemma:easy-direction}). Therefore, consider a pair $(C,\PSD_n)$, where $C$ is a cone in dimension $d+1$, and let us show that this pair is entangleable provided that $C$ is non-classical and $\lfloor \log_2 n \rfloor \geq d/2$. Thanks to Proposition~\ref{prop:lorentz-retract-PSD}, we know that in this case $\mathsf{L}_{d}$ is a retract of $\PSD_n$.
Using the connection between entangleability and retracts explained in Proposition~\ref{proposition:retracts-nuclear}, we obtain that the pair $(C,\mathsf{L}_{d})$ is not entangleable either. 

Now, let $K$ be a $d$-dimensional convex body which is a base of the cone $C$, and denote with $r \coloneqq a(K)$ its asphericity. If we replace $K$ by a suitable affine image (which changes neither the geometry nor the entangleability properties of the cone $C$), we may assume that $B_{d} \subseteq K \subseteq rB_{d}$, or equivalently that $\mathsf{L}_{d}(1) \subseteq C \subseteq \mathsf{L}_d(r)$. We now write
\begin{equation} \label{eq:proof-semiquantum}
\mathsf{L}_d(1)\tmax \mathsf{L}_d(1) 
\textsubseteq{(i)} \mathsf{L}_d(1)\tmax C 
\texteq{(ii)} \mathsf{L}_d(1) \tmin C
\textsubseteq{(iii)} \mathsf{L}_d(1) \tmin \mathsf{L}_d(r),
\end{equation}
where (i) and (iii) follow from the fact that $\tmin$ and $\tmax$ are increasing operations with respect to set inclusion, and (ii) expresses the unentangleability of the pair $(C,\mathsf{L}_d)$. By Lemma~\ref{lemma:tensor-lorentz}, the inclusion $\mathsf{L}_d(1)\tmax \mathsf{L}_d(1) \subseteq \mathsf{L}_d(1) \tmin \mathsf{L}_d(r)$ implies that $r \geq d$. This means that $K$ has asphericity at least $d$. By Theorem~\ref{theorem:simplices-maximize-asphericity}, this is only possible if $K$ is a $d$-dimensional simplex, and therefore the corresponding cone $C$ is classical.
\end{proof}

\subsection{Proof of Result~\ref{symmetric result}} \label{sec:proof-symmetric-result}

Consider a GPT $(V,C,u)$ whose state space $\Omega$ is symmetric with respect to a centre $\gamma \in \Omega$. We can decompose the vector space $V$ as $V=\R \oplus X$, where $X\coloneqq \ker(u)$ is the kernel of $u$. The state space defines a norm on $X$ through the choice $B_{X}\coloneqq \Omega - \gamma \subset X$ for the unit ball. Accordingly, every state can be written as $\omega = \gamma + x$, where $x \in X$ satisfies $\|x\|_{X}\leq 1$. We can define the projection $\Pi:V\to X$ onto $X$ via the formula $\Pi(v)\coloneqq v - u(v) \gamma$, so that with the above notation $\Pi(\omega)=x$. 

From the above discussion it appears that there is a natural connection between normed spaces and symmetric proper cones. Since we want to understand the properties of the latter under tensor products, we need to first review the known properties of the former. Given two finite-dimensional normed vector spaces $X,Y$, there are at least two canonical norms that these induce on the tensor product $X\otimes Y$, namely, the injective tensor norm $\|\cdot\|_{X\otimes_\varepsilon Y}$ and the projective tensor norm $\|\cdot\|_{X\otimes_\pi Y}$. For an arbitrary $z\in X\otimes Y$, these are given by~\cite{DEFANT}
\begin{align}
    \|z\|_{X\otimes_\varepsilon Y} \coloneqq&\ \sup\left\{ (f\otimes g)(z):\, f\in B_{X^*},\, g\in B_{Y^*}\right\} , \label{inj} \\
    \|z\|_{X\otimes_\pi Y} \coloneqq&\ \inf\left\{ \sumno_i \|x_i\|_X \|y_i\|_Y:\, z=\sumno_i x_i\otimes y_i\right\} . \label{proj}
\end{align}
Here, $B_{X^*}$ denotes the unit ball of the dual space $X^*$, whose corresponding norm is defined by the expression $\|f\|_{X^*}\coloneqq \sup_{x\in X\backslash \{0\}} \frac{|f(x)|}{\|x\|}$.

It is not difficult to verify directly that the inequality $\|\cdot\|_{X\otimes_\varepsilon Y}\leq \|\cdot\|_{X\otimes_\pi Y}$ holds in full generality, with equality for product tensors. Moreover, since the space $X\otimes Y$ is finite-dimensional, and all norms on a finite-dimensional space are equivalent, there will exist a constant $\rho(X,Y)$, which depends only on $X$ and $Y$, which makes the opposite inequality also true: $\|\cdot\|_{X\otimes_\pi Y}\leq \rho(X,Y) \|\cdot\|_{X\otimes_\varepsilon Y}$. The \emph{minimal} such constant across all normed spaces of fixed dimension $n,m$ is a universal function of these two integers alone, called the \emph{projective/injective ratio} and denoted by $r(n,m)$. By definition, for every pair of normed spaces $X$ and $Y$ of dimensions $\dim X=n$ and $\dim Y=m$, there exists a tensor $z\in X\otimes Y$ with $\|z\|_{X\otimes_\varepsilon Y} = 1$ such that
\begin{equation}
    \|z\|_{X\otimes_\pi Y} \geq r(n,m) \|z\|_{X\otimes_\varepsilon Y} = r(n,m)\, .
    \label{gap inj proj}
\end{equation}
The function $r(n,m)$ was defined and studied in~\cite{XOR}, whose results find here a novel application. Let us stress that it is not even clear a priori that one should have $r(n,m)>1$ for all $n,m >1$. That this indeed is the case was one of the main findings of~\cite{XOR}.

Since injective and projective tensor norms always coincide on product tensors, we may conjecture that any tensor $z$ such that $\|z\|_{X\otimes_\pi Y}>\|z\|_{X\otimes_\varepsilon Y}$ may in fact be `entangled' in some sense. To make this statement rigorous and quantitative, we need two ingredients: (i)~an entanglement measure for states of a bipartite GPT; and (ii)~a systematic way of evaluating such a measure in terms of tensor norms. To address (i) we look at the entanglement robustness, which was defined in~\cite{VidalTarrach} for quantum states, and that we can immediately extend to the GPT setting~\cite{Takagi2019}. Let $(V_1,C_1,u_1)$ and $(V_2,C_2,u_2)$ be two GPTs. For a candidate bipartite state $\omega\in C_1\tmax C_2$, the \emph{entanglement robustness} is defined as the minimal amount of separable noise that makes a state separable, in formula
\begin{equation} \label{ent-rob}
\erob (\omega)\coloneqq \min\left\{ (u_1\otimes u_2)(\zeta):\, \zeta,\, \omega+\zeta\in C_1\tmin C_2 \right\} .
\end{equation}
We believe that this entanglement measure, whose definition is rooted in convex geometry alone, is the natural choice in the context of GPTs. To complete our programme we need to tackle problem (ii) above. This is done by means of the following lemma.

\begin{lemma} \label{lemma:ent-rob}
Let $(V_1,C_1,u_1),\, (V_2,C_2,u_2)$ be two symmetric GPTs. Call $\gamma_1,\gamma_2$ the centres of the state spaces, and $X_1,X_2$ the associated normed spaces. For $z \in X_1\otimes X_2$, consider the normalised state $\omega(z)\coloneqq \gamma_1\otimes \gamma_2 + z$. Whenever $z$ satisfies $\|z\|_{X_1\otimes_\varepsilon X_2}\leq 1$, it holds that $\omega(z)\in C_1\tmax C_2$. In this case,
\begin{equation} \label{ent rob estimate}
\erob \left(\omega(z)\right) \geq \frac{\|z\|_{X_1\otimes_\pi X_2}-1}{2}\, .
\end{equation}
\end{lemma}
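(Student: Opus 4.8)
The plan is to exploit the explicit form that central symmetry forces on the dual cones. Decomposing $V_i=\R\gamma_i\oplus X_i$ (valid since $u_i(\gamma_i)=1$ and $X_i=\ker u_i$), a functional $f\in V_i^*$ is encoded by the pair $\bigl(f(\gamma_i),\,f|_{X_i}\bigr)\in\R\times X_i^*$. Evaluating $f$ on a generic state $\gamma_i+x$ with $\|x\|_{X_i}\le1$ and using that $B_{X_i}=\Omega_i-\gamma_i$ is centrally symmetric, one sees that $f\in C_i^*$ if and only if $f(\gamma_i)\ge\|f|_{X_i}\|_{X_i^*}$; thus $C_i^*$ is again a ``Lorentz-type'' cone. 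By definition of $\tmax$, membership $\omega(z)\in C_1\tmax C_2$ reduces to checking $(f\otimes g)(\omega(z))\ge0$ for all such $f,g$.

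For that check I would write $(f\otimes g)(\omega(z))=f(\gamma_1)\,g(\gamma_2)+(f|_{X_1}\otimes g|_{X_2})(z)$, using $z\in X_1\otimes X_2$ so that only the restrictions of $f,g$ see $z$. Since the dual balls are symmetric as well, the injective norm $\|z\|_{X_1\otimes_\varepsilon X_2}$ also bounds the absolute value of $(\phi\otimes\chi)(z)$ over $\phi\in B_{X_1^*},\,\chi\in B_{X_2^*}$, hence $(f|_{X_1}\otimes g|_{X_2})(z)\ge-\|f|_{X_1}\|_{X_1^*}\|g|_{X_2}\|_{X_2^*}\,\|z\|_{X_1\otimes_\varepsilon X_2}$. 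Combining this with $f(\gamma_i)\ge\|f|_{X_i}\|_{X_i^*}\ge0$ and the hypothesis $\|z\|_{X_1\otimes_\varepsilon X_2}\le1$ gives $(f\otimes g)(\omega(z))\ge\|f|_{X_1}\|_{X_1^*}\|g|_{X_2}\|_{X_2^*}\bigl(1-\|z\|_{X_1\otimes_\varepsilon X_2}\bigr)\ge0$, establishing $\omega(z)\in C_1\tmax C_2$.

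For the robustness bound the key observation is that separability controls the projective norm of the projection onto $X_1\otimes X_2$. Writing $\Pi_j(v)\coloneqq v-u_j(v)\gamma_j$ for the projection onto $X_j$, any $\sigma=\sum_i\lambda_i\,\omega_1^{(i)}\otimes\omega_2^{(i)}\in C_1\tmin C_2$ (with $\lambda_i\ge0$, $\omega_j^{(i)}\in\Omega_j$) satisfies $(\Pi_1\otimes\Pi_2)(\sigma)=\sum_i\lambda_i\,(\omega_1^{(i)}-\gamma_1)\otimes(\omega_2^{(i)}-\gamma_2)$, so that $\|(\Pi_1\otimes\Pi_2)(\sigma)\|_{X_1\otimes_\pi X_2}\le\sum_i\lambda_i=(u_1\otimes u_2)(\sigma)$, because $\omega_j^{(i)}-\gamma_j\in B_{X_j}$. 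Now pick any admissible $\zeta$, i.e.\ $\zeta,\omega(z)+\zeta\in C_1\tmin C_2$. Since $\Pi_j(\gamma_j)=0$ and $\Pi_j$ restricts to the identity on $X_j$, one has $(\Pi_1\otimes\Pi_2)(\omega(z))=z$ and $(u_1\otimes u_2)(\omega(z))=1$. Applying the observation to $\zeta$ and to $\omega(z)+\zeta$ and using the triangle inequality,
\begin{align*}
\|z\|_{X_1\otimes_\pi X_2}
&\le \|(\Pi_1\otimes\Pi_2)(\omega(z)+\zeta)\|_{X_1\otimes_\pi X_2}+\|(\Pi_1\otimes\Pi_2)(\zeta)\|_{X_1\otimes_\pi X_2}\\
&\le (u_1\otimes u_2)(\omega(z)+\zeta)+(u_1\otimes u_2)(\zeta)=1+2\,(u_1\otimes u_2)(\zeta).
\end{align*}
Rearranging yields $(u_1\otimes u_2)(\zeta)\ge(\|z\|_{X_1\otimes_\pi X_2}-1)/2$, and taking the minimum over admissible $\zeta$ gives the claimed bound on $\erob(\omega(z))$.

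I do not expect a serious obstacle here: the proof is two instances of duality --- duality of cones for $\tmax$ matched with the injective norm, and the convex-hull description of $\tmin$ matched with the projective norm --- glued by the projection $\Pi_1\otimes\Pi_2$. The points that need care are that central symmetry is truly essential (it is what makes $C_i^*$ Lorentz-type and lets one pass between signed and absolute suprema in $\|\cdot\|_{X_1\otimes_\varepsilon X_2}$), and the bookkeeping of the order-unit normalisation, so that the separable noise $\zeta$ is charged precisely $(u_1\otimes u_2)(\zeta)$, which is exactly the mass bounding the projective norm of $(\Pi_1\otimes\Pi_2)(\zeta)$.
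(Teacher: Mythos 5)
Your proof is correct and follows essentially the same route as the paper's: the robustness bound is obtained from exactly the same key inequality (separable elements have projective norm of their $\Pi_1\otimes\Pi_2$-projection bounded by their total mass, which is the paper's Lemma~\ref{lemma:norms-cones}(a)) combined with the triangle inequality applied to $\zeta$ and $\omega(z)+\zeta$. The only cosmetic difference is that you verify $\omega(z)\in C_1\tmax C_2$ by testing directly against product functionals using the explicit Lorentz-type description of $C_i^*$, whereas the paper packages the same duality computation into Lemmas~\ref{lemma:symmetric GPTs dualise} and~\ref{lemma:norms-cones}(d); both arguments rest on the same facts.
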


We are finally ready to prove Result~\ref{symmetric result}.

\begin{proof}[Proof of Result~\ref{symmetric result}]
Consider a pair of symmetric GPTs of dimensions $n+1,\,m+1\geq 3$. Combining Eq.~\eqref{ent rob estimate} and Eq.~\eqref{gap inj proj}, we see that there is a normalised state $\omega$ in the maximal tensor product $C_1\tmax C_2$ such that
\begin{equation}
    \erob(\omega)\geq \frac{r(n,m)-1}{2}\, .
\end{equation}
The claim follows from the estimates $r(n,m)\geq 19/18$~\cite[Theorem~2]{XOR}, valid for all $n,m\geq 2$, and $r(n,m)\geq c \min\{n,m\}^{1/8-o(1)}$~\cite[Theorem~6]{XOR}, valid in the limit $n,m\to\infty$.
\end{proof}

\section{Conclusions}

In this work, we defined and studied the model-independent notion of universal entanglement in the context of general probabilistic theories. We have shown that the failure of the local state spaces to have the geometric shapes of simplices, which is a manifestation of the existence of superpositions, is intimately connected with the existence of entanglement at the level of bipartite states or measurements. This connection, which before was thought of as an accident of the quantum formalism, is elevated here to a foundational status. In fact, our main conjecture states that all pairs of non-classical GPTs can be entangled by composition.

A mathematically equivalent version of this problem is already implicit in the work of Namioka and Phelps~\cite{NP}, and was systematically studied in the 1970s by Barker~\cite{Barker1976,Barker-review}. It consists in proving that all pairs of non-classical cones are such that the maximal tensor product is strictly larger than the minimal. The motivation driving all these previous efforts was of a fundamentally order-theoretical nature, and not related to entangleability of GPTs.

We presented strong evidence in favour of our main conjecture, proving it in a number of physically relevant cases. Namely, we showed that it is true when both cones are $3$-dimensional (Result~\ref{3-dim result}) or polyhedral (Result~\ref{polyhedral result}), and also when one of the local theories is quantum mechanics on a Hilbert space of a sufficiently large dimension (Result~\ref{semiquantum result}). We also took one step further and put forth a quantitative extension of the above qualitative conjecture. Namely, we proposed that the maximal entanglement exhibited by a pair of theories may be lower bounded by a universal function of their departure from classicality, as measured by an appropriate quantifier. We presented evidence in support of this hypothesis, proving it for the geometrically vast class of symmetric cones (Result~\ref{symmetric result}). We briefly discussed further extensions of this approach to non-locality in bipartite GPT systems.

On the mathematical side, our results mark the first progress on the conjecture since the work of Barker, more than 40 years ago. Moreover, our methods are substantially innovative: we put to good use the order-theoretic concept of retract, devised general techniques to construct entangled states in bipartite GPTs, and further investigated connections between functional analysis and general probabilistic theories, as already developed in~\cite{ultimate, XOR}.

In conclusion, our work sheds new light on a seemingly accidental connection between the notions of local superposition and global entanglement, promoting it to a logically unavoidable implication. This prompts us to reconsider the status of entanglement as a fundamental ingredient of Nature.

%We bring light on a new connection between entanglement and non-classicality of physical theories, by connecting the entangleability of two theories with the fact that both theories are distinct from classical probability. From the mathematical point of view, the logical equivalence of these properties was suspected since the 70's.

%We show the first evidence towards this 40 years old conjecture by showing its veracity in two important special cases: $3$-dimensional GPTs, and polyhedral GPTs. Our methods are geometric in nature. 
%Our work lays the foundations for a systematic study of the relations between the existence of entanglement and the deviation from classicality, both at a qualitative and quantitative level. In the test case of symmetric GPTs, we 

\section*{Acknowledgements}

We thank Andreas Winter and Stanis\l aw Szarek for many enlightening discussions, and Martin Pl\'{a}vala for suggesting using the concept of a retract. This work was partly achieved during our visit in Institut Henri Poincar\'e, which we thank for hospitality. GA was supported in part by ANR (France) under the grant StoQ (2014-CE25-0003). LL acknowledges financial support from the European Research Council (ERC) under the Starting Grant GQCOP (Grant no.~637352). CP is partially supported by the Spanish `Ram\'on y Cajal Programme' (RYC-2012-10449), the Spanish `Severo Ochoa Programme' for Centres of Excellence (SEV-2015-0554) and the grant MTM2017-88385-P, funded by Spanish MEC.

\bibliography{cones}

\clearpage

\onecolumngrid
\begin{center}
\vspace*{\baselineskip}
{\textbf{\large Supplemental Material}}\\
\end{center}

%\title{Supplementary Information}
%\maketitle

\renewcommand{\theequation}{S\arabic{equation}}
\renewcommand{\thetheorem}{S\arabic{theorem}}
\renewcommand{\thedefinition}{S\arabic{definition}}
\setcounter{equation}{0}
\setcounter{theorem}{0}
\setcounter{figure}{0}
\setcounter{table}{0}
\setcounter{section}{0}
\setcounter{page}{1}
\makeatletter

We present here the formal proofs of our results. The focus is on technical precision; we refer to the main article for motivation. Statements which appear only in Supplementary Information are labelled by S1, S2, etc.; statements which are duplicated from the main article use the same label as in the main article. In accordance with standard usage in mathematical literature, we rephrased our main results as theorems, keeping the same labels.

Section~\ref{section:intro} introduces all concepts which are needed to define the tensor products of cones, and restatements of the results announced in the main article. Section~\ref{proof:dim3} contains the proof of Result~\ref{3-dim result} on $3$-dimensional cones. Section~\ref{proof:polyhedral} is devoted to the proof of Result~\ref{polyhedral result}, concerning polyhedral cones. Section~\ref{section:semiquantum} contains the proof of Result~\ref{semiquantum result}, which deals with the case of one cone being that of positive semidefinite matrices. In Section~\ref{section:symmetric} we prove Result~\ref{symmetric result} on symmetric cones. Finally, Section~\ref{section:retracts} contains extra information about retracts of cones, a concept which plays a central role in our argument.

\section{Definitions, elementary facts and statements of the theorems}
\label{section:intro}

\subsection{Convexity, convex cones}

All vector spaces are assumed to be over the real field, and finite-dimensional. Hereafter, we denote with $\R_+$ the set of non-negative real numbers.

\begin{definition}
A \emph{cone} is a subset $\C$ of a vector space with the following property: for every
$x \in \C$ and $\alpha \in \R_+$, we have $\alpha x \in \C$.
\end{definition}

\begin{definition}
Let $V$ be a vector space. 
    \begin{enumerate}[label={(S\arabic{definition}.\arabic*)}, leftmargin=\widthof{(S2.7)}+\labelsep]
        \item The \emph{dual space} to $V$, denoted $V^*$, is defined as the space of linear maps
        from $V$ to $\R$. We always identify the double dual $(V^*)^*$ with $V$.
        \item A subset $A \subseteq V$ is \emph{convex} if $x$, $y \in A$ implies $\lambda x + (1-\lambda) y \in A$ for every $\lambda \in [0,1]$.
        It follows that a subset $\C \subset V$ is a convex cone if and
        only if $x$, $y \in \C$ implies $\alpha x + \beta y \in \C$ for every $\alpha$, $\beta \in \R_+$.
        \item \label{def:convex-body} A \emph{convex body} in $V$ is a compact convex set with nonempty interior.
        \item The \emph{convex hull} of a subset $A \subseteq V$ is 
        \[ \conv(A)  \coloneqq \left\{ \sum_{i=1}^n \lambda_i a_i \st n \in \{1,2,3,\ldots\}, \ \lambda_i \in [0,1], \ a_i \in A, \ \sum_{i=1}^n \lambda_i=1 \right\}. \]         
        Equivalently, it equals the intersection of all convex sets containing $A$. 
        
        \item The \emph{affine span} of a subset $A \subseteq V$ is
        \[ \aff(A)  \coloneqq \left\{ \sum_{i=1}^n \lambda_i a_i \st n \in \{1,2,3,\ldots\}, \ \lambda_i \in \R, \ a_i \in A, \ \sum_{i=1}^n \lambda_i=1  \right\}. \]
        Equivalently, it equals the intersection of all affine subspaces containing $A$.
        \item A set $\{y_1,\dots,y_n\} \subset V$ is \emph{affinely independent} if $y_i \not \in 
        \aff\{y_j :\, j \neq i\}$ for every $1 \leq i \leq n$. 
        
        \item The \emph{conical hull} of a subset $A \subseteq V$ is
        \[ \cone(A)  \coloneqq \left\{ \sum_{i=1}^n \lambda_i a_i \st n \in \{1,2,3,\cdots\}, \ \lambda_i \in \R_+, \ a_i \in A  \right\}. \]
        Equivalently, it equals the intersection of all convex cones containing $A$.
\end{enumerate}
\end{definition}

\begin{definition}
Let $K$ be a convex set in a vector space.
\begin{enumerate}[label={(S\arabic{definition}.\arabic*)}, leftmargin=\widthof{(S3.5)}+\labelsep]
\item The \emph{dimension} of $K$, denoted $\dim (K)$, is defined as the dimension of its affine span.
\item A nonempty convex subset $F \subseteq K$ is a \emph{face} of $K$ if $x \in K$, $y \in K$, $0 < \lambda <1$ and $\lambda x +(1-\lambda)y \in F$ imply $x$, $y \in F$. 
\item A face $F \subseteq K$ is \emph{proper} if $F \neq K$. %\tcr{very bad terminology}
\item A face $F \subseteq K$ is a \emph{facet} if $\dim (F) = \dim (K) - 1$.
\item An element $x \in K$ is an \emph{extreme point} of $K$ if $\{x\}$ is a face of $K$. This is equivalent to say that the equation $x=\lambda y + (1-\lambda)z$, for $y,z\in K$ and $0 < \lambda < 1$, implies that $y=z=x$.
\end{enumerate}
\end{definition}

\begin{definition}
Let $\C$ be a cone. An \emph{extreme ray} of $\C$ is a face of dimension $1$. Equivalently, for $x \in \C \setminus \{0\}$, the set $\R_+ x$ is an extreme ray of $\C$ if the equation $x=y+z$ for $y$, $z \in \C$
implies $y=\alpha x$ for some $\alpha \in [0,1]$.
\end{definition}

\begin{definition}
Let $V_1$, $V_2$ be vector spaces, and $\C_1 \subset V_1$, $\C_2 \subset V_2$ be convex cones. The cones $\C_1$ and $\C_2$ are \emph{isomorphic} if there is a linear bijection $\Phi : V_1 \to V_2$ such that $\Phi(\C_1) = \C_2$.
\end{definition}

\begin{definition}
 Let $\C$ be a cone in a vector space $V$.
    \begin{enumerate}[label={(S\arabic{definition}.\arabic*)}, leftmargin=\widthof{(S6.4)}+\labelsep]    
        \item $\C$ is \emph{salient} if $\C \cap (-\C) = \{0\}$.
        \item $\C$ is \emph{generating} if the linear span of $\C$ equals $V$.
        \item $\C$ is \emph{proper} if it is convex, closed, salient and generating.
        %\item If $\C$ is proper, we define the \emph{dimension} of $\C$ (denoted $\dim \C$) to be the dimension of $V$.
        \item The \emph{dual cone} $\C^* \subset V^*$ is defined as $\C^* \coloneqq  \{ f \in V^* :\, f(x) \geq 0 \textnormal{ for every } x \in \C \}$. % If $\C$ is proper, then $\C^*$ is proper. (reference?)
    \end{enumerate}
\end{definition}

\begin{fact}[Bipolar theorem~{\cite[Theorem 14.1]{Rockafellar70}}] \label{fact:bidual}
Every closed convex cone $C\subseteq V$ satisfies $C^{**}=C$ upon the identification $V^{**}=V$.
\end{fact}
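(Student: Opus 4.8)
The plan is to prove the two inclusions $C \subseteq C^{**}$ and $C^{**} \subseteq C$ separately, the first being purely formal and the second relying on a separation argument; combining them gives the claim.

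First I would establish $C \subseteq C^{**}$, which holds for any cone and needs no closedness hypothesis. Under the identification $V^{**} = V$, an element $x \in V$ acts on $f \in V^*$ by $f \mapsto f(x)$. If $x \in C$, then by the very definition of the dual cone every $f \in C^*$ satisfies $f(x) \geq 0$; but that is precisely the condition defining membership of $x$ in $C^{**} = (C^*)^*$. Hence $C \subseteq C^{**}$.

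The reverse inclusion $C^{**} \subseteq C$ is where convexity and closedness enter, and I would argue it by contraposition. Fix $x \in V \setminus C$ and aim to produce some $f \in C^*$ with $f(x) < 0$, which will show $x \notin C^{**}$. Since $C$ is closed and convex and the singleton $\{x\}$ is a compact set disjoint from $C$, the Hahn--Banach separation theorem (available since $V$ is finite-dimensional, hence locally convex) furnishes $f \in V^*$ and $\alpha \in \R$ with $f(x) < \alpha \leq f(y)$ for all $y \in C$. Because $C$ is a nonempty cone we have $0 \in C$, so $\alpha \leq 0$; moreover, for any $y \in C$ and any $\lambda > 0$ we have $\lambda y \in C$, whence $\lambda f(y) = f(\lambda y) \geq \alpha$ for all $\lambda > 0$, which forces $f(y) \geq 0$. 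Thus $f \in C^*$, while $f(x) < \alpha \leq 0$ yields $f(x) < 0$, as desired. This proves $V \setminus C \subseteq V \setminus C^{**}$, i.e. $C^{**} \subseteq C$, and together with the first step we conclude $C^{**} = C$.

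The only nontrivial input is the separation theorem, and the single point requiring care is the homogenization step — upgrading the affine separation $f(x) < \alpha \leq f(y)$ to genuine membership $f \in C^*$ — which exploits that $C$ is scale-invariant (so a negative value of $f$ on $C$ would be unbounded below). In finite dimensions there are no further technical obstacles, and the identification $V^{**} \cong V$ is canonical, so no choice is involved.
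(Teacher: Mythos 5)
Your proof is correct and follows the standard route: the paper does not prove this fact but simply cites Rockafellar's Theorem~14.1, whose argument is exactly the one you give (the trivial inclusion $C\subseteq C^{**}$ plus hyperplane separation of a point from the closed convex cone, homogenised to show the separating functional lies in $C^*$). The only caveat is the degenerate case $C=\emptyset$, which your appeal to $0\in C$ silently excludes; since $\emptyset^{**}=\{0\}$, nonemptiness is genuinely needed, but this is the universal convention and irrelevant to the proper cones the paper works with.
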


\begin{definition}[GPT] \label{def:GPT}
A \emph{general probabilistic theory} (GPT) is a triple $(V,C,u)$, where $V$ is a finite-dimensional real vector space, $C\subset V$ is a proper cone, and $u\in \inter (C^*)$ is a positive functional on $C$. The corresponding \emph{state space} is $\Omega\coloneqq C \cap u^{-1}(1)$. We call $\dim (V)$ the \emph{dimension} of the GPT.
\end{definition}

\begin{fact}[{see~\cite[Theorem 3.5]{ALIPRANTIS}}] \label{fact:u-strictly-positive}
Given a proper cone $C$, the interior $\inter(C^*)$ of the dual cone $C^*$ coincides with the set of strictly positive functionals on $C$. A functional $\varphi\in V^*$ is called \emph{strictly positive} on a cone $C\subseteq V$ if $\varphi(x)\geq 0$ for all $x\in C$, with equality only for $x=0$.
\end{fact}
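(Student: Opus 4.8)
The plan is to establish the two inclusions $\inter(C^*)\subseteq\{\text{strictly positive functionals on }C\}$ and conversely, after fixing once and for all an auxiliary norm $\|\cdot\|$ on $V$ together with its dual norm $\|\cdot\|_*$ on $V^*$, for which $|\psi(x)|\leq\|\psi\|_*\,\|x\|$ for all $\psi\in V^*$, $x\in V$. The only substantive ingredient is the compactness of a base of $C$: since $C$ is closed, the set $K\coloneqq\{x\in C:\|x\|=1\}$ is a compact subset of the unit sphere, and it is nonempty because $C$ is generating, so $C\neq\{0\}$ (the case $V=\{0\}$ being vacuous).

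First I would show that a strictly positive functional lies in $\inter(C^*)$. Assume $\varphi(x)>0$ for every $x\in C\setminus\{0\}$. By compactness of $K$ the number $m\coloneqq\min_{x\in K}\varphi(x)$ is attained, and $m>0$ since every point of $K$ is nonzero. Positive homogeneity then upgrades this to the uniform bound $\varphi(x)\geq m\|x\|$ valid for all $x\in C$: it is trivial at $x=0$, and for $x\neq0$ one rescales using $x/\|x\|\in K$. Now for any $\psi\in V^*$ with $\|\psi\|_*<m$ and any $x\in C$ one has $(\varphi+\psi)(x)\geq\varphi(x)-\|\psi\|_*\|x\|\geq(m-\|\psi\|_*)\|x\|\geq0$, so $\varphi+\psi\in C^*$. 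Hence the $\|\cdot\|_*$-ball of radius $m$ around $\varphi$ is contained in $C^*$, i.e.\ $\varphi\in\inter(C^*)$.

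For the reverse implication I would take $\varphi\in\inter(C^*)$ and fix $\delta>0$ with $\{\psi:\|\psi-\varphi\|_*<\delta\}\subseteq C^*$. Since $\varphi(0)=0$, it remains only to show $\varphi(x)>0$ for $x\in C\setminus\{0\}$. Given such an $x$, choose $g\in V^*$ with $g(x)=1$ (possible because $x\neq0$: complete $x$ to a basis of $V$ and take the corresponding dual functional), noting $\|g\|_*>0$. For any $\eta$ with $0<\eta<\delta/\|g\|_*$ the functional $\varphi-\eta g$ lies in the chosen ball, hence in $C^*$, so $0\leq(\varphi-\eta g)(x)=\varphi(x)-\eta$, giving $\varphi(x)\geq\eta>0$. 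This proves strict positivity and finishes the argument.

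I do not anticipate a genuine obstacle: the argument is a routine compactness-plus-separation exercise. The one point worth flagging is that the direction ``strictly positive $\Rightarrow$ interior'' really does use both closedness of $C$ (through compactness of $K$, hence through the uniform lower bound $\varphi\geq m\|\cdot\|$) and the generating property (through $K\neq\emptyset$); dropping either hypothesis breaks the compact-base step. Alternatively, one may simply invoke~\cite[Theorem~3.5]{ALIPRANTIS} verbatim, of which this is a standard restatement.
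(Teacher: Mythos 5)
Your proof is correct. The paper does not prove this statement at all --- it is recorded as a Fact with a pointer to~\cite[Theorem~3.5]{ALIPRANTIS} --- so there is no in-paper argument to compare against; your two-inclusion compactness argument (uniform lower bound $\varphi\geq m\|\cdot\|$ on the compact base $K$ for one direction, perturbation by $-\eta g$ inside the interior ball for the other) is the standard proof of exactly this statement, and your remark about where closedness and the generating property enter is accurate.
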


Hence, in Definition~\ref{def:GPT} we could equivalently require that $u$ be a strictly positive functional. 

\begin{definition} \label{def:base}
A \emph{base} of a cone $C\subseteq V$ is a convex subset $K\subset C$ such that for all $x\in C$ there is a unique $t\geq 0$ that satisfies $x\in tK$.
\end{definition}

\begin{fact}[{see~\cite[Theorem 1.47]{ALIPRANTIS}}] \label{fact:bases-strictly-positive}
Let $C$ be a convex and salient cone. Then all bases of $C$ (if they exist) are of the form $K = C\cap \varphi^{-1}(1)$, for some strictly positive functional $\varphi$ on $C$. 
\end{fact}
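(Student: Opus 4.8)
The plan is to reconstruct the advertised functional $\varphi$ directly out of the base $K$, exploiting the uniqueness clause of Definition~\ref{def:base}.

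First I would record two preliminary observations about an arbitrary base $K$ of $C$. It is nonempty: otherwise $tK=\emptyset$ for every $t\geq 0$, so no scalar $t$ is associated to $x=0\in C$, contradicting the definition of a base. And it does not contain $0$: since $K\neq\emptyset$ we have $0\cdot K=\{0\}$, so $t=0$ is a valid scalar for $x=0$; if moreover $0\in K$, then $0\in tK$ for \emph{every} $t\geq 0$, destroying uniqueness. Consequently, for each $x\in C$ I may set $\varphi(x)$ equal to the unique $t\geq 0$ with $x\in tK$, and $\varphi(x)=0$ forces $x\in 0\cdot K=\{0\}$, i.e.\ $\varphi$ vanishes on $C$ only at the origin. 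Thus $\varphi:C\to\R_+$ is strictly positive in the sense of Fact~\ref{fact:u-strictly-positive}, and for $x\neq 0$ one has $x=\varphi(x)\,k$ with $k=x/\varphi(x)\in K$.

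The heart of the proof is to show that $\varphi$ is the restriction to $C$ of a linear functional on $V$. Positive homogeneity $\varphi(\lambda x)=\lambda\varphi(x)$ (for $\lambda\geq 0$) is immediate from uniqueness applied to $\lambda x\in\lambda\varphi(x)K$. For additivity, fix $x,y\in C$, discard the trivial cases $\varphi(x)=0$ or $\varphi(y)=0$, and write $x=\varphi(x)k_1$, $y=\varphi(y)k_2$ with $k_1,k_2\in K$; convexity of $K$ then yields
\[
x+y=\bigl(\varphi(x)+\varphi(y)\bigr)\left(\frac{\varphi(x)}{\varphi(x)+\varphi(y)}\,k_1+\frac{\varphi(y)}{\varphi(x)+\varphi(y)}\,k_2\right)\in\bigl(\varphi(x)+\varphi(y)\bigr)K,
\]
so uniqueness forces $\varphi(x+y)=\varphi(x)+\varphi(y)$. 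Given additivity, one extends $\varphi$ to the linear span $W\coloneqq C-C$ of $C$ by declaring $\varphi(x-y)\coloneqq\varphi(x)-\varphi(y)$ for $x,y\in C$; this is well defined, since $x-y=x'-y'$ (all four vectors in $C$) rewrites as $x+y'=x'+y$, to which additivity applies, and it is $\R$-linear on $W$ by construction. Finally I extend $\varphi$ linearly and arbitrarily from $W$ to all of $V$ (e.g.\ by fixing a complement of $W$), which leaves strict positivity on $C\subseteq W$ untouched.

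It remains only to check $K=C\cap\varphi^{-1}(1)$: if $k\in K$ then $k\in 1\cdot K$, hence $\varphi(k)=1$ by uniqueness; conversely, if $x\in C$ and $\varphi(x)=1$ then $x\in 1\cdot K=K$. I do not anticipate a genuine obstacle here; the one step that deserves care is the additivity argument, where both the convexity of $K$ and the uniqueness of the representing scalar are genuinely used, together with the routine verification that the extension of $\varphi$ to $C-C$ is well defined.
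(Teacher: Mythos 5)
Your proof is correct and complete. The paper itself gives no argument for this fact (it only cites Theorem~1.47 of the Aliprantis--Tourky reference), and what you have written is precisely the standard proof: extract $\varphi(x)$ as the unique scale factor, use convexity of $K$ together with uniqueness to get additivity on $C$, extend to $C-C$ and then to all of $V$. The one remark worth adding is that saliency never enters your argument explicitly because it is automatically implied by the existence of a base: if $x$ and $-x$ both lay in $C\setminus\{0\}$, your additivity computation applied to $x+(-x)=0$ would already contradict the uniqueness clause, so nothing is missing.
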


In particular, note that the state space $\Omega$ of a GPT $(V,C,u)$ is always a base of the cone $C$. There is a natural yet general way to construct a cone with a given base:

\begin{definition}
Let $V$ be a vector space, and $K \subseteq V$ a convex set. The \emph{cone with base} $K$ is the cone in $V \oplus \R$ defined as
\begin{equation} \label{eq:cone-over-K} \CC(K) = \{ (tx,t) \st x \in K, \ t \in \R_+ \}. \end{equation}
\end{definition}

Fact~\ref{fact:cone-has-base} shows that any proper cone has a base.

\begin{fact}[see {\cite[Corollary 1.8]{ABMB}}] \label{fact:cone-has-base}
Let $C\subseteq V$ be a finite-dimensional convex, salient and generating salient cone. Then:
\begin{enumerate}[(a)]
\item $C$ is closed (and hence, proper) if and only if bases for it exist and are all compact;
\item if $C$ is proper, then it is isomorphic to $\CC(K)$, where $K$ is any of its bases;
\item in particular, if $C$ is proper and $\varphi\in \inter(C^*)$ is strictly positive on $C$, then $C\cap \varphi^{-1}(1)$ is a convex body inside the affine space $\varphi^{-1}(1)$.
\end{enumerate}
%In finite dimension, a cone is proper if and only if it has a compact base. Moreover, any proper cone $C$ is isomorphic to $\CC(K)$, where $K$ is any of its bases. By Fact~\ref{fact:bases-strictly-positive}, this also implies that whenever $C$ is positive... In fact, we can take $K\coloneqq C\cap \varphi^{-1}(1)$, where $\varphi$ is any strictly positive functional on $C$.
\end{fact}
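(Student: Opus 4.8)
The plan is to reduce everything to the description of bases of salient cones furnished by Facts~\ref{fact:bases-strictly-positive} and~\ref{fact:u-strictly-positive}: a convex salient cone admits a base exactly when it carries a strictly positive functional, and every base is of the form $K=C\cap\varphi^{-1}(1)$ for such a $\varphi$, in which case (using convexity of $K$) one has $C=\cone(K)$. Granted this dictionary, parts (b) and (c) become essentially bookkeeping, so the core of the argument is part~(a).

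For the implication ``bases exist and are all compact $\Rightarrow$ $C$ closed'', I would fix one base $K=C\cap\varphi^{-1}(1)$, which is compact and avoids $0$ (since $\varphi(0)=0$), and prove that the conical hull of a compact set missing the origin is closed: if $t_nx_n\to y$ with $x_n\in K$ and $t_n\geq 0$, compactness yields a subsequence $x_n\to x\in K$, and applying the continuous functional $\varphi$ gives $t_n=\varphi(t_nx_n)\to\varphi(y)$, so $(t_n)$ is bounded; passing to a further subsequence $t_n\to t\geq 0$ gives $y=tx\in C$.

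For the converse, assume $C$ is proper. I must first manufacture a base, i.e.\ a strictly positive functional on $C$: since $C$ is closed, salient and generating, standard conic duality (via the bipolar theorem, Fact~\ref{fact:bidual}) makes $C^*$ generating and salient, hence proper, hence possessed of nonempty interior; any $\varphi\in\inter(C^*)$ is strictly positive by Fact~\ref{fact:u-strictly-positive}, so $C\cap\varphi^{-1}(1)$ is a base. Then I must show every base $K=C\cap\psi^{-1}(1)$ (with $\psi$ strictly positive) is compact; it is closed and convex for free, so it remains to rule out unboundedness: from $x_n\in K$ with $\|x_n\|\to\infty$, a subsequence gives $x_n/\|x_n\|\to v$ with $\|v\|=1$, and since $C$ is a closed cone $v\in C\setminus\{0\}$, yet $\psi(v)=\lim\psi(x_n)/\|x_n\|=0$, contradicting strict positivity. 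This normalization argument, together with the duality input needed to produce the base in the first place, is the step I expect to demand the most care.

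Finally, for (b) I would fix a base $K=C\cap\varphi^{-1}(1)$ with $\varphi\in\inter(C^*)$ and a vector $x_0$ with $\varphi(x_0)=1$, and consider the linear bijection $\Phi:V\to\ker(\varphi)\oplus\R$, $\Phi(v)=(v-\varphi(v)x_0,\,\varphi(v))$, with inverse $(w,t)\mapsto w+tx_0$; since $C=\cone(K)$ one gets $\Phi(tx)=(t(x-x_0),t)$ for $x\in K$, $t\geq 0$, so $\Phi(C)=\CC(K-x_0)$, which is precisely the cone with base $K$. For (c), with $C$ proper and $\varphi\in\inter(C^*)$ the set $K=C\cap\varphi^{-1}(1)$ is a base, hence compact and convex by the above; and it has nonempty interior inside the hyperplane $\varphi^{-1}(1)$ because a closed generating convex cone in finite dimensions has nonempty interior in $V$, so choosing $x\in\inter(C)$ and rescaling by $1/\varphi(x)>0$ lands in $\inter(C)\cap\varphi^{-1}(1)$, a relatively open subset of $K$.
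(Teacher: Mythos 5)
The paper gives no proof of this fact, deferring entirely to \cite[Corollary 1.8]{ABMB}, so there is no internal argument to compare against; your self-contained proof is correct and is the standard one. The only point to tidy is in part (b): your map $\Phi$ gives $\Phi(C)=\CC(K-x_0)$ rather than $\CC(K)$ itself, so you should close the loop by invoking Fact~\ref{fact:affine-linear} (translation by $-x_0$ is an affine bijection, hence $\CC(K-x_0)$ and $\CC(K)$ are isomorphic). Everything else is sound: the closedness of the conical hull of a compact convex set avoiding the origin, the production of a strictly positive functional from $\inter(C^*)$ via the bipolar theorem (which needs saliency of $C$ to make $C^*$ generating and the generating property of $C$ to make $C^*$ salient, hence $\inter(C^*)\neq\emptyset$), the normalisation argument $x_n/\|x_n\|\to v$ with $\varphi(v)=0$ ruling out unbounded bases, and the rescaling of an interior point of $C$ into $\varphi^{-1}(1)$ for part (c).
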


In particular, note that whenever $(V,C,u)$ is a GPT, the cone $C$ is isomorphic to $\CC(\Omega)$, where $\Omega$ is the state space. Also, $\Omega$ is a convex body when viewed as a subset of the affine space $u^{-1}(1)$.
Fact~\ref{fact:faces-cone-base} relates the facial structure of a cone and of its base.

\begin{fact}[see {\cite[Proposition 1.9]{ABMB}}] \label{fact:faces-cone-base}
Let $K$ be a convex body. There is a one-to-one correspondence between faces of $K$ and faces of $\CC(K)$ distinct from $\{0\}$, given by the map $F \mapsto \R_+ F$.
\end{fact}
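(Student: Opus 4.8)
The plan is to exhibit the inverse map explicitly and verify that both directions are well-defined. Write $V' \coloneqq V \oplus \R$ and let $u \in (V')^*$ be the coordinate functional $u(v,t) = t$, so that $\CC(K) \subseteq V'$ is a cone with base $\CC(K) \cap u^{-1}(1) = \{(x,1) : x \in K\}$, and every nonzero element of $\CC(K)$ is uniquely of the form $t\cdot(x,1)$ with $t > 0$ and $x \in K$. Under the identification of a subset $F \subseteq V$ with $\{(x,1) : x \in F\} \subseteq V'$, the map in the statement sends a face $F$ of $K$ to $\R_+ F = \CC(F)$. The candidate inverse sends a face $G$ of $\CC(K)$ with $G \neq \{0\}$ to the set $F_G \coloneqq \{x \in K : (x,1) \in G\}$.

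First I would check that $\CC(F)$ is a face of $\CC(K)$ whenever $F$ is a face of $K$. It is a nonempty (since $F \neq \emptyset$) convex subcone of $\CC(K)$, so it remains to verify the face condition. Suppose $(sx,s) + (ty,t) \in \CC(F)$ with $(sx,s),(ty,t) \in \CC(K)$, i.e.\ $x,y \in K$ and $s,t \geq 0$. If $s = 0$ or $t = 0$ the conclusion is immediate, since then one summand is $0 \in \CC(F)$ and the other equals the whole vector. If $s,t > 0$, then $\frac{s}{s+t}\,x + \frac{t}{s+t}\,y \in F$ is a proper convex combination of points of $K$, so the face property of $F$ forces $x,y \in F$, whence $(sx,s),(ty,t) \in \CC(F)$. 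This shows $\CC(F)$ is a face of $\CC(K)$ distinct from $\{0\}$.

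For the reverse direction I would take a face $G \neq \{0\}$ of $\CC(K)$ and show $G = \CC(F_G)$ with $F_G$ a face of $K$. Since $G$ is a subcone, a nonzero vector $t(x,1)$ (with $t>0$) lies in $G$ iff $(x,1) \in G$ iff $x \in F_G$; together with $0 \in G \cap \CC(F_G)$ this gives $G = \CC(F_G)$. Nonemptiness of $F_G$ follows from $G \neq \{0\}$, and convexity of $F_G$ from convexity of $G$; if $x = \lambda y + (1-\lambda)z$ with $y,z \in K$, $0 < \lambda < 1$, $x \in F_G$, then $(x,1) = \lambda(y,1) + (1-\lambda)(z,1) \in G$ with $(y,1),(z,1) \in \CC(K)$, so the face property of $G$ yields $(y,1),(z,1) \in G$, i.e.\ $y,z \in F_G$. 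Finally the two assignments are mutually inverse: $F_{\CC(F)} = F$ is immediate and $\CC(F_G) = G$ was just shown, so $F \mapsto \CC(F)$ is the asserted bijection. I do not anticipate any real obstacle: the only points needing care are the degenerate cases $s=0$, $t=0$, or $z=0$ in the face verification, and keeping the two notions of face — of the convex body $K$ and of the cone $\CC(K)$ — properly aligned.
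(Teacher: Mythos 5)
Your proof is correct. The paper itself gives no argument for this Fact; it simply cites \cite[Proposition~1.9]{ABMB}, so there is nothing to compare against beyond noting that your self-contained elementary verification (exhibiting the inverse $G \mapsto F_G$ and checking both face conditions, with the degenerate cases $s=0$, $t=0$ handled separately) is exactly the standard proof one would expect behind that citation.

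Two small points are asserted without justification and should be made explicit, though both are one-line standard facts rather than gaps. First, in the reverse direction you use that a face $G \neq \{0\}$ of $\CC(K)$ is a subcone containing $0$: this follows because for $x \in G$, $x \neq 0$ and $\mu > 1$ one has $x = \tfrac{1}{\mu}(\mu x) + \left(1 - \tfrac{1}{\mu}\right)\cdot 0$ with $\mu x, 0 \in \CC(K)$, so the face property forces $\mu x \in G$ and $0 \in G$, and then $\R_+ x \subseteq G$ by convexity. Second, in the forward direction you verify the additive condition ($a + b \in \CC(F)$ with $a,b \in \CC(K)$ implies $a,b \in \CC(F)$) rather than the paper's convex-combination definition of a face; for a subcone these are equivalent (given $\lambda a + (1-\lambda)b \in \CC(F)$, apply your additive argument to $\lambda a$ and $(1-\lambda)b$ and then rescale inside the cone $\CC(F)$), but the reduction deserves a sentence. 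With those two remarks inserted, the argument is complete.
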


\begin{fact} \label{fact:affine-linear}
Let $K$ be a convex body in a vector space $V$, and let $\Phi : V \to V$ be an affine bijection, i.e.\ a map of the form $x \mapsto \Phi(x) = \Psi(x) + z$, where $z \in V$ and $\Psi : V \to V$ is an invertible linear map. Then the cones $\CC(K)$ and $\CC(\Phi(K))$ are isomorphic.
\end{fact}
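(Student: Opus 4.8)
\textbf{Proof plan for Fact~\ref{fact:affine-linear}.}

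The plan is to exhibit an explicit linear bijection between the host spaces $V \oplus \R$ of $\CC(K)$ and $\CC(\Phi(K))$ and check directly that it carries one cone onto the other. Write $\Phi(x) = \Psi(x) + z$ with $\Psi$ invertible linear and $z \in V$. The natural candidate is the map $\widehat{\Phi} : V \oplus \R \to V \oplus \R$ defined by $\widehat{\Phi}(v,t) \coloneqq (\Psi(v) + tz,\, t)$. First I would verify that $\widehat{\Phi}$ is linear (immediate from linearity of $\Psi$) and bijective: its inverse is $(w,s) \mapsto (\Psi^{-1}(w - sz),\, s)$, which one checks by composition. The block-triangular structure with $\Psi$ in one diagonal block and $1$ in the other makes invertibility transparent.

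Next I would check that $\widehat{\Phi}(\CC(K)) = \CC(\Phi(K))$. A generic element of $\CC(K)$ has the form $(tx, t)$ with $x \in K$ and $t \in \R_+$. Applying $\widehat{\Phi}$ gives $(\Psi(tx) + tz,\, t) = (t(\Psi(x) + z),\, t) = (t\,\Phi(x),\, t)$, which is precisely a generic element of $\CC(\Phi(K))$ since $\Phi(x)$ ranges over $\Phi(K)$ as $x$ ranges over $K$. This shows $\widehat{\Phi}(\CC(K)) \subseteq \CC(\Phi(K))$; the reverse inclusion follows by running the same computation with $\Phi^{-1}$ (also affine, with linear part $\Psi^{-1}$), or simply by noting the computation above is reversible. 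Hence $\CC(K)$ and $\CC(\Phi(K))$ are isomorphic in the sense of Definition~6.

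There is essentially no obstacle here: the only mild subtlety is getting the bookkeeping right on where the translation vector $z$ enters, namely that it must be scaled by the last coordinate $t$ so that $\widehat{\Phi}$ stays linear on $V \oplus \R$ while inducing the affine map $\Phi$ on the slice $\{t = 1\}$. This is the standard ``homogenization'' or ``projectivization'' trick: affine maps on $V$ lift to linear maps on $V \oplus \R$ preserving the last coordinate. The claim then drops out of a one-line computation.
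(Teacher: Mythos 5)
Your proposal is correct and uses exactly the same construction as the paper: the lift $(v,t)\mapsto(\Psi(v)+tz,\,t)$, which the paper states in one line and you verify in full detail. Nothing to add.
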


\begin{proof}
One checks that the linear map $\tilde{\Phi} : V \times \R \to V \times \R$ defined by $\tilde{\Phi}(x,t) = (\Psi(x)+t z,t)$ is invertible and satisfies
$\tilde{\Phi}(\CC(K)) = \CC(\Phi(K))$.
\end{proof}

\begin{definition}
A cone is \emph{classical} if it is isomorphic to 
\begin{equation}
\R_+^n \coloneqq \{ (x_1,\dots,x_n) \in \R^n \st x_i \geq 0 \textnormal{ for } 1 \leq i \leq n\}
\end{equation}
for some integer $n \geq 1$.
\end{definition}

\begin{fact} \label{fact:dual-classical}
Let $\C$ be a proper cone. Then $\C$ is classical if and only if $\C^*$ is classical.
\end{fact}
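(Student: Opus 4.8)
The plan is to deduce both implications from a single observation: dualization commutes with linear isomorphisms. Concretely, if $\Phi:V\to W$ is a linear bijection and $D\subseteq V$ is a cone, then the adjoint $\Phi^*:W^*\to V^*$ is again a bijection, and $\Phi(D)^* = (\Phi^*)^{-1}(D^*)$. This is immediate from unwinding the definitions: a functional $f$ lies in $\Phi(D)^*$ precisely when $f(\Phi(y))\geq 0$ for all $y\in D$, which by definition of the adjoint means $\Phi^*(f)\in D^*$; since $\Phi$ is invertible so is $\Phi^*$. In particular $\Phi(D)^*$ is linearly isomorphic to $D^*$.

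With this in hand, the forward direction is short. First one records that the standard classical cone is self-dual, $(\R_+^n)^* = \R_+^n$: a functional is non-negative on each coordinate vector $e_1,\dots,e_n$ exactly when all of its coordinates are non-negative, and non-negativity on all of $\R_+^n$ then follows by taking conical combinations. Hence, if $\C$ is classical, say $\C=\Phi(\R_+^n)$ for a linear bijection $\Phi$, the observation above yields that $\C^*$ is linearly isomorphic to $(\R_+^n)^* = \R_+^n$, and is therefore classical.

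For the converse I would invoke the bipolar theorem (Fact~\ref{fact:bidual}): since $\C$ is proper it is in particular a closed convex cone, so $\C^{**}=\C$ under the identification $V^{**}=V$. Now assume $\C^*$ is classical; applying the forward implication just proved to the cone $\C^*$ (note this implication needs only that a cone be classical, not that it be proper) gives that $(\C^*)^*$ is classical. But $(\C^*)^* = \C^{**} = \C$, so $\C$ is classical, completing the equivalence.

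I do not expect any serious obstacle: the argument is entirely formal. The only points requiring a little care are the bookkeeping with adjoint maps and the canonical identification $V^{**}\cong V$, together with the elementary facts that the adjoint of a linear bijection is a bijection and that $(\R_+^n)^*=\R_+^n$ — all standard linear algebra.
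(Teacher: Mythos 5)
Your proof is correct. The paper states this fact without proof, and your argument supplies the natural one using exactly the ingredients the paper already provides: the identity $\Phi(\C)^* = (\Phi^*)^{-1}(\C^*)$ is the paper's Fact~\ref{fact:polar-linearmap}, the self-duality $(\R_+^n)^* = \R_+^n$ is elementary, and the converse correctly reduces to the forward direction via the bipolar theorem (Fact~\ref{fact:bidual}), which applies since a proper cone is closed and convex. No gaps.
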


\begin{definition}
A \emph{simplex} in a vector space is the convex hull of an affinely independent set. %\tcr{of the same dimension? What about a segment in $\R^3$?}. \tcb{This is standard terminology and allows to say, for example, that faces of a simplex are simplices}
\end{definition}

\begin{fact} \label{fact:classical}
Let $\C$ be a proper cone in a vector space $V$. The following are equivalent:
\begin{enumerate}
    \item[(i)] $\C$ is classical,
    \item[(ii)] there exists a basis $A$ of the vector space $V$ such that $\C = \cone(A)$,
    \item[(iii)] there exists a simplex $\Delta$ with $\dim(\Delta)=\dim(V)$ such that $\C$ is isomorphic to $\CC(\Delta)$,
    \item[(iv)] every convex set $K$ such that $\CC(K)$ is isomorphic to $\C$ is a simplex.
\end{enumerate}
\end{fact}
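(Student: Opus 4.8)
The plan is to establish $(i)\Leftrightarrow(ii)$, the chain $(ii)\Rightarrow(iii)\Rightarrow(i)$, and the chain $(i)\Rightarrow(iv)\Rightarrow(iii)$; together these close all four equivalences. The equivalence $(i)\Leftrightarrow(ii)$ is a direct unwinding of the definitions: $C$ is classical iff there is a linear bijection $\Phi\colon\R^n\to V$ with $\Phi(\R_+^n)=C$, and since $\R_+^n=\cone\{e_1,\dots,e_n\}$ for the standard basis $\{e_i\}$, this holds iff $C=\cone\{\Phi(e_1),\dots,\Phi(e_n)\}$, where $\{\Phi(e_i)\}_i$ ranges over exactly the bases of $V$ as $\Phi$ ranges over the linear bijections.

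For $(ii)\Rightarrow(iii)$ I would start from $C=\cone(A)$ with $A=\{a_1,\dots,a_n\}$ a basis of $V$ and take the functional $u\in V^*$ determined by $u(a_i)=1$ for all $i$; every $x\in C$ is a nonnegative combination $\sum_i\lambda_i a_i$, so $u(x)=\sum_i\lambda_i\geq 0$ with equality only at $x=0$, whence $u$ is strictly positive and $u\in\inter(C^*)$ by Fact~\ref{fact:u-strictly-positive}. Fact~\ref{fact:cone-has-base}(b)--(c) then gives that $C$ is isomorphic to $\CC(K)$ with $K=C\cap u^{-1}(1)=\conv(A)$, and since a basis is in particular affinely independent, $K$ is a simplex, proving $(iii)$. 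For $(iii)\Rightarrow(i)$, given $C$ isomorphic to $\CC(\Delta)$ with $\Delta=\conv\{v_0,\dots,v_k\}$ a simplex, I would note that $\CC(\Delta)=\cone\{(v_0,1),\dots,(v_k,1)\}$ and that these augmented vertices are linearly independent: a relation $\sum_i\lambda_i(v_i,1)=0$ forces $\sum_i\lambda_i=0$ and $\sum_i\lambda_i v_i=0$, so all $\lambda_i=0$ by affine independence of the $v_i$. Hence $\CC(\Delta)$ is the conical hull of a linearly independent set, thus isomorphic to some $\R_+^m$, and so $C$ is classical.

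For $(i)\Rightarrow(iv)$, fix $\Phi$ with $\Phi(\R_+^n)=C$ and let $K$ be an arbitrary convex set with $\CC(K)$ isomorphic to $C$. Then $K$ is a convex body (as $\CC(K)$ is proper), and since $K\times\{1\}$ is a base of $\CC(K)$, its image under the isomorphism is a base of $C$; pulling that base back by $\Phi^{-1}$ yields a base of $\R_+^n$, which by Fact~\ref{fact:bases-strictly-positive} has the form $\{x\in\R_+^n:\sum_i c_i x_i=1\}=\conv\{e_i/c_i\}_i$ for some $c_i>0$, hence a simplex. Since the simplex property is invariant under affine bijections, $K$ itself is a simplex. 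Finally $(iv)\Rightarrow(iii)$ is immediate: Fact~\ref{fact:cone-has-base}(b) produces a base $K$ of $C$ with $\CC(K)$ isomorphic to $C$, and $(iv)$ forces $K$ to be a simplex, so we may take $\Delta=K$.

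The step requiring the most care is $(i)\Rightarrow(iv)$: one must verify that \emph{every} admissible $K$ --- equivalently, every base of $C$ --- is a simplex, not merely exhibit one simplicial base. The resolution hinges on the explicit parametrization of the bases of $\R_+^n$ as level sets of strictly positive functionals, combined with the transfer of bases along the linear isomorphism $\Phi$; all remaining steps are routine once the facts on cones and their bases recalled above are in hand.
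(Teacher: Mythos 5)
The paper states this Fact without proof, treating it as elementary, so there is no argument of the authors' to compare against; your write-up is a correct and natural filling-in of the details, and the route (closing the equivalences via $(i)\Leftrightarrow(ii)$, $(ii)\Rightarrow(iii)\Rightarrow(i)$, $(i)\Rightarrow(iv)\Rightarrow(iii)$, with the key step $(i)\Rightarrow(iv)$ handled by classifying all bases of $\R_+^n$ via Fact~\ref{fact:bases-strictly-positive}) is exactly the standard one. Two small points deserve attention. First, in $(iii)\Rightarrow(i)$ you pass from ``$\CC(\Delta)$ is the conical hull of a linearly independent set'' to ``isomorphic to some $\R_+^m$''; in the paper's convention an isomorphism is a linear bijection of the \emph{ambient} spaces, so you should add that the augmented vertices $(v_i,1)$ in fact span the ambient space of $\CC(\Delta)$ --- this follows because $C$ is proper, hence generating, and generation is preserved under linear bijections. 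Second, your constructions in $(ii)\Rightarrow(iii)$ and $(iv)\Rightarrow(iii)$ produce a simplex of dimension $\dim(V)-1$ (the base of a $d$-dimensional cone is $(d-1)$-dimensional), whereas clause (iii) as printed asks for $\dim(\Delta)=\dim(V)$; this is evidently an off-by-one slip in the paper's statement (consistent with its use elsewhere, e.g.\ a $3$-dimensional cone being classical iff its base is a triangle), but your proof should either flag it or record the dimension count explicitly rather than passing over the clause in silence.
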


\begin{definition}
Let $V$, $W$ be vector spaces. The \emph{adjoint} of a linear map $\Phi : V \to W$ is the linear map
$\Phi^* : W^* \to V^*$ defined by the relation $\Phi^*(g)(x)=g(\Phi(x))$ for every $x \in V$, $g \in W^*$.
\end{definition}

\begin{fact} \label{fact:polar-linearmap}
Let $V$, $W$ be vector spaces, $\C \subseteq V$ be a convex cone and $\Phi : V \to W$ be a linear map. Then $\Phi(\C)^* = (\Phi^*)^{-1}(\C^*)$.
\end{fact}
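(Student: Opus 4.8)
The plan is a direct unwinding of definitions; the only substantive input is the defining relation $\Phi^*(g)(x) = g(\Phi(x))$ of the adjoint, and I would prove both inclusions at once by a chain of equivalences valid for an arbitrary $g \in W^*$.

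First I would spell out what membership in the left-hand side means: $g \in \Phi(\C)^*$ holds precisely when $g(w) \geq 0$ for every $w \in \Phi(\C)$, and since $\Phi(\C) = \{\Phi(x) : x \in \C\}$ by definition of the image, this is the same as requiring $g(\Phi(x)) \geq 0$ for every $x \in \C$. Next I would rewrite $g(\Phi(x)) = \Phi^*(g)(x)$ using the definition of the adjoint, so the condition becomes $\Phi^*(g)(x) \geq 0$ for every $x \in \C$, i.e.\ $\Phi^*(g) \in \C^*$. Finally, $\Phi^*(g) \in \C^*$ is by definition exactly the statement $g \in (\Phi^*)^{-1}(\C^*)$. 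Concatenating these equivalences yields $\Phi(\C)^* = (\Phi^*)^{-1}(\C^*)$.

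There is essentially no obstacle here: convexity and the cone property of $\C$ are never used (the identity holds for an arbitrary subset of $V$), and no closedness or finite-dimensionality is needed. The only point demanding minimal care is bookkeeping of which dual space each object inhabits — $g \in W^*$, $\Phi^*(g) \in V^*$, $\Phi(\C)^* \subseteq W^*$, $\C^* \subseteq V^*$ — so that $(\Phi^*)^{-1}(\C^*)$ is read correctly as the preimage of $\C^* \subseteq V^*$ under $\Phi^* : W^* \to V^*$, and is therefore a subset of $W^*$ as it should be.
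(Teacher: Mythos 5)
Your proof is correct and is essentially identical to the paper's: the same chain of equivalences $g \in \Phi(\C)^* \iff \forall x \in \C,\ g(\Phi(x)) \geq 0 \iff \forall x \in \C,\ \Phi^*(g)(x) \geq 0 \iff \Phi^*(g) \in \C^*$. Your added remarks on which dual space each object lives in, and on the fact that the cone structure is never used, are accurate but not needed.
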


\begin{proof}
For $g \in W^*$, we have the equivalences
\[ g \in \Phi(\C)^* \iff \forall x \in \C,\ g(\Phi(x)) \geq 0 \iff
\forall x \in \C,\ \Phi^*(g)(x) \geq 0 \iff \Phi^*(g) \in \C^*, 
\]
hence the result follows.
\end{proof}

\subsection{Tensor products of cones}

We now introduce our main object of study: tensor products of cones.

\begin{definition}
Let $V_1$ and $V_2$ be vector spaces, and $\C_1 \subseteq V_1$, $\C_2 \subseteq V_2$ be convex cones.
\begin{enumerate}[label={(S\arabic{definition}.\arabic*)}, leftmargin=\widthof{(S17.2)}+\labelsep]
\item The \emph{minimal tensor product} of $\C_1$ and $\C_2$ is the convex cone in $V_1 \otimes V_2$ defined by
\begin{equation*}
\C_1 \tmin \C_2 \coloneqq \conv \{ x_1 \otimes x_2 \st x_1 \in \C_1,\  x_2 \in \C_2 \}\, .
\end{equation*}
\item The \emph{maximal tensor product} of $\C_1$ and $\C_2$ is the convex cone in $V_1 \otimes V_2$ defined by
\begin{equation*}
\C_1 \tmax \C_2 \coloneqq \{ x \in V_1 \otimes V_2 \st (f_1 \otimes f_2)(x) \geq 0 \textnormal{ for every } f_1 \in \C_1^*,\  f_2 \in \C_2^* \}\, .
\end{equation*}
In this definition we identify $(V_1 \otimes V_2)^*$ and $V_1^* \otimes V_2^*$.
\end{enumerate}
\end{definition}

\begin{fact} \label{fact:tmin-is-proper}
Let $\C_1$ and $\C_2$ be convex cones. Then
\begin{enumerate}[(a)]
    \item $\C_1 \tmin \C_2 \subseteq \C_1 \tmax \C_2$;
    \item if $\C_1$ and $\C_2$ are proper, then so are $\C_1 \tmin \C_2$ and $\C_1 \tmax \C_2$.
\end{enumerate}
\end{fact}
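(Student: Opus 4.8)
The plan is to treat the two parts separately; part~(a) is a one-line observation, while part~(b) amounts to verifying the four defining properties of a proper cone (convex, closed, salient, generating) for each of $\C_1\tmin\C_2$ and $\C_1\tmax\C_2$, of which only the closedness of the minimal tensor product requires real work. For part~(a), if $x_i\in\C_i$ and $f_i\in\C_i^*$ then $(f_1\otimes f_2)(x_1\otimes x_2)=f_1(x_1)\,f_2(x_2)\geq 0$, so every product vector $x_1\otimes x_2$ with $x_i\in\C_i$ lies in $\C_1\tmax\C_2$; since $\C_1\tmax\C_2$ is an intersection of half-spaces $\{x:(f_1\otimes f_2)(x)\geq 0\}$ it is convex, and a convex set containing all such product vectors contains their convex hull $\C_1\tmin\C_2$.

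For part~(b), assume $\C_1,\C_2$ proper. I would first dispatch $\C_1\tmax\C_2$: it is convex and closed for free, being an intersection of closed half-spaces through the origin. It is generating because, since each $\C_i$ spans $V_i$, one can pick a basis $(e^{(i)}_j)_j$ of $V_i$ with $e^{(i)}_j\in\C_i$, and then $(e^{(1)}_j\otimes e^{(2)}_k)_{j,k}$ is a basis of $V_1\otimes V_2$ lying in $\C_1\tmin\C_2\subseteq\C_1\tmax\C_2$. It is salient because, if $z$ and $-z$ both lie in $\C_1\tmax\C_2$, then $(f_1\otimes f_2)(z)=0$ for all $f_i\in\C_i^*$; the dual of a proper cone is itself proper, hence generating (a standard consequence of the bipolar theorem, Fact~\ref{fact:bidual}), so the products $f_1\otimes f_2$ span $(V_1\otimes V_2)^*$ and thus $z=0$. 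For $\C_1\tmin\C_2$, convexity is built into the definition, the basis of product vectors above shows it is generating, and its saliency follows from that of $\C_1\tmax\C_2$ via the inclusion in part~(a).

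The remaining point, which I expect to be the main obstacle, is the closedness of $\C_1\tmin\C_2$ --- convex hulls of closed sets need not be closed. Here I would use Fact~\ref{fact:cone-has-base} to fix compact bases $K_i=\C_i\cap\varphi_i^{-1}(1)$ of $\C_i$, with $\varphi_i\in\inter(\C_i^*)$, and then note that, writing every element of $\C_i$ as $t_iy_i$ with $t_i\geq 0$ and $y_i\in K_i$, one gets $\C_1\tmin\C_2=\R_+\,P$ where $P\coloneqq\conv\{y_1\otimes y_2:y_1\in K_1,\ y_2\in K_2\}$. The set $P$ is compact: $\{y_1\otimes y_2:y_i\in K_i\}$ is the continuous image of the compact set $K_1\times K_2$, and its convex hull is compact by Carath\'eodory's theorem. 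Moreover $0\notin P$, since $\varphi_1\otimes\varphi_2$ takes the value $1$ at every point of $P$. It then remains to invoke the elementary fact that $\R_+P$ is closed whenever $P$ is compact and convex with $0\notin P$: if $s_ny_n\to z$ with $s_n\geq 0$ and $y_n\in P$, then $\|y_n\|\geq\min_{y\in P}\|y\|>0$ forces $(s_n)$ to be bounded, so along a subsequence $s_n\to s\geq 0$ and $y_n\to y\in P$, whence $z=sy\in\R_+P$. This completes the verification that both tensor products are proper.
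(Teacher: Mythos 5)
Your proof is correct and follows essentially the same route as the paper: the only non-routine point is the closedness of $\C_1\tmin\C_2$, which the paper also handles by passing to compact bases $K_1,K_2$ and exhibiting a compact base for the minimal tensor product (you spell out the Carath\'eodory and normalising-functional details that the paper leaves implicit). The remaining verifications (convexity, saliency, generation, and the inclusion in part~(a)) are the standard ones the paper omits, and you carry them out correctly.
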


A non-obvious point in Fact~\ref{fact:tmin-is-proper} is that $\C_1 \tmin \C_2$ is closed. This can be seen by taking compact bases $K_1$, $K_2$, and by checking that $\C_1 \tmin \C_2$ also has a compact base, namely the image of the compact set $K_1 \times K_2$ under the continuous map $(x,y) \mapsto x \otimes y$.

\begin{fact} \label{fact:duality}
Minimal and maximal tensor product are dual to each other. Namely, for all proper cones $\C_1$ and $\C_2$, it holds that
\begin{align}
\left(C_1\tmin C_2\right)^* &= C_1^* \tmax C_2^*\, , \label{dual min is max} \\
\left(C_1\tmax C_2\right)^* &= C_1^* \tmin C_2^*\, . \label{dual max is min}
\end{align}
\end{fact}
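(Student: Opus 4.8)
The plan is to prove \eqref{dual min is max} directly from the definitions of the two tensor products, and then deduce \eqref{dual max is min} from \eqref{dual min is max} by applying it to the pair of dual cones $(C_1^*,C_2^*)$ and invoking the bipolar theorem (Fact~\ref{fact:bidual}).

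First I would establish \eqref{dual min is max}. Since a linear functional is nonnegative on $\conv(S)$ exactly when it is nonnegative on $S$, and since $\{x_1\otimes x_2 : x_1 \in C_1,\, x_2 \in C_2\}$ is already stable under nonnegative scaling, the dual cone of $C_1\tmin C_2 = \conv\{x_1\otimes x_2 : x_1\in C_1,\, x_2\in C_2\}$ equals $\{w\in(V_1\otimes V_2)^* : w(x_1\otimes x_2)\geq 0 \text{ for all } x_1\in C_1,\, x_2\in C_2\}$. Under the identification $(V_1\otimes V_2)^* = V_1^*\otimes V_2^*$, and using that $C_1,C_2$ are proper so that $(C_1^*)^* = C_1$ and $(C_2^*)^* = C_2$ by Fact~\ref{fact:bidual}, this is precisely the set $\{w\in V_1^*\otimes V_2^* : (f_1\otimes f_2)(w)\geq 0 \text{ for all } f_1\in(C_1^*)^*,\, f_2\in(C_2^*)^*\} = C_1^*\tmax C_2^*$. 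The one point that deserves care is the symmetry of the canonical pairing: for $g_i\in V_i$ and $w\in V_1^*\otimes V_2^*$ regarded as an element of $(V_1\otimes V_2)^*$, one must check that $(g_1\otimes g_2)(w)$, with $g_1\otimes g_2\in(V_1^*\otimes V_2^*)^* = V_1\otimes V_2$, coincides with $w(g_1\otimes g_2)$; this holds because both reduce to $g_1(f_1)\,g_2(f_2)$ on product tensors $w = f_1\otimes f_2$ and both are bilinear.

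Next I would deduce \eqref{dual max is min}. Applying \eqref{dual min is max} with $C_1,C_2$ replaced by the proper cones $C_1^*,C_2^*$ gives $(C_1^*\tmin C_2^*)^* = (C_1^*)^*\tmax(C_2^*)^* = C_1\tmax C_2$, again using Fact~\ref{fact:bidual}. Taking dual cones on both sides and noting that $C_1^*\tmin C_2^*$ is proper, hence closed, by Fact~\ref{fact:tmin-is-proper}(b), the bipolar theorem yields $(C_1\tmax C_2)^* = (C_1^*\tmin C_2^*)^{**} = C_1^*\tmin C_2^*$, as claimed.

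I do not expect a serious obstacle: the content lies entirely in unwinding the definitions of $\tmin$ and $\tmax$ together with the bipolar theorem. The only delicate point is the bookkeeping around the identifications $(V_1\otimes V_2)^*\cong V_1^*\otimes V_2^*$ and $V^{**}\cong V$, and verifying that every cone to which the bipolar theorem is applied is closed — which is precisely why Fact~\ref{fact:tmin-is-proper}(b), and the remark following it explaining the compactness of the base $K_1\times K_2$, are needed for the minimal tensor product.
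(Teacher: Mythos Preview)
The paper states this as a Fact without proof, so there is nothing to compare against directly. Your argument is correct and is the standard one: the first identity is an immediate unwinding of the definitions together with the bipolar theorem $(C_i^*)^*=C_i$, and the second follows by dualising and using that $C_1^*\tmin C_2^*$ is closed (Fact~\ref{fact:tmin-is-proper}(b)). Your attention to the symmetry of the canonical pairing and to the closedness hypothesis needed for the bipolar step is appropriate; nothing is missing.
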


\begin{definition}
Let $\C_1$ and $\C_2$ be proper cones. The pair $(\C_1,\C_2)$ is called \emph{nuclear} if $\C_1 \tmin \C_2 = \C_1 \tmax \C_2$ and \emph{entangleable} if $\C_1 \tmin \C_2 \subsetneq \C_1 \tmax \C_2$.
\end{definition}

The terminology ``nuclear'' is borrowed from the language of $C^*$-algebras, where the analogous concept has played a central role in the theory (see e.g.~\cite[\S 12]{Pisier12}).

\begin{fact} \label{fact:nuclear-bijection}
If $\C_1$ is isomorphic to $\C'_1$ and $\C_2$ is isomorphic to $\C'_2$, then
\[ (\C_1,\C_2) \textnormal{ is nuclear} \iff
 (\C'_1,\C'_2) \textnormal{ is nuclear.} \]
\end{fact}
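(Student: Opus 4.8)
The plan is to reduce the statement to a single observation: if $\Phi_1\colon V_1\to V_1'$ and $\Phi_2\colon V_2\to V_2'$ are linear bijections with $\Phi_i(\C_i)=\C_i'$, then the linear bijection $\Psi\coloneqq\Phi_1\otimes\Phi_2\colon V_1\otimes V_2\to V_1'\otimes V_2'$ carries $\C_1\tmin\C_2$ onto $\C_1'\tmin\C_2'$ \emph{and} carries $\C_1\tmax\C_2$ onto $\C_1'\tmax\C_2'$. Once this is established, nuclearity transfers immediately: applying the bijection $\Psi$ to both sides of the equality $\C_1\tmin\C_2=\C_1\tmax\C_2$ gives $\C_1'\tmin\C_2'=\C_1'\tmax\C_2'$, and the converse is obtained by running the same argument with each $\Phi_i$ replaced by its inverse (note $\Phi_i^{-1}(\C_i')=\C_i$, so the inverses are again positive bijections).

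For the two transfer claims I would proceed as follows. For the minimal tensor product, since a linear map commutes with taking convex hulls and $\Psi(x_1\otimes x_2)=\Phi_1(x_1)\otimes\Phi_2(x_2)$, while $\Phi_i$ maps $\C_i$ bijectively onto $\C_i'$, one reads off directly that $\Psi(\C_1\tmin\C_2)=\conv\{\Phi_1(x_1)\otimes\Phi_2(x_2)\st x_i\in\C_i\}=\conv\{y_1\otimes y_2\st y_i\in\C_i'\}=\C_1'\tmin\C_2'$. For the maximal tensor product, I would invoke the monotonicity of $\tmax$ under local positive maps, Eq.~\eqref{positive action tmax}: applied to the pair $(\Phi_1,\Phi_2)$ it gives $\Psi(\C_1\tmax\C_2)\subseteq\C_1'\tmax\C_2'$, and applied to $(\Phi_1^{-1},\Phi_2^{-1})$ it gives $\Psi^{-1}(\C_1'\tmax\C_2')\subseteq\C_1\tmax\C_2$, i.e.\ $\C_1'\tmax\C_2'\subseteq\Psi(\C_1\tmax\C_2)$; combining the two inclusions yields equality. (Alternatively, the $\tmax$ case can be deduced from the $\tmin$ case via the duality $\bigl(\C_1\tmax\C_2\bigr)^*=\C_1^*\tmin\C_2^*$ of Fact~\ref{fact:duality} together with the behaviour of dual cones under linear bijections in Fact~\ref{fact:polar-linearmap}, using that $\Psi^*=\Phi_1^*\otimes\Phi_2^*$ and that $\Phi_i^*$ is a bijection with $(\Phi_i^*)^{-1}(\C_i^*)=(\C_i')^*$.)

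Finally, the statement about entangleability is a free consequence: by Fact~\ref{fact:tmin-is-proper}(a) the inclusion $\C_1\tmin\C_2\subseteq\C_1\tmax\C_2$ always holds, so for proper cones ``nuclear'' and ``entangleable'' are complementary, and the same complementarity for the primed pair then transfers the equivalence verbatim. I do not anticipate a genuine obstacle here; the only point needing a moment's care is the preservation of the maximal tensor product, and that is dispatched either by applying the positive-map monotonicity inclusion in both directions or by dualising — both are one-line arguments from results already in hand.
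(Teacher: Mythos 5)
Your proof is correct. The paper states this Fact without proof, treating it as routine; your argument is exactly the intended one, namely that $\Phi_1\otimes\Phi_2$ maps $\C_1\tmin\C_2$ onto $\C_1'\tmin\C_2'$ (linearity commutes with convex hulls) and $\C_1\tmax\C_2$ onto $\C_1'\tmax\C_2'$ (apply the monotonicity inclusion Eq.~\eqref{positive action tmax} to the bijections and to their inverses), after which nuclearity transfers by applying the bijection to the equality of cones.
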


As explained in the main text, it is natural to conjecture that a pair $(\C_1,\C_2)$ of proper cones is nuclear if and only if $\C_1$ or $\C_2$ is classical. This question can be traced back to~\cite{Barker1976, Barker-review}. The `if' direction is easy and we have already seen a proof of it in the Methods section. The original argument seem to go back to Namioka and Phelps~\cite{NP}, and to have been re-discovered many times, e.g.\ by Barker~\cite{Barker1976, Barker-review}.
%\begin{proposition}[The easy direction] \label{proposition:easy-direction}
%Let $\C_1$ and $\C_2$ be proper cones such that one of them is classical. Then the pair $(\C_1,\C_2)$ is nuclear.
%\end{proposition}
%
%\begin{proof}[Proof of Proposition~\ref{proposition:easy-direction}]
%Consider $\C_1 \subset V_1$ and $\C_2 \subset V_2$ two proper cones in vector spaces. Without loss of generality, assume that $\C_1$ is classical, so that there exists a basis $(e_i)$ for $V_1$ such that $\C_1=\cone (e_i)$. Denote by $(e_i^*)$ the basis of $V^*$ which is dual to $V$, characterised by the formula $e_i^*(e_j)=\delta_{i,j}$. Our goal is to show that any element $z \in \C_1 \tmax \C_2$ belongs actually to $\C_1 \tmin \C_2$. We may decompose $z$ as
%\[ z = \sum_i e_i \otimes x_i \]
%for some vectors $(x_i)$ in $V_2$. For every $f \in \C_2^*$, by definition of the maximal tensor product, we have for every index $i$
%\[ 0 \leq (e_i^* \otimes f)(z) = f(x_i). \]
%This shows that $x_i \in \C_2^{**} = \C_2$ , and therefore $z \in \C_1 \tmin \C_2$.
%\end{proof}
As for the much more challenging `only if' direction, our first result settles the situation at least for $3$-dimensional cones.

\begin{manualthm}{\ref{3-dim result}}[Proved in Section~\ref{proof:dim3}] %\label{theorem:dim3}
Let $\C_1$ and $\C_2$ be proper cones of dimension $3$. If $(\C_1,\C_2)$ is nuclear, then either $\C_1$ or $\C_2$ is classical.
\end{manualthm}

Our second result solves the problem for polyhedral cones.

\begin{definition}
A convex cone is \emph{polyhedral} if it is the conical hull of a finite set.
\end{definition}

\begin{manualthm}{\ref{polyhedral result}}[Proved in Section~\ref{proof:polyhedral}] %\label{theorem:polyhedral}
Let $\C_1$ and $\C_2$ be proper polyhedral cones. If $(\C_1,\C_2)$ is nuclear, then either $\C_1$ or $\C_2$ is classical.
\end{manualthm}

Our third result concerns the case when one cone is the cone of PSD matrices.

\begin{definition} \label{def:PSD}
We denote by $\PSD_n$ be the cone of $n\times n$ positive semidefinite matrices, which is contained in the real vector space $\gM_n^{\mathrm{sa}}$ of $n \times n$ Hermitian matrices with complex entries. Note that $\dim (\PSD_n)=n^2$.
\end{definition}

\begin{manualthm}{\ref{semiquantum result}} %\label{theorem:semiquantum}
Let $\C$ be a proper cone of dimension $d$, and assume that $\floor{\log_2 n} \geq \frac{d-1}{2}$. Then the pair $(\C, \PSD_n)$ is nuclear if and only if $\C$ is classical.
\end{manualthm}

Our methods also yield a proof of the following result from~\cite{Huber-Netzer}.
\begin{manualthm}{\ref{semiquantum result}'} 
Let $\C$ be a proper polyhedral cone, and assume that $n \geq 2$. Then the pair $(\C, \PSD_n)$ is nuclear if and only if $\C$ is classical.
\end{manualthm}

%Theorem~\ref{semiquantum result} can be reformulated in terms on minimal and maximal matrix convex sets, and is a slight improvement on the results of~\cite{Passer2018}.

Our last result is a first attempt towards a quantitative extension of the above qualitative correspondence between local non-classicality and global entangleability. We are able to rigorously establish such an extension in the geometrically natural case where the local state spaces are centrally symmetric around some fixed local states.

\begin{definition}
A GPT $(V,C,u)$ is called \emph{symmetric} if there is a state $\gamma\in \Omega\coloneqq C\cap u^{-1}(1)$, called its \emph{centre}, such that $2\gamma - \omega\in \Omega$ for all $\omega\in \Omega$.
\end{definition}

We also put forth the following general definition of entanglement robustness.

\begin{definition}
Let $(V_1,C_1,u_1),\, (V_2,C_2,u_2)$ be GPTs. The \emph{entanglement robustness} of $\omega\in C_1\tmax C_2$ is given by
\begin{equation*}
\erob (\omega)\coloneqq \min\left\{ (u_1\otimes u_2)(\zeta):\, \zeta,\, \omega+\zeta\in C_1\tmin C_2 \right\} . \tag{\ref{ent-rob}}
\end{equation*}
\end{definition}

%\begin{manualthm}{\ref{symmetric result}}
%Let $(V_1,C_1,u_1)$ and $(V_2,C_2,u_2)$ be two symmetric GPTs of dimensions $n$ and $m$, respectively. 
%Given any pair of symmetric GPTs of dimensions $n+1,m+1\geq 3$, their maximal tensor product contains a state whose entanglement robustness~\cite{VidalTarrach} is at least $\erob (n,m) \geq (r(n,m)-1)/2$, where $r(n,m)$ is the universal function called `projective/injective ratio' and defined in~\cite[Eq.~(15)]{XOR}. In particular, $\erob(n,m)\geq 1/36$ for all $n,m\geq 2$,  and asymptotically $\erob (n,m)\geq c \min\{n,m\}^{1/8-o(1)}$ for some constant $c>0$. Hence, all pairs of symmetric GPTs are entangleable, with the maximal robustness of entanglement growing unboundedly with the local dimension.
%\end{manualthm}

\begin{manualthm}{\ref{symmetric result}}
Let $(V_1,C_1,u_1)$ and $(V_2,C_2,u_2)$ be two symmetric GPTs of dimensions $n+1$ and $m+1$, respectively. Then there exists $\omega \in C_1 \tmax C_2$ such that $\erob(\omega) \geq c \min\{n,m\}^{1/8-o(1)}$, where $c>0$ is a number, and $o(1)$ denotes a quantity tending to $0$ as $\min \{n,m\}$ tends to infinity. Moreover, provided $n,m \geq 2$, there exists $\omega \in C_1 \tmax C_2$ such that $\erob(\omega) \geq 1/36$.
\end{manualthm}

\subsection{Retracts}

\begin{definition} Let $V$, $W$ be vector spaces. 
\begin{enumerate}[label={(S\arabic{definition}.\arabic*)}, leftmargin=\widthof{(S23.2)}+\labelsep]
\item We denote by $\gL(V,W)$ the vector space of linear maps from $V$ to $W$.
\item We denote by $z \mapsto \mathrm{op}(z)$ the canonical bijection between $V \otimes W$ and
$\gL(V^*,W)$, defined for $x \in V$, $y \in W$, $f \in V^*$ by $\mathrm{op}(x \otimes y) : f \mapsto
f(x)y$. 
\end{enumerate}
\end{definition}

\begin{definition}
Let $\C_1 \subseteq V_1$, $\C_2 \subseteq V_2$ be cones in vector spaces. A linear map $\Phi \in \gL(V_1,V_2)$ is  $(\C_1,\C_2)$-\emph{positive} (or \emph{positive} if there is no ambiguity) if $\Phi(\C_1) \subseteq \C_2$. We denote by $\gPos(\C_1,\C_2) \subseteq \gL(V_1,V_2)$ the cone of $(\C_1,\C_2)$-positive maps.
\end{definition}

\begin{fact} \label{fact:adjoint-dual}
Let $\C_1$ and $\C_2$ be proper cones. If a linear map $\Phi$ is $(\C_1,\C_2)$-positive, then the adjoint map $\Phi^*$ is $(\C_2^*,\C_1^*)$-positive.
\end{fact}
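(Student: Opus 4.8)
The plan is to prove the inclusion $\Phi^*(\C_2^*) \subseteq \C_1^*$ directly from the defining relation of the adjoint, without invoking any deep structural property of the cones. First I would unfold both the hypothesis and the conclusion into their membership forms: the assumption that $\Phi$ is $(\C_1,\C_2)$-positive means precisely $\Phi(\C_1)\subseteq \C_2$, while the conclusion that $\Phi^*$ is $(\C_2^*,\C_1^*)$-positive means precisely $\Phi^*(\C_2^*)\subseteq \C_1^*$. Thus it suffices to fix an arbitrary $g\in \C_2^*$ and show that $\Phi^*(g)\in \C_1^*$, which by the definition of the dual cone amounts to verifying $\Phi^*(g)(x)\geq 0$ for every $x\in \C_1$.

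The core of the argument is then a two-step chain obtained by chasing the defining relation $\Phi^*(g)(x)=g(\Phi(x))$. For $x\in \C_1$, positivity of $\Phi$ gives $\Phi(x)\in \C_2$; then, since $g\in \C_2^*$ is by definition non-negative on all of $\C_2$, we get $g(\Phi(x))\geq 0$, i.e.\ $\Phi^*(g)(x)\geq 0$. As $x\in \C_1$ was arbitrary this shows $\Phi^*(g)\in \C_1^*$, and as $g\in \C_2^*$ was arbitrary it shows $\Phi^*(\C_2^*)\subseteq \C_1^*$, which is exactly the desired positivity of $\Phi^*$.

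There is essentially no obstacle here: the only subtlety worth flagging is the ordering of the quantifiers (one fixes $g$ first, then checks non-negativity against every $x$), and the fact that properness of $\C_1,\C_2$ is not actually needed for this implication, which holds for arbitrary convex cones. Alternatively, the statement follows in one line from Fact~\ref{fact:polar-linearmap}: that identity reads $\Phi(\C_1)^*=(\Phi^*)^{-1}(\C_1^*)$, and combining it with the hypothesis $\Phi(\C_1)\subseteq \C_2$ — which reverses to $\C_2^*\subseteq \Phi(\C_1)^*$ under duality of cones — yields $\C_2^*\subseteq (\Phi^*)^{-1}(\C_1^*)$, i.e.\ $\Phi^*(\C_2^*)\subseteq \C_1^*$. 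I would present the direct verification as the primary argument, since it is self-contained and transparent, and mention the route through Fact~\ref{fact:polar-linearmap} only as a one-line corroboration.
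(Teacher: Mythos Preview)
Your argument is correct; the paper states this as a Fact without proof, and your direct verification via $\Phi^*(g)(x)=g(\Phi(x))\geq 0$ is the standard elementary argument. Your observation that properness is unnecessary here is also accurate.
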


Fact~\ref{fact:tmax-and-pos} interprets the maximal tensor product as a cone of positive maps.

\begin{fact} \label{fact:tmax-and-pos}
Let $\C_1$ and $\C_2$ be proper cones. Then
\[ \mathrm{op}(\C_1 \tmax \C_2) = \gPos(\C_1^*,\C_2).\]
\end{fact}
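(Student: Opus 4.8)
The plan is to fix an arbitrary tensor $z\in V_1\otimes V_2$ and prove the biconditional ``$z\in \C_1\tmax\C_2$ if and only if $\mathrm{op}(z)\in\gPos(\C_1^*,\C_2)$''; since $\mathrm{op}$ is a linear bijection from $V_1\otimes V_2$ onto $\gL(V_1^*,V_2)$, taking the image of both sides under $\mathrm{op}$ will then give the claimed equality of cones. The one computational input is the bilinear identity
\[ f_2\bigl(\mathrm{op}(z)(f_1)\bigr) = (f_1\otimes f_2)(z) \qquad (f_1\in V_1^*,\ f_2\in V_2^*), \]
where on the right we use the canonical identification $(V_1\otimes V_2)^*=V_1^*\otimes V_2^*$. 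I would verify this first on an elementary tensor $z=x\otimes y$, for which both sides reduce to $f_1(x)f_2(y)$ straight from the definition of $\mathrm{op}$, and then extend it to general $z$ by bilinearity in $z$.

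With the identity available, the argument is a short chain of equivalences. By definition, $\mathrm{op}(z)\in\gPos(\C_1^*,\C_2)$ means $\mathrm{op}(z)(f_1)\in\C_2$ for every $f_1\in\C_1^*$. Because $\C_2$ is proper --- in particular closed and convex --- the bipolar theorem (Fact~\ref{fact:bidual}) yields $\C_2=\C_2^{**}$, so for any $w\in V_2$ one has $w\in\C_2$ exactly when $f_2(w)\geq 0$ for all $f_2\in\C_2^*$. Applying this with $w=\mathrm{op}(z)(f_1)$ and invoking the bilinear identity, $\mathrm{op}(z)\in\gPos(\C_1^*,\C_2)$ becomes equivalent to $(f_1\otimes f_2)(z)\geq 0$ for all $f_1\in\C_1^*$ and $f_2\in\C_2^*$ --- which is precisely the defining condition for $z\in\C_1\tmax\C_2$. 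One can alternatively phrase this same step through Fact~\ref{fact:polar-linearmap} applied to the map $f_1\mapsto \mathrm{op}(z)(f_1)$, but the direct route seems cleaner.

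I do not expect a genuine obstacle. The only points deserving care are bookkeeping ones: confirming that $\mathrm{op}$ really is a bijection (injectivity on elementary tensors together with a dimension count), keeping the identifications $(V_1\otimes V_2)^*=V_1^*\otimes V_2^*$ and $V_2^{**}=V_2$ straight, and noting that it is the \emph{closedness} of $\C_2$ --- not of $\C_1$ --- that is used, via the bipolar theorem, in the implication ``$f_2(w)\geq 0$ for all $f_2\in\C_2^*$ $\Rightarrow$ $w\in\C_2$''.
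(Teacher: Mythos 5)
Your proof is correct. The paper states this as a Fact without supplying a proof, and your argument is exactly the standard one the authors intend: unwind $\gPos(\C_1^*,\C_2)$ via the identity $f_2\bigl(\mathrm{op}(z)(f_1)\bigr)=(f_1\otimes f_2)(z)$ and use the bipolar theorem $\C_2^{**}=\C_2$ to convert membership of $\mathrm{op}(z)(f_1)$ in $\C_2$ into positivity against all $f_2\in\C_2^*$. Your side remark that only the closedness (and convexity) of $\C_2$ is actually needed here is also accurate.
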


We now introduce the concept of a \emph{retract}.

\begin{definition}
Let $\C \subseteq V$ and $\C' \subseteq V'$ be convex cones in vector spaces. We say that $\C'$ is a \emph{retract} of $\C$ if there are positive maps $\Phi \in \gPos(\C,\C')$ and $\Psi \in \gPos(\C',\C)$ such that $\Phi \circ \Psi = \Id_{V'}$. This implies in particular that $\C'=\Phi(\C)$. In this context, the map $\Phi$ is called a \emph{retraction}. 
\end{definition}

\begin{fact}
If $C$ is proper and $C'$ is a retract of $C$, then also $C'$ is proper.
\end{fact}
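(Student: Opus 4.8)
The plan is to verify each of the four defining properties of a proper cone — convexity, closedness, salience, and the generating property — for $C'$, by transporting the corresponding property of $C$ along the two positive maps witnessing the retract. Write $\Phi\in\gPos(C,C')$ and $\Psi\in\gPos(C',C)$ for these maps, so that $\Phi\circ\Psi=\Id_{V'}$; consequently $\Phi$ is surjective and $\Psi$ is injective.

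The key observation I would isolate first is the double description $C'=\Phi(C)=\Psi^{-1}(C)$. The equality $C'=\Phi(C)$ is part of the definition of a retract, but it is also immediate: $\Phi(C)\subseteq C'$ by positivity of $\Phi$, while conversely any $x'\in C'$ has $\Psi(x')\in C$ by positivity of $\Psi$, hence $x'=\Phi(\Psi(x'))\in\Phi(C)$. The same two facts give $C'\subseteq\Psi^{-1}(C)$, and, for $x'$ with $\Psi(x')\in C$, that $x'=\Phi(\Psi(x'))\in\Phi(C)=C'$, i.e.\ $\Psi^{-1}(C)\subseteq C'$.

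Granting this, the verification is routine. From $C'=\Psi^{-1}(C)$ and the fact that $\Psi$ is linear and continuous we get that $C'$ is convex (preimage of a convex set) and closed (preimage of a closed set). Salience follows from $C'\cap(-C')=\Psi^{-1}(C)\cap\Psi^{-1}(-C)=\Psi^{-1}\bigl(C\cap(-C)\bigr)=\Psi^{-1}(\{0\})=\ker\Psi=\{0\}$, using salience of $C$ and injectivity of $\Psi$. For the generating property I would instead use $C'=\Phi(C)$: since $C-C=V$ and $\Phi$ is surjective, $C'-C'=\Phi(C)-\Phi(C)=\Phi(C-C)=\Phi(V)=V'$. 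That $C'$ is a cone is clear from $C'=\Phi(C)$ and linearity of $\Phi$.

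The one point that is not purely formal is closedness: the linear image $\Phi(C)$ of a closed cone need not be closed in general, so one cannot simply invoke $C'=\Phi(C)$ there. The crux is therefore the second, preimage description $C'=\Psi^{-1}(C)$, which is available precisely because a retraction comes packaged with a section $\Psi$; preimages of closed sets under continuous maps are automatically closed, so the obstruction disappears.
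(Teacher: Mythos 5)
Your proof is correct, and on one point it is actually more careful than the paper's. The paper disposes of salience and the generating property exactly as you do in substance (it pushes $C'\cap(-C')$ forward through $\Phi\circ\Psi$ rather than pulling back through $\Psi$, but the two computations rest on the same facts: injectivity of $\Psi$, surjectivity of $\Phi$, and $C'=\Phi(C)$), and it skips convexity and the cone property since the definition of a retract already assumes $C'$ is a convex cone. Where you genuinely diverge is closedness: the paper asserts that it ``follows from the fact that $C'=\Phi(C)$, with $C$ closed and $\Phi$ linear,'' which, read literally, invokes a false general principle --- the linear image of a closed convex cone need not be closed, even in finite dimension (e.g.\ a generic projection of a Lorentz cone). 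Your identity $C'=\Psi^{-1}(C)$, available precisely because the retraction comes with a section, turns closedness (and convexity) into a trivial preimage statement and is the right way to justify this step; it is the same device the paper itself uses elsewhere (Fact~\ref{fact:retract-as-section} realises a retract as a section $C\cap E$, which is manifestly closed). So: same overall strategy, but your double description $C'=\Phi(C)=\Psi^{-1}(C)$ cleanly separates which properties should be transported forward along $\Phi$ (generating) and which should be pulled back along $\Psi$ (closed, convex, salient), and in doing so closes the one gap in the paper's one-line justification.
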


\begin{proof}
Since $C'$ is assumed to be a convex cone, we just need to check that is closed, salient, and generating. Closedness follows from the fact that $C'=\Phi(C)$, with $C$ closed and $\Phi$ linear. To prove that $C'$ is salient, note that 
\begin{equation*}
C'\cap (-C') = (\Phi\circ\Psi)\left( C'\cap (-C')\right) \subseteq \Phi\left( \Psi(C')\cap \Psi(-C') \right) \subseteq \Phi\left( C\cap (-C)\right) = \{0\}\, .
\end{equation*}
To show that it is generating instead, write
\begin{equation*}
V'= \Phi(V) = \Phi(C-C)\subseteq \Phi(C) - \Phi(C) \subseteq C'-C'\, ,
\end{equation*}
which naturally implies that $V'= C'-C'$.
\end{proof}

Fact~\ref{fact:retract-dual} shows that retractions nicely dualise.

\begin{fact} \label{fact:retract-dual}
Let $\C_1$, $\C_2$ be proper cones. If $\C_1$ is a retract of $\C_2$, then $\C_1^*$ is a retract of $\C_2^*$. 
\end{fact}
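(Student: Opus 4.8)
The plan is to dualize the two maps that witness the retraction and check that their adjoints witness the retraction of the dual cones, exploiting the contravariance of the adjoint operation. The entire argument is formal, relying only on Fact~\ref{fact:adjoint-dual} and the definition of a retract.

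First I would unpack the hypothesis. Writing $\C_i \subseteq V_i$, the assumption that $\C_1$ is a retract of $\C_2$ furnishes a retraction $\Phi \in \gPos(\C_2,\C_1)$ (so $\Phi : V_2 \to V_1$ with $\Phi(\C_2) \subseteq \C_1$) and a section $\Psi \in \gPos(\C_1,\C_2)$ (so $\Psi : V_1 \to V_2$ with $\Psi(\C_1) \subseteq \C_2$), subject to the identity $\Phi \circ \Psi = \Id_{V_1}$.

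Next I would pass to adjoints and identify the candidate maps for the dual retraction. By Fact~\ref{fact:adjoint-dual}, since $\Psi$ is $(\C_1,\C_2)$-positive its adjoint $\Psi^* : V_2^* \to V_1^*$ is $(\C_2^*,\C_1^*)$-positive, i.e.\ $\Psi^* \in \gPos(\C_2^*,\C_1^*)$; and since $\Phi$ is $(\C_2,\C_1)$-positive its adjoint $\Phi^* : V_1^* \to V_2^*$ is $(\C_1^*,\C_2^*)$-positive, i.e.\ $\Phi^* \in \gPos(\C_1^*,\C_2^*)$. Taking adjoints reverses the direction of each arrow, so that $\Psi^*$ now plays the role of the retraction $V_2^* \to V_1^*$ and $\Phi^*$ that of the section $V_1^* \to V_2^*$. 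It remains only to verify the composition identity: the adjoint is contravariant, $(\Phi \circ \Psi)^* = \Psi^* \circ \Phi^*$, which follows directly from the defining relation $\Phi^*(g)(x) = g(\Phi(x))$ since both sides send $g$ to $x \mapsto g(\Phi(\Psi(x)))$. Hence $\Psi^* \circ \Phi^* = (\Phi \circ \Psi)^* = (\Id_{V_1})^* = \Id_{V_1^*}$, and the pair $(\Psi^*,\Phi^*)$ exhibits $\C_1^*$ as a retract of $\C_2^*$.

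I do not expect any genuine obstacle here, as the proof is purely diagrammatic. The single point meriting care is the bookkeeping of directions: dualization swaps which map acts as the retraction and which as the section, and it reverses the order of composition, so one must confirm that $\Phi \circ \Psi = \Id_{V_1}$ dualizes to $\Psi^* \circ \Phi^* = \Id_{V_1^*}$ (rather than the transposed composition), which is exactly what the contravariance of the adjoint guarantees.
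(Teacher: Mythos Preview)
Your proof is correct and follows exactly the same approach as the paper: dualize the positive maps $\Phi$ and $\Psi$ via Fact~\ref{fact:adjoint-dual}, then use contravariance of the adjoint to obtain $\Psi^* \circ \Phi^* = (\Phi \circ \Psi)^* = \Id_{V_1^*}$. The only difference is that you spell out the bookkeeping more explicitly than the paper does.
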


\begin{proof}
There are maps $\Phi \in \gPos(\C_2,\C_1)$ and $\Psi \in \gPos(\C_1,\C_2)$ such that $\Phi \circ \Psi$ is the identity. As a consequence of Fact~\ref{fact:adjoint-dual}, we have that $\Phi^* \in \gPos(\C_1^*,\C_2^*)$ and $\Psi^* \in \gPos(\C_2^*,\C_1^*)$. Since $\Psi^* \circ \Phi^*= (\Phi\circ \Psi)^*$ is the identity, this shows that $\C_1^*$ is a retract of $\C_2^*$.
\end{proof}

We also check that a retract of a cone can always be realised as a section.

\begin{fact} \label{fact:retract-as-section}
Let $\C_1 \subset V_1$, $\C_2 \subset V_2$ be proper cones in vector spaces. Then $\C_1$ is a retract of $\C_2$ if and only if $\C_1$ is isomorphic to $\C_2 \cap E$, where $E \subset V_2$ is a subspace for which there is a projection $P:V_2 \to E$ such that $\C_2 \cap E = P(\C_2)$.
\end{fact}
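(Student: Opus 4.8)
The plan is to prove both implications by passing through the idempotent operator $\Psi\circ\Phi$ on $V_2$. Throughout, "projection onto $E$" is understood in the usual linear-algebra sense: a linear $P:V_2\to V_2$ with image $E$ and $P^2=P$; note that such a $P$ automatically satisfies $P|_E=\Id_E$, equivalently $P\circ\iota=\Id_E$ where $\iota:E\hookrightarrow V_2$ is the inclusion.

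For the ``only if'' direction, I would start from a retract structure $\Phi\in\gPos(\C_2,\C_1)$, $\Psi\in\gPos(\C_1,\C_2)$ with $\Phi\circ\Psi=\Id_{V_1}$. Since $\Phi\Psi=\Id_{V_1}$, the map $\Psi$ is injective, hence a linear isomorphism of $V_1$ onto the subspace $E:=\Psi(V_1)\subseteq V_2$. Set $P:=\Psi\circ\Phi$; from $\Phi\Psi=\Id$ one checks directly that $P^2=P$, that $P(V_2)\subseteq E$, and that $P$ fixes $E$ pointwise, so $P$ is a projection onto $E$. The remaining point is to verify that $P(\C_2)=\C_2\cap E$ and that $\Psi(\C_1)=\C_2\cap E$; both follow from the squeeze $\C_2\cap E=P(\C_2\cap E)\subseteq P(\C_2)=\Psi(\Phi(\C_2))\subseteq\Psi(\C_1)\subseteq\C_2\cap E$, where the final inclusion uses positivity of $\Psi$ together with $\Psi(\C_1)\subseteq E$. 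Then $\Psi$, viewed as a bijection $V_1\to E$, is a cone isomorphism carrying $\C_1$ onto $\C_2\cap E$, and $P$ is the required projection.

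For the ``if'' direction, suppose $\theta:V_1\to E$ is a linear bijection with $\theta(\C_1)=\C_2\cap E$, and $P:V_2\to V_2$ is a projection onto $E$ with $P(\C_2)=\C_2\cap E$. I would define $\Psi:=\iota\circ\theta:V_1\to V_2$ and $\Phi:=\theta^{-1}\circ P:V_2\to V_1$. Then $\Psi(\C_1)=\theta(\C_1)=\C_2\cap E\subseteq\C_2$ and $\Phi(\C_2)=\theta^{-1}(P(\C_2))=\theta^{-1}(\C_2\cap E)=\C_1$, so both maps are positive, and since $P\circ\iota=\Id_E$ we get $\Phi\circ\Psi=\theta^{-1}\circ(P\circ\iota)\circ\theta=\Id_{V_1}$, exhibiting $\C_1$ as a retract of $\C_2$.

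The argument is mostly bookkeeping; the only steps needing a little care are the squeeze showing that $P(\C_2)$ and $\Psi(\C_1)$ equal $\C_2\cap E$ exactly rather than merely being contained in it, and the verification that $\Psi\circ\Phi$ is genuinely idempotent with $E$ as its set of fixed points. Properness of $\C_1,\C_2$ is not actually used in this equivalence, although it ensures (via the earlier fact that retracts of proper cones are proper) that $\C_2\cap E$ is again a proper cone.
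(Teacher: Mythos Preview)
Your proof is correct and follows essentially the same approach as the paper: both set $E=\Psi(V_1)$ and $P=\Psi\circ\Phi$, then verify $P(\C_2)=\C_2\cap E=\Psi(\C_1)$. You have simply filled in details that the paper leaves implicit (the squeeze argument and the explicit construction for the ``if'' direction, which the paper dismisses as ``easy'').
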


\begin{proof}
The `if' direction is easy. Conversely, suppose that $\C_1$ is a retract of $\C_2$; consider $\Phi \in \gPos(\C_2,\C_1)$ and $\Psi \in \gPos(\C_1,\C_2)$ such that $\Phi \circ \Psi = \Id_{V_1}$. Set $E$ to be the range of $\Psi$. Since $\Psi$ is injective, the cones $\C_1$ and $\Psi(\C_1)$ are isomorphic. Finally, one checks that $P \coloneqq \Psi \circ \Phi$ is a projection onto $E$ such that $P(\C_2)=\C_2 \cap E=\Psi(\C_1)$.
\end{proof}

The concept of a retract plays an important role in our study, because of the following property, whose proof was already provided in the Methods section of the main article.

\begin{manualprop}{\ref{proposition:retracts-nuclear}}[Nuclearity passes to retracts]
Let $(\C_1,\C_2)$ be a nuclear pair. If $\C'_1$ is a retract of $\C_1$, and $\C'_2$ is a retract
of $\C_2$, then $(\C'_1,\C'_2)$ is a nuclear pair.
\end{manualprop}

\begin{comment}
\begin{proof}[Proof of Proposition~\ref{proposition:retracts-nuclear}]
We show that if $(\C_1,\C_2)$ is a nuclear pair and $\C'_1$ is a retract of $\C_1$, then $(\C'_1,\C_2)$ is nuclear; the proposition follows by applying the same argument to $\C_2$. Let $V_1$, $V_2$, $V'_1$ the vector spaces in which $\C_1$, $\C_2$, $\C'_1$ respectively lie. By assumption, there are positive maps $\Phi : V_1 \to V'_1$ and $\Psi : V'_1 \to V_1$ such that $\Phi \circ \Psi$ is the identity on $V'_1$. Since $\Psi(\C'_1) \subset \C_1$, we remark that $f \in \C_1^*$ implies $f \circ \Psi \in (\C'_1)^*$. Consider now $z \in \C'_1 \tmax \C_2$. By the previous remark, $(\Psi \otimes \Id)(z)$ belongs to $\C_1 \tmax \C_2$, which by nuclearity coincides with $\C_1 \tmin \C_2$. Therefore, there exist finite sequences $(x_i)$ in $\C_1$ and $(y_i)$ in $\C_2$ such that $(\Psi \otimes \Id)(z) = \sum x_i \otimes y_i$. Finally, we write
\[ z = (\Phi \otimes \Id)(\Psi \otimes \Id)(z) =
\sum \Phi(x_i) \otimes y_i \in \C'_1 \tmin \C_2 ,\]
and it follows that $(\C'_1,\C_2)$ is nuclear.
\end{proof}
\end{comment}

We also present two extra statements about retracts which are not needed for the proofs of the main results, but which we include as we believe they could help the reader forge their intuition. They show that while $2$-dimensional sections are always retracts, this typically never happens for higher-dimensional sections. Since we could not locate these statements elsewhere in the literature, proofs are provided in Section~\ref{section:retracts}.

\begin{proposition} \label{proposition:2-dimensional-retract}
Let $\C \subset V$ be a proper cone, and $E \subseteq V$ be a $2$-dimensional subspace which intersects the interior of $\C$. Then $\C \cap E$ is a retract of $\C$.
\end{proposition}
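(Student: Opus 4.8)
The plan is to realise $\C \cap E$ as the image of a positive projection $P : V \to E$, and then invoke Fact~\ref{fact:retract-as-section} (or directly exhibit the maps $\Phi, \Psi$ of the definition of retract). The key observation is that any $2$-dimensional proper cone is \emph{generated by two extreme rays}, i.e.\ it is isomorphic to $\R_+^2$, and that $2$-dimensional sections are thus essentially classical --- the only obstruction to positivity of a projection onto a classical cone is controlling two linear functionals, which is a low-codimension condition.

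\medskip

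First I would reduce to a normalised picture: since $E$ meets $\inter(\C)$, the section $\C \cap E$ is a proper cone in the $2$-dimensional space $E$, hence isomorphic to $\R_+^2$; pick generators so that $\C \cap E = \cone\{v_1, v_2\}$ with $v_1, v_2$ linearly independent and $\frac{1}{2}(v_1+v_2) \in \inter(\C)$. The associated "coordinate functionals" on $E$ are the two linear forms $\ell_1, \ell_2 \in E^*$ dual to the basis $(v_1,v_2)$, so that $\C \cap E = \{ w \in E : \ell_1(w) \geq 0,\ \ell_2(w) \geq 0\}$.

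\medskip

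The main step is to extend $\ell_1, \ell_2$ to functionals on all of $V$ that are nonnegative on $\C$, i.e.\ to elements of $\C^*$. This is exactly where the hypothesis that $E$ hits the interior is used: the hyperplane $\ker \ell_i$ (a line in $E$) is a supporting line of $\C \cap E$ through the ray $\R_+ v_{3-i}$, and since $\R_+ v_i$ lies in $\inter(\C)$-side, one can separate; more cleanly, I would argue that since $\C^* \cap E^\perp$-type obstructions are absent, there exist $\tilde\ell_i \in \C^*$ with $\tilde\ell_i|_E = \ell_i$. Concretely: $\ell_i$ is nonnegative on the cone $\C \cap E$, and because $\C \cap E$ is a face-free (full-dimensional in $E$) intersection meeting $\inter\C$, a standard Hahn--Banach/separation argument (separating $\{\ell_i < 0\} \cap \C$, which is disjoint from $E$ near the interior point) yields a positive extension. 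Having $\tilde\ell_1, \tilde\ell_2 \in \C^*$, define
\[
P : V \to E, \qquad P(v) \coloneqq \tilde\ell_1(v)\, v_1' + \tilde\ell_2(v)\, v_2',
\]
where $(v_1', v_2')$ is the basis of $E$ dual to $(\ell_1,\ell_2)$ \emph{inside} $E$ --- note $\{v_1',v_2'\} = \{v_1,v_2\}$ up to the normalisation making $\ell_i(v_j') = \delta_{ij}$, so in fact one may just take $v_i' = v_i$. Then $P$ restricted to $E$ is the identity (since $\tilde\ell_i|_E = \ell_i$ and $(\ell_i)$, $(v_j)$ are dual bases of $E$), and $P(\C) \subseteq \C \cap E$ because for $x \in \C$ the coefficients $\tilde\ell_i(x) \geq 0$, so $P(x) \in \cone\{v_1,v_2\} = \C \cap E$. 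Thus $P$ is a positive projection onto $\C \cap E$, and taking $\Psi$ to be the inclusion $E \hookrightarrow V$ (positive since $\C \cap E \subseteq \C$) and $\Phi = P$ gives $\Phi \circ \Psi = \Id_E$, so $\C \cap E$ is a retract of $\C$.

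\medskip

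The main obstacle is the positive-extension step: showing that the coordinate functionals $\ell_1, \ell_2$ of the $2$-dimensional section extend to elements of $\C^*$. The hypothesis that $E$ meets $\inter(\C)$ is essential here and must be used carefully --- without it the section could be a face of $\C$, and a coordinate functional of the section (which vanishes on one of its extreme rays but is positive on the other) need not extend positively, since that other ray might sit on the boundary in a way that forces the extension to change sign. I would handle this either by an explicit separation argument (the convex sets $\C$ and $\{w \in E : \ell_i(w) < 0\}$ can be separated because the latter is an open half-line in $E$ disjoint from the interior point of $\C$ lying in $E$) or, more slickly, by noting that $\C^* + E^\perp$ is closed (finite dimensions) and that $\ell_i$, viewed in $V^*/E^\perp = E^*$, lies in the image of $\C^*$ precisely because it is nonnegative on $\C \cap E$ and $(\C \cap E)^* = \overline{\C^* + E^\perp}$ restricted appropriately --- a duality identity that again needs the interior hypothesis to guarantee $(\C \cap E)^\circ$ behaves well. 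I expect the clean write-up to go through the section realisation and a direct Hahn--Banach separation.
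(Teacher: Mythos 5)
Your proof is correct and is essentially the dual formulation of the paper's argument: your positive extensions $\tilde\ell_1,\tilde\ell_2$ of the coordinate functionals of the section are exactly (normalised) supporting functionals of $\C$ at the two extreme rays of $\C\cap E$, so your projection $P(v)=\tilde\ell_1(v)v_1+\tilde\ell_2(v)v_2$ is the same map as the paper's projection whose kernel is the intersection $T_x\cap T_y$ of the two tangent hyperplanes, and your Hahn--Banach step makes explicit what the paper's figure-based projective argument leaves implicit. (One minor slip: $\{w\in E:\ell_i(w)<0\}$ is an open half-plane of the $2$-dimensional space $E$, not a half-line, but the separation argument goes through unchanged.)
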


\begin{proposition} \label{proposition:baire-no-retract}
There is a $4$-dimensional convex cone which admits no $3$-dimensional retract.
\end{proposition}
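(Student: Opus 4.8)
The plan is to obtain such a cone by a genericity argument: among $4$-dimensional proper cones, those admitting a $3$-dimensional retract form a nowhere dense family, so some cone avoids it. By Fact~\ref{fact:cone-has-base} and Fact~\ref{fact:affine-linear}, every $4$-dimensional proper cone is isomorphic to $\CC(K)$ for a $3$-dimensional convex body $K$, which after an affine map we may take to lie in the compact metric space $\mathcal{K}_0$ of convex bodies with $B_3\subseteq K\subseteq 3B_3$ (Hausdorff metric). Since having a $3$-dimensional retract is an isomorphism invariant, it is enough to find one $K\in\mathcal{K}_0$ for which $\CC(K)$ has no $3$-dimensional retract.

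The first, and main, step is a geometric reduction. By Fact~\ref{fact:retract-as-section}, $\CC(K)$ has a $3$-dimensional retract iff there is a rank-$3$ linear projection $P$ of $\R^4$ with $P(\CC(K))=\CC(K)\cap\mathrm{im}(P)$ and this cone $3$-dimensional; then $\mathrm{im}(P)$ meets $\inter\CC(K)$, so it cuts the slice $\{t=1\}$ in an affine plane $\Pi$ with $\Pi\cap\inter K\neq\emptyset$, while $\ker P=\R v$ for some $v\notin\mathrm{im}(P)$. Restricting the positivity condition $P(\CC(K))\subseteq\CC(K)$ to the slice $\{t=1\}$ shows that $P$ is positive only if the perspective projection $\pi$ of $\R^3$ onto $\Pi$ with centre the projective point $[v]$ (a parallel projection in the exceptional ``horizontal'' case $v\in\ker t$) satisfies $\pi(K)=K\cap\Pi$. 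Suppose moreover that $K$ is strictly convex with $C^1$ boundary. Then any tangent line of $K$ meets $\partial K$ in a single point, while the apparent contour (silhouette) $\Gamma$ of $K$ seen from $[v]$ is mapped by $\pi$ onto $\partial(\pi(K))=\partial(K\cap\Pi)=\partial K\cap\Pi$; for $x\in\Gamma$ the line through $[v]$ and $x$ is tangent to $K$ at $x$, so it meets $\partial K$ again only if $\pi(x)=x$, forcing $x\in\Pi$. Hence $\Gamma\subseteq\Pi$. Conclusion: if a smooth strictly convex body $K$ has the property that \emph{no} silhouette of $K$, from any viewpoint (finite or at infinity), is contained in a plane, then $\CC(K)$ has no $3$-dimensional retract.

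It remains to produce such a body. Here I would record that smooth strictly convex bodies are dense in $\mathcal{K}_0$, and that the set $\mathcal{B}\subseteq\mathcal{K}_0$ of bodies admitting a planar silhouette is closed: parametrising viewpoints by the compact set $\mathcal{Z}=\{z\in\R^3:|z|\geq 3\}\cup\{\text{directions}\}$ and silhouette-planes by the compact family of hyperplanes meeting $3B_3$, one checks that the silhouette $\Sigma(K,z)\subseteq\partial K$, defined via supporting lines through $z$, is nonempty and upper semicontinuous in $(K,z)$, so that $\{(K,z,H):\Sigma(K,z)\subseteq H\}$ is closed in the corresponding compact product and $\mathcal{B}$ is its image under the (closed) projection to $\mathcal{K}_0$. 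It then suffices to show $\mathcal{B}\neq\mathcal{K}_0$, equivalently that $\mathcal{B}$ has empty interior, equivalently that smooth strictly convex bodies with no planar silhouette are dense. One route is perturbative: start from a real-analytic strictly convex body, perturb it by a small generic analytic correction (e.g.\ a well-chosen low-degree polynomial term, or a high-frequency spherical-harmonic correction), and argue that on the perturbed body every silhouette curve has, say, a point of nonvanishing torsion.

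The main obstacle is precisely this last point. Planarity of the silhouette from a \emph{fixed} viewpoint $z$ — that all tangent planes of $K$ along the curve $\partial K\cap\Pi$ pass through $z$ — is an infinite-codimension constraint on $\partial K$, hence destroyed by a generic perturbation; the difficulty is to kill it \emph{simultaneously} for the whole compact family of viewpoints, which calls for a transversality-plus-compactness argument (or, alternatively, an explicit enough model body together with a direct computation of all its silhouette curves). Care is also needed at the degenerate parameters — viewpoints on $\partial K$ or at infinity, and planes $\Pi$ nearly supporting $K$ — both to keep $\mathcal{Z}$ and the space of planes compact and to keep the silhouette map semicontinuous there.
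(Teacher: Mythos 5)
Your geometric reduction is essentially the one the paper uses: via Fact~\ref{fact:retract-as-section}, a $3$-dimensional retract of $\CC(K)$ yields a projection whose kernel determines a viewpoint $[v]$ (possibly at infinity) and forces the silhouette of $K$ from $[v]$ --- what the paper calls the shadow boundary $\partial(K,[v])$ --- to lie in the plane $\Pi$. (The paper gets this with strict convexity alone and without smoothness, and it treats the degenerate cases, e.g.\ the viewpoint lying on $\partial K$ and the sign issues in renormalising $P$ back to the slice $\{t=1\}$, which you pass over rather quickly; these are fixable.) The genuine gap is the second half: you never actually produce a convex body no silhouette of which is planar. The closedness-of-the-bad-set argument only reduces the problem to showing that bodies with no planar silhouette are dense (or merely exist), and the transversality/perturbation step that would kill planarity \emph{simultaneously for the whole compact family of viewpoints} is exactly the hard part --- you acknowledge this yourself, and no explicit model body with computed silhouettes is offered either. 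As it stands, the proposal proves a conditional statement (``if such a $K$ exists, then $\CC(K)$ has no $3$-dimensional retract'') but not the proposition.

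For comparison, the paper closes precisely this gap by importing a nontrivial result from the literature: Zamfirescu's theorem~\cite{Zamfirescu1991} says that for \emph{most} convex bodies (in the Baire sense, over all viewpoints in projective space at once) every shadow boundary has full-dimensional affine span, and most bodies are moreover strictly convex; combining the two generic properties gives Lemma~\ref{Zamfirescu lemma} and hence Proposition~\ref{proposition:baire-no-retract} without any smoothness or hand-made transversality argument. If you want to complete your route independently of Zamfirescu's theorem, you would need either a full transversality-plus-compactness proof covering viewpoints at infinity and nearly supporting planes, or a concrete analytic body (with a verification for \emph{all} viewpoints) --- both of which are substantial undertakings that the current sketch does not carry out.
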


\subsection{Tensor norms}

The proof of Result~\ref{symmetric result} requires us to familiarise with the concept of tensor norms. Here we introduce the main definitions and discuss some of their elementary implications. The interested reader is referred to the monograph~\cite{DEFANT}. 

\begin{definition}
Let $X$ be a real vector space. A \emph{norm} on $X$ is a function $\|\cdot\|:X\to \R_+$ that is: (i)~faithful, meaning that $\|x\|=0$
if and only if $x=0$; (ii)~absolutely homogeneous, i.e.\ such that $\|\lambda x\|=|\lambda| \|x\|$ for all $\lambda\in \R$; and (iii)~obeys the triangle inequality, which states that $\|x+y\|\leq \|x\|+\|y\|$, for all $x,y\in X$. A vector space equipped with a norm is called a \emph{normed space}.
\end{definition}

\begin{note}
We will often specify as a subscript the normed space a norm refers to. Accordingly, the norm of $x\in X$ will be denoted by the symbol $\|x\|_X$.
\end{note}

\begin{definition}
The unit ball of a normed space $X$ is the convex set $B_X\coloneqq \left\{x\in X:\, \|x\|_X\leq 1\right\}$.
\end{definition}

\begin{definition} \label{def:dual-norm}
Let $X$ be a normed space. The dual vector space $X^*$ can be turned into a normed space itself via the definition of the \emph{dual norm}
\begin{equation} \label{dual-norm}
\|f\|_{X^*} \coloneqq \sup_{x\in B_X} |f(x)|\, .
\end{equation}
\end{definition}

\begin{fact} \label{fact:norm-dual-formula}
The bi-dual of a finite-dimensional normed space $X$ is the space $X$ itself. Namely, for all $x\in X$ it holds that
\begin{equation}
    \|x\|_X = \sup_{f\in B_{X^*}} |f(x)|\, .
\end{equation}
\end{fact}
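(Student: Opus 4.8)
The plan is to prove the two inequalities separately; the easy one follows directly from the definition of the dual norm, while the reverse inequality is the substantive part and requires producing a \emph{norming functional}. I would dispose first of the trivial case $x=0$, where both sides vanish.

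Assuming $x\neq 0$, for the inequality $\sup_{f\in B_{X^*}}|f(x)|\leq \|x\|_X$ I would simply observe that any $f\in B_{X^*}$ satisfies $\|f\|_{X^*}\leq 1$, so that by the defining property of the dual norm in Definition~\ref{def:dual-norm} one has $|f(x)|\leq \|f\|_{X^*}\,\|x\|_X\leq\|x\|_X$; taking the supremum over $f\in B_{X^*}$ yields the claim.

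The reverse inequality $\sup_{f\in B_{X^*}}|f(x)|\geq\|x\|_X$ is the heart of the matter, and it is here that the only real obstacle lies. The goal is to exhibit an explicit $f\in B_{X^*}$ with $f(x)=\|x\|_X$. The cleanest route is the Hahn--Banach theorem: on the one-dimensional subspace $\R x$ define the linear functional $g(\lambda x)\coloneqq \lambda\|x\|_X$, which satisfies $|g(y)|=\|y\|_X$ for all $y\in\R x$ and hence has dual norm exactly $1$ on $\R x$. Extending $g$ to a functional $f$ on all of $X$ while preserving the norm -- which is precisely what Hahn--Banach guarantees -- gives $f\in B_{X^*}$ with $f(x)=g(x)=\|x\|_X$. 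Equivalently, since $X$ is finite-dimensional and $B_X$ is a convex body, one may invoke the existence of a supporting hyperplane to $B_X$ at the boundary point $x/\|x\|_X$, whose associated normal functional, suitably normalised, serves as $f$; this is the geometric counterpart of the analytic extension argument.

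Combining the two inequalities yields the norm formula. Finally, to justify the assertion that the bi-dual \emph{is} $X$ itself, I would note that the canonical evaluation map $\iota:X\to X^{**}$, sending $x$ to the functional $f\mapsto f(x)$, is linear and injective -- injectivity being immediate from the formula just proved, since $\iota(x)=0$ forces $\|x\|_X=0$ -- and that $\dim X^{**}=\dim X$ in finite dimensions, so $\iota$ is a linear isomorphism. The norm identity then shows that $\iota$ is in fact an isometry, identifying $X^{**}$ with $X$ as normed spaces.
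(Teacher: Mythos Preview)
Your proof is correct and follows the standard route via Hahn--Banach (or, equivalently, the supporting-hyperplane argument). Note, however, that the paper does not actually supply a proof of this statement: it is recorded as a \emph{Fact}, i.e.\ a well-known result quoted without justification. So there is nothing to compare against; your argument simply fills in what the paper takes for granted, and does so in the textbook manner.
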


\begin{definition}
Let $X,Y$ be finite-dimensional real vector spaces, and let $X\otimes Y$ be their tensor product.
\begin{enumerate}[label={(S\arabic{definition}.\arabic*)}, leftmargin=\widthof{(S31.2)}+\labelsep]
\item The \emph{injective tensor norm} is the norm on $X\otimes Y$ defined by
\begin{equation}
    \|z\|_{X\otimes_\varepsilon Y} \coloneqq \sup\left\{ (f\otimes g)(z):\, f\in B_{X^*},\, g\in B_{Y^*}\right\} , \tag{\ref{inj}}
\end{equation}
for all $z\in X\otimes Y$.
\item The \emph{projective tensor norm} is the norm on $X\otimes Y$ defined by
\begin{equation}
    \|z\|_{X\otimes_\pi Y} \coloneqq \inf\left\{ \sumno_i \|x_i\|_X \|y_i\|_Y:\, z=\sumno_i x_i\otimes y_i\right\} , \label{proj}
\end{equation}
for all $z\in X\otimes Y$.
\end{enumerate}
\end{definition}

\begin{fact}
Injective and projective tensor norm are dual to each other. Namely, one has the normed space identities
\begin{align}
\left( X \otimes_\varepsilon Y \right)^* &= X^* \otimes_\pi Y^*\, , \label{dual inj is proj}\\
\left( X \otimes_\pi Y \right)^* &= X^* \otimes_\varepsilon Y^*\, . \label{dual proj is inj}
\end{align}
\end{fact}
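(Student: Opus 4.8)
The plan is to read both claimed identities as equalities of norms on the space $X^*\otimes Y^*$, which in finite dimensions is identified with $(X\otimes Y)^*$ via the nondegenerate pairing determined by $\langle f\otimes g,\, x\otimes y\rangle = f(x)g(y)$ (dimensions match, so this is a vector-space isomorphism). Under this identification, \eqref{dual inj is proj} asserts that the dual norm of $\|\cdot\|_{X\otimes_\varepsilon Y}$ equals $\|\cdot\|_{X^*\otimes_\pi Y^*}$, and \eqref{dual proj is inj} asserts that the dual norm of $\|\cdot\|_{X\otimes_\pi Y}$ equals $\|\cdot\|_{X^*\otimes_\varepsilon Y^*}$. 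I would prove \eqref{dual proj is inj} directly, since both of its inequalities are elementary, and then obtain \eqref{dual inj is proj} by applying \eqref{dual proj is inj} to the dual spaces together with reflexivity of finite-dimensional normed spaces (Fact~\ref{fact:norm-dual-formula}). This sidesteps the one genuinely nontrivial direction, which would otherwise require a Hahn--Banach separation.

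For \eqref{dual proj is inj}, fix $w\in X^*\otimes Y^*$ and consider its dual norm $\|w\|_{(X\otimes_\pi Y)^*} = \sup\{ \langle w, z\rangle :\, \|z\|_{X\otimes_\pi Y}\leq 1\}$. Since $X^{**}=X$ and $Y^{**}=Y$ isometrically, Definition~\eqref{inj} applied to $X^*,Y^*$ gives $\|w\|_{X^*\otimes_\varepsilon Y^*} = \sup\{\langle w, x\otimes y\rangle :\, x\in B_X,\, y\in B_Y\}$. For the bound $\leq$, I would take any admissible $z$, expand $z=\sumno_i x_i\otimes y_i$, and estimate $\langle w, z\rangle = \sumno_i \|x_i\|_X\|y_i\|_Y \langle w, \hat{x}_i\otimes \hat{y}_i\rangle \leq \|w\|_{X^*\otimes_\varepsilon Y^*}\sumno_i \|x_i\|_X\|y_i\|_Y$, where $\hat{x}_i,\hat{y}_i$ denote the normalised factors (terms with a vanishing factor contribute nothing); taking the infimum over representations yields $\langle w, z\rangle\leq \|z\|_{X\otimes_\pi Y}\,\|w\|_{X^*\otimes_\varepsilon Y^*}\leq \|w\|_{X^*\otimes_\varepsilon Y^*}$, hence $\|w\|_{(X\otimes_\pi Y)^*}\leq \|w\|_{X^*\otimes_\varepsilon Y^*}$. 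For the reverse bound $\geq$, I would test $w$ against product tensors: every $x\otimes y$ with $x\in B_X$, $y\in B_Y$ satisfies $\|x\otimes y\|_{X\otimes_\pi Y}\leq \|x\|_X\|y\|_Y\leq 1$, so $\langle w, x\otimes y\rangle\leq \|w\|_{(X\otimes_\pi Y)^*}$, and taking the supremum over such $x,y$ gives $\|w\|_{X^*\otimes_\varepsilon Y^*}\leq \|w\|_{(X\otimes_\pi Y)^*}$. Together these establish \eqref{dual proj is inj}.

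To deduce \eqref{dual inj is proj}, I would apply the already-proven \eqref{dual proj is inj} with $X,Y$ replaced by $X^*,Y^*$; using $X^{**}=X$ and $Y^{**}=Y$ this reads $(X^*\otimes_\pi Y^*)^* = X\otimes_\varepsilon Y$ as normed spaces. Dualising once more and invoking Fact~\ref{fact:norm-dual-formula} for the space $Z\coloneqq X^*\otimes_\pi Y^*$, so that $Z^{**}=Z$ isometrically, I conclude $(X\otimes_\varepsilon Y)^* = (Z^*)^* = Z = X^*\otimes_\pi Y^*$, which is precisely \eqref{dual inj is proj}.

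The only delicate point is the bookkeeping of canonical identifications: one must verify that the vector-space isomorphism $(X\otimes Y)^{**}\cong X\otimes Y$ produced by dualising twice agrees with the one arising from $X^{**}\otimes Y^{**}\cong X\otimes Y$, so that the reflexivity step in the last paragraph is legitimate. In finite dimensions both maps are the evaluation map and there are no closedness or completeness subtleties, so they coincide and the step is routine. I expect this verification to be the main (though minor) obstacle; the norm estimates themselves follow immediately from the definitions of the injective and projective norms.
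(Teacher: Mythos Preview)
Your argument is correct and is the standard proof of this classical duality. The paper itself does not supply a proof: the statement is recorded as a Fact and left unproved, with the monograph~\cite{DEFANT} cited at the start of the subsection as a general reference for tensor norms. So there is nothing to compare against; your write-up would serve perfectly well as a self-contained justification, and the bookkeeping caveat you flag at the end is indeed routine in finite dimensions.
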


The following is easy to verify.

\begin{fact} \label{fact:inj<proj}
Let $X,Y$ be finite-dimensional spaces. Then there exists a smallest constant $\rho(X,Y)>0$ such that
\begin{equation}
    \|z\|_{X\otimes_\varepsilon Y}\leq \|z\|_{X\otimes_\pi Y} \leq \rho(X,Y) \|z\|_{X\otimes_\varepsilon Y}
\end{equation}
for all $z\in X\otimes Y$.
\end{fact}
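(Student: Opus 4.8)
The plan is to establish the two inequalities displayed in the statement separately, and then to read off the optimal constant $\rho(X,Y)$ as a supremum of ratios. The left inequality $\|z\|_{X\otimes_\varepsilon Y}\leq \|z\|_{X\otimes_\pi Y}$ is an immediate consequence of the two definitions, while the right inequality is nothing more than the equivalence of norms on a finite-dimensional vector space.

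First I would prove that $\|z\|_{X\otimes_\varepsilon Y}\leq \|z\|_{X\otimes_\pi Y}$ for every $z\in X\otimes Y$. I would fix an arbitrary finite representation $z=\sumno_i x_i\otimes y_i$ and arbitrary $f\in B_{X^*}$, $g\in B_{Y^*}$; since $(f\otimes g)(z)=\sumno_i f(x_i)\,g(y_i)$, the triangle inequality together with $\|f\|_{X^*},\|g\|_{Y^*}\leq 1$ gives $|(f\otimes g)(z)|\leq \sumno_i \|x_i\|_X\|y_i\|_Y$. Taking the supremum over $f,g$ on the left and then the infimum over all representations of $z$ on the right yields the claim. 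As a by-product, this confirms that formula~\eqref{proj} really does define a norm and not merely a seminorm: if $\|z\|_{X\otimes_\pi Y}=0$ then $\|z\|_{X\otimes_\varepsilon Y}=0$, and the injective norm is faithful since $X^*$ and $Y^*$ separate points.

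Next I would set $\rho(X,Y)\coloneqq \sup\{\|z\|_{X\otimes_\pi Y}\st z\in X\otimes Y,\ \|z\|_{X\otimes_\varepsilon Y}=1\}$. Because $\|\cdot\|_{X\otimes_\varepsilon Y}$ and $\|\cdot\|_{X\otimes_\pi Y}$ are both norms on the finite-dimensional space $X\otimes Y$, they are equivalent, so this supremum is finite; more concretely, the injective-norm unit sphere is compact and the projective norm is continuous, hence the supremum is in fact attained. By the previous step it is at least $1$, so $\rho(X,Y)>0$. Homogeneity of both norms then upgrades the definition of $\rho(X,Y)$ to the inequality $\|z\|_{X\otimes_\pi Y}\leq \rho(X,Y)\,\|z\|_{X\otimes_\varepsilon Y}$ for all $z$, and $\rho(X,Y)$ is by construction the least constant for which this holds. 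There is essentially no obstacle in this argument; the only point deserving a word of care is the finiteness (equivalently, the attainment) of the supremum defining $\rho(X,Y)$, which is precisely where finite-dimensionality enters.
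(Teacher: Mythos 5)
Your proof is correct and follows exactly the route the paper itself sketches (the left inequality by direct verification from the two definitions, the right one from the equivalence of norms on the finite-dimensional space $X\otimes Y$); the paper simply states this fact as ``easy to verify'' without spelling out the details you supply. No issues.
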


\begin{definition}[Projective/injective ratio~\cite{XOR}] \label{def pi/epsilon ratio}
Given integers $n$, $m\geq 2$, the associated \emph{projective/injective ratio} is defined by
\begin{equation}
r(n,m) \coloneqq \inf_{\substack{\dim X =n \\[.2ex] \dim Y =m }} \rho(X,Y)\, ,
\label{r}
\end{equation}
where the infimum is over all pairs of normed spaces $X,Y$ of dimensions $n,m$, respectively, and $\rho(X,Y)$ is defined in Fact~\ref{fact:inj<proj}.
\end{definition}

We rely on the following estimates from~\cite{XOR} on the projective/injective ratio.

\begin{fact}[Theorem 6 in~\cite{XOR}] \label{fact:19/18}
For every integers $n$, $m \geq 2$, it holds that $r(n,m) \geq 19/18$.
\end{fact}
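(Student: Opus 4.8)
The plan is to prove the equivalent uniform statement that $\rho(X,Y) \geq 19/18$ for \emph{every} pair of normed spaces with $\dim X, \dim Y \geq 2$; since $r(n,m) = \inf \rho(X,Y)$ over all such spaces (Definition~\ref{def pi/epsilon ratio} and Fact~\ref{fact:inj<proj}), this disposes of all $n,m \geq 2$ at once. First I would try to reduce to the critical case $\dim X = \dim Y = 2$. Two structural facts drive the reduction. On the one hand, the injective norm is preserved under passage to subspaces: for $2$-dimensional $X_2 \subseteq X$, $Y_2 \subseteq Y$ and any $z \in X_2 \otimes Y_2$, Hahn--Banach extension of the testing functionals yields $\|z\|_{X_2 \otimes_\varepsilon Y_2} = \|z\|_{X \otimes_\varepsilon Y}$. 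On the other hand, the projective norm is preserved under norm-one projections $P : X \to X_2$ and $Q : Y \to Y_2$, because applying $P \otimes Q$ to any decomposition of $z$ does not increase its cost. Consequently $\rho(X_2,Y_2) \leq \rho(X,Y)$ whenever the two sections are $1$-complemented, so it would suffice to lower-bound $\rho$ over $2$-dimensional spaces.

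In the planar case I would identify $X,Y$ with $\R^2$ carrying centrally symmetric convex bodies $B_X, B_Y$ as unit balls, and a tensor $z$ with a $2\times 2$ matrix. The injective norm is then the operator-type quantity $\|z\|_\varepsilon = \sup\{(f\otimes g)(z) : f \in B_{X^*},\, g \in B_{Y^*}\}$, while the duality of Eq.~\eqref{dual proj is inj} turns the projective norm into the nuclear-type quantity $\|z\|_\pi = \sup\{\langle w,z\rangle : \|w\|_{X^*\otimes_\varepsilon Y^*} \leq 1\}$, so that $\rho(X,Y)$ is the largest nuclear-to-operator ratio of $2\times 2$ matrices measured against the polar bodies. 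To bound this uniformly from below, I would first put $B_X, B_Y$ in an Auerbach (or John) position, so that each body lies between the square $[-1,1]^2$ and its inscribed diamond, then use compactness of the Banach--Mazur compactum of planar symmetric bodies, on which $\rho$ is continuous, to conclude that the infimum is attained. A perturbation argument at an extremiser should force the extremal bodies to be polygons with only a few vertices, reducing the problem to an explicit finite-parameter optimisation; there I would exhibit a concrete witnessing pair $(z,w)$, evaluate $\|z\|_\pi/\|z\|_\varepsilon$ in closed form, and optimise the remaining parameters to extract the value $19/18$.

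The hard part will be twofold. The first difficulty is making the reduction to dimension two genuinely dimension-free: it is \emph{not} clear that every space of dimension $\geq 2$ admits a $1$-complemented $2$-dimensional subspace (for $\dim = 3$ this is the existence of a norm-one projection onto a hyperplane, which can fail), so I would either establish existence of a suitable $1$-complemented planar section via the contact points of the John ellipsoid plus a compactness argument, or else abandon complementation and build a \emph{universal} witness directly from an Auerbach system $(x_1,x_2;\xi_1,\xi_2)$ in $X$ and $(y_1,y_2;\eta_1,\eta_2)$ in $Y$, controlling $\|z\|_\varepsilon$ (which depends only on the induced planar norms) and $\|z\|_\pi$ (via the dual witness $\xi_1\otimes\eta_1 + \xi_2\otimes\eta_2$, whose injective norm is pinned between the diamond and the square) using only the two-dimensional data. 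The danger in the naive version of this route is that the ambient space can spread the relevant profile body out to the full square and collapse the ratio to $1$, so genuine care is needed. The second difficulty is the planar min-max itself: proving the bound is uniform over \emph{all} admissible pairs of polygons, not merely the conjectured extremiser, is where the sharp constant $19/18$ — as opposed to the qualitative $r(n,m) > 1$ — is actually won or lost.
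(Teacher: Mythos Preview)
The paper does not prove this statement at all: it is recorded as a \emph{Fact} and attributed verbatim to Theorem~6 of~\cite{XOR}, then invoked as a black box in the proof of Theorem~\ref{symmetric result}. There is therefore no ``paper's own proof'' to compare against; your proposal is an attempt to supply an argument where the paper deliberately imports one.

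As for the sketch itself, the main structural issue is the one you already flag: the reduction to two-dimensional subspaces does not go through cleanly. For $z\in X_2\otimes Y_2\subset X\otimes Y$ one has $\|z\|_{X\otimes_\varepsilon Y}=\|z\|_{X_2\otimes_\varepsilon Y_2}$ but only $\|z\|_{X\otimes_\pi Y}\leq\|z\|_{X_2\otimes_\pi Y_2}$, so a witness that is good in the $2$-dimensional quotient can lose all its strength in the ambient space; equality requires $1$-complementation of both $X_2$ and $Y_2$, and an arbitrary finite-dimensional normed space need not admit a norm-one projection onto a prescribed (or indeed any) $2$-dimensional subspace. Your fallback of building a witness from an Auerbach system and controlling its projective norm via a dual tensor is closer in spirit to how such uniform bounds are actually obtained, but as written it is a plan rather than a proof: neither the existence of a suitable $1$-complemented section, nor the claimed reduction of the planar extremal problem to finitely many polygonal parameters, nor the final optimisation yielding exactly $19/18$, is carried out. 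Since the paper treats the result as an external input, the honest move here is simply to cite~\cite{XOR} rather than to attempt a self-contained argument.
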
 

\begin{fact}[Theorem 2 in~\cite{XOR}] \label{fact:r(n,m)}
There is a constant $c>0$ such that, for every integers $n,m \geq 2$, we have that
\begin{equation*}
r(n,m) \geq c\, \frac{\min \{n,m\}^{1/8}}{\log \min \{n,m\}}\, .
\end{equation*}
\end{fact}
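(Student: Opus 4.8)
The plan is to recast the statement in operator-theoretic language and then attack it by the probabilistic method. Identifying $X \otimes Y$ with the space $\gL(X^*,Y)$ via the map $z \mapsto \mathrm{op}(z)$, the injective norm of $z$ becomes the operator norm $\|\mathrm{op}(z)\|$ and the projective norm becomes the nuclear norm $N(\mathrm{op}(z))$, so that $\rho(X,Y) = \sup_{T\neq 0} N(T)/\|T\|$. Proving the lower bound on $r(n,m) = \inf_{\dim X=n,\,\dim Y=m}\rho(X,Y)$ then amounts to exhibiting, for \emph{every} pair of spaces of the given dimensions, a single tensor whose projective norm dominates its injective norm by the claimed factor. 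I would produce such a tensor at random: fix auxiliary Euclidean structures on the two factors (to be optimized at the end), fix orthonormal bases $\{e_i\}$, $\{f_j\}$, and set $G = \sum_{i,j} g_{ij}\, e_i\otimes f_j$ with i.i.d.\ standard Gaussian coefficients $g_{ij}$. By symmetry I may assume $n\leq m$ and support the construction on a well-chosen $n$-dimensional block of the second factor, so that the effective dimension is $\min\{n,m\}$.

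The core consists of two norm estimates. For the numerator of $\rho$ I use that $\|G\|_{X\otimes_\varepsilon Y}$ is the supremum of the Gaussian process $(f,g)\mapsto (f\otimes g)(G)$ over $B_{X^*}\times B_{Y^*}$; Chevet's inequality then bounds its expectation by $A \coloneqq w(B_{X^*})R(B_{Y^*}) + R(B_{X^*})w(B_{Y^*})$, where $w$ and $R$ are the Gaussian width and Euclidean circumradius in the chosen structures. For the projective norm I invoke the duality $(X\otimes_\pi Y)^* = X^*\otimes_\varepsilon Y^*$ and test $G$ against its own Euclidean transpose: this gives $\|G\|_{X\otimes_\pi Y} \geq \langle G,G\rangle / \|G\|_{X^*\otimes_\varepsilon Y^*}$, and a second application of Chevet, now to the polar bodies, bounds the denominator by $A' \coloneqq w(B_X)R(B_Y) + R(B_X)w(B_Y)$. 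Since $\langle G,G\rangle=\sum_{ij}g_{ij}^2$ concentrates around $nm$ and all three norms are Lipschitz functions of $G$, Gaussian concentration upgrades these expectations to a statement about a single realization, yielding a tensor with
\[ \frac{\|G\|_{X\otimes_\pi Y}}{\|G\|_{X\otimes_\varepsilon Y}} \gtrsim \frac{nm}{A\cdot A'}. \]
One checks this is tight in the model cases: for $X=Y=\ell_2^n$ it returns $\rho\gtrsim n$, and for $X=Y=\ell_1^n$ it returns $\rho\gtrsim\sqrt{n/\log n}$.

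It remains to carry out a purely geometric optimization: the four quantities $w,R$ attached to $B_{X^*},B_{Y^*}$ and to their polars depend on the Euclidean structures, which I may choose independently on the two factors to make $A\cdot A'$ as small as possible. Here I would combine the Urysohn-type comparisons $cR(K)\leq w(K)\leq \sqrt{n}\,R(K)$, the always-valid lower bound $w(K)w(K^{\circ})\geq cn$, and the $MM^*$-estimate (equivalently, boundedness of the $K$-convexity constant by $C\log n$), which provides a position in which $w(K)w(K^{\circ})\lesssim n\log n$ for any symmetric body $K$. Controlling the radii $R$ simultaneously, and reducing the asymmetric case to the symmetric one through the $\min\{n,m\}$-dimensional block, is what produces a bound of the form $r(n,m)\gtrsim \min\{n,m\}^{\alpha}/\mathrm{polylog}$.

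The hard part, and the genuine obstacle, is extracting the \emph{sharp} exponent. Using Chevet together with only John's theorem ($R/r\leq\sqrt n$) one obtains merely $r(n,m)\gtrsim 1$, because in the worst position — saturated by bodies like the cube — the circumradius is as large as the width permits and the two factors $A,A'$ conspire against the bound. To get a true power of the dimension one must use substantially finer width estimates (Dudley/covering numbers, type and cotype constants, volume-ratio information) and, crucially, exploit that a single body cannot be simultaneously extremal for the injective estimate governing $A$ and for its polar governing $A'$. I expect the argument to split into regimes according to these volumetric/cotype parameters and to interpolate between the Euclidean regime (ratio $\sim\dim$) and the extreme regime, and it is precisely this trade-off that yields the exponent $1/8$ and the logarithmic denominator. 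This case analysis is by far the most delicate step and is where the real content of the theorem lies.
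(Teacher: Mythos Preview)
The paper does not prove this statement at all; it is quoted as a fact from the companion paper~\cite{XOR}, and no argument is given here beyond the citation. So there is no ``paper's own proof'' to compare against.

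That said, your outline is, in broad strokes, the strategy that is actually carried out in~\cite{XOR}: a Gaussian random tensor, Chevet's inequality to bound the injective norm from above, duality together with a second Chevet application to bound the projective norm from below, and then an optimization over the auxiliary Euclidean structure using the $\ell$-position and the $MM^*$-estimate (bounded $K$-convexity constant). So as a roadmap your framework is correct.

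However, what you have written is a plan rather than a proof, and you have correctly located your own gap. Everything you have actually carried out yields only $r(n,m)\gtrsim 1$, as you note yourself. The step that produces the exponent $1/8$ --- the quantitative trade-off showing that a body cannot simultaneously be extremal for the Chevet bound on $B_{X^*}$ and for the dual bound on $B_X$, and the accompanying case split according to volumetric/type--cotype parameters --- is precisely the substance of the theorem, and you have only asserted that such an analysis should exist. Until that part is executed, the proposal does not establish the claimed bound.
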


It is conjectured in~\cite{XOR} that the value $19/18$ in Fact~\ref{fact:19/18} can be replaced by $\sqrt{2}$, and that the exponent $1/8$ in Fact~\ref{fact:r(n,m)} can be replaced by $1/2$.

\section{Proof of Theorem~\ref{3-dim result}: 3-dimensional cones} \label{proof:dim3}

\begin{definition}
Let $a$ and $b$ be elements of $(-1,1)$. The \emph{kite} with center $(a,b)$ is defined as 
\begin{equation*}
    T_{a,b} \coloneqq \conv \{ (a,\pm 1), (\pm 1,b) \} \subset \R^2 .
\end{equation*}
\end{definition}
%\tcr{Carlos: I found the previous definition weird, since $(\pm 1, a)$ means that $a$ is the $y$-coordinate. Then, why ''with center $(a,b)$''? Shouldn't it be $$T_{a,b} \coloneqq \conv \{ (\pm 1, b), (a,\pm 1) \}$$}

\begin{definition}
The \emph{blunt square} is defined as $S = [-1,1]^2 \setminus \{-1,1\}^2 \subset \R^2$.
\end{definition}

Note that the cone $\CC(S)$ is not proper, since it is not closed. We rely on two propositions. The first shows that any non-classical cone can be `sandwiched' between cones based on a kite and on a blunt square. The second produces nontrivial information about the maximal tensor product of cones based on kites vs the minimal tensor product of cones based on blunt squares.

\begin{manualprop}{\ref{proposition:auerbach}}
Let $V$ be a $3$-dimensional vector space, and $\C \subset V$ a proper cone which is not classical. 
There exist $(a,b) \in (-1,1)^2$ and $\Phi : V \to \R^2 \times \R$ a linear bijection such that
\begin{equation*}
\CC(T_{a,b}) \subseteq \Phi(\C) \subseteq \CC(S)\, .
\end{equation*}
\end{manualprop}

\begin{manualprop}{\ref{proposition:sticks-out}}
Let $a_1$, $a_2$, $b_1$ and $b_2$ be elements of $(-1,1)$. Then
\begin{equation*}
\CC(T_{a_1,b_1}) \tmax \CC(T_{a_2,b_2}) \not\subseteq \CC(S) \tmin \CC(S).
\end{equation*}
In other words, there exists $\omega \in \CC(T_{a_1,b_1}) \tmax \CC(T_{a_2,b_2})$ such that $\omega \not\in \CC(S) \tmin \CC(S)$.
\end{manualprop}

We postpone the proof of Propositions~\ref{proposition:auerbach} and~\ref{proposition:sticks-out} to the end of this section, and show how they together imply Theorem~\ref{3-dim result}. The following reasoning was already sketched in the Methods section of the main article, but we repeat it here for the sake of completeness. Considering two 3-dimensional non-classical proper cones $\C_1$, $\C_2$, we need to show that the pair $(\C_1,\C_2)$ is entangleable. Assume by contradiction that $(\C_1,\C_2)$ is a nuclear pair. By combining Proposition~\ref{proposition:auerbach} with Fact~\ref{fact:nuclear-bijection}, we may assume that
\begin{equation*}
\CC(T_{a_1,b_1}) \subseteq \C_1 \subseteq \CC(S) \quad \textnormal{and}\quad \CC(T_{a_2,b_2}) \subseteq \C_2 \subseteq \CC(S)
\end{equation*}
for some $a_1, a_2, b_1, b_2\in (-1,1)$. Since $\tmin$ and $\tmax$ are increasing operations with respect to set inclusion, it follows that
\begin{equation*}
\CC(T_{a_1,b_1}) \tmax \CC(T_{a_2,b_2}) \subseteq \C_1 \tmax \C_2 = \C_1 \tmin \C_2 \subseteq \CC(S) \tmin \CC(S)\, ,
\end{equation*}
thus contradicting the conclusion of Proposition~\ref{proposition:sticks-out}.

\subsection{Proof of Proposition~\ref{proposition:auerbach}}

%\tcr{Carlos: Are we using at all the next paragraph in red color? Maybe I missed something, but it seems to me that we don't need it. In any case, we should probably define what an affine bijection is.} \tcr{Let $V$ be a finite-dimensional real vector space and $\varphi : V \to V$ an affine bijection. We denote by $\tilde{\varphi} : V \times \R \to V \times \R$ the linear bijection defined by $\tilde{\varphi}(x,\lambda) = (\varphi(x)+(\lambda-1)\varphi(0),\lambda)$. We note that if $K \subset V$ is a convex set, then $\CC(\varphi(K)) = \tilde{\varphi}(\CC(K))$.}

We prove the following statement, which implies Proposition~\ref{proposition:auerbach}. 
It is a variant of Auerbach's lemma, which is usually stated for symmetric convex bodies (see for example~\cite[II.E.11]{Wojtaszczyk}).

\begin{proposition} \label{proposition:auerbach2}
If $K \subset \R^2$ is a convex body which is not a triangle, then there exists an affine bijection $\Psi : \R^2 \to \R^2$ and $a,b \in (-1,1)$ such that $ T_{a,b} \subseteq \Psi(K) \subseteq S$.
\end{proposition}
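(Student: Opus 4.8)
The plan is to reduce the statement to the construction of a well‑positioned circumscribing parallelogram, and then to produce that parallelogram by a variant of the Auerbach basis argument. Concretely, suppose one can find a parallelogram $Q$ with $K\subseteq Q$ together with four points $q_1,q_2,q_3,q_4$, one in the relative interior of each edge of $Q$, arranged so that the segment $q_1q_3$ is parallel to the two edges of $Q$ not containing $q_1$ (and likewise $q_2q_4$ is parallel to the two edges not containing $q_2$), and moreover so that $K$ is disjoint from the four corners of $Q$. Then the affine bijection $\Psi$ carrying $Q$ onto $[-1,1]^2$ sends $q_1,\dots,q_4$ to points of the form $(a,\pm1),(\pm1,b)$, so $\conv\{\Psi(q_i)\}=T_{a,b}$; since the $q_i$ lie in the open edges we get $a,b\in(-1,1)$, and since $K$ misses the corners of $Q$ we get $\Psi(K)\subseteq[-1,1]^2\setminus\{\pm1\}^2=S$. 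Hence $T_{a,b}\subseteq\Psi(K)\subseteq S$, which is what we want. (One also sees $a,b\neq\pm1$ from the fact that $\Psi$ maps a genuine quadrilateral to a genuine quadrilateral, and the latter cannot have three collinear vertices.)

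To build $Q$, I would take a parallelogram $P=\conv\{v_1,v_2,v_3,v_4\}\subseteq K$ of \emph{maximal area} (it exists by compactness and is non‑degenerate since $\inter K\neq\emptyset$) and let $Q$ be cut out by supporting lines $L_i$ of $K$ at the vertices $v_i$, with $q_i:=v_i$. The key point, exactly as in Auerbach's lemma, is that maximality of $\area(P)=\tfrac12|(v_3-v_1)\times(v_4-v_2)|$ forces the first‑order optimality relations that let one choose each $L_i$ parallel to the diagonal of $P$ avoiding $v_i$: differentiating the area along the admissible motions of the $v_i$ on $\partial K$ that preserve the constraint $v_1+v_3=v_2+v_4$, one shows that for each $i$ the tangent cone of $K$ at $v_i$ lies in a half‑plane whose boundary is parallel to the opposite diagonal. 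Consequently $L_1\parallel L_3$, $L_2\parallel L_4$, these two directions are distinct, and the $L_i$ bound a parallelogram $Q\supseteq K$ in which the $v_i$ are contact points with exactly the required alignment. This already produces $T_{a,b}\subseteq\Psi(K)\subseteq[-1,1]^2$ with the correct alignment.

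The hard part, and the only place where the hypothesis that $K$ is not a triangle is genuinely needed, is the last requirement: that $K$ avoid the corners of $Q$. The difficulty is that if $\partial K$ contains a segment lying on some $L_i$, that segment may run all the way out to a corner of $Q$, so that $K$ touches it and $\Psi(K)$ is not contained in the open square. To deal with this I would perturb $Q$ by a small shear — an area‑preserving affine motion that keeps $Q$ a parallelogram and keeps $K$ inside it — chosen so as to pull $K$ off the offending corner while preserving four aligned contact points in the open edges. Such a shear is available unless two opposite corners of $Q$ are simultaneously pinned against $\partial K$, and a direct analysis shows that this pinning forces $K$ to coincide with the triangle bounded by two of the lines $L_i$ — the excluded case. (As a consistency check in the other direction, a triangle genuinely cannot be sandwiched: on the blunt‑square side it must avoid the four corners, which together with containing the four‑vertex set of a kite is impossible; and conversely the quadrilateral case is a triviality, since a direct computation with the diagonals shows that every convex quadrilateral is itself affinely a kite $T_{a,b}$, so that one may take $\Psi(K)=T_{a,b}$.)

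I expect the bulk of the work — and the step most prone to error — to be the careful first‑order analysis of the maximal inscribed parallelogram at non‑smooth boundary points (working with tangent cones rather than tangent lines) together with a clean formulation of the ``pinning forces a triangle'' dichotomy that handles the shear step; the remainder is bookkeeping.
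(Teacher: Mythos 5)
There is a genuine gap at the core of your construction: the first\-/order claim about the maximal\-/area inscribed \emph{parallelogram} is false, and with it the inclusion $K\subseteq Q$. Concretely, let $K$ be the triangle with vertices $(0,0),(1,0),(0,1)$ truncated by the half\-/plane $\{x\leq 1-\delta\}$ for small $\delta>0$; this $K$ is a convex body and not a triangle. Every parallelogram inscribed in $K$ is inscribed in the triangle, so its area is at most $\tfrac14$, and this value is attained, e.g.\ by the square $P$ with vertices $(0,0),(\tfrac12,0),(\tfrac12,\tfrac12),(0,\tfrac12)$. At the vertex $(\tfrac12,0)$ of this maximizer, which lies in the relative interior of the bottom edge of $K$, the \emph{only} supporting line of $K$ is horizontal, while the opposite diagonal of $P$ (from $(0,0)$ to $(\tfrac12,\tfrac12)$) has direction $(1,1)$: the supporting line cannot be chosen parallel to it. Accordingly the parallelogram $Q$ bounded by lines through the vertices parallel to the diagonals is the diamond with corners $(\tfrac34,\tfrac14),(\tfrac14,-\tfrac14),(-\tfrac14,\tfrac14),(\tfrac14,\tfrac34)$, which misses the point $(0,1)\in K$; one checks the same failure for every other maximizer of this $K$ (they are all maximizers of the triangle that survive the truncation), so no better choice of maximal parallelogram rescues the argument. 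The reason is exactly the constraint you impose: with $v_1+v_3=v_2+v_4$ you cannot move one vertex without dragging a partner, and the partner being blocked by $\partial K$ destroys the first\-/order conclusion that $K$ lies in the four half\-/planes.

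The paper's proof avoids this by dropping the parallelogram constraint altogether: it takes a quadrilateral $ABCD$ of maximal area inscribed in $K$ (non\-/degenerate precisely because $K$ is not a triangle). Single\-/vertex variations are then admissible, and maximality immediately forces $K$ to lie between the lines through $B$ and $D$ parallel to $AC$ and between the lines through $A$ and $C$ parallel to $BD$; these four lines bound the circumscribed parallelogram, and the image of $ABCD$ under the affine map sending it to $[-1,1]^2$ is automatically a kite $T_{a,b}$, since opposite contact points are aligned by construction. Your corner\-/avoidance step is also heavier than needed and left unproved (the shear perturbation and the ``pinning forces a triangle'' dichotomy are only asserted); the paper instead selects, among maximal\-/area quadrilaterals, one having as few vertices as possible that are extreme points of $K$, and shows that a corner of the circumscribed square belonging to $K$ would allow sliding a vertex and then strictly increasing the area, a contradiction. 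Your parenthetical remark that every convex quadrilateral is affinely a kite is correct, but it does not repair the main step.
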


To check that Proposition~\ref{proposition:auerbach2} implies Proposition~\ref{proposition:auerbach}, first note, using Fact~\ref{fact:cone-has-base}, that it is enough to prove Proposition~\ref{proposition:auerbach} for $\C = \CC(K)$, with $K$ a convex body in $\R^2$. Moreover, by Fact~\ref{fact:classical} we see that $\C$ is non-classical if and only if $K$ is not a triangle (i.e.\ a two-dimensional simplex). Then, Proposition~\ref{proposition:auerbach} follows from Proposition~\ref{proposition:auerbach2} together with Fact~\ref{fact:affine-linear}.
%Finally, if $\Psi : \R^2 \to \R^2$ is an affine bijection satisfying the conclusion of Proposition~\ref{proposition:auerbach2}, the linear map $\Phi : \R^2 \times \R \to \R^2 \times \R$ defined by $(x,t) \mapsto (\Psi(x)-\Psi(0) + t\Psi(0),t)$ satisfies the conclusion of Proposition~\ref{proposition:auerbach} for the cone $\C=\CC(K)$.

\begin{proof}[Proof of Proposition~\ref{proposition:auerbach2}]
Let $ABCD$ be a quadrilateral of maximal area inside $K$ (since $K$ is not a triangle, this quadrilateral does not degenerate into a triangle). The existence of this quadrilateral follows easily from a compactness argument. Basic geometric considerations (see Figure~\ref{figure-auerbach}) show that maximality implies that $K$ lies between the lines parallel to $(AC)$ passing through $B$ and $D$; and between the lines parallel to $(BD)$ passing through $A$ and $C$. These four lines delimit a parallelogram which can be mapped to $[-1,1]^2$ by a suitable affine bijection $\Psi$. At this step we showed the existence of $(a,b) \in (-1,1)^2$ such that 
$T_{a,b} \subseteq \Psi(K) \subseteq [-1,1]^2$. This is a bit weaker than the conclusion of the lemma.

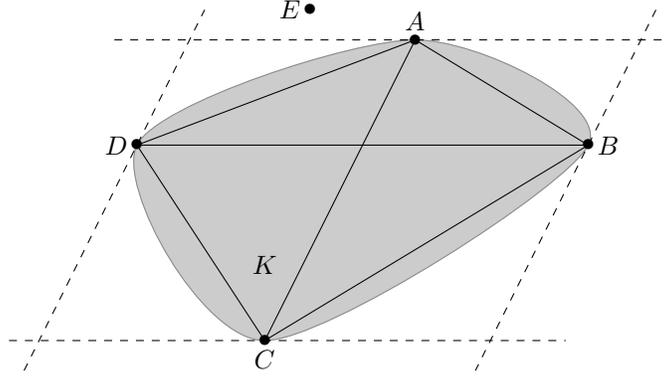
\begin{figure}[htbp] \begin{center}
\begin{tikzpicture}[scale=2]
\coordinate (A) at (0.5,1) ;
\coordinate (B) at (1.65,0.3) ;
\coordinate (C) at (-0.5,-1) ;
\coordinate (D) at (-1.35,0.3) ;
\coordinate (E) at (-0.2,1.2) ;

\draw [gray!40,fill] plot [smooth cycle] coordinates { (A) (B) (C) (D)};
\draw [gray!90] plot [smooth cycle] coordinates { (A) (B) (C) (D)};

\draw [dashed] (-1.5,1)--(2.2,1);
\draw [dashed] (-2.2,-1)--(1.5,-1);
\draw [dashed] (2.1,1.2)--(0.9,-1.2);
\draw [dashed] (-2.1,-1.2)--(-0.9,1.2);

\draw (A)--(B)--(C)--(D)--(A);
\draw (A)--(C);
\draw (B)--(D);

\draw (A) node {$\bullet$};
\draw (B) node {$\bullet$};
\draw (C) node {$\bullet$};
\draw (D) node {$\bullet$};

\draw (A) node[above] {$A$};
\draw (B) node[right] {$B$};
\draw (C) node[below] {$C$};
\draw (D) node[left] {$D$};

\draw (E) node {$\bullet$};
\draw (E) node[left] {$E$};

\draw (-0.5,-0.5) node {$K$};

\end{tikzpicture}
\end{center}
\caption{
The quadrilateral $ABCD$ has maximal area inside the convex body $K$ depicted in gray. It follows that $K$ lies in the parallelogram delimited by dotted lines: if for example a point $E$ is above the line parallel to $(BD)$ passing through $A$, then $\area(EBCD) > \area(ABCD)$ and therefore $E \not\in K$.
} \label{figure-auerbach}
\end{figure}

To enforce $\Psi(K) \subseteq S$ we need to be more careful in our construction. Among all quadrilaterals of maximal area inside $K$, choose $ABCD$ with the extra property that as few as possible among $A$, $B$, $C$ and $D$ are extreme points in $K$. We claim that repeating the construction from the previous paragraph with that choice of $ABCD$ implies that $\Psi(K) \subseteq S$. Indeed, since our construction is affine-invariant, we may assume that $\Psi$ is the identity, so that $T_{a,b} \subseteq K \subseteq [-1,1]^2$. The fact that $T_{a,b}$ is a quadrilateral of maximal area inside $K$ follows from the change of variables theorem, which ensures that the area of the image of a set $X$ by an affine transformation equals the area of $X$ times a constant.
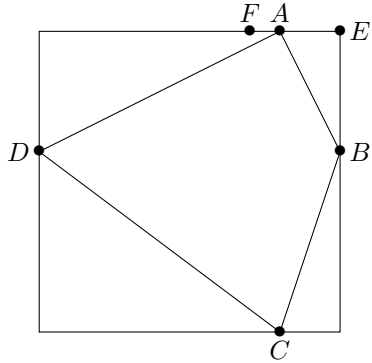
\begin{figure}[htbp] \begin{center}
\begin{tikzpicture}[scale=2]
	\coordinate (a) at (1,1);
	\coordinate (b) at (1,-1);
	\coordinate (c) at (-1,-1);
	\coordinate (d) at (-1,1) ;
	\coordinate (A) at (0.6,1);
	\coordinate (B) at (1,0.2);
	\coordinate (C) at (0.6,-1);
	\coordinate (D) at (-1,0.2) ;
	\draw (A)--(B)--(C)--(D)--(A);
	\draw (a)--(b)--(c)--(d)--(a);
	\draw (A) node {$\bullet$};
	\draw (A) node[above] {$A$};
    \draw (B) node {$\bullet$};
	\draw (B) node[right] {$B$};
    \draw (C) node {$\bullet$};
	\draw (C) node[below] {$C$};
    \draw (D) node {$\bullet$};
	\draw (D) node[left] {$D$};
    \draw (1,1) node {$\bullet$};
    \draw (1,1) node[right] {$E$};
    \draw (0.4,1) node {$\bullet$};
    \draw (0.4,1) node[above] {$F$};
    \end{tikzpicture} \end{center}
\caption{
If $E$ belongs $K$ and $A$ is not an extreme point of $K$, then $K$ contains the quadrilateral $FECD$ which has larger area than $ABCD$.
} \label{figure-auerbach-2}
\end{figure}
By symmetry it suffices to show that $E \coloneqq (1,1) \not\in K$. Suppose by contradiction that $E\in K$, and let consider the points $A=(a,1)$, $B=(1,b)$, $C=(a,-1)$ and $D=(-1,b)$. We claim that $A$ is not an extreme point of $K$. This follows from our choice of $ABCD$ together with the observation that, for every $X$ in the segment $AE$, we have $\area(ABCD)=\area(XBCD)$ while $X$ is not an extreme point of $K$ (see Figure~\ref{figure-auerbach-2}). It follows that there exists $a'<a$ such that the point $F \coloneqq (a',1)$ belongs to $K$. At this point we reached a contradiction: since $AECD$ is strictly contained in $FECD$, we have
\[ \area(FECD) > \area(AECD) = \area(ABCD). \]
Hence, we have found a quadrilateral $FECD$ inside $K$ which has an strictly larger area than $\area(T_{a,b})$.
\end{proof}

\subsection{Proof of Proposition~\ref{proposition:sticks-out}}

The cones $\CC(T_{a,b})$ and $\CC(S)$ live in $\R^2 \times \R$, which we identify with $\R^3$. The tensor products $\CC(T_{a_1,b_1}) \tmax \CC(T_{a_2,b_2})$ and $\CC(S) \tmin \CC(S)$ live in $\R^3 \otimes \R^3$, which we identify with the algebra $\gM_3$ of $3 \times 3$ matrices with real entries. We also use the canonical inner product on $\R^3 \otimes \R^3$, which allows to identify $(\R^3 \otimes \R^3)^*$ with $\R^3 \otimes \R^3$.

Given real parameters $a$ and $b$, consider the self-adjoint matrices
\[ M_{a,b} \coloneqq \begin{pmatrix} 1 & ab & a \\ ab & 1 & b \\ a & b & 1 \end{pmatrix}, \quad
H \coloneqq \begin{pmatrix} 1 & 1 & 0 \\ 1 & -1 & 0 \\ 0 & 0 & 1 \end{pmatrix}.
\]

The proof of Proposition~\ref{proposition:sticks-out} is completely explicit: given $a_1, b_1, a_2, b_2 \in (-1,1)$, we define a matrix $\Omega = (\omega_{i,j})$ by 
\[ \Omega \coloneqq M_{a_2,b_2} H^{-1} M_{a_1,b_1}. \]
We compute, using the notation $\alpha = a_1b_1$, $\beta=a_1b_2$, $\gamma=a_2b_1$, $\delta= a_2b_2$, 
$\e=a_1a_2$, $\zeta=b_1b_2$, $\eta=a_1a_2b_1b_2$,
\begingroup % keep the change local
\setlength\arraycolsep{10pt}
\[ \Omega = \frac{1}{2} \begin{pmatrix} 1 + \alpha + \delta + 2\e - \eta &  1 + \alpha + 2\gamma -\delta + \eta & * \\ 1-\alpha + 2\beta + \delta +\eta & -1 + \alpha + \delta + 2\zeta + \eta & * \\ * & * &  2 + \beta + \gamma +\e -\zeta \end{pmatrix} ,\]
\endgroup
where entries whose values are not used in our computation are denoted by $*$. We check in particular that
\begin{equation} \label{eq:omega} \omega_{11} + \omega_{12} + \omega_{21} - \omega_{22} = 2 \omega_{33}. \end{equation}

%\[ \frac{1}{2} \begin{pmatrix}  (-a_2b_1b_2+2a_2+b_1)a_1+a_2b_2+1 & (a_1b_1b_2+2b_1-b_2)a_2+a_1b_1+1 & (a_1b_2-b_1b_2+2)a_2+a_1+b_1 \\ (a_2b_1b_2-b_1+2b_2)a_1+a_2b_2+1 & (a_1a_2b_2+a_1+2b_2)b_1+a_2b_2-1 & (a_1a_2+a_2b_1+2)b_2+a_1-b_1 \\ (a_2b_1-b_1b_2+2)a_1+a_2+b_2 & (2+(a_2+b_2)a_1)b_1+a_2-b_2 &  (a_2+b_2)a_1+a_2b_1-b_2b_1+2) \end{pmatrix}.\]

Let $\omega \in \R^3 \otimes \R^3$ be the tensor which is identified with $\Omega \in \gM_3$. We claim
that
\begin{equation} \label{eq:omega-1} \omega \in \CC(T_{a_1,b_1}) \tmax \CC(T_{a_2,b_2}), \end{equation}
\begin{equation} \label{eq:omega-2} \omega \not\in \CC(S) \tmin \CC(S), \end{equation}
and the proof of Proposition~\ref{proposition:sticks-out} will be complete provided we justify~\eqref{eq:omega-1} and~\eqref{eq:omega-2}.

\begin{proof}[Proof of~\eqref{eq:omega-1}]
Denote by $K=\conv\{(\pm 1,0),(0,\pm 1)\} \subset \R^2$. 
\begin{fact} 
The matrices $H$ and $M_{a,b}$ have the following properties
\begin{enumerate}[label={S\arabic{definition}.\arabic*}, leftmargin=\widthof{S34.2}+\labelsep]
\item \label{fact:H-Mab1}  For every $a,b$ in $(-1,1)$, we have $M_{a,b}(\CC(K)) = \CC(T_{a,b})$.
\item \label{fact:H-Mab2}  We have $H(\CC(K))=\CC(K)^*$.
\end{enumerate}
\end{fact}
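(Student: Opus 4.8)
The plan is to use that $\CC(K)$, $\CC(T_{a,b})$ and $\CC(K)^*$ are all polyhedral cones, hence each is the conical hull of an explicit finite set of generating rays, together with the fact that a linear map $\Lambda$ commutes with the conical-hull operation, $\Lambda(\cone A)=\cone(\Lambda A)$, and that $\cone\{\lambda v\}=\cone\{v\}$ for $\lambda>0$. By Fact~\ref{fact:faces-cone-base} the extreme rays of $\CC(K)$ correspond to the four vertices $(\pm1,0),(0,\pm1)$ of $K$, so $\CC(K)=\cone\{(1,0,1),(-1,0,1),(0,1,1),(0,-1,1)\}$; similarly $\CC(T_{a,b})$ is the conical hull of the rays through $(a,\pm1,1)$ and $(\pm1,b,1)$ (these four points really are the vertices of $T_{a,b}$ precisely because $a,b\in(-1,1)$), and $\CC([-1,1]^2)$ is the conical hull of the rays through $(\pm1,\pm1,1)$. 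Consequently, to prove \ref{fact:H-Mab1} it is enough to apply $M_{a,b}$ to the four generators of $\CC(K)$ and verify that, up to a positive scalar, we recover the four generators of $\CC(T_{a,b})$.

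So first I would carry out the four matrix--vector products (reading the triples as column vectors). The outcome is $M_{a,b}(1,0,1)=(1+a)(1,b,1)$, $M_{a,b}(-1,0,1)=(1-a)(-1,b,1)$, $M_{a,b}(0,1,1)=(1+b)(a,1,1)$, and $M_{a,b}(0,-1,1)=(1-b)(a,-1,1)$. Since $a,b\in(-1,1)$, the scalars $1\pm a$ and $1\pm b$ are strictly positive, so $\cone\{M_{a,b}p_i\}$ runs exactly over the generating rays of $\CC(T_{a,b})$, and $M_{a,b}(\CC(K))=\CC(T_{a,b})$ follows.

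For \ref{fact:H-Mab2} I would first identify $\CC(K)^*$ explicitly: using the canonical inner product, $z=(z_1,z_2,z_3)\in\CC(K)^*$ iff $\langle z,w\rangle\geq0$ on each of the four generators of $\CC(K)$, i.e.\ iff $z_3\geq|z_1|$ and $z_3\geq|z_2|$; this says $\CC(K)^*=\CC([-1,1]^2)$. Then I would compute $H$ on the generators of $\CC(K)$, obtaining $H(1,0,1)=(1,1,1)$, $H(-1,0,1)=(-1,-1,1)$, $H(0,1,1)=(1,-1,1)$, $H(0,-1,1)=(-1,1,1)$ — exactly the generators of $\CC([-1,1]^2)$ — and conclude $H(\CC(K))=\CC(K)^*$ by the same conical-hull argument.

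I do not expect a genuine obstacle: the statement reduces to evaluating two fixed $3\times3$ matrices on four vectors. The only points warranting a moment's care are the appeal to Fact~\ref{fact:faces-cone-base} to pass from cones to their generating rays, the (easy) observation that the four candidate vertices of $T_{a,b}$ are genuinely extreme, which is where $a,b\in(-1,1)$ is used in \ref{fact:H-Mab1}, and bookkeeping the positive rescaling factors $1\pm a,\,1\pm b$ so as not to accidentally invert a ray; past that it is a routine verification.
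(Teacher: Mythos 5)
Your proposal is correct and follows essentially the same route as the paper: compute the images of the four generating rays of $\CC(K)$ under $M_{a,b}$ and $H$, observe that the positive scalars $1\pm a$, $1\pm b$ preserve rays, and identify $\CC(K)^*$ with $\CC([-1,1]^2)$ by testing on generators. Your version is if anything slightly more explicit about the conical-hull bookkeeping, but the substance is identical.
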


\begin{proof}
For the first part, note that $M_{a,b}(\pm 1, 0, 1) = (1 \pm a)(\pm 1,b,1)$ and $M_{a,b}(0,\pm 1,1) = (1\pm b)(a,\pm 1,1)$, so that $M_{a,b}$ maps extreme rays of $\CC(K)$ to extreme rays of $\CC(T_{a,b})$.
For the second part, we check that $\CC(K)^* = \CC([-1,1]^2)$, and that $H$ maps extreme rays of $\CC(K)$
to extreme rays of $\CC([-1,1]^2)$.
\end{proof}

In view of Fact~\ref{fact:tmax-and-pos},~\eqref{eq:omega-1} is equivalent to
the fact that $\Omega = \mathrm{op}(\omega) \in \gPos(\CC(T_{a_1,b_1})^*,
\CC(T_{a_2,b_2}))$, or again to the inclusion
\[ M_{a_2,b_2} H^{-1} M_{a_1,b_1} ( \CC(T_{a_1,b_1})^* ) \subseteq \CC(T_{a_2,b_2}) \]
or further (using Fact~\ref{fact:H-Mab1}) that
\[ M_{a_2,b_2} H^{-1} M_{a_1,b_1} \big( ( M_{a_1,b_1} \CC(K) )^* \big) \subseteq M_{a_2,b_2} \CC(K). \]
We now invoke Fact~\ref{fact:polar-linearmap} (applied with $\Phi=\Phi^*=M_{a_1,b_1})$ to claim that this statement follows from the inclusion $H^{-1}(\CC(K)^*) \subseteq \CC(K)$, which is an obvious consequence of Fact~\ref{fact:H-Mab2}.
\end{proof}

\begin{proof}[Proof of~\eqref{eq:omega-2}]

Consider $a \in \CC(S) \tmin \CC(S)$, identified with a matrix $(a_{ij}) \in \gM_3$. We claim that it satisfies the inequality
\begin{equation} \label{eq:two-blunt-squares}
a_{11} + a_{12} + a_{21} - a_{22} < 2 a_{33}.
\end{equation}
Once~\eqref{eq:two-blunt-squares} is proved,~\eqref{eq:omega-2} follows immediately by comparison with~\eqref{eq:omega}. To show~\eqref{eq:two-blunt-squares}, we use the following variant of the CHSH inequality. 

\begin{lemma}[CHSH inequality, strict version] \label{lemma:CHSH}
If $(x_1,y_1) \in S$ and $(x_2,y_2) \in S$, then
\begin{equation} \label{eq:CHSH-strict} x_1x_2 + x_1 y_2 + y_1x_2 - y_1y_2 < 2 .\end{equation}
\end{lemma}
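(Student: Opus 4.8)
The plan is to deduce Lemma~\ref{lemma:CHSH} from the classical \emph{non-strict} CHSH inequality, valid on the full square, and then to examine precisely when equality can occur. First I would record the standard estimate: for \emph{any} $(x_1,y_1),(x_2,y_2)\in[-1,1]^2$,
\[ x_1x_2 + x_1y_2 + y_1x_2 - y_1y_2 = x_1(x_2+y_2) + y_1(x_2-y_2) \leq |x_2+y_2| + |x_2-y_2| = 2\max\{|x_2|,|y_2|\} \leq 2, \]
where the first inequality uses $|x_1|,|y_1|\leq 1$ and the middle identity $|x_2+y_2|+|x_2-y_2|=2\max\{|x_2|,|y_2|\}$ holds for all real numbers.

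Next I would argue by contradiction: suppose $(x_1,y_1),(x_2,y_2)\in S$ and equality holds in~\eqref{eq:CHSH-strict}, and track which of the inequalities above are saturated. Saturating the last one forces $\max\{|x_2|,|y_2|\}=1$; since $(x_2,y_2)\in S$ excludes the four corners $\{-1,1\}^2$, exactly one of $|x_2|,|y_2|$ equals $1$. The expression is invariant under the substitution $(x_1,y_1,x_2,y_2)\mapsto(x_1,-y_1,y_2,x_2)$, and $S$ is preserved both by $(x,y)\mapsto(x,-y)$ and by $(x,y)\mapsto(y,x)$; hence, without loss of generality, $x_2\in\{-1,1\}$ and $|y_2|<1$. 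Then $x_2+y_2\neq 0$ and $x_2-y_2\neq 0$, so equality in $x_1(x_2+y_2)+y_1(x_2-y_2)\leq |x_1|\,|x_2+y_2|+|y_1|\,|x_2-y_2|\leq |x_2+y_2|+|x_2-y_2|$ forces $|x_1|=|y_1|=1$, i.e.\ $(x_1,y_1)\in\{-1,1\}^2$, contradicting $(x_1,y_1)\in S$. Therefore the inequality is strict.

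The argument is elementary, so there is no real obstacle; the only point demanding care is the equality analysis. Excluding the corners of the square containing $(x_2,y_2)$ is not by itself enough, since the bound could still be saturated through the pair $(x_1,y_1)$. The crux is that the blunt-square constraint on $(x_2,y_2)$ \emph{propagates}: it makes both coefficients $x_2\pm y_2$ multiplying $x_1,y_1$ nonzero, which then forces $(x_1,y_1)$ into a corner and so contradicts $(x_1,y_1)\in S$ in turn. The symmetry reduction is what allows the cases $|x_2|=1$ and $|y_2|=1$ to be handled in one stroke rather than separately.
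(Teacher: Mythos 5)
Your proof is correct and follows essentially the same route as the paper's: the decomposition $x_1(x_2+y_2)+y_1(x_2-y_2)$, the chain of inequalities ending in $|x_2+y_2|+|x_2-y_2|\leq 2$, and the equality analysis showing that the blunt-square constraint on $(x_2,y_2)$ makes both coefficients nonzero and hence forces $(x_1,y_1)$ into a corner. The symmetry reduction to $x_2\in\{-1,1\}$ is valid but not needed, since excluding the corners already guarantees $x_2\pm y_2\neq 0$ directly.
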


\begin{proof}[Proof of Lemma~\ref{lemma:CHSH}] We have
\begin{align*} | x_1x_2 + x_1 y_2 + y_1x_2 - y_1y_2 | & \leq |x_1| \cdot |x_2+y_2| + |y_1| \cdot |x_2-y_2| \\
 & \leq |x_2+y_2| + |x_2-y_2| \\
 & \leq 2.
\end{align*}
We argue that one of the inequalities must be strict. Assume the last inequality to be an equality. In this case, either $|x_2|$ or $|y_2|$ must equal $1$. Since they cannot both equal $1$, the numbers $x_2+y_2$ and $x_2-y_2$ are nonzero. Now, if the second inequality is also an equality, then it follows that $|x_1|=|y_1|=1$, a contradiction.
%Assume by contradiction that there is equality in all these equalities. In this case, 
%either $|x_2|$ or $|y_2|$ must equal $1$. Since they cannot both equal $1$, the numbers $x_2+y_2$ and $x_2-y_2$ are nonzero. It follows
%that $|x_1|=|y_1|=1$, a contradiction. 
\end{proof}

To complete the proof of~\eqref{eq:two-blunt-squares}, note that 
any nonzero element $z \in \CC(S) \tmin \CC(S)$ is a positive combination of elements of the form  $(x_1,y_1,1) \otimes (x_2,y_2,1)$ with $(x_1,y_1)$ and $(x_2,y_2)$ in $S$. It is enough to test~\eqref{eq:two-blunt-squares} on elements of that form, in which case it reduces to~\eqref{eq:CHSH-strict}.
\end{proof}

\section{Proof of Theorem~\ref{polyhedral result}: polyhedral cones} \label{proof:polyhedral}

The proof of Theorem~\ref{polyhedral result} is based on the following proposition, whose proof is postponed to the end of the section.

\begin{proposition}[non-classical polyhedral cones have non-classical retracts]
\label{proposition:polyhedral-retracts}
Let $\C$ be a proper polyhedral cone which is non-classical. Then there is a non-classical $3$-dimensional proper polyhedral cone $\C'$ which is a retract of $\C$. 
\end{proposition}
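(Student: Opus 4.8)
The plan is to argue by downward induction on $d\coloneqq\dim(\C)$, at each step replacing $\C$ by a non-classical proper polyhedral cone of one lower dimension that is still a retract of the original $\C$. Since any proper cone of dimension at most $2$ is classical --- in dimension $2$ it is an angular wedge, hence isomorphic to $\R_+^2$ --- a non-classical $\C$ automatically has $d\geq 3$, and when $d=3$ we are done by taking $\C'=\C$, with the identity maps witnessing that $\C$ is a retract of itself. Before running the induction I would record three routine facts: (a) a face $F$ of a proper polyhedral cone, regarded inside its own linear span, is again a proper polyhedral cone, of dimension $\dim(F)$; (b) by the Weyl--Minkowski theorem the dual of a proper polyhedral cone is proper and polyhedral of the same dimension, and by Fact~\ref{fact:dual-classical} it is non-classical exactly when the original cone is; and (c) ``is a retract of'' is a transitive relation, since composing witnessing pairs $(\Phi_1,\Psi_1)$ and $(\Phi_2,\Psi_2)$ as $(\Phi_1\circ\Phi_2,\ \Psi_2\circ\Psi_1)$ again yields a pair whose composite is the identity.

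For the inductive step, assume $d\geq 4$. By Lemma~\ref{lemma:simple-simplicial}, either $\C$ or $\C^*$ has a non-classical facet $F$. If $F$ is a facet of $\C$, then Lemma~\ref{lemma:facet-retract} gives that $F$ is a retract of $\C$, while by (a) $F$ is a non-classical proper polyhedral cone of dimension $d-1$. If instead $F$ is a facet of $\C^*$, then Lemma~\ref{lemma:facet-retract} makes $F$ a retract of $\C^*$, and Fact~\ref{fact:retract-dual} (``retracts dualise'') then shows that $F^*$ is a retract of $\C^{**}=\C$; by (b), $F^*$ is a non-classical proper polyhedral cone of dimension $\dim(F)=d-1$. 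In either case we obtain a non-classical proper polyhedral cone $\C''$ of dimension $d-1$ which is a retract of $\C$. Applying the induction hypothesis to $\C''$ produces a $3$-dimensional non-classical proper polyhedral cone $\C'$ that is a retract of $\C''$, and transitivity (c) then shows that $\C'$ is a retract of $\C$, closing the induction.

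Within the scope of this proposition the argument is short, because essentially all of its content is delegated to the two lemmas being quoted. Lemma~\ref{lemma:facet-retract} is proved by constructing an explicit positive projection onto a facet --- ``illuminating'' the polytope with a collinear beam so that its entire shadow falls inside one facet --- and, more substantially, Lemma~\ref{lemma:simple-simplicial} rules out non-classicality being simultaneously invisible on every facet of $\C$ \emph{and} every facet of $\C^*$; these are the real obstacles. Granting them, the only care required here is the bookkeeping that keeps every intermediate cone proper, polyhedral, non-classical, and a retract of the original $\C$: the $\C^*$-case must be routed through Fact~\ref{fact:retract-dual} rather than treated by a naive facet-of-$\C$ argument, and the descent must be stopped exactly at dimension $3$, the lowest dimension for which Lemma~\ref{lemma:simple-simplicial} is available.
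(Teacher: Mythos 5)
Your proof is correct and follows essentially the same route as the paper's: downward induction on dimension using Lemma~\ref{lemma:simple-simplicial} to locate a non-classical facet of $\C$ or $\C^*$, Lemma~\ref{lemma:facet-retract} to make it a retract, Facts~\ref{fact:dual-classical} and~\ref{fact:retract-dual} to transfer the $\C^*$ case back to $\C$, and transitivity of retracts to close the induction. The only difference is that you spell out the bookkeeping (properness, polyhedrality, dimension count) in more detail than the paper does.
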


To prove Theorem~\ref{polyhedral result}, consider $\C_1$, $\C_2$ proper polyhedral cones, and suppose that $(\C_1,\C_2)$ is nuclear. Assume by contradiction that $\C_1$ and $\C_2$ are non-classical. By Proposition~\ref{proposition:polyhedral-retracts}, there exist non-classical $3$-dimensional proper polyhedral cones $\C'_1$ and $\C'_2$ which are retracts of $\C_1$ and $\C_2$, respectively. By Proposition~\ref{proposition:retracts-nuclear}, it follows that the pair $(\C'_1,\C'_2)$ is nuclear. This contradicts Theorem~\ref{3-dim result}.

\begin{comment}
\tcb{
\begin{proof}[Proof of Proposition~\ref{proposition:retracts-nuclear}]
Let $\C_1$, $\C_2$ be proper cones in vector spaces $V_1$, $V_2$, and $\Phi_1 : V_1 \to V'_1$ and $\Phi_2 : V_2 \to V'_2$ be retractions onto $\C'_1$ and $\C'_2$. By hypothesis, there exist $\Psi_1 : V'_1 \to V_1$ and $\Psi_2 : V'_2 \to V_2$ such that $\Phi_1 \circ \Psi_1 = \Id_{V'_1}$, $\Phi_2 \circ \Psi_2 = \Id_{V'_2}$. We claim that
\begin{equation} \label{eq:tmin-retract} \C'_1 \tmin \C'_2 = (\Phi_1 \otimes \Phi_2)(\C_1 \tmin \C_2),\end{equation}
\begin{equation} \label{eq:tmax-retract} \C'_1 \tmax \C'_2 = (\Phi_1 \otimes \Phi_2)(\C_1 \tmax \C_2),\end{equation}
so that the pair $(\C'_1,\C'_2)$ is nuclear whenever $(\C_1,\C_2)$ is nuclear. The equality in~\eqref{eq:tmin-retract} holds in full generality (see Fact~\ref{fact:linear-nuclear}). Multiple uses of Fact~\ref{fact:linear-nuclear}.2 implies that
\[ (\Phi_1 \otimes \Phi_2)(\C_1 \tmax \C_2) \subset (\Phi_1(\C_1)) \tmax (\Phi_2(\C_2)) = \C'_1 \tmax \C'_2,\]
\[ (\Psi_1 \otimes \Psi_2)(\C'_1 \tmax \C'_2) \subset (\Psi_1(\C'_1)) \tmax (\Psi_2(\C'_2)) \subset \C_1 \tmax \C_2\]
\[ \C'_1 \tmax \C'_2 = (\Phi_1 \otimes \Phi_2)(\Psi_1 \otimes \Psi_2)(\C'_1 \tmax \C'_2) \subset (\Phi_1 \otimes \Phi_2)(\C_1 \tmax \C_2) \subset \Phi_1(\C_1) \tmax \Phi_2(\C_2)\]
and~\eqref{eq:tmax-retract} follows.
\end{proof}
}
\end{comment}

\begin{proof}[Proof of Proposition~\ref{proposition:polyhedral-retracts}]

We rely on the following two lemmas.

\begin{manuallemma}{\ref{lemma:facet-retract}}[see also Exercise 2.18 in~\cite{ziegler}]
Let $F$ be a facet of a proper polyhedral cone $\C$. Then $F$, which is a proper polyhedral cone when seen as a subset of $\mathspan(F)$, is a retract of $\C$.
\end{manuallemma}

\begin{manuallemma}{\ref{lemma:simple-simplicial}}
Let $\C$ be a non-classical proper polyhedral cone with $\dim (\C) \geq 4$. Then either $\C$ or $\C^*$ has a facet which is non-classical.
\end{manuallemma}

We now prove Proposition~\ref{proposition:polyhedral-retracts} by induction on the dimension. Let $\C$ be a non-classical proper polyhedral cone of dimension $n$ (note that $n \geq 3$, since any $2$-dimensional cone is classical). If $n=3$, Proposition~\ref{proposition:polyhedral-retracts} is obviously true. If $n \geq 4$,  using Lemmas~\ref{lemma:facet-retract} and~\ref{lemma:simple-simplicial}, we obtain that either (i) $\C$ has a non-classical polyhedral retract of dimension $n-1$ or (ii) $\C^*$ has a non-classical polyhedral retract of dimension $n-1$. Using Facts~\ref{fact:dual-classical} and~\ref{fact:retract-dual}, we check that (i) and (ii) are in fact equivalent, so (i) always holds. Since a retract of a retract of $\C$ is also a retract of $\C$, Proposition 
\ref{proposition:polyhedral-retracts} follows by induction.
\end{proof}

For the proof of Lemma~\ref{lemma:facet-retract} we will use the following standard fact (see {\cite[Theorem 5.8]{Brondsted83}}).

\begin{fact} \label{fact:facets-are-exposed}
Let $C$ be a closed convex set in a vector space $V$, and $F$ be a facet of $C$. Then there is a linear form $f \in V^*$ and a real number $a$ such that $f(x) \geq a$ for every $x \in C$, and $C \cap f^{-1}(a) = F$.
\end{fact}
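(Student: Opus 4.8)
The plan is to show that a facet, being a face of codimension one, is automatically an \emph{exposed} face, i.e.\ that it is cut out by a supporting hyperplane. First I would reduce to the full-dimensional case: working inside the affine hull $\aff(C)$ costs nothing, because an affine functional defined on $\aff(C)$ extends to an affine functional on all of $V$, and splitting such an extension into its linear part and a constant yields the desired pair $(f,a)\in V^*\times\R$. So assume $\dim(C)=\dim(V)=d$ and set $H\coloneqq\aff(F)$; since $\dim(F)=d-1$, the set $H$ is a hyperplane of $V$. The whole proof then reduces to the two claims (a) $C$ lies in one of the two closed halfspaces bounded by $H$, and (b) $C\cap H=F$.

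For (a), the key step, I would pick a point $x_0\in\relint(F)$, which is nonempty since $F$ is a nonempty convex set. Because $F$ is a proper face of $C$ (it has strictly smaller dimension), $x_0$ cannot lie in $\inter(C)$: any face meeting the interior of $C$ must equal $C$. Hence $x_0$ is a relative boundary point of $C$, and the supporting hyperplane theorem furnishes a hyperplane $H'\ni x_0$ with $C$ contained in one closed halfspace determined by $H'$. Now $C\cap H'$ is an exposed face of $C$ containing $x_0$; since $x_0\in\relint(F)$, every face of $C$ containing $x_0$ must contain all of $F$, so $F\subseteq C\cap H'\subseteq H'$. Comparing dimensions gives $d-1=\dim(F)\leq\dim(C\cap H')\leq\dim(H')=d-1$, which forces $\aff(F)=H'$, that is $H=H'$. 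This is exactly claim (a), and it is where the codimension-one hypothesis is essential: for a general face the supporting hyperplane at $x_0$ could be strictly larger than $\aff(F)$ (non-exposed faces do exist), but a facet is too big to be hidden inside a larger proper exposed face. This dimension count is the main obstacle; everything else is routine around it.

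Claim (b) is then a short face-theoretic argument: $F$ is a face of the convex set $C\cap H$, because if a point of $F$ lies in the relative interior of a segment contained in $C\cap H\subseteq C$, the endpoints already lie in $F$ since $F$ is a face of $C$; moreover $\dim(F)=\dim(C\cap H)=d-1$, and a face of full dimension equals the whole set, so $F=C\cap H$. Finally I would record the supporting functional explicitly: writing the halfspace from (a) as $\{\ell\geq 0\}$ for an affine functional $\ell$ with $\ell^{-1}(0)\cap\aff(C)=H$, so that $\ell^{-1}(0)\cap C=F$ by (b), and then extending $\ell$ to all of $V$ as in the reduction step, produces $f\in V^*$ and $a\in\R$ with $f(x)\geq a$ for every $x\in C$ and $C\cap f^{-1}(a)=F$, as required.
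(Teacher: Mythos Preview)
Your argument is correct and is essentially the standard proof that facets are exposed: reduce to the full-dimensional case, support $C$ at a point of $\relint(F)$, and use the dimension count $\dim(F)=\dim(C\cap H')=\dim(H')=d-1$ to force the supporting hyperplane to coincide with $\aff(F)$. The only step worth making slightly more explicit is the claim ``a face of full dimension equals the whole set'': this holds because $\dim(F)=\dim(C\cap H)$ forces $\relint(F)\subseteq\relint(C\cap H)$, and a face meeting the relative interior is improper.

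There is nothing to compare against: the paper does not prove this statement but records it as a Fact with a reference to \cite[Theorem~5.8]{Brondsted83}. Your write-up is a self-contained version of that classical result.
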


%\tcr{Carlos: For the proof of Lemma~\ref{lemma:facet-retract} we will use the following well-known fact (see for instance REFERENCE)
%\begin{fact} %\label{fact:simple-simplicial}
%Let $F$ be a facet of a proper polyhedral cone $\C\subset V$ of dimension $n$. Then, there exists a linear subspace $W$ of $V$, of dimension $n-1$ such that $F=\C\cap W$ and $\aff(F)=W$. In particular, $F$ is a proper polyhedral cone of dimension $n-1$.
%\end{fact}}

\begin{proof}[Proof of Lemma~\ref{lemma:facet-retract}]
Let $W = \aff(F)$; note that $\dim W = \dim V -1$. By Fact~\ref{fact:facets-are-exposed} and the homogeneity of $\C$ (used to enforce $a=0$), there is a linear form $f \in V^*$ such that $f \geq 0$ on $\C$, $\ker f = W$ and $W \cap \C =F$. 

Let $\pi : V \to W$ be a projection onto $W$. Pick an element $x$ in the relative interior of $F$, and define for $\lambda>0$ a linear map $\Phi_\lambda : V \to W$ by $\Phi_\lambda(z) = \pi(z) + \lambda f(z)x$ for $z \in V$. If $z\in F\subset W$, then $\Phi_\lambda(z)=\pi(z)=z\in F$. On the other hand, for every $z \in \C \setminus F$, we have
\[  \Phi_\lambda(z) = \lambda \Big( f(z)x + \lambda^{-1} \pi(z) \Big). \]
Since $f(z)>0$, the point $f(z)x$ is in the relative interior of $F$, and we have $\Phi_\lambda(z) \in F$ for $\lambda$ larger than some number $\lambda_0(z)>0$. Since $\C = \cone(A)$ for some finite set $A$, it follows that $\Phi_\lambda(\C) \subset F$ for $\lambda$ larger than $\max \{ \lambda_0(z):\, z\in A \}$.
If $\Psi : W \to V$ denotes the canonical inclusion, $\Phi_\lambda \circ \Psi$ is the identity on $W$, and this shows that $F$ is a retract of $\C$.
%
%, and $y \in \C \setminus F$. Note that $y \not \in W$ by Fact ??. Let $\pi : V \to W$ be the projection onto $W$ satisfying $\pi(y)=0$, and $f \in V^*$ the linear form defined by the relation $z=\pi(z)+f(z)y$ for $z \in V$. \tcr{Carlos: It is clear that for every $z\in W$ we have $f(z)=0$. On the other hand, the fact that  $F$ is a face guarantees that $f(z) > 0$ for $z \in \C \setminus F$. Indeed, this follows from the observation that $$\pi(z)=z+(-f(z))y.$$Hence, if there was a $z \in \C \setminus F$ for which $-f(z)\geq 0$, we would have that $\pi(z)\in F$ can be written as the sum of two elements in $\C \setminus F$, which contradicts that $F$ is a face of $\C$.}
%
%Pick an element $x$ in the relative interior of $F$, and define for $\lambda>0$ a map $\Phi_\lambda : V \to W$ by $\Phi_\lambda(z) = \pi(z) + \lambda f(z)x$ for $z \in V$. \tcr{Carlos: If $z\in F\subset W$, then $\Phi_\lambda(z)=\pi(z)=z\in F$. On the other hand,} for every $z \in \C \setminus F$, we have
%\[  \Phi_\lambda(z) = \lambda \Big( f(z)x + \lambda^{-1} \pi(z) \Big). \]
%Since $f(z)x$ is in the relative interior of $F$, we have $\Phi_\lambda(z) \in F$ for $\lambda$ large enough. Since $\C$ is the conical hull of a finite set, it follows that $\Phi_\lambda(\C) \subset F$ for $\lambda$ large enough (\tcr{Carlos: Just considering the maximum $\lambda$ chosen as before, among a finite number of elements $x$}). If $\Psi : W \to V$ denotes the canonical inclusion, $\Phi_\lambda \circ \Psi$ is the identity on $W$, and this shows that $F$ is a retract of $\C$.
\end{proof}

Lemma~\ref{lemma:simple-simplicial} is a reformulation for polyhedral cones of a basic result on polytopes (Fact~\ref{fact:simple-simplicial}). A \emph{polytope} is a convex body which is the convex hull of a finite set. A $d$-dimensional polytope is said to be \emph{simplicial} if all its facets contain exactly $d$ vertices (i.e.\ they are $(d-1)$-dimensional simplices), and is said to be \emph{simple} if every vertex belongs to $d$ facets. 

\begin{fact}[See {\cite[Theorem 12.19]{Brondsted83}}]
 \label{fact:simple-simplicial}
For $d \geq 3$, a $d$-dimensional polytope which is both simple and simplicial is a simplex.
\end{fact}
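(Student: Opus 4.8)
The plan is to prove the sharper statement that a simple and simplicial $d$-polytope $P$ with $d\geq 3$ has exactly $d+1$ vertices, which immediately forces $P$ to be a $d$-simplex. I would use the standard local structure of simple polytopes: in a simple $d$-polytope every vertex lies on exactly $d$ edges and on exactly $d$ facets, every edge lies on exactly $d-1$ facets, and the $d$ edges through a fixed vertex $v$ are in canonical bijection with the $d$ facets through $v$, where to an edge $e\ni v$ one associates the unique facet through $v$ that does \emph{not} contain $e$. I would also use that in a simplicial polytope every proper face is a simplex, since each such face lies in some facet and facets are simplices by hypothesis.

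First I would fix a facet $F$; since $P$ is simplicial, $F$ is a $(d-1)$-simplex with vertex set $\{v_0,\dots,v_{d-1}\}$. Because $F$ is a simplex, each $v_i$ is joined by an edge of $P$ to every $v_j$ with $j\neq i$, so $v_i$ has $d-1$ neighbours inside $F$; as $v_i$ has exactly $d$ neighbours in $P$, there is a unique neighbour $w_i\notin F$. The crux is to show that $w_0=w_1=\dots=w_{d-1}$.

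For this, fix an edge $e=\{v_0,v_1\}$ of $F$. Since $d\geq 3$, $e$ lies on at least two facets, so there is a facet $G\neq F$ containing $e$. Viewing $G$ as a facet through $v_0$: under the edge--facet bijection at $v_0$, the facet $F$ corresponds to the edge $\{v_0,w_0\}$, and the facets through $v_0$ containing $e$ are $F$ together with the facets corresponding to the edges $\{v_0,v_j\}$ with $j\geq 2$; there are $d-1$ of them in all, hence these are all the facets on $e$. So $G$ corresponds to some $\{v_0,v_j\}$ with $j\geq 2$, and such a facet contains every edge at $v_0$ except $\{v_0,v_j\}$, hence contains $w_0$ and all $v_i$ with $i\neq j$; since $G$ is a $(d-1)$-simplex with exactly $d$ vertices, its vertex set must be $\{v_0,\dots,v_{d-1}\}\setminus\{v_j\}\cup\{w_0\}$. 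Running the same argument at $v_1$ gives that the vertex set of $G$ also equals $\{v_0,\dots,v_{d-1}\}\setminus\{v_{j'}\}\cup\{w_1\}$ for some $j'\geq 2$. Comparing the two descriptions, and using that $w_0$ and $w_1$ are not vertices of $F$, forces $j=j'$ and then $w_0=w_1$. Applying this to every edge of the simplex $F$ produces a single vertex $w:=w_0=\dots=w_{d-1}\notin F$.

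To finish, note that $w$ is adjacent to each of $v_0,\dots,v_{d-1}$, which are $d$ distinct edges at $w$, hence --- $P$ being simple --- all the edges at $w$; and every neighbour of any $v_i$ lies in $\{v_j:j\neq i\}\cup\{w\}$. Thus $\{v_0,\dots,v_{d-1},w\}$ is closed under adjacency in the graph of $P$, and since that graph is connected it equals $V(P)$; so $P$ has $d+1$ vertices and is a $d$-simplex. The main obstacle is the middle step: one must carefully reconcile the edge--facet correspondences at the two endpoints of $e$ with the simpliciality of the intermediate facet $G$ in order to identify $w_0$ and $w_1$ as literally the same vertex. This is also precisely where $d\geq 3$ enters, through the existence of the second facet $G$ on $e$; for $d=2$ every polygon is simple and simplicial, so the dimension hypothesis is genuinely needed.
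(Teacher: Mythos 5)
Your argument is correct. Note first that the paper does not prove this statement at all: it is recorded as a Fact with a citation to Br\o ndsted's book (Theorem 12.19), so there is no in-paper proof to compare against. What you supply is a clean, self-contained combinatorial proof, and each step checks out: simplicity gives that the vertex figure at each vertex is a $(d-1)$-simplex, whence the edge--facet opposition at a vertex and the count of $d-1$ facets per edge; simpliciality makes each facet a $(d-1)$-simplex with exactly $d$ pairwise adjacent vertices, so each $v_i\in F$ has a unique neighbour $w_i$ outside $F$; and the comparison of the two descriptions of the vertex set of the second facet $G$ through an edge $\{v_0,v_1\}$ (using that $w_0,w_1\notin F$ while the remaining vertices of $G$ lie in $F$) correctly forces $w_0=w_1$. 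Propagating this over the complete graph on the vertices of $F$ and closing up under adjacency in the (connected) graph of $P$ yields $f_0(P)=d+1$, hence $P$ is a simplex; and you correctly isolate where $d\geq 3$ is used (existence of a second facet on an edge), which is exactly the hypothesis that fails for polygons. The only ingredients you take on faith --- that the vertex figure of a simple polytope is a simplex, giving the edge--facet bijection, and that the $1$-skeleton of a polytope is connected --- are standard and no deeper than the Fact itself, so this is an acceptable level of rigour; if you wanted the proof fully self-contained you would add one sentence deriving the vertex-figure claim from the characterisation of a $(d-1)$-polytope with exactly $d$ facets as a simplex.
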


We also use basic properties of the duality of polytopes, which is defined for example in~\cite[\S 10]{Brondsted83}.

\begin{fact}[See {\cite[Theorem 12.10]{Brondsted83}}] \label{fact:dual-simple-simplicial}
Let $P$ and $Q$ be dual polytopes. Then $P$ is simple if and only if $Q$ is simplicial.
\end{fact}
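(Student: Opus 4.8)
The plan is to derive this from the standard face-lattice duality of polytopes. First I would recall the setup: after translating $P$ so that the origin lies in its interior, one may take $Q$ to be the polar polytope $P^{\circ}=\{y:\langle x,y\rangle\leq 1\ \textnormal{for all}\ x\in P\}$, which is the notion of dual polytope referenced from \cite[\S 10]{Brondsted83}. The only structural input I need is the inclusion-reversing bijection $\Phi$ between the face lattice of $P$ and the face lattice of $Q$, together with the dimension relation $\dim \Phi(F)=d-1-\dim F$ for every nonempty proper face $F$, where $d=\dim P=\dim Q$. In particular $\Phi$ restricts to a bijection between the vertices of $P$ and the facets of $Q$, and between the facets of $P$ and the vertices of $Q$.

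Next I would translate ``$Q$ is simplicial'' into a statement about $P$. Since a facet of $Q$ is $(d-1)$-dimensional, and a $(d-1)$-dimensional polytope is a simplex precisely when it has exactly $d$ vertices (a $k$-dimensional polytope always has at least $k+1$ vertices, with equality only for simplices), ``$Q$ simplicial'' is equivalent to ``every facet of $Q$ has exactly $d$ vertices''. Now fix a vertex $v$ of $P$ and set $G\coloneqq \Phi(v)$, a facet of $Q$. Because $\Phi$ reverses inclusion, the faces of $Q$ contained in $G$ are exactly the $\Phi(F)$ with $F\ni v$; invoking the dimension relation, the $0$-dimensional ones, i.e.\ the vertices of $G$, correspond bijectively to the facets of $P$ containing $v$. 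Hence $G$ has exactly $d$ vertices if and only if exactly $d$ facets of $P$ pass through $v$. Since every facet of $Q$ equals $\Phi(v)$ for a unique vertex $v$ of $P$, letting $v$ range over all vertices of $P$ shows that $Q$ is simplicial if and only if every vertex of $P$ lies in exactly $d$ facets, which is by definition the statement that $P$ is simple. This is a chain of equivalences, so both implications are obtained simultaneously; alternatively, applying the same argument with the roles of $P$ and $Q$ interchanged (using $P\cong Q^{\circ}$) yields the remaining direction if one prefers to prove each implication separately.

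I do not expect a genuine obstacle here: the content is entirely the face-lattice anti-isomorphism and its dimension formula, both standard. The only thing that requires care is bookkeeping — keeping the dimension convention consistent (e.g.\ whether the empty face counts, with dimension $-1$, and which end of the lattice maps to which), and correctly identifying ``vertices of the facet $G$ of $Q$'' with ``facets of $P$ through $v$'' rather than with some other incidence count. Once those conventions are pinned down, the proof is the one-line chain of equivalences above.
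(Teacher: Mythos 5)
The paper does not prove this statement; it is recorded as a Fact with a citation to Br{\o}ndsted's book (Theorem 12.10), so there is no in-paper argument to compare against. Your derivation via the inclusion-reversing face-lattice bijection of polar duality, with the dimension formula $\dim\Phi(F)=d-1-\dim F$, is the standard proof of that cited theorem and is correct: the identification of the vertices of the facet $\Phi(v)\subseteq Q$ with the facets of $P$ through $v$ is exactly the right incidence count, and it matches the paper's definitions of \emph{simple} and \emph{simplicial} verbatim. The only ingredient you are importing as a black box is the face-lattice anti-isomorphism itself, which is the same standard fact the paper is implicitly relying on by citing Br{\o}ndsted.
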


\begin{fact}[See {\cite[Lemma 1.6]{ABMB}}] \label{fact:dual-cone-dual-polytope}
Let $P$ be a polytope. Then there is a polytope $Q$ dual to $P$ such that the cones $\CC(P)^*$
and $\CC(Q)$ are isomorphic.
\end{fact}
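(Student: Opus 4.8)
The plan is to realise $\CC(P)^*$, up to a linear change of coordinates, as the cone over the polar dual of $P$. First I would use Fact~\ref{fact:affine-linear} to translate $P$ so that $0 \in \inter(P)$; this replaces $\CC(P)$ by an isomorphic cone, hence $\CC(P)^*$ by an isomorphic cone, and it leaves the face lattice of $P$ unchanged, so nothing is lost. So assume $P \subset V$ is a convex body with the origin in its interior; then $\CC(P) \subset V \times \R$ is proper, and so is $\CC(P)^* \subset V^* \times \R$.

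Next I would compute $\CC(P)^*$ directly. A pair $(g,s) \in V^* \times \R$ lies in $\CC(P)^*$ precisely when $t\bigl(g(x)+s\bigr) \geq 0$ for every $x \in P$ and every $t \geq 0$, i.e.\ when $s \geq -\min_{x \in P} g(x)$. Writing $h_P(g) \coloneqq \max_{x \in P} g(x)$ for the support function of $P$, this reads $\CC(P)^* = \{(g,s) : s \geq h_P(-g)\}$. Composing with the linear automorphism $(g,s) \mapsto (-g,s)$ of $V^* \times \R$, I may work instead with $\widehat{C} \coloneqq \{(g,s) : s \geq h_P(g)\}$ — the epigraph of the support function, which is a cone because $h_P$ is positively homogeneous — and $\widehat{C}$ is isomorphic to $\CC(P)^*$.

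Then I would exhibit the appropriate base of $\widehat{C}$. The functional $\varphi(g,s) \coloneqq s$ is strictly positive on $\widehat{C}$: if $(g,s) \in \widehat{C}$ with $s = 0$ then $h_P(g) \leq 0$, whereas $P$ contains a ball around $0$, which forces $h_P(g) > 0$ unless $g = 0$. Hence, by Fact~\ref{fact:bases-strictly-positive}, the set $Q \coloneqq \widehat{C} \cap \varphi^{-1}(1) = \{ g \in V^* : g(x) \leq 1 \text{ for all } x \in P \} = P^{\circ}$, the polar dual of $P$, is a base of $\widehat{C}$, so by Fact~\ref{fact:cone-has-base}(b) the cone $\widehat{C}$, and therefore $\CC(P)^*$, is isomorphic to $\CC(Q)$.

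Finally I would check that $Q = P^{\circ}$ is a polytope dual to $P$. Writing $P = \conv\{v_1,\dots,v_N\}$ gives $h_P(g) = \max_i g(v_i)$, so $Q = \{ g : g(v_i) \leq 1 \text{ for all } i \}$ is a finite intersection of closed half-spaces, and it is bounded since $0 \in \inter(P)$; hence $Q$ is a polytope. That $Q$ is moreover \emph{dual} to $P$, in the sense that its face lattice is anti-isomorphic to that of $P$ (a face $F$ of $P$ corresponding to $\{ g \in Q : g|_F \equiv 1 \}$, a face of $Q$ of complementary dimension), is the classical polytope--polarity correspondence. I expect this last step to demand the most care: one should line it up with the bijections of Fact~\ref{fact:faces-cone-base} between the faces of $\CC(P)$ and those of $P$, and between the faces of $\CC(P)^*$ and those of $Q$, so as to be sure that the $Q$ produced above really stands in the required ``dual polytope'' relation with $P$ and is not merely combinatorially isomorphic to such a polytope. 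Everything else is the elementary support-function bookkeeping indicated above.
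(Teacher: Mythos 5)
Your proof is correct; the paper itself offers no proof of this fact but delegates it to \cite[Lemma 1.6]{ABMB}, and your argument -- translating $P$ so that $0\in\inter(P)$, computing $\CC(P)^*$ as the epigraph $\{(g,s):\, s\geq h_P(g)\}$ of the support function (up to the sign flip $g\mapsto -g$), and recognising its base at height $1$ as the polar $P^{\circ}$ -- is essentially the standard argument of that cited lemma. The only implicit points are that $P$ should be full-dimensional (as it is in every application in the paper, where $P$ is a base of a proper polyhedral cone), so that $\CC(P)$ and hence $\CC(P)^*$ are proper, and that in the sense of duality used by the paper's references the polar $P^{\circ}$ of a translate of $P$ with $0$ in its interior \emph{is} a dual polytope of $P$, via the face-lattice anti-isomorphism $F\mapsto\{g\in P^{\circ}:\, g(x)=1 \ \forall x\in F\}$; so the caution in your last paragraph resolves in your favour and no further work is needed.
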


\begin{proof}[Proof of Lemma~\ref{lemma:simple-simplicial}]
Let $\C$ be a non-classical proper polyhedral cone with $\dim (\C) \geq 4$. Without loss of generality (Fact~\ref{fact:cone-has-base}) we may assume that $\C=\CC(P)$ for some polytope $P$ of dimension at least $3$. Since $\C$ is not classical, $P$ is not a simplex (Fact~\ref{fact:classical}). By Fact~\ref{fact:simple-simplicial}, $P$ cannot be both simple and simplicial.
\begin{itemize}
\item If $P$ is not simplicial, then one of its facets is not a simplex, and therefore one of the facets of $\CC(P)$ is not classical (see Fact~\ref{fact:faces-cone-base}). 
\item It $P$ is not simple, let $Q$ be the polytope dual to $P$ given by Fact~\ref{fact:dual-cone-dual-polytope}. By Fact~\ref{fact:dual-simple-simplicial}, $Q$ is not simplicial. Repeating the reasoning above with $Q$ instead of $P$ and using Fact~\ref{fact:dual-cone-dual-polytope} shows that one of the facets of $\CC(P)^*$ is not classical.
\end{itemize}
In both cases, the conclusion of Lemma~\ref{lemma:simple-simplicial} is verified.
\end{proof}
%\tcr{With all these elements at hand, let's assume that all facets in $\C$ are classical. This means that if $\C=\CC(P)$, then $P$ is an $(n-1)$-dimensional simplicial polytope. Then, we deduce that the dual polytope of $P$, $Q$, is simple. However, since it cannot be a simplex, it will not be simplicial. This imply that $\CC(Q)$, which is isomorphic to the dual cone of $\C$, must have a non-classical facet.}
%\tcr{Carlos: We are using that all facets of a polytope $P$ are classical if and only if all facts of its corresponding cone $\CC(P)$ are classical, right? I guess this is easy, but is it trivial?}

\section{Proof of Theorem~\ref{semiquantum result}: positive semidefinite cones} \label{section:semiquantum}

We start with a simple observation about the positive semidefinite cones introduced in Definition~\ref{def:PSD}.

\begin{fact} \label{fact:pinching-retract}
If $k \leq n$, then $\PSD_k$ is a retract of $\PSD_n$.
\end{fact}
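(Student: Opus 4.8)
The plan is to write down the retraction explicitly, as a compression--dilation pair. Let $P = (I_k \mid 0) \in \mathbb{C}^{k\times n}$ be the matrix of the coordinate projection of $\mathbb{C}^n$ onto its first $k$ coordinates, so that $P P^\dagger = I_k$. Define two $\R$-linear maps
\[ \Psi : \gM_k^{\mathrm{sa}} \to \gM_n^{\mathrm{sa}}, \quad \Psi(X) = P^\dagger X P, \qquad \Phi : \gM_n^{\mathrm{sa}} \to \gM_k^{\mathrm{sa}}, \quad \Phi(Y) = P Y P^\dagger. \]
Concretely, $\Psi$ embeds a $k\times k$ matrix as the upper-left block of an $n\times n$ matrix padded with zeros, while $\Phi$ extracts the upper-left $k\times k$ block.

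First I would check that both maps are positive in the sense required by the definition of a retract, i.e.\ with respect to the cones $\PSD_k$ and $\PSD_n$. If $X \in \PSD_k$, then for every $w \in \mathbb{C}^n$ we have $w^\dagger \Psi(X) w = (P w)^\dagger X (P w) \geq 0$, so $\Psi(X) \in \PSD_n$, that is, $\Psi \in \gPos(\PSD_k,\PSD_n)$. Symmetrically, if $Y \in \PSD_n$, then for every $v \in \mathbb{C}^k$ we have $v^\dagger \Phi(Y) v = (P^\dagger v)^\dagger Y (P^\dagger v) \geq 0$, so $\Phi(Y) \in \PSD_k$, that is, $\Phi \in \gPos(\PSD_n,\PSD_k)$.

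Then I would verify the retraction identity: for every $X \in \gM_k^{\mathrm{sa}}$,
\[ \Phi\big(\Psi(X)\big) = P P^\dagger X P P^\dagger = I_k\, X\, I_k = X, \]
so $\Phi \circ \Psi = \Id_{\gM_k^{\mathrm{sa}}}$. Since $\PSD_k$ and $\PSD_n$ are proper cones, the pair $(\Phi,\Psi)$ witnesses that $\PSD_k$ is a retract of $\PSD_n$, with $\Phi$ the retraction, which is exactly the claim. There is no genuine obstacle here: the statement is elementary, and the only point to keep in mind is that ``positive'' in the definition of a retract refers to the order cones $\PSD_k,\PSD_n$, which is precisely what the two one-line computations above establish. (One may also note in passing that $\Psi(\PSD_k)$ is a face of $\PSD_n$, but this observation is not needed.)
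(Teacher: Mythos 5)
Your proof is correct and follows essentially the same route as the paper: embed $\gM_k^{\mathrm{sa}}$ as the upper-left block of $\gM_n^{\mathrm{sa}}$ and compress back via $A \mapsto \pi A \pi^\dagger$, checking positivity of both maps and the retraction identity. You merely spell out the computations slightly more explicitly than the paper does.
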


\begin{proof}
Identify $\mathbf{C}^k$ with a subspace of $\mathbf{C}^n$, let $\pi : \mathbf{C}^n \to \mathbf{C}^k$ a projection onto $\mathbf{C}^k$. Then the identity map $\Psi : \gM_k \to \gM_n$ and the map $\Phi : \gM_n \to \gM_k$ defined by $\Phi(A) = \pi A \pi^\dagger$ are positive and satisfy $\Phi \circ \Psi = \Id_{\gM_k}$, as needed.
\end{proof}

Our proof of Theorem~\ref{semiquantum result} is based on properties of the \emph{Lorentz cone}, defined for an integer $n$ by
\[ \mathsf{L}_n \coloneqq \left\{ (x_1,\dots,x_{n+1}) \in \R^{n+1} \, : \, \sqrt{x_1^2 + \cdots + x_n^2} \leq x_{n+1} \right\} .\]
Note that $\dim(\mathsf{L}_{n}) = n+1$, and that the cone $\mathsf{L}_n$ is self-dual, i.e.\ $\mathsf{L}_n^*=\mathsf{L}_n$ (we identify $\R^n$ with its dual space in the usual way). Theorem~\ref{semiquantum result} will be a consequence of the following propositions.

\begin{proposition}\label{proposition:ice-cream}
Let $\C$ be a proper cone with $\dim (\C) \leq n+1$. Then the pair $(\C,\mathsf{L}_n)$ is nuclear if and only if $\C$ is classical.
\end{proposition}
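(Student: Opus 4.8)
The plan is to prove the nontrivial (``only if'') implication: assuming $(\C,\mathsf{L}_n)$ is nuclear, I will show that $\C$ must be classical; the converse is exactly Lemma~\ref{lemma:easy-direction}. Write $d\coloneqq\dim(\C)$. If $d\le 2$ every proper cone is classical and there is nothing to prove, so I assume $d\ge 3$ and set $k\coloneqq d-1\ge 2$, noting $k\le n$. By Fact~\ref{fact:cone-has-base} I may replace $\C$ by an isomorphic copy $\CC(K)$, where $K\subset\R^{k}$ is a $k$-dimensional convex body; by Fact~\ref{fact:affine-linear} I am moreover free to replace $K$ by any affine image of it. By Fact~\ref{fact:nuclear-bijection} none of these reductions affects the nuclearity hypothesis.

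First I replace the second factor $\mathsf{L}_n$ by the smaller Lorentz cone $\mathsf{L}_k$. Since $k\le n$, the cone $\mathsf{L}_k$ is a retract of $\mathsf{L}_n$: identifying $\R^{k+1}$ with the subspace $E=\R^{k}\times\{0\}^{n-k}\times\R\subseteq\R^{n+1}$, the orthogonal projection $P$ onto $E$ satisfies $P(\mathsf{L}_n)=\mathsf{L}_n\cap E=\mathsf{L}_k$, so Fact~\ref{fact:retract-as-section} applies. Hence, since $\C$ is trivially a retract of itself, Proposition~\ref{proposition:retracts-nuclear} shows that $(\C,\mathsf{L}_k)$ is again a nuclear pair. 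Next I sandwich $\C$ between two Lorentz cones. Let $r\coloneqq a(K)$ be the asphericity of $K$; after a suitable affine change of coordinates (legitimate by the previous paragraph) I may assume $B_k\subseteq K\subseteq rB_k$, which on the level of cones reads $\mathsf{L}_k(1)\subseteq\C\subseteq\mathsf{L}_k(r)$, where $\mathsf{L}_k(t)$ denotes the cone over $tB_k$ (so $\mathsf{L}_k(1)=\mathsf{L}_k$). Using that $\tmin$ and $\tmax$ are monotone under inclusion together with the nuclearity of $(\C,\mathsf{L}_k)$, I get
\[ \mathsf{L}_k(1)\tmax\mathsf{L}_k(1)\ \subseteq\ \mathsf{L}_k(1)\tmax\C\ =\ \mathsf{L}_k(1)\tmin\C\ \subseteq\ \mathsf{L}_k(1)\tmin\mathsf{L}_k(r). \]
By Lemma~\ref{lemma:tensor-lorentz} this forces $r\ge k$, i.e.\ $a(K)\ge k$. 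On the other hand Theorem~\ref{theorem:simplices-maximize-asphericity} gives $a(K)\le k$, so $a(K)=k$, and the equality case of that same theorem tells us $K$ is a simplex. Therefore $\C\cong\CC(K)$ is classical by Fact~\ref{fact:classical}, which completes the proof.

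As for difficulty, Proposition~\ref{proposition:ice-cream} is essentially an assembly of results established elsewhere in the paper: the genuinely hard inputs are the sharp constant $r=k$ in Lemma~\ref{lemma:tensor-lorentz} and, crucially, the \emph{equality case} of Theorem~\ref{theorem:simplices-maximize-asphericity} (``simplices maximize asphericity''), which is what converts the inequality $a(K)\ge k$ into the rigid conclusion that $K$ is a simplex. Given these, the remaining points are pure bookkeeping: verifying that passing to a base, applying affine maps, and passing to a retract all preserve nuclearity (Facts~\ref{fact:nuclear-bijection}, \ref{fact:affine-linear}, \ref{fact:cone-has-base} and Proposition~\ref{proposition:retracts-nuclear}), and checking the elementary retract $\mathsf{L}_k\hookrightarrow\mathsf{L}_n$. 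I do not foresee any real obstacle in the argument for the proposition itself.
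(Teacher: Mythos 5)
Your proof is correct and takes essentially the same route as the paper's: sandwich the cone between Lorentz cones using the asphericity bound and invoke the critical constant $r\geq k$ from the Lorentz tensor-product lemma, the only cosmetic difference being that the paper argues by contradiction via the strict-inequality form of Leichtweiss's theorem (Corollary~\ref{corollary:leichtweiss}) whereas you use the equality case of Theorem~\ref{theorem:simplices-maximize-asphericity} directly. A small bonus of your write-up is that the reduction from $(\C,\mathsf{L}_n)$ to $(\C,\mathsf{L}_{d-1})$ through the retract handles the case $\dim(\C)<n+1$ explicitly, which the Supplementary proof glosses over by assuming $\dim(\C)=n+1$.
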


%\tcb{we may want to state this with the hypothesis $\dim \C \leq n+1$}. \tcr{Changed! I think the proof was already written to cover that case.}
\begin{proposition} \label{proposition:PSD-Lorentz}
For every $n \geq 1$, the cone $\mathsf{L}_{2n}$ is a retract of $\PSD_{2^n}$.
\end{proposition}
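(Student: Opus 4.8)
The plan is to write down explicit positive maps exhibiting $\mathsf{L}_{2n}$ as a retract of $\PSD_{2^n}$, using the standard fact about anticommuting matrices mentioned above.

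First I would fix self-adjoint matrices $E_1,\dots,E_{2n} \in \gM_{2^n}^{\mathrm{sa}}$ satisfying $E_j^2 = I$, $\tr E_j = 0$ and $E_j E_k + E_k E_j = 0$ for $j \neq k$ --- for instance the Jordan--Wigner operators $E_{2j-1} = \sigma_z^{\otimes(j-1)}\otimes \sigma_x \otimes I^{\otimes(n-j)}$ and $E_{2j} = \sigma_z^{\otimes(j-1)}\otimes \sigma_y \otimes I^{\otimes(n-j)}$ built from the Pauli matrices, or the generators of an irreducible representation of the complex Clifford algebra. From the anticommutation relations I would extract the only two facts used in the sequel: (1) for every $u \in \R^{2n}$ one has $\bigl(\sum_j u_j E_j\bigr)^2 = \|u\|_2^2\, I$, so $\sum_j u_j E_j$ is self-adjoint with spectrum in $\{\pm\|u\|_2\}$, in particular $-\|u\|_2 I \preceq \sum_j u_j E_j \preceq \|u\|_2 I$; and (2) $\tr(E_j E_k) = 2^n \delta_{jk}$ together with $\tr E_j = 0$, i.e.\ $\{I, E_1,\dots,E_{2n}\}$ are orthogonal for the trace inner product $\scalar{A}{B} = \tr(A^\dagger B)$.

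Next I would introduce the two maps. Let $\Psi : \R^{2n+1} \to \gM_{2^n}^{\mathrm{sa}}$ be $\Psi(x_1,\dots,x_{2n},t) = tI + \sum_{j=1}^{2n} x_j E_j$, and let $\Phi : \gM_{2^n}^{\mathrm{sa}} \to \R^{2n+1}$ be $\Phi(M) = 2^{-n}\bigl(\tr(E_1 M),\dots,\tr(E_{2n} M),\tr M\bigr)$. The orthogonality relations (2) give immediately $\Phi \circ \Psi = \Id_{\R^{2n+1}}$. For positivity of $\Psi$: if $(x,t) \in \mathsf{L}_{2n}$ then by (1) the matrix $\Psi(x,t) = tI + \sum_j x_j E_j$ has eigenvalues $t \pm \|x\|_2 \geq 0$, hence $\Psi(x,t) \in \PSD_{2^n}$. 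For positivity of $\Phi$: take $M \in \PSD_{2^n}$, write $\Phi(M) = (y,s)$ with $y_j = 2^{-n}\tr(E_j M)$ and $s = 2^{-n}\tr M \geq 0$; using $\|y\|_2 = \sup_{\|u\|_2 = 1}\sum_j u_j y_j = 2^{-n}\sup_{\|u\|_2=1}\tr\bigl((\sum_j u_j E_j)M\bigr)$ together with $\sum_j u_j E_j \preceq I$ from (1) and $M \succeq 0$, I get $\|y\|_2 \leq 2^{-n}\tr M = s$, so $\Phi(M) \in \mathsf{L}_{2n}$. Thus $\Phi \in \gPos(\PSD_{2^n}, \mathsf{L}_{2n})$ and $\Psi \in \gPos(\mathsf{L}_{2n}, \PSD_{2^n})$ with $\Phi \circ \Psi = \Id$, which is exactly the definition of $\mathsf{L}_{2n}$ being a retract of $\PSD_{2^n}$.

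Since the construction is entirely explicit, I do not anticipate a genuine obstacle. The conceptual heart --- and the step deserving the most care --- is the positivity of $\Phi$, which reduces to the operator inequality $\sum_j u_j E_j \preceq I$ for unit vectors $u$, itself a consequence of the squaring identity in (1). Verifying that the Jordan--Wigner operators really are self-adjoint, traceless, square to the identity and pairwise anticommute is the longest computation, but it is routine and may alternatively be cited from the Clifford-algebra literature.
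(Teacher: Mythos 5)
Your construction is correct and is essentially the paper's own proof: the same anticommuting Clifford/Pauli operators, the same maps $\Phi$ and $\Psi$, the same use of the squaring identity for positivity in both directions (you merely include the factor $2^{-n}$ in $\Phi$ up front, where the paper obtains $\Phi\circ\Psi = 2^n\,\Id$ and absorbs the constant by rescaling). No gaps.
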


Before proving Propositions~\ref{proposition:ice-cream} and~\ref{proposition:PSD-Lorentz}, we show how they imply together Theorem~\ref{semiquantum result}. The easy direction has been covered in the main text. For the other direction, assume that the pair $(\C,\PSD_n)$ is nuclear, with $\lfloor \log_2 n \rfloor \geq d/2$ and $\dim (\C)=d$. Using Fact~\ref{fact:pinching-retract} and Proposition~\ref{proposition:PSD-Lorentz} together with our assumption $\lfloor \log_2 n\rfloor \geq \frac{d-1}{2}$ implies that $\mathsf{L}_{d-1}$ is a retract of $\PSD_n$. By Proposition~\ref{proposition:retracts-nuclear}, we obtain that the pair $(\C,\mathsf{L}_{d-1})$ is nuclear. Finally, Proposition~\ref{proposition:ice-cream} implies that $\C$ is classical.

Let us first consider for $r>0$
\[ \mathsf{L}_n(r) \coloneqq \left\{ (x_1,\dots,x_{n+1}) \in \R^{n+1} \, : \, \sqrt{x_1^2 + \cdots + x_n^2} \leq rx_{n+1} \right\} \] and note that $\mathsf{L}_n(r)=\mathscr{C}(rB_2^n)$, where $B_2^n$ is the unit Euclidean ball in $\R^n$. Obviously, $\mathsf{L}_n=\mathsf{L}_n(1)$.

In order to prove Proposition~\ref{proposition:ice-cream}, we will need the following lemma. It is a consequence of~\cite[Lemma 19]{ultimate}. However, for completeness, we include a direct proof.
\begin{lemma}\label{min-max ice cream}
Given a natural number $n$ and a nonnegative real number $r$, the inclusion $$\mathsf{L}_n\tmax \mathsf{L}_n\subseteq \mathsf{L}_n\tmin \mathsf{L}_n(r)$$ implies $r\geq n$.
\end{lemma}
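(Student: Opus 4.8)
The plan is to produce a single explicit tensor in $\mathsf{L}_n\tmax\mathsf{L}_n$ and then show that any representation of it inside $\mathsf{L}_n\tmin\mathsf{L}_n(r)$ forces $r\geq n$; this is the cone-theoretic incarnation of the fact that the identity matrix is the worst case in the Hölder bound $\|\cdot\|_1\leq n\,\|\cdot\|_\infty$ alluded to in the Methods section. Concretely, I would identify $\R^{n+1}\otimes\R^{n+1}$ both with the space of $(n+1)\times(n+1)$ real matrices and, via the standard inner product, with its own dual, and let $\Theta$ be the tensor corresponding to the identity matrix $I_{n+1}$, i.e.\ the bilinear form $(f,g)\mapsto\langle f,g\rangle$. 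First I would check that $\Theta\in\mathsf{L}_n\tmax\mathsf{L}_n$. Since $\mathsf{L}_n$ is self-dual, this amounts to verifying $\langle f,g\rangle\geq 0$ for all $f=(f',f_0)$ and $g=(g',g_0)$ in $\mathsf{L}_n$, i.e.\ with $\|f'\|_2\leq f_0$ and $\|g'\|_2\leq g_0$; and indeed $\langle f,g\rangle=\langle f',g'\rangle+f_0g_0\geq -\|f'\|_2\|g'\|_2+f_0g_0\geq 0$ by Cauchy--Schwarz.

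Next, assuming the hypothesis of the lemma, $\Theta$ lies in $\mathsf{L}_n\tmin\mathsf{L}_n(r)$. Because the convex hull of products of elements of two cones is itself a cone, we may write $\Theta=\sum_{k=1}^{N}x_k\otimes y_k$ as a \emph{finite} sum with $x_k=(a_k,s_k)\in\mathsf{L}_n$ and $y_k=(b_k,t_k)\in\mathsf{L}_n(r)$, so that $a_k,b_k\in\R^n$ satisfy $\|a_k\|_2\leq s_k$ and $\|b_k\|_2\leq r\,t_k$ (in particular $s_k,t_k\geq 0$). Viewed as a matrix, each $x_k\otimes y_k$ is the rank-one outer product $x_ky_k^{T}$, whose $(n+1,n+1)$ entry equals $s_kt_k$ and whose top-left $n\times n$ block equals $a_kb_k^{T}$.

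The conclusion then comes from probing $\Theta=I_{n+1}$ with two linear functionals. The $(n+1,n+1)$ entry gives $\sum_k s_kt_k=1$, while taking the trace of the top-left $n\times n$ block gives $\sum_k\langle a_k,b_k\rangle=\tr(I_n)=n$. On the other hand, Cauchy--Schwarz together with the cone constraints yields $\sum_k\langle a_k,b_k\rangle\leq\sum_k\|a_k\|_2\|b_k\|_2\leq\sum_k s_k\,(r\,t_k)=r\sum_k s_kt_k=r$. Combining these two identities gives $n\leq r$, as desired. I do not expect a genuine obstacle in this argument; the only slightly delicate points are the reduction of an arbitrary element of $\mathsf{L}_n\tmin\mathsf{L}_n(r)$ to a finite sum of product vectors taken from the two cones, and the bookkeeping needed to read off the relevant matrix entries — the real content is just the right choice of witness $\Theta$ and of the pair of functionals (``trace of the top-left block'' versus ``bottom-right entry'') with which to test it.
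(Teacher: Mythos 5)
Your proposal is correct and is essentially the paper's own argument: the same witness tensor $\sum_{i=1}^{n+1} e_i\otimes e_i$ (the identity matrix), membership in $\mathsf{L}_n\tmax\mathsf{L}_n$ via self-duality and Cauchy--Schwarz, and the same probing of an arbitrary finite decomposition in $\mathsf{L}_n\tmin\mathsf{L}_n(r)$. The only cosmetic difference is that you estimate $\sum_k\langle a_k,b_k\rangle$ directly with Cauchy--Schwarz and read off the $(n+1,n+1)$ entry, whereas the paper phrases the same bound through the trace norm of the top-left $n\times n$ block ($n=\|I_n\|_1\leq\sum_k\|a_k\|_2\|b_k\|_2\leq r$); the estimates are identical term by term.
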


We will also use the following extremal properties of simplices, which can be found in~\cite{Leichtweiss} (see also~\cite[Theorem 1]{Palmon}).
\begin{theorem}\label{theorem:leichtweiss}
Let $K \subset \R^n$ be a convex body which is not a simplex. Then there exists an affine bijection $\Phi : \R^n \to \R^n$ such that $B_2^n \subseteq \Phi(K) \subseteq r  B_2^n$ for some positive number $r$ satisfying $r <n$.
\end{theorem}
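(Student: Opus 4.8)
The plan is to follow the classical route of~\cite{Leichtweiss,Palmon}: bring $K$ into the affine position that realises its asphericity, read off its contact points with the inscribed and circumscribed concentric spheres, and then combine the first-order optimality of that position with a sharp one-variable convexity estimate to force $K$ to be a simplex. Since John's theorem~\cite{John48} already gives $a(K)\le n$ for every convex body $K\subset\R^n$ (this is the first part of Theorem~\ref{theorem:simplices-maximize-asphericity}), it suffices to prove the contrapositive of the assertion ``$K$ not a simplex $\Rightarrow a(K)<n$'', namely: if $a(K)=n$ then $K$ is a simplex. By a routine compactness argument the infimum defining $a(K)$ is attained, so after an affine change of coordinates we may assume $B_2^n\subseteq K\subseteq nB_2^n$ with both balls centred at the origin, and that no affine image of $K$ admits a strictly smaller ratio of circumscribed to inscribed concentric Euclidean balls.

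Next I would extract the contact points. Let $v_1,\dots,v_p$ be the \emph{inner} contact points, i.e.\ the points of $\partial K\cap S^{n-1}$; the supporting hyperplane of $K$ at $v_j$ is $\{x:\langle x,v_j\rangle=1\}$, so $\langle x,v_j\rangle\le1$ for all $x\in K$. Let $w_1,\dots,w_q$ be the \emph{outer} contact points, i.e.\ the points of $K$ with $|w_k|=n$; by strict convexity of the Euclidean norm these are extreme points of $K$, and we set $\widehat w_k\coloneqq w_k/n\in S^{n-1}$. The optimality of the position -- the impossibility of a small affine perturbation $x\mapsto(\mathrm{Id}+\varepsilon T)x+\varepsilon s$ that pulls every outer contact point strictly inside $nB_2^n$ while keeping all inner contact points outside $B_2^n$ -- translates, via a separation/linear-programming-duality argument in the space $\R^n\times\{\text{symmetric }n\times n\text{ matrices}\}$ of ``first and second moments'', into a balance identity: there are weights $\lambda_j,\mu_k\ge0$ with the active families on both sides nonempty such that $\sum_j\lambda_j v_j=\sum_k\mu_k\widehat w_k$ and $\sum_j\lambda_j v_jv_j^{\mathsf T}=\sum_k\mu_k\widehat w_k\widehat w_k^{\mathsf T}$. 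Taking traces gives $\sum_j\lambda_j=\sum_k\mu_k$, and a short computation -- pairing the second identity against the $\mu_k$ and invoking $\mathrm{tr}(M^2)\ge(\mathrm{tr}\,M)^2/n$ for the common matrix $M=\sum_j\lambda_j v_jv_j^{\mathsf T}$ -- forces $\sum_j\lambda_j v_j=0$ and $M=\tfrac1n(\sum_j\lambda_j)\,\mathrm{Id}$. Rescaling so that $\sum_j\lambda_j=n$, the inner contact points now satisfy a genuine John-type decomposition $\sum_j\lambda_j v_j=0$, $\sum_j\lambda_j v_jv_j^{\mathsf T}=\mathrm{Id}$, and symmetrically the outer contact points satisfy $\sum_k\mu_k\widehat w_k=0$, $\sum_k\mu_k\widehat w_k\widehat w_k^{\mathsf T}=\mathrm{Id}$.

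The final step uses a single convexity fact: on $[-n,1]$ the parabola $t\mapsto t^2$ lies below its chord through $(-n,n^2)$ and $(1,1)$, that is $t^2\le n-(n-1)t$, with equality exactly at $t\in\{-n,1\}$. For each outer contact point $w$ one has $\langle w,v_j\rangle\in[-n,1]$ (upper bound from the supporting hyperplane at $v_j$, lower bound from Cauchy--Schwarz), hence
\[
n^2=|w|^2=\sum_j\lambda_j\langle w,v_j\rangle^2\le\sum_j\lambda_j\bigl(n-(n-1)\langle w,v_j\rangle\bigr)=n^2-(n-1)\bigl\langle w,\textstyle\sum_j\lambda_j v_j\bigr\rangle=n^2 .
\]
Equality throughout forces $\langle w,v_j\rangle\in\{-n,1\}$ for every active $j$, and the Cauchy--Schwarz equality case then shows that exactly one active inner contact point equals $-w/n$ while all the others satisfy $\langle w,v_j\rangle=1$. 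Running the same reasoning with inner and outer contact points interchanged -- legitimate because polarity sends an optimal-asphericity position to one and swaps the two families -- one concludes that there are exactly $n+1$ active contact points of each type, that the inner ones are the unit outward normals of a regular simplex $\Delta$ and the outer ones are its vertices. Since $K\subseteq\bigcap_j\{x:\langle x,v_j\rangle\le1\}=\Delta$ while $K\supseteq\conv\{w_1,\dots,w_q\}=\Delta$, we obtain $K=\Delta$, a simplex; re-applying the affine map gives the desired conclusion with $r<n$.

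I expect the main obstacle to be the derivation of the balance identity in the second paragraph from first-order optimality of the asphericity-minimising position: this requires carefully tracking two \emph{opposing} constraints (``the inscribed ball may not grow'' and ``the circumscribed ball may not shrink'') as a pair of dual-cone conditions, and then pushing the combinatorial rigidity far enough -- probably by an induction on dimension applied to the facet $K\cap\{x:\langle x,v\rangle=1\}$ cut out by an inner contact point $v$ -- to be certain one lands on the full $(n+1)+(n+1)$ regular-simplex configuration rather than on a degenerate partial one. These are precisely the points handled with care in~\cite{Leichtweiss,Palmon}, which is why one may simply invoke their result here.
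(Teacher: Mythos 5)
The paper does not prove this statement at all: Theorem~\ref{theorem:leichtweiss} is imported from the literature, with the proof delegated to~\cite{Leichtweiss} (see also~\cite[Theorem 1]{Palmon}), just as the first part of Theorem~\ref{theorem:simplices-maximize-asphericity} is delegated to~\cite{John48}. So your closing move --- ``one may simply invoke their result here'' --- is precisely what the authors do, and in that sense your proposal matches the paper's treatment exactly.

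As for the sketch itself, it is a faithful outline of the classical Leichtweiss--Palmon argument, and the step you call a ``short computation'' does in fact go through: applying the chord inequality $t^2 \le n-(n-1)t$ on $[-n,1]$ to $t_{jk}=\langle v_j,w_k\rangle$ inside $\tr(M^2)=\sum_{j,k}\lambda_j\mu_k\langle v_j,\widehat w_k\rangle^2$, and comparing with $\tr(M^2)\ge (\tr M)^2/n$, forces $\sum_j\lambda_j v_j=0$ and $M=\frac{1}{n}\bigl(\sum_j\lambda_j\bigr)\Id$; the argument is also robust to the exact relative normalisation of the two families in the balance identity. The two places where your text is not yet a proof are exactly the ones you flag: (i) the derivation of the balance identity from optimality of the asphericity position --- the inner constraint is really the support-function condition $h_{\Phi_\varepsilon(K)}\ge 1$, whose linearisation at the contact directions, a compactness argument for nearly-active directions and near-extremal outer points, and a Farkas/duality step all need to be carried out with care; and (ii) the combinatorial endgame, i.e.\ showing that the equality configuration is the full $(n+1)+(n+1)$ regular-simplex one and that $K$ coincides with, rather than sits strictly inside, the intersection of the half-spaces $\{x:\langle x,v_j\rangle\le 1\}$. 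Since the paper itself leaves the entire proof to~\cite{Leichtweiss,Palmon}, deferring (i) and (ii) to those references is consistent with what the authors do; they are, however, the substantive content you would have to supply for a self-contained argument.
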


Applying Theorem~\ref{theorem:leichtweiss} to $K$ being the base of a cone, and using Fact~\ref{fact:affine-linear}, we obtain the following variant.

\begin{corollary}\label{corollary:leichtweiss}
Let $\C \subset \R^{n+1}$ be a proper cone which is not classical. Then $\C$ is isomorphic to a proper cone $\C' \subset \R^{n+1}$ satisfying $\mathsf{L}_n \subseteq \C' \subseteq \mathsf{L}_n(r)$ for some number $r$ with $1 \leq r<n$.
\end{corollary}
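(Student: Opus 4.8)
The plan is to reduce the statement about cones to the corresponding extremal statement about their bases, which is exactly Theorem~\ref{theorem:leichtweiss}, and then transport the conclusion back up through the cone-over-base construction $\CC(\cdot)$. First I would invoke Fact~\ref{fact:cone-has-base}: since $\C \subset \R^{n+1}$ is proper, it is isomorphic to $\CC(K)$ where $K$ is any base of $\C$, and $K$ is a convex body sitting inside (an affine copy of) $\R^n$. By Fact~\ref{fact:classical}, the hypothesis that $\C$ is not classical is equivalent to $K$ not being a simplex. Hence it is enough to treat the case $\C = \CC(K)$ with $K \subset \R^n$ a convex body which is not a simplex.

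Next I would apply Theorem~\ref{theorem:leichtweiss} to this $K$: there is an affine bijection $\Phi : \R^n \to \R^n$ and a number $r < n$ such that $B_2^n \subseteq \Phi(K) \subseteq r B_2^n$, where the left inclusion already forces $r \geq 1$. Set $\C' \coloneqq \CC(\Phi(K))$. By Fact~\ref{fact:affine-linear}, $\C'$ is isomorphic to $\CC(K) = \C$; and since $\Phi(K)$ is again a convex body (an affine image of one), $\C'$ is a proper cone in $\R^{n+1}$. Finally, the construction $K \mapsto \CC(K)$ is clearly monotone with respect to inclusion, so $\CC(B_2^n) \subseteq \CC(\Phi(K)) \subseteq \CC(rB_2^n)$; combining this with the identity $\mathsf{L}_n(s) = \CC(sB_2^n)$ recorded just above (so in particular $\mathsf{L}_n = \mathsf{L}_n(1) = \CC(B_2^n)$) yields $\mathsf{L}_n \subseteq \C' \subseteq \mathsf{L}_n(r)$ with $1 \leq r < n$, which is exactly the claim.

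I do not anticipate a genuine obstacle in this deduction: all the real content is carried by Theorem~\ref{theorem:leichtweiss}, the extremal characterization of simplices, which we are entitled to assume; everything else is the routine bookkeeping of the cone-base dictionary (properness of $\C'$, monotonicity of $\CC$, the normalization $\mathsf{L}_n(r) = \CC(rB_2^n)$). The only point worth a word of care is degenerate small $n$: for $n = 1$ every proper cone in $\R^2$ is classical, so the hypothesis is vacuous (consistently, $1 \leq r < 1$ is impossible), while for $n \geq 2$ the argument above goes through verbatim.
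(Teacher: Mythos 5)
Your argument is correct and is essentially the paper's own proof: pass to a base $K$ of $\C$ via Fact~\ref{fact:cone-has-base} (non-classical $\iff$ $K$ not a simplex by Fact~\ref{fact:classical}), apply Theorem~\ref{theorem:leichtweiss} to get $B_2^n \subseteq \Phi(K) \subseteq rB_2^n$ with $r<n$, and transport back with Fact~\ref{fact:affine-linear} using $\mathsf{L}_n(s)=\CC(sB_2^n)$. The extra bookkeeping you spell out (properness of $\C'$, monotonicity of $\CC$, $r\geq 1$, the degenerate case $n=1$) is all sound and consistent with the paper's brief derivation.
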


\begin{proof}[Proof of Proposition~\ref{proposition:ice-cream}]
We will proceed by contradiction. Assume that there exists a non-classical proper cone $\C$ with $\dim (\C) = n+1$ and such that
the pair $(\C,\mathsf{L}_n)$ is nuclear. Using Corollary~\ref{corollary:leichtweiss} (and Fact~\ref{fact:nuclear-bijection}), we may assume that $\mathsf{L}_n\subseteq \mathcal C\subseteq \mathsf{L}_n(r)$ for some $r<n$. Then, we can write 
\[ \mathsf{L}_n\tmax\mathsf{L}_n\subseteq \mathsf{L}_n\tmax\mathcal C= \mathsf{L}_n\tmin\mathcal C\subseteq \mathsf{L}_n\tmin\mathsf{L}_n(r), \] 
contradicting Lemma~\ref{min-max ice cream}.
\end{proof}

%The following corollary is a direct consequence of Theorem~\ref{thm:ice-cream} and Proposition~\ref{proposition:retracts-nuclear}.
%\begin{corollary}
%Let $\C_1$ and $\C_2$ be two proper cones such that $\C_1$ is non-classical and has dimension $\leq n+1$ and $\mathsf{L}_n$ is a retract of $\C_2$. Then, the pair $(\C_1,\C_2)$ is not nuclear.
%\end{corollary}

\begin{proof}[Proof of Lemma~\ref{min-max ice cream}]
Let us consider the element $z=\sum_{i=1}^{n+1}e_i\otimes e_i\in \R^{n+1}\otimes \R^{n+1}$, with $(e_i)$ the canonical basis of $\R^{n+1}$. We will show that $z\in \mathsf{L}_n\tmax\mathsf{L}_n$, while $z\in \mathsf{L}_n\tmin\mathsf{L}_n(r)$ implies $r\geq n$.

Let us first show that $z\in \mathsf{L}_n\tmax\mathsf{L}_n=(\mathsf{L}_n^*\tmin\mathsf{L}_n^*)^*=(\mathsf{L}_n\tmin\mathsf{L}_n)^*$, where in the last equality we have used that $\mathsf{L}_n$ is a selfdual cone. To this end, it is enough to check the inequality $\langle z, a \otimes b \rangle \geq 0$ for $a$, $b \in \mathsf{L}_n$, since elements of the form $a \otimes b$ generate the cone $\mathsf{L_n} \tmin \mathsf{L}_n$. We then write 
\[
\langle z, a \otimes b \rangle  = \sumno_{i=1}^{n+1} a_ib_i \geq a_{n+1}b_{n+1}-\left|\sumno_{i=1}^{n}a_ib_i\right| \geq a_{n+1}b_{n+1}-\left(\sumno_{i=1}^{n}a_i^2\right)^{\frac{1}{2}}\left(\sumno_{i=1}^{n}b_i^2\right)^{\frac{1}{2}},
\]
%\begin{align*}
%\langle z, a \otimes b \rangle & = a_{n+1}b_{n+1} + \sum_{i=1}^{n} a_ib_i \\& \geq a_{n+1}b_{n+1}-\Big|\sum_{i=1}^{n}a_ib_i\Big|\\&\geq a_{n+1}b_{n+1}-\Big(\sum_{i=1}^{n}a_i^2\Big)^{\frac{1}{2}}\Big(\sum_{i=1}^{n}b_i^2\Big)^{\frac{1}{2}},
%\end{align*}
where in the last step we have used Cauchy--Schwarz inequality. The last expression is nonnegative because $a$, $b\in \mathsf{L}_n$, and we conclude that $z\in (\mathsf{L}_n\tmin\mathsf{L}_n)^*=\mathsf{L}_n\tmax\mathsf{L}_n$.

In order to show that $z=\sum_{i=1}^{n+1}e_i\otimes e_i\in \mathsf{L}_n\tmin\mathsf{L}_n(r)$ implies $r\geq n$, let us consider an arbitrary decomposition $z=\sum_k x_k\otimes y_k$ such that $x_k\in \mathsf{L}_n$ and $y_k\in \mathsf{L}_n(r)$ for every $k$. We denote by $\|\cdot\|_2$ the standard Euclidean norm on $\R^{n+1}$ and by $\|\cdot \|_1$ the trace norm in $\R^{n+1}\otimes \R^{n+1}$ identified with $\gM_{n+1}(\R)$. We have 
\begin{align*}
n &= \left\|\sumno_{i=1}^{n}e_i\otimes e_i\right\|_1 \\
&= \left\|\sumno_k (x_k(i))_{i=1}^{n}\otimes (y_k(i))_{i=1}^{n}\right\|_1 \\
&\leq \sum_k \left\|(x_k(i))_{i=1}^{n}\otimes (y_k(i))_{i=1}^{n}\right\|_1 \\
&= \sum_k \left\|(x_k(i))_{i=1}^{n}\right\|_2 \left\|(y_k(i))_{i=1}^{n}\right\|_2 \\
&\leq r \sum_k x_k(n+1)y_k(n+1)\\
&= r.
\end{align*}
This concludes the proof.
\end{proof}

\begin{proof}[Proof of Proposition~\ref{proposition:PSD-Lorentz}]
We use the well-known fact (see for example the proof of Lemma 11.2 in~\cite{ABMB}) that one can find self-adjoint and trace zero matrices $U_1,\dots , U_{2n}\in \gM_{2^n}^{\mathrm{sa}}(\mathbb C)$ such that $$U_iU_j+U_jU_i=2\delta_{i,j}\text{Id} \text{    }\text{   for every  }i,j=1,\dots, 2n.$$
This property immediately implies that for all real numbers $x_1,\dots, x_{2n}$, 
\begin{equation} \label{PSD}
\left(\sumno_{i=1}^{2n}x_iU_i\right)^2 = \left(\sumno_{i=1}^{2n}x_i^2\right)\text{Id}.
\end{equation}
%and also that for all sequences $(x_i)_i, (y_i)_i\in \R^{2n}$ \tcr{Do we really use this property???}
%\begin{align}\label{composition}
%\tr \Big(\Big(\sum_ix_iU_i\Big)\Big(\sum_iy_iU_i\Big)\Big)=2^n\langle x,y\rangle.
%\end{align}
Define linear maps $\Phi: \gM_{2^n}^{\mathrm{sa}} \to  \R^{2n+1}$ and $\Psi:  \R^{2n+1}\to \gM_{2^n}^{\mathrm{sa}}$ by 
\[ \Phi(A)=\left(\tr (AU_1),\cdots, \tr (AU_{2n}), \tr  A \right) \]
and
\[\Psi(x)=\sum_{i=1}^{2n}x_iU_i+x_{2n+1}\Id\, .\] 
Proposition~\ref{proposition:PSD-Lorentz} is an immediate consequence of the following three properties:
\begin{enumerate}[(1)]
\item $\Phi(\PSD_n)\subset \mathsf{L}_{2n}$;
\item $\Psi(\mathsf{L}_{2n})\subset \PSD_n$; and
\item $\Phi\circ \Psi=2^n \text{Id}_{\R^{2n+1}}$.
\end{enumerate}

In order to prove (1), note that for every $A\in \PSD_{2^n}$ and $x \in \R^{2n}$, we have that
\[ \sum_{i=1}^{2n}x_i \tr (AU_i) = \tr \left(A\sumno_{i=1}^{2n}x_i U_i\right)\leq \|A\|_1 \left\|\sumno_{i=1}^{2n}x_i U_i\right\|= \|x\|_2 \tr A\, . \]
By taking the supremum over $x$ such that $\|x\|_2\leq 1$, we conclude that 
\[ \left(\sumno_{i=1}^{2n}(\tr AU_i)^2\right)^{1/2}\leq \tr A\]
and therefore $\Phi(A)\in \mathsf{L}_{2n}$.

To prove (2), observe that for a given $x\in \mathsf{L}_{2n}$ the matrix $\Psi(x)=\sum_{i=1}^{2n}x_iU_i+x_{2n+1}\text{Id}$ is positive semidefinite due to~\eqref{PSD}.
Finally, (3) follows from the facts that $\tr (U_iU_j)=2^n \delta_{i,j}$ and $\tr (U_i)=0$ for every $i$.
\end{proof}

The tools we introduced can be used to derive a simple proof of Theorem~\ref{semiquantum result}'.
\begin{proof}[Proof of Theorem~\ref{semiquantum result}']
Let $\C$ be a proper polyhedral cone which is not classical, and $n \geq 2$. We combine the following facts: (a) by Proposition~\ref{proposition:polyhedral-retracts}, $\C$ admits a non-classical retract $\C'$ of dimension $3$, and (b) the Lorentz cone $\mathsf{L}_2$ is a retract of $\PSD_n$. Since $\mathsf{L}_2$ is not classical (a disk is not a triangle!), the pair $(\C',\mathsf{L}_2)$ is entangleable (Theorem~\ref{3-dim result}) and therefore the pair $(\C,\PSD_n)$ is entangleable as well (Proposition~\ref{proposition:retracts-nuclear}).
It remains to justify point (b) in the previous argument. This is easy: by Fact~\ref{fact:pinching-retract}, it is enough to prove (b) for $n=2$. Since $\PSD_2$ is isomorphic to $\mathsf{L}_3$, this amounts to proving that $\mathsf{L}_2$ is a retract of $\mathsf{L}_3$, which is geometrically obvious.
\end{proof}

\section{Proof of Theorem~\ref{symmetric result}}

\subsection{Entanglement robustness in GPTs}

The entanglement robustness is an entanglement measure that was constructed and studied in the early days of entanglement theory~\cite{VidalTarrach}. It differs from other quantifiers in that it has a purely geometric nature. In fact, it can be thought of as the minimal amount of noise, in the form of a convex mixture with a separable state, that makes the state separable. As it is rooted in convex geometry alone, the entanglement robustness can be extended from quantum mechanics to arbitrary GPTs in a straightforward manner. Given two arbitrary GPTs $(V_1,C_1,u_1)$ and $(V_2,C_2,u_2)$, and a state $\omega \in C_1\tmax C_2$, we set 
\begin{equation} \tag{\ref{ent-rob}}
\erob (\omega) \coloneqq \min\left\{ (u_1\otimes u_2)(\zeta):\, \zeta,\, \omega+\zeta\in C_1\tmin C_2 \right\} .
\end{equation}

It is not difficult to verify that the above definition possesses all the basic properties of an entanglement measure. First, it is everywhere non-negative and finite, because $C_1\tmin C_2$ is a proper cone with a non-empty interior. Secondly, it is faithful, namely, it vanishes (only on) separable states. Lastly, it is monotonically non-increasing under normalised separability-preserving maps, as the next lemma shows.

\begin{lemma} \label{lemma:ent-rob-monotone}
For $i=1,2$, let $(V_i,C_i,u_i)$ and $(V_i',C_i',u_i')$ be GPTs. Consider two composites $C_{12}$ and $C'_{12}$ such that~\eqref{double bound} holds for both. Let $\Lambda: V_1\otimes V_2\to V'_1\otimes V'_2$ be a map that is: (i)~positive, i.e.\ obeys $\Lambda(C_{12})\subseteq C'_{12}$; (ii)~normalised, i.e.\ satisfies $\Lambda^*(u'_1\otimes u'_2) = u_1\otimes u_2$; and (iii)~separability-preserving, namely, such that $\Lambda\left(C_1\tmin C_2\right) \subseteq C'_1\tmin C'_2$. Then, for all input states $\omega \in C_{12}$ it holds that $\erob(\Lambda(\omega))\leq \erob(\omega)$.
\end{lemma}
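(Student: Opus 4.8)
The plan is to transport an optimal separable perturbation from the input side to the output side via $\Lambda$. First I would fix a $\zeta \in V_1 \otimes V_2$ attaining the minimum in~\eqref{ent-rob}, so that $\zeta \in C_1 \tmin C_2$, $\omega + \zeta \in C_1 \tmin C_2$, and $(u_1\otimes u_2)(\zeta) = \erob(\omega)$; such a $\zeta$ exists because the feasible set of~\eqref{ent-rob} is non-empty (for any $\omega\in C_1\tmax C_2$ one can take $\zeta=t\sigma$ with $\sigma$ interior to $C_1\tmin C_2$ and $t$ large, as already remarked in the surrounding text) and closed, and $u_1\otimes u_2$ is continuous and bounded below on it. I would also note at the outset that $\Lambda(\omega)\in C'_{12}\subseteq C'_1\tmax C'_2$ by hypothesis (i) and~\eqref{double bound}, so that $\erob(\Lambda(\omega))$ is well defined.

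Next I would set $\zeta' \coloneqq \Lambda(\zeta)$ and check that $\zeta'$ is feasible for the minimisation defining $\erob(\Lambda(\omega))$. The separability-preservation hypothesis (iii) gives $\zeta' = \Lambda(\zeta) \in C'_1 \tmin C'_2$, and by linearity of $\Lambda$ together with $\omega+\zeta\in C_1\tmin C_2$ it also gives $\Lambda(\omega) + \zeta' = \Lambda(\omega+\zeta) \in C'_1 \tmin C'_2$. Hence both constraints in~\eqref{ent-rob} (with the state $\Lambda(\omega)$) are met by $\zeta'$.

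Finally I would evaluate the objective on $\zeta'$ using the normalisation hypothesis (ii):
\[
(u'_1\otimes u'_2)(\zeta') = (u'_1\otimes u'_2)\big(\Lambda(\zeta)\big) = \big(\Lambda^*(u'_1\otimes u'_2)\big)(\zeta) = (u_1\otimes u_2)(\zeta) = \erob(\omega).
\]
Since $\erob(\Lambda(\omega))$ is by definition the infimum (attained) over all feasible perturbations, it is at most the value $(u'_1\otimes u'_2)(\zeta')$ achieved by $\zeta'$, which yields $\erob(\Lambda(\omega)) \leq \erob(\omega)$, as claimed.

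There is no genuine obstacle here: the argument is the standard ``data-processing by pushing forward the optimiser'' template, and the only points requiring a modicum of care are (a) that the feasible set in~\eqref{ent-rob} is non-empty and the minimum is attained, which is already recorded in the text, and (b) that $\Lambda(\omega)$ lands in a cone on which $\erob$ is defined, which is precisely hypothesis (i) combined with the two-fold bound~\eqref{double bound}.
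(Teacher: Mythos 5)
Your proposal is correct and follows essentially the same argument as the paper: push the optimal separable perturbation $\zeta$ forward via $\Lambda$, use hypothesis (iii) and linearity to verify feasibility for $\Lambda(\omega)$, and use the adjoint relation in (ii) to see that the objective value is unchanged. The additional remarks on attainment of the minimum and on $\Lambda(\omega)\in C'_1\tmax C'_2$ are sound, if slightly more than the paper spells out.
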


\begin{proof}
Let $\zeta\in C_1 \tmin C_2$ be the vector that achieves the minimum in~\eqref{ent-rob} for $\omega$. Since $\Lambda(\zeta), \Lambda(\omega)+\Lambda(\zeta)\in C'_1\tmin C'_2$, we have that
\begin{equation*}
    \erob(\Lambda(\omega)) \leq (u'_1\otimes u'_2)\left( \Lambda(\zeta)\right) = \left( \Lambda^*(u'_1\otimes u'_2)\right)(\zeta) = (u_1\otimes u_2)(\zeta) = \erob(\omega)\, ,
\end{equation*}
which completes the proof.
\end{proof}

\subsection{Symmetric cones}

We start by fixing some notation. For a proper cone $(V,C,u)$, define the vector subspace
\begin{equation} \label{X-space}
    X \coloneqq \ker (u) \coloneqq \{x\in V:\, u(x)=0\}\, ;
\end{equation}
once we fix a state $\gamma\in \Omega\coloneqq C\cap u^{-1}(1)$, we can decompose $V=X\oplus (\R \gamma)$, meaning that every $v\in V$ can be written as $v=\alpha \gamma+x$, where $\alpha\in\R$ and $x\in X$. 

We now construct the function $N_X:X\to \R_+$ given by
\begin{equation}
    N_X(x) \coloneqq \inf \left\{t>0:\, \gamma + t^{-1} x\in C \right\} ,
    \label{NX}
\end{equation}
where it is understood that the infimum of the empty set is $+\infty$. Note that in~\eqref{NX} we can substitute $C$ with $\Omega$.
Now, $N_X$ obeys the triangle inequality in general, i.e.\ that $N_X(x+y)\leq N_X(x)+N_X(y)$ for all $x,y\in X$. Also, $N_X$ is manifestly positively homogeneous, i.e.\ $N_X(\lambda x)=\lambda N_X(x)$ for all $\lambda\geq 0$. If $\Omega$ is centrally symmetric with centre $\gamma$, then also absolute homogeneity holds, i.e.\ $N_X(\lambda x)=\lambda N_X(x)$ for all $\lambda\in \R$.

As already mentioned, it follows from Fact~\ref{fact:cone-has-base}(c) that the state space $\Omega$ is a convex body when viewed as a subset of the affine space $u^{-1}(1)=\gamma+X$. This implies that the function $N_X$ defined by~\eqref{NX} satisfies $N_X(x)>0$ for all $x\neq 0$. Indeed, if this were not the case we will immediately deduce that $\gamma + n x\in \Omega$ for all positive integers $n$, which would in turn imply that $\Omega$ is non-compact.
If $\gamma\in \relint(\Omega)$, then we also have that $N_X(x)<\infty$ for all $x\in X$. Observe that the centre of a convex body, if it exists, must lie in its relative interior. We summarise the above discussion as follows.

\begin{fact} \label{fact:symmetric-gauge-norm}
For any symmetric GPT $(V,C,u)$, the function $N_X$ defined by~\eqref{NX} is a norm on $X$.
\end{fact}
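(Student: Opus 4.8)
The plan is to verify, one axiom at a time, that $N_X$ is a norm — finiteness, faithfulness ($N_X(x)=0\iff x=0$), absolute homogeneity, and the triangle inequality — reading everything off directly from formula~\eqref{NX}. Most of the needed observations have already been made in the paragraph preceding the statement, so the work is essentially one of assembly, and I expect the only genuinely delicate points to be (i) seeing that $N_X$ is finite and (ii) locating precisely where the central symmetry of $\Omega$ is used.

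First I would note that in~\eqref{NX} the cone $C$ may be replaced by $\Omega$: since $x\in X=\ker u$ and $u(\gamma)=1$, one has $u(\gamma+t^{-1}x)=1$ for every $t>0$, so $\gamma+t^{-1}x\in C$ is equivalent to $\gamma+t^{-1}x\in\Omega$. Because $\gamma$ is the centre of the convex body $\Omega$, it lies in $\relint(\Omega)$, and by Fact~\ref{fact:cone-has-base}(c) this means $\gamma$ is an interior point of $\Omega$ inside the affine hyperplane $u^{-1}(1)=\gamma+X$. Hence for every $x\in X$ the point $\gamma+t^{-1}x$ lies in $\Omega$ once $t$ is large enough, which shows simultaneously that the set over which the infimum in~\eqref{NX} is taken is nonempty and that $N_X(x)<\infty$. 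Positive homogeneity, $N_X(\lambda x)=\lambda N_X(x)$ for $\lambda\geq 0$, is immediate from the formula, as is $N_X(0)=0$.

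Next I would treat faithfulness, i.e.\ $N_X(x)>0$ for $x\neq 0$; this is the only place compactness of $\Omega$ is invoked. If $N_X(x)=0$ then $\gamma+\lambda x\in\Omega$ for arbitrarily large $\lambda>0$; given $n\in\{1,2,\dots\}$, choosing $\lambda\geq n$ and taking the convex combination with $\gamma\in\Omega$ shows $\gamma+nx=(1-\tfrac n\lambda)\gamma+\tfrac n\lambda(\gamma+\lambda x)\in\Omega$, so $\Omega$ contains the unbounded set $\{\gamma+nx:n\geq 1\}$, contradicting its compactness. For absolute homogeneity it then remains to prove $N_X(-x)=N_X(x)$, and this is exactly where central symmetry enters: for $t>0$, $\gamma+t^{-1}x\in\Omega$ if and only if $2\gamma-(\gamma+t^{-1}x)=\gamma+t^{-1}(-x)\in\Omega$, so the two infima defining $N_X(x)$ and $N_X(-x)$ are taken over the same set. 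Combined with positive homogeneity this gives $N_X(\lambda x)=|\lambda|\,N_X(x)$ for all $\lambda\in\R$.

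Finally, for the triangle inequality I would use convexity of $\Omega$ alone: if $\gamma+s^{-1}x\in\Omega$ and $\gamma+t^{-1}y\in\Omega$ with $s,t>0$, then the convex combination with weights $\tfrac{s}{s+t}$ and $\tfrac{t}{s+t}$ equals
\begin{equation*}
\frac{s}{s+t}\bigl(\gamma+s^{-1}x\bigr)+\frac{t}{s+t}\bigl(\gamma+t^{-1}y\bigr)=\gamma+\tfrac{1}{s+t}(x+y)\in\Omega,
\end{equation*}
whence $N_X(x+y)\leq s+t$; taking the infimum over admissible $s$ and $t$ yields $N_X(x+y)\leq N_X(x)+N_X(y)$. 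Putting the four items together, $N_X$ is a finite, faithful, absolutely homogeneous functional satisfying the triangle inequality, i.e.\ a norm. There is no substantive obstacle: the statement is a bookkeeping corollary, and the only care required is to use $\gamma\in\relint(\Omega)$ for finiteness, compactness of $\Omega$ for faithfulness, and central symmetry for the equality $N_X(-x)=N_X(x)$.
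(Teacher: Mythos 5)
Your proof is correct and follows essentially the same route as the paper: the paper establishes the fact in the discussion preceding it, using convexity of $\Omega$ for the triangle inequality, positive homogeneity plus central symmetry for absolute homogeneity, compactness of $\Omega$ for faithfulness, and $\gamma\in\relint(\Omega)$ for finiteness. You merely spell out the routine steps (the convex-combination computation for the triangle inequality and the dilation argument for faithfulness) that the paper leaves implicit, so there is nothing to add.
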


From now on we will assume that $(V,C,u)$ is a symmetric GPT. Accordingly, we will adopt the more suggestive notation $\|\cdot\|_X\coloneqq N_X(\cdot)$. Observe that the unit ball of $\|\cdot\|_X$ is just the state space of $(V,C,u)$, i.e.
\begin{equation}
    B_X = \Omega - \gamma = C \cap u^{-1}(1) - \gamma\, . 
\end{equation}
For future convenience, let us also define $\Pi:V\to X$ as the projection onto $X$ with $\gamma$ in its kernel, explicitly given by the formula 
\begin{equation} \label{projection pi}
\Pi(v)\coloneqq v - u(v) \gamma\, ,
\end{equation}
for all $v\in V$. Observe that 
\begin{equation} \label{symmetric cone reconstruction}
C = \left\{ v\in V:\, \left\|\Pi(v)\right\|_X \leq u(v) \right\} .
\end{equation}
Note that since norms are invariant under a change of sign, we have that $u(v)\gamma - \Pi(v)\in C$ for all $v\in C$; in fact, 
\begin{equation}
\left\|\Pi(u(v)\gamma - \Pi(v))\right\|_X = \left\|-\Pi(v)\right\|_X \leq u(v) = u\left(u(v)\gamma - \Pi(v)\right) .
\end{equation}
We formalise this observation as follows.

\begin{fact} \label{fact:inversion-invariance}
A symmetric cone $C$ is invariant under the inversion $v\mapsto u(v)\gamma - \Pi(v) = 2u(v)\gamma - v$.
\end{fact}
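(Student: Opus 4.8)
The plan is to observe that the map appearing in the statement is \emph{linear} and \emph{involutive}, so that invariance of $C$ reduces to a single inclusion, which then follows at once from the reconstruction formula~\eqref{symmetric cone reconstruction} together with the sign-invariance of the norm $\|\cdot\|_X$.

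First I would check that the two expressions for the map agree: writing $\iota(v) \coloneqq 2u(v)\gamma - v$ and recalling $\Pi(v) = v - u(v)\gamma$, one has $u(v)\gamma - \Pi(v) = u(v)\gamma - v + u(v)\gamma = 2u(v)\gamma - v = \iota(v)$. The map $\iota : V \to V$ is manifestly linear. Next I would verify it is an involution: since $u(\gamma) = 1$, one computes $u(\iota(v)) = 2u(v)\,u(\gamma) - u(v) = u(v)$, and hence $\iota(\iota(v)) = 2u(\iota(v))\gamma - \iota(v) = 2u(v)\gamma - \iota(v) = v$. Consequently, the equality $\iota(C) = C$ will follow as soon as we establish the one-sided inclusion $\iota(C) \subseteq C$.

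To prove $\iota(C) \subseteq C$, take $v \in C$ and invoke the reconstruction formula~\eqref{symmetric cone reconstruction}, which gives $\|\Pi(v)\|_X \leq u(v)$. Using $u(\iota(v)) = u(v)$ and computing $\Pi(\iota(v)) = \iota(v) - u(\iota(v))\gamma = (2u(v)\gamma - v) - u(v)\gamma = u(v)\gamma - v = -\Pi(v)$, and then the fact that $\|\cdot\|_X$ is a genuine norm (Fact~\ref{fact:symmetric-gauge-norm}), hence invariant under $x \mapsto -x$, we obtain $\|\Pi(\iota(v))\|_X = \|{-\Pi(v)}\|_X = \|\Pi(v)\|_X \leq u(v) = u(\iota(v))$. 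By~\eqref{symmetric cone reconstruction} once more this means $\iota(v) \in C$. Combined with involutivity, this yields $\iota(C) = C$. This is precisely the computation already displayed in the text immediately above the statement, now packaged through the involution argument so that it gives invariance rather than just one inclusion.

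I do not expect any genuine obstacle here: the only ingredients are the linearity and involutivity of $\iota$, the characterization~\eqref{symmetric cone reconstruction} of a symmetric cone, and the absolute homogeneity of $\|\cdot\|_X$ that is special to the symmetric case and was established in Fact~\ref{fact:symmetric-gauge-norm}. The symmetry hypothesis enters \emph{only} at that last point: for a non-symmetric proper cone the gauge $N_X$ of~\eqref{NX} need not satisfy $N_X(-x) = N_X(x)$, and correspondingly $\iota$ need not preserve $C$.
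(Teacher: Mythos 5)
Your proposal is correct and follows essentially the same argument as the paper, whose proof is the short computation displayed immediately before the Fact: using $u(\iota(v))=u(v)$, $\Pi(\iota(v))=-\Pi(v)$, the absolute homogeneity of $\|\cdot\|_X$, and the characterization~\eqref{symmetric cone reconstruction}. Your extra remark that $\iota$ is a linear involution, which upgrades the one-sided inclusion $\iota(C)\subseteq C$ to the equality $\iota(C)=C$, is left implicit in the paper but is a harmless and correct clarification.
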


Another curious feature of symmetric GPTs is that their dual spaces can also be equipped with a GPT structure.

\begin{lemma} \label{lemma:symmetric GPTs dualise}
Let $(V,C,u)$ be a symmetric GPT with centre $\gamma\in V=V^{**}$. Then
\begin{enumerate}[(a)]
    \item $(V^*,C^*,\gamma)$ is also a symmetric GPT with centre $u$;
    \item the corresponding space $X^*\coloneqq \ker(\gamma)$ normed by $B_{X^*}\coloneqq C^* \cap \gamma^{-1}(1) - u$ coincides (as a normed space) with the dual of $X$; and
    \item the projection onto $X^*$ with $u$ in its kernel coincides with the adjoint of $\Pi$ defined by~\eqref{projection pi}.
\end{enumerate}
\end{lemma}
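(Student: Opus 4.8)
The plan is to establish the three assertions in turn, using throughout the splitting $V=X\oplus\mathbf{R}\gamma$ together with the dual splitting $V^*=X^*\oplus\mathbf{R}u$ (legitimate because $u(\gamma)=1$, so $u\notin\ker\gamma=X^*$), and relying on the reconstruction formula~\eqref{symmetric cone reconstruction} and on Fact~\ref{fact:inversion-invariance}. For part~(a), I would first note that $C^*$ is a proper cone (dual cones are always closed, and salience/generation of $C^*$ are dual to generation/salience of $C$), and that $\gamma\in V=V^{**}$ is a valid order unit for $C^*$: by Fact~\ref{fact:u-strictly-positive} applied to $C^*$ this means $f(\gamma)>0$ for every $f\in C^*\setminus\{0\}$, i.e.\ $\gamma\in\inter(C)$, which holds since the centre of the centrally symmetric body $\Omega$ lies in its relative interior (a small ball around $\gamma$ inside $\Omega\subset u^{-1}(1)$, dilated by scalars near $1$, sits inside $C$). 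Thus $(V^*,C^*,\gamma)$ is a GPT with state space $\Omega^*\coloneqq C^*\cap\gamma^{-1}(1)$, and $u\in\Omega^*$ because $u\in\inter(C^*)\subset C^*$ and $u(\gamma)=1$. The only point that needs care is that $\Omega^*$ is centrally symmetric about $u$: I would get this from Fact~\ref{fact:inversion-invariance}, observing that the inversion $R\colon v\mapsto 2u(v)\gamma-v$ is a linear involution with $R(C)=C$, so by Fact~\ref{fact:polar-linearmap} its adjoint $R^*\colon f\mapsto 2f(\gamma)\,u-f$ satisfies $R^*(C^*)=C^*$; applying it to $f\in\Omega^*$ gives $2u-f\in C^*$ with $(2u-f)(\gamma)=1$, i.e.\ $2u-f\in\Omega^*$. (As a cross-check one can argue directly from~\eqref{symmetric cone reconstruction}: every $v\in C$ has the form $s(\gamma+x')$ with $s\geq0$, $x'\in B_X$, so $f\in C^*$ forces $f(x')\geq-1$ on $B_X$, hence $|f(x')|\leq1$ by symmetry of $B_X$, and then $(2u-f)\bigl(s(\gamma+x')\bigr)=s(1-f(x'))\geq0$.)

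For part~(b), I would identify $X^*=\ker\gamma$ with the dual normed space of $X$ via restriction of functionals, $f\mapsto f|_X$. This is a linear bijection, since it kills exactly the annihilator of $X=\ker u$, namely $\mathbf{R}u$, and $X^*$ is a complement of $\mathbf{R}u$ in $V^*$. To see it is an isometry, note that by Fact~\ref{fact:symmetric-gauge-norm} applied to the GPT of part~(a) together with~\eqref{NX}, the unit ball of $X^*$ is $\{f\in X^*:\ u+f\in C^*\}=C^*\cap\gamma^{-1}(1)-u$, which matches $B_{X^*}$ as defined. Now $u+f\in C^*$ iff $(u+f)\bigl(s(\gamma+x')\bigr)\geq0$ for all $s\geq0$ and $x'\in B_X$; since $u(x')=0$, $f(\gamma)=0$ and $u(\gamma)=1$ this reduces to $f(x')\geq-1$ for all $x'\in B_X$, i.e.\ (using symmetry of $B_X$) to $\|f|_X\|\leq1$ in the dual norm of $X$. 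Hence restriction carries $B_{X^*}$ onto the unit ball of the dual of $X$, so it is an isometric isomorphism, which is assertion~(b).

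For part~(c), I would simply compute the adjoint of the projection $\Pi$ of~\eqref{projection pi}: for $f\in V^*$ and $v\in V$ one has $(\Pi^*f)(v)=f\bigl(v-u(v)\gamma\bigr)=f(v)-f(\gamma)\,u(v)$, so $\Pi^*f=f-f(\gamma)\,u$. This lands in $\ker\gamma=X^*$, satisfies $\Pi^*u=0$, and fixes every element of $X^*$; therefore $\Pi^*$ is precisely the projection onto $X^*$ with $u$ in its kernel, i.e.\ the projection defining the dual symmetric GPT of part~(a). Everything here is bookkeeping with the two direct-sum decompositions and~\eqref{symmetric cone reconstruction}; the only genuinely non-mechanical step is the central symmetry of $\Omega^*$ in part~(a), which I expect to be the main obstacle and would present via the Fact~\ref{fact:inversion-invariance} / Fact~\ref{fact:polar-linearmap} argument, keeping the direct computation as a sanity check.
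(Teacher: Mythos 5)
Your proposal is correct and follows essentially the same route as the paper: part (a) via the inversion invariance of $C$ (your adjoint-of-$R$ packaging is just a rephrasing of the paper's direct check that $(2u-v^*)(v)=v^*(2u(v)\gamma-v)\geq 0$), part (b) by computing the gauge/unit ball on $\ker(\gamma)$ and matching it with the dual norm of $X$, and part (c) by the one-line computation $\Pi^*(v^*)=v^*-v^*(\gamma)u$. Your extra verifications that $C^*$ is proper and $\gamma\in\inter(C)$ are sound and merely make explicit what the paper leaves implicit.
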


\begin{proof}
Let $v^*\in \Omega_*\coloneqq C^*\cap \gamma^{-1}(1)$, and let us show that $2u-v^*\in \Omega_*$. On the one hand, clearly $(2u-v^*)(\gamma) = 1$. On the other, for an arbitrary $v \in C$ satisfying the constraint in~\eqref{symmetric cone reconstruction}, we see that $(2u-v^*)(v) = v^*(2 u(v) \gamma - v)\geq 0$, where the last inequality follows from Fact~\ref{fact:inversion-invariance}. We conclude that $2u-v^*\in C^*$, proving the first claim. As for the second, pick $f\in \ker(\gamma)\subset V^*$; then
\begin{align*}
    \inf \left\{t>0:\, u + t^{-1} f\in C^* \right\} &= \inf \left\{t>0:\, (u + t^{-1} f)(v)\geq 0\quad \forall\, v\in C \right\} \\
    &= \inf \left\{t>0:\, (u + t^{-1} f)(\gamma - x)\geq 0\quad \forall\, x\in B_{X} \right\} \\
    &= \inf \left\{t>0:\, f(x) \leq t \quad \forall\, x\in B_{X} \right\} \\
    &= \sup_{x\in B_{X}} f(x) \\
    &= \|f\|_{X^*}\, .
\end{align*}
This proves that the norm induced on $X^*$ by the construction in~\eqref{NX} coincides with the dual norm of $\|\cdot\|_X$ as given by Definition~\ref{def:dual-norm}.
The third claim is also straightforward. It suffices to observe that for all $v\in V$ and $v^*\in V^*$ it holds that
\begin{equation*}
    (v^* - v^*(\gamma) u)(v) = v^*(v) - u(v) v^*(\gamma) = v^*(v - u(v)\gamma) = v^*\left(\Pi (v)\right) = \left(\Pi^*(v^*)\right) (v)\, ,
\end{equation*}
implying that $\Pi^*(v^*) = v^* - v^*(\gamma) u$.
\end{proof}

%The projection $\Pi_*$ corresponding to~\eqref{projection pi} is given by
%\begin{equation} \label{projection pi*} \Pi_*(v^*)\coloneqq v^* - v^*(\gamma) u\, , \end{equation}
%while the dual cone can be reconstructed as
%\begin{equation} \label{symmetric dual cone reconstruction} C^* = \left\{ \alpha u + \Pi^*f \in V^*:\, \|f\|_{X^*}\leq \alpha \right\} ,\end{equation}
%and is clearly symmetric.

We now move on to the bipartite setting. For $i=1,2$, let $(V_i,C_i,u_i)$ be a symmetric GPT with centre $\gamma_i\in \Omega_i\coloneqq C_i \cap u_i^{-1}(1)$. Call $X_i\coloneqq \ker(u_i)$, and let $\Pi_i$ be the projection onto $X_i$ with $\gamma_i\in \ker(\Pi_i)$.

\begin{lemma}[{see~\cite[Proposition~2.25]{lamiatesi}}] \label{lemma:norms-cones}
With the above notation, for every $\omega\in V_1\otimes V_2$ we have that
\begin{enumerate}
\item[(a)] if $\omega\in C_1\tmin C_2$ then $\left\|(\Pi_1\otimes \Pi_2)(\omega)\right\|_{X_1\otimes_\pi X_2} \leq (u_1\otimes u_2)(\omega)$;
\item[(b)] if $\omega\in C_1\tmax C_2$ then $\left\|(\Pi_1\otimes \Pi_2)(\omega)\right\|_{X_1\otimes_\epsilon X_2} \leq (u_1\otimes u_2)(\omega)$.
\end{enumerate}
Moreover, for all $z\in X_1\otimes X_2$,
\begin{enumerate}
    \item[(c)] if $\|z\|_{X_1\otimes_\pi X_2}\leq 1$ then $\gamma_1\otimes \gamma_2 + z\in C_1\tmin C_2$;
    \item[(d)] if $\|z\|_{X_1\otimes_\varepsilon X_2}\leq 1$ then $\gamma_1\otimes \gamma_2 + z\in C_1\tmax C_2$.
\end{enumerate}
\end{lemma}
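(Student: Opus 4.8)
The plan is to prove the four assertions essentially independently, reducing each to elementary manipulations built on three ingredients already available: (i) the reconstruction identity \eqref{symmetric cone reconstruction}, i.e.\ $C_i=\{v\in V_i:\ \|\Pi_i(v)\|_{X_i}\le u_i(v)\}$; (ii) the fact that $X_1\otimes X_2=\ker(u_1)\otimes\ker(u_2)$ is annihilated by $u_1\otimes u_2$, and more generally that $u_1\otimes g$ and $f\otimes u_2$ vanish on $X_1\otimes X_2$ for all functionals $f,g$; and (iii) central symmetry, which makes the unit ball $B_{X_i}=\Omega_i-\gamma_i$ symmetric about the origin (Fact~\ref{fact:symmetric-gauge-norm}). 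For part~(a) I would write $\omega=\sum_k x_1^{(k)}\otimes x_2^{(k)}$ with $x_i^{(k)}\in C_i$ --- possible since $C_1\tmin C_2$ is the conical hull of product vectors --- apply $\Pi_1\otimes\Pi_2$ to obtain $(\Pi_1\otimes\Pi_2)(\omega)=\sum_k \Pi_1(x_1^{(k)})\otimes\Pi_2(x_2^{(k)})$, bound the projective norm by $\sum_k\|\Pi_1(x_1^{(k)})\|_{X_1}\|\Pi_2(x_2^{(k)})\|_{X_2}$, and finish using \eqref{symmetric cone reconstruction} together with $\sum_k u_1(x_1^{(k)})\,u_2(x_2^{(k)})=(u_1\otimes u_2)(\omega)$.

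For part~(b), it suffices by definition of the injective norm to bound $(f\otimes g)\big((\Pi_1\otimes\Pi_2)(\omega)\big)\le (u_1\otimes u_2)(\omega)$ for every $f\in B_{X_1^*}$, $g\in B_{X_2^*}$. Extending $f,g$ to $\bar f\in V_1^*$, $\bar g\in V_2^*$ by $\bar f(\gamma_1)=\bar g(\gamma_2)=0$, one checks $(f\otimes g)\circ(\Pi_1\otimes\Pi_2)=\bar f\otimes\bar g$; moreover $\|f\|_{X_1^*}\le 1$ together with the symmetry of $B_{X_1}$ forces $u_1\pm\bar f\in C_1^*$ (and likewise $u_2\pm\bar g\in C_2^*$), because $(u_1\pm\bar f)(\gamma_1+x)=1\pm f(x)\ge 0$ for all $x\in B_{X_1}$. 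Testing $\omega\in C_1\tmax C_2$ against the two product functionals $(u_1+\bar f)\otimes(u_2-\bar g)$ and $(u_1-\bar f)\otimes(u_2+\bar g)$ and adding the resulting inequalities, the cross terms cancel and one is left with exactly $(\bar f\otimes\bar g)(\omega)\le(u_1\otimes u_2)(\omega)$. Part~(d) runs along dual lines: given $p\in C_1^*$, $q\in C_2^*$, set $\alpha\coloneqq p(\gamma_1)\ge 0$, $f\coloneqq p-\alpha u_1$ (so $f(\gamma_1)=0$), and similarly $\beta\coloneqq q(\gamma_2)$, $g\coloneqq q-\beta u_2$; positivity of $p$ on $\Omega_1=\gamma_1+B_{X_1}$ and the symmetry of $B_{X_1}$ give $\|f|_{X_1}\|_{X_1^*}\le\alpha$, and likewise $\|g|_{X_2}\|_{X_2^*}\le\beta$. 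Then $(p\otimes q)(\gamma_1\otimes\gamma_2+z)=\alpha\beta+(f\otimes g)(z)$ --- every term involving $u_1$ or $u_2$ dies on $z\in X_1\otimes X_2$ --- while $|(f\otimes g)(z)|\le\|f|_{X_1}\|_{X_1^*}\|g|_{X_2}\|_{X_2^*}\,\|z\|_{X_1\otimes_\varepsilon X_2}\le\alpha\beta$, so the pairing is nonnegative and $\gamma_1\otimes\gamma_2+z\in C_1\tmax C_2$.

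For part~(c), the key is the polarization identity $\gamma_1\otimes\gamma_2+a\otimes b=\tfrac12\big((\gamma_1+a)\otimes(\gamma_2+b)+(\gamma_1-a)\otimes(\gamma_2-b)\big)$, valid whenever $\|a\|_{X_1},\|b\|_{X_2}\le 1$, which exhibits $\gamma_1\otimes\gamma_2+a\otimes b$ as an average of two product states (using the symmetry of $B_{X_i}$ to see $\gamma_i-a,\gamma_i-b$ are states). Writing $z=\sum_k x_k\otimes y_k$ with $\sum_k\|x_k\|_{X_1}\|y_k\|_{X_2}\le 1$, putting $\mu_k\coloneqq\|x_k\|_{X_1}\|y_k\|_{X_2}$, $\hat x_k\coloneqq x_k/\|x_k\|_{X_1}$, $\hat y_k\coloneqq y_k/\|y_k\|_{X_2}$ and $\mu_0\coloneqq 1-\sum_k\mu_k\ge 0$, one gets $\gamma_1\otimes\gamma_2+z=\mu_0\,\gamma_1\otimes\gamma_2+\sum_k\mu_k\big(\gamma_1\otimes\gamma_2+\hat x_k\otimes\hat y_k\big)$, a convex combination of elements of $C_1\tmin C_2$. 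The only genuinely delicate point anywhere in the proof is that the infimum defining the projective norm need not be attained for an arbitrary decomposition, so I would first carry out this argument for $\|z\|_{X_1\otimes_\pi X_2}<1$ (rescaling a near-optimal decomposition) and then let the error tend to zero, using closedness of $C_1\tmin C_2$ (Fact~\ref{fact:tmin-is-proper}) to reach the boundary case. Alternatively, (c) and (d) can be deduced from (b) and (a) by duality, using that $(C_1\tmin C_2)^{*}=C_1^*\tmax C_2^*$ (Fact~\ref{fact:duality}), that the injective and projective tensor norms are dual \eqref{dual inj is proj}, and that the dual of a symmetric GPT is again symmetric with normed space $X_i^*$ and canonical projection $\Pi_i^*$ (Lemma~\ref{lemma:symmetric GPTs dualise}). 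Either way there is no real obstacle: the single structural fact doing work throughout is central symmetry, which is precisely what supplies the two-sided functional bounds in (b)/(d) and the availability of $\gamma_i-a$ as a legitimate state in (c).
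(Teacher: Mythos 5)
Your proposal is correct, and it splits halfway between the paper's route and a genuinely different one. Parts~(a) and~(c) are proved exactly as in the paper: decompose into products and use the reconstruction identity~\eqref{symmetric cone reconstruction} for~(a), and for~(c) the same symmetrised decomposition $\gamma_1\otimes\gamma_2+\hat x\otimes\hat y=\tfrac12\bigl((\gamma_1+\hat x)\otimes(\gamma_2+\hat y)+(\gamma_1-\hat x)\otimes(\gamma_2-\hat y)\bigr)$ together with the same reduction to $\|z\|_{X_1\otimes_\pi X_2}<1$ and closedness of $C_1\tmin C_2$. Where you diverge is~(b) and~(d): the paper obtains these by duality, applying~(c) to the dual GPTs $(V_i^*,C_i^*,\gamma_i)$ (Lemma~\ref{lemma:symmetric GPTs dualise}) and invoking $(C_1^*\tmin C_2^*)^*=C_1\tmax C_2$ together with the duality of the injective and projective norms~\eqref{dual inj is proj}--\eqref{dual proj is inj}, whereas you argue directly: for~(b) you test $\omega$ against the two product effects $(u_1\pm\bar f)\otimes(u_2\mp\bar g)$, whose positivity follows from $\|f\|_{X_1^*},\|g\|_{X_2^*}\leq 1$ and the central symmetry of the state spaces, and add the resulting inequalities so the cross terms cancel; for~(d) you decompose an arbitrary product functional as $p=\alpha u_1+f$, $q=\beta u_2+g$ with $\|f|_{X_1}\|_{X_1^*}\leq\alpha$, $\|g|_{X_2}\|_{X_2^*}\leq\beta$, and bound $|(f\otimes g)(z)|\leq\alpha\beta\|z\|_{X_1\otimes_\varepsilon X_2}$. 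Both arguments are sound (I checked the positivity of $u_i\pm\bar f$, the identity $(f\otimes g)\circ(\Pi_1\otimes\Pi_2)=\bar f\otimes\bar g$, and the degenerate cases $\alpha\beta=0$). Your version is more self-contained, needing neither the dual-GPT lemma nor the tensor-norm duality facts; the paper's version is shorter once that machinery is available and makes the (a)$\leftrightarrow$(d), (c)$\leftrightarrow$(b) duality pattern explicit — a symmetry you also note in your closing remark on the alternative derivation of~(c) and~(d).
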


\begin{proof}
We start with claim~(a). Let $\omega\in C_1\tmin C_2$ be decomposed as $\omega = \sum_j v_j\otimes w_j$, where $v_j\in C_1$ and $w_j\in C_2$. By~\eqref{symmetric cone reconstruction}, we have that $\left\|\Pi_1(v_j)\right\|_{X_1}\leq u_1(v_j)$ and $\left\|\Pi_2(w_j)\right\|_{X_2}\leq u_2(w_j)$ for all $j$. Using~\eqref{proj}, we deduce that
\begin{align*}
\left\|(\Pi_1\otimes \Pi_2)(\omega) \right\|_{X_1 \otimes_\pi X_2} &= \left\|\sumno_j \Pi_1(v_j)\otimes \Pi_2(w_j) \right\|_{X_1 \otimes_\pi X_2} \\
&\leq \sum_j  \left\|\Pi_1(v_j)\right\|_{X_1} \left\|\Pi_2(w_j)\right\|_{X_2} \\
&\leq \sum_j u_1(v_j)\, u_2(w_j) \\
&= \left( u_1\otimes u_2\right) (\omega)\, .
\end{align*}
We now move on to~(c). By closedness of $C_1\tmin C_2$, we can assume without loss of generality that $\|z\|_{X_1\otimes_\pi X_2} < 1$ holds with strict inequality. By~\eqref{proj}, there exists a decomposition $z=\sum_i x_i \otimes y_i$ such that $\sum_i \|x_i\|_{X_1} \|y_i\|_{X_2} < 1$. We now write the following explicitly separable decomposition for $\gamma_1\otimes \gamma_2 + z\in C_1\tmin C_2$:
\begin{equation*}
\begin{aligned}
    \gamma_1\otimes \gamma_2 + z &= \left(1-\sumno_i \|x_i\|_{X_1}\|y_i\|_{X_2}\right) \gamma_1\otimes \gamma_2 \\
    &\quad + \sum_i \frac{\|x_i\|_{X_1}\|y_i\|_{X_2}}{2} \bigg(\left(\gamma_1 + \frac{x_i}{\|x_i\|_{X_1}} \right)\! \otimes\! \left(\gamma_2 + \frac{y_i}{\|y_i\|_{X_2}} \right) \\
    &\qquad\qquad + \left(\gamma_1 - \frac{x_i}{\|x_i\|_{X_1}} \right)\!\otimes\! \left(\gamma_2 - \frac{y_i}{\|y_i\|_{X_2}} \right)\bigg)\, .
\end{aligned}
\end{equation*}
This proves claim~(c).

Claims~(b) and~(b) follow by duality. We start with~(b). Since $(V_i^*,C_i^*,\gamma_i)$ are symmetric GPTs, applying~(c) to them shows that for all $h\in X_1^*\tmin X_2^*$ with $\left\|h\right\|_{X_1^*\otimes_\pi X_2^*} \leq 1$ it holds that $u_1\otimes u_2 - h \in C_1^* \tmin C_2^*$. Hence, all $\omega\in C_1\tmax C_2 = \left(C_1^*\tmin C_2^*\right)^*$ are such that
\begin{equation*}
    0\leq (u_1\otimes u_2 - h)(\omega) = (u_1\otimes u_2)(\omega) - h(\omega) = (u_1\otimes u_2)(\omega) + h \left( (\Pi_1\otimes \Pi_2)(\omega)\right) .
\end{equation*}
Optimising over all $h\in B_{X_1^*\otimes_\pi X_2^*}$ and applying~\eqref{dual-norm} as well as the duality formula~\eqref{dual proj is inj} yields the inequality in~(b).

Thanks to~\eqref{dual min is max}, to prove~(d) it suffices to verify that $\xi(\gamma_1\otimes \gamma_2 + z)\geq 0$ for all $\xi\in C_1^*\tmin C_2^*$. Indeed,
\begin{align*}
    \xi(\gamma_1\otimes \gamma_2 + z) &= \xi(\gamma_1\otimes \gamma_2) + \xi\left((\Pi_1\otimes \Pi_2)(z)\right) \\
    &= \xi(\gamma_1\otimes \gamma_2) + \left( (\Pi_1^*\otimes \Pi_2^*)(\xi)\right) (z) \\
    &\textgeq{(i)} \xi(\gamma_1\otimes \gamma_2) - \left\|(\Pi_1^*\otimes \Pi_2^*)(\xi) \right\|_{(X_1\otimes_\varepsilon X_2)^*} \|z\|_{X_1\otimes_\varepsilon X_2} \\
    &\texteq{(ii)} \xi(\gamma_1\otimes \gamma_2) - \left\|(\Pi_1^*\otimes \Pi_2^*)(\xi) \right\|_{X_1^*\otimes_\pi X_2^*} \|z\|_{X_1\otimes_\varepsilon X_2} \\
    &\textgeq{(iii)} \xi(\gamma_1\otimes \gamma_2) - \left\|(\Pi_1^*\otimes \Pi_2^*)(\xi) \right\|_{X_1^*\otimes_\pi X_2^*} \\
    &\textgeq{(iv)} 0 \, ,
\end{align*}
where (i) is an application of~\eqref{dual-norm} to the normed space $X_1\otimes_\varepsilon X_2$, (ii) descends from~\eqref{dual inj is proj}, (iii) holds by hypothesis, and finally (iv) is just the inequality in~(a) stated at the dual level. 
\end{proof}

The above Lemma~\ref{lemma:norms-cones} gives us a natural way to construct candidate entangled tensors when the local theories are symmetric. For $z\in X_1\otimes X_2$ with $\|z\|_{X_1 \otimes_\varepsilon X_2}\leq 1$, the state
\begin{equation} \label{omega-z}
\omega(z) \coloneqq \gamma_1\otimes \gamma_2 + z
\end{equation}
satisfies $\omega(z)\in C_1\tmax C_2$.

\subsection{Proof of Theorem~\ref{symmetric result}}

We start by proving Lemma~\ref{lemma:ent-rob} as stated in the main text. This allows us to connect the entanglement robustness of states of the form~\eqref{omega-z} with the projective/injective tensor norm ratio of the parent tensor.

\begin{manuallemma}{\ref{lemma:ent-rob}}
Let $(V_1,C_1,u_1),\, (V_2,C_2,u_2)$ be two symmetric GPTs. Call $\gamma_1,\gamma_2$ the centres of the state spaces, and $X_1,X_2$ the associated normed spaces. For $z \in X_1\otimes X_2$, consider the normalised state $\omega(z)\coloneqq \gamma_1\otimes \gamma_2 + z$. Whenever $z$ satisfies $\|z\|_{X_1\otimes_\varepsilon X_2}\leq 1$, it holds that $\omega(z)\in C_1\tmax C_2$. In this case,
\begin{equation} \label{ent rob estimate}
\erob \left(\omega(z)\right) \geq \frac{\|z\|_{X_1\otimes_\pi X_2}-1}{2}\, .
\end{equation}
\end{manuallemma}

\begin{proof}
The first claim is just Lemma~\ref{lemma:norms-cones}(d). We now focus on the second. Let $\zeta \in V_1\otimes V_2$ be such that $\zeta,\, \omega(z) + \zeta \in C_1\tmin C_2$. Then,
\begin{align*}
    \|z\|_{X_1\otimes_\pi X_2} - (u_1\otimes u_2)(\zeta) &= \|(\Pi_1\otimes \Pi_2)(\omega(z))\|_{X_1\otimes_\pi X_2} - (u_1\otimes u_2)(\zeta) \\
    &\textleq{(i)} \|(\Pi_1\otimes \Pi_2)(\omega(z))\|_{X_1\otimes_\pi X_2} - \|(\Pi_1\otimes \Pi_2)(\zeta)\|_{X_1 \otimes_\pi X_2} \\
    &\textleq{(ii)} \|(\Pi_1\otimes \Pi_2)(\omega(z) + \zeta)\|_{X_1\otimes_\pi X_2} \\
    &\textleq{(iii)} (u_1\otimes u_1)\left( \omega(z) + \zeta \right) \\
    &\texteq{(iv)} 1+(u_1\otimes u_2)(\zeta)\, .
\end{align*}
The above steps are justified as follows: (i) follows from Lemma~\ref{lemma:norms-cones}(a) applied to $\zeta\in C_1\tmin C_2$; (ii) is just an application of the triangle inequality; (iii) is again Lemma~\ref{lemma:norms-cones}(a), this time applied to $\omega(z)+\zeta \in C_1\tmin C_2$; and (iv) descends from the easily verified fact that $\omega(z)$ as defined in~\eqref{omega-z} is normalised.

The above reasoning implies that $2(u_1\otimes u_2)(\zeta)\geq \|z\|_{X_1\otimes_\pi X_2} - 1$. Taking the infimum over all $\zeta$ and using the definition Eq.~\eqref{ent-rob} yields precisely Eq.~\eqref{ent rob estimate}. This concludes the proof.
\end{proof}

We are finally ready to prove Theorem~\ref{symmetric result}.

\begin{proof}[Proof of Theorem~\ref{symmetric result}]
Consider a pair of symmetric GPTs of dimensions $n+1,\,m+1\geq 3$. Call $X_1,\, X_2$ the normed spaces associated with them via~\eqref{X-space} and~\eqref{NX}. Since $\dim X_1=n$ and $\dim X_2=m$, applying Definition~\ref{def pi/epsilon ratio} we see that $\rho(X_1,X_2)\geq r(n,m)$. Now, remember from Fact~\ref{fact:inj<proj} that $\rho(X_1,X_2)$ is the \emph{smallest} constant such that $\|z\|_{X_1\otimes_\pi X_2}\leq \rho(X_1,X_2)\|z\|_{X_1\otimes_\varepsilon X_2}$ holds for all $z\in X_1\otimes X_2$. Hence, by compactness, there must exist a tensor $z_0$ that satisfies this inequality with equality.
%\tcr{Guillaume: This requires at least to say "by compactness". By sobriety I would avoid using compactness when not needed, but if you feel the proof if simpler when knowing that the supremum is achieved I'm fine with it.}
Up to a multiplicative constant, we can assume without loss of generality that $\|z_0\|_{X_1\otimes_\varepsilon X_2}=1$ and hence that $\|z\|_{X_1\otimes_\pi X_2} = \rho(X_1,X_2)$. Then, the estimate in~\eqref{ent rob estimate} ensures that
\begin{equation}
    \erob(\omega(z_0)) \geq \frac{\|z_0\|_{X_1\otimes_\pi X_2} - 1}{2} \geq \frac{r(n,m)-1}{2}\, .
\end{equation}
The claims then follow from Facts~\ref{fact:19/18} and~\ref{fact:r(n,m)}, in turn derived from~\cite{XOR}.
\end{proof}

\label{section:symmetric}

\section{More about retracts} \label{section:retracts}

%Remarkably, $2$-dimensional sections are always retracts (we include the statement although it is not needed for our present purposes).

%\begin{proposition}
%Let $\C \subset V$ a proper cone, and $E$ is a $2$-dimensional subspace which intersects the interior of $\C$. Then $\C \cap E$ is a retract of $\C$.
%\end{proposition}

\begin{proof}[Proof of Proposition~\ref{proposition:2-dimensional-retract}]
We show that there exists a projection $P :V \to E$ such that $P(\C) = \C \cap E$. Let $x$ and $y$ be generators of the $2$ extremal rays of the $2$-dimensional cone $\C \cap E$.
Denote by $T_x$ and $T_y$ tangent hyperplanes to $\C$ at $x$ and $y$. We define $P$ by $\ker (P) = T_x \cap T_y$. Consider any element $z \in \C$ ; a (projective) geometric argument in the affine plane generated by $x$, $y$ and $z$ (see Figure~\ref{2-dim-retract}) shows that $P(z) \in \C \cap E$.
\end{proof}

\begin{figure}[htbp] \begin{center}
\begin{tikzpicture}[scale=1]
\draw (0,0) ..controls +(-0.1,1) and +(-3,1).. (2.5,2); % arc 1
\draw (2.5,2) ..controls +(3,-1) and +(2,1).. (3.5,-1); % arc 2
\draw (3.5,-1) ..controls +(-2,-1) and +(0.1,-1).. (0,0); % arc 2
\draw (2.5,2) node {$\bullet$};
\draw (2.5,2) node [above right] {$x$};
\draw (-0.5,3) -- (7.5,1/3) ;
\draw (3.5,-1) node {$\bullet$};
\draw (3.5,-1) node [below right] {$y$};
\draw (1.5,-2) -- (7.5,1) ;
%\draw (6.7,0.6) node {$\bullet$};  
\draw (3.7,0) node {$\bullet$};
\draw (3.7,0) node [above] {$z$};
\draw[thick] (2,3.5) -- (4,-2.5) ;
\draw[dashed] (7.7,0.8) -- (-0.3,-0.8) ;
\draw (-0.5,3) node [above] {$T_x$} ;
\draw (1.5,-2) node [left] {$T_y$} ;
\draw (1,1) node {$\C$} ;
\draw (2,3.5) node [left] {$E$} ;
\end{tikzpicture}
\end{center}
\caption{The image of $z$ under $P$ is a positive multiple of the intersection between $E$ and the (dashed) line through $z$ and $T_x \cap T_y$, and therefore belongs to $\C$.}
\label{2-dim-retract}
\end{figure}
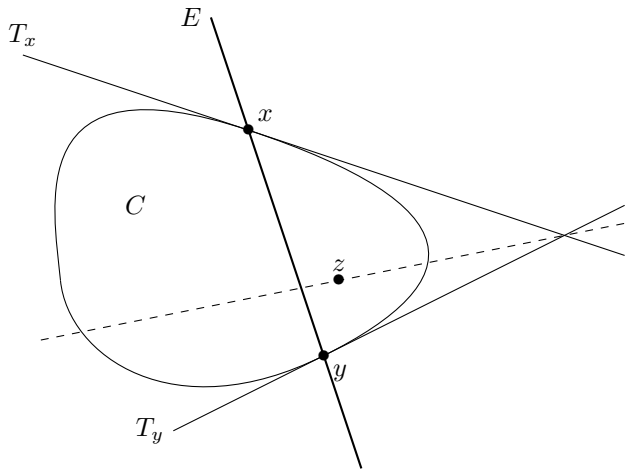

One could be tempted to conjecture that every cone has some nontrivial retract. However, this is remarkably not the case, as we show now. Our argument relies heavily on a notable result by Zamfirescu~\cite{Zamfirescu1991}. Before we delve into the details, we need to introduce some notation. Let $\mathcal{K}_n$ denote the family of all convex bodies (compact convex sets with nonempty interior) in $\R^n$. We can endow $\mathcal{K}_n$ with the \emph{Hausdorff distance}, defined as
\begin{equation}
    d_H(X,Y)\coloneqq \inf\left\{\epsilon>0:\ X\subseteq Y_\epsilon,\ Y\subseteq X_\epsilon \right\}
\end{equation}
for all $X,Y\in\mathcal{K}_n$, where for a convex body $K\in \mathcal{K}_n$ we denoted by $K_\epsilon\coloneqq \left\{ x\in \R^n:\ d(x,K)\leq \epsilon \right\}$ its `$\epsilon$-fattening'. Here, $d(x,K)$ quantifies the distance of $x$ from $K$ as measured by the Euclidean norm. It is well known that $\mathcal{K}_n$ equipped with $d_H$ becomes a complete metric space~\cite[Thm.~1.8.2,~1.8.5]{SCHNEIDER} and hence a Baire space. We remind the reader a property is said to be obeyed by most elements in a Baire space if those that violate it form a \emph{meagre set}, i.e.\ a countable union of sets whose closure has empty interior.

For $K\in\mathcal{K}_n$ and $x\notin K$, the \emph{shadow boundary} of $K$ with respect to $x$ is defined as
\begin{equation}
    \partial(K,x) \coloneqq \left\{ y\in K:\ \aff \{x,y\}\cap\inter(K)=\emptyset \right\} ,
\end{equation}
where $\aff$ denotes the affine hull, and hence $\aff\{x,y\}$ is nothing but the straight line through $x$ and $y$. It is easily verified that the above definition can be straightforwardly extended to the more general case where $x\in \P^n\setminus K$, where $\P^n$ denotes the real projective space of dimension $n$.\footnote{A useful way to think about $\P^n$ is as follows. Consider a hyperplane $H_n$ of dimension $n$ in an $(n+1)$-dimensional space. Pick a point $p\notin H_n$. While $\R^n$ can be identified with $H_n$, the projective space $\P^n$ can be thought of as the set of straight lines through $p$. The natural embedding $\R^n\subset \P^n$ can be obtained by noticing that those lines through $p$ that are not parallel to $H_n$ identify a unique point on it.}
The result by Zamfirescu~\cite[Thm.~1]{Zamfirescu1991} asserts that for most convex bodies in $\mathcal{K}_n$ it holds that
\begin{equation}
    \dim \aff \partial(K,x) = n\qquad \forall\ x\in\P^n\setminus K\, .
    \label{Zamfirescu eq}
\end{equation}

The following lemma shows that having retracts is an exceptional property. In particular, this shows that the conclusion of Proposition~\ref{proposition:baire-no-retract} is true for most convex cones.

\begin{lemma} \label{Zamfirescu lemma}
For all $n\geq 4$, most convex bodies $K\in\mathcal{K}_{n-1}$ are such that the cone $\CC(K)$ constructed via~\eqref{eq:cone-over-K} has no $(n-1)$-dimensional retracts.
\end{lemma}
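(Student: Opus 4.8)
The plan is to show that if $\mathscr{C}(K)$ admits an $(n-1)$-dimensional retract, then the shadow boundary $\partial(K,x)$ fails to be full-dimensional for at least one $x \in \mathbf{P}^{n-1}\setminus K$, so that $K$ violates~\eqref{Zamfirescu eq}; since most $K \in \mathcal{K}_{n-1}$ satisfy~\eqref{Zamfirescu eq} by Zamfirescu's theorem, most $K$ have no $(n-1)$-dimensional retract. First I would set up notation: write $C \coloneqq \mathscr{C}(K) \subset \R^{n-1}\times\R = \R^n$ (so $\dim C = n$), and suppose for contradiction that $C$ has an $(n-1)$-dimensional retract $C'$. By Fact~\ref{fact:retract-as-section} we may assume $C' = C \cap E$ for some hyperplane $E \subset \R^n$ through the origin, and that there is a projection $P : \R^n \to E$ with $P(C) = C \cap E$. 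Set $N \coloneqq \ker P$, a $1$-dimensional subspace, i.e.\ a line through the origin.

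The key geometric observation I would extract is this: since $P$ is linear and $P(C) = C \cap E \subsetneq C$, the line $N = \ker P$ cannot meet the interior of $C$. Indeed, if $N$ met $\inter(C)$, then picking $v \in N \cap \inter(C)$, for any $w \in C$ we would have $w - t v \in C$ for small $t>0$ wait — more carefully: $P(w) = P(w - tv)$ for all $t$, and for $t$ slightly negative $w - tv$ could leave $C$, but the right statement is that $P(w) \in C$ for all $w \in C$ forces, by considering $w \in \inter C$ and the whole line $w + N$, that $P$ maps a full affine line's worth of points of $C$ to a single point, and some of those points are interior; since $P(w) = w + (\text{something in }N)$, if $N \cap \inter(C) \neq \{0\}$ one derives that $P(C)$ still contains interior points of $C$, contradicting $\dim C' = n-1$. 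So $N \cap \inter(C) = \{0\}$, and since $C = \mathscr{C}(K)$ has the origin on its boundary, $N$ is a line through $0$ missing the interior of the cone over $K$. Translating back to the base $K$ (living in the affine hyperplane $\R^{n-1}\times\{1\}$): the line $N$ corresponds to a point $x \in \mathbf{P}^{n-1}\setminus \inter(K)$, and in fact $x \notin K$ (if $N$ met $C$ in a ray other than the origin touching $\relint(K)$ it would enter $\inter C$; the boundary cases where $N$ is tangent are handled by the same argument or by perturbing).

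Next I would show that the retraction $P$ forces the shadow boundary $\partial(K,x)$ to be contained in the affine image $E \cap (\R^{n-1}\times\{1\})$, which is an affine subspace of dimension at most $n-2$. The idea: for $y \in K$, whether $P$ sends the ray $\R_+(y,1)$ into $C\cap E$ is governed by the line $(y,1) + N$; the points $y$ of $K$ whose ray lies in a supporting hyperplane of $C$ containing $N$ are precisely the shadow boundary $\partial(K,x)$, and these must be carried by $P$ into the boundary section $C \cap E$, i.e.\ their affine hull lies in $E$. Hence $\dim \aff \partial(K,x) \leq \dim(E \cap (\R^{n-1}\times\{1\})) \leq n-2 < n-1$, contradicting~\eqref{Zamfirescu eq}. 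Therefore no such retract exists for any $K$ satisfying Zamfirescu's condition. Since $n-1 \geq 3$, Zamfirescu's theorem~\cite[Thm.~1]{Zamfirescu1991} applies and~\eqref{Zamfirescu eq} holds for most $K \in \mathcal{K}_{n-1}$; combined with the fact that $\mathcal{K}_{n-1}$ with the Hausdorff metric is a Baire space, this proves the lemma. (Proposition~\ref{proposition:baire-no-retract} is then the case $n=4$.)

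The main obstacle I anticipate is making the middle step fully rigorous: precisely relating the algebraic condition "$P(C) = C \cap E$ with $\ker P = N$" to the geometric notion of shadow boundary of the base $K$ with respect to the projective point $x$ determined by $N$. One must correctly identify which boundary points of $C$ are fixed (up to scaling) by $P$ versus pushed into the relative interior of the section, handle the subtlety that $x$ could a priori lie on $\partial K$ rather than strictly outside $K$ (ruled out because a retract must have the \emph{same} dimension $n-1$, not smaller, forcing $N$ to genuinely miss $K$), and confirm that "$\aff \partial(K,x)$ lies in a hyperplane of $\R^{n-1}$" is exactly the negation of~\eqref{Zamfirescu eq}. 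Everything else — Fact~\ref{fact:retract-as-section}, the Baire category framework, the citation to Zamfirescu — is bookkeeping.
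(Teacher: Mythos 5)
Your overall route is the same as the paper's: take the projection $P$ realising the retract (Fact~\ref{fact:retract-as-section}), let $x\in\P^{n-1}$ be the point cut out by $\ker P$ on the base hyperplane, show $x\notin K$ and $\partial(K,x)\subseteq E\cap H$, and contradict Eq.~\eqref{Zamfirescu eq}. However, there is a genuine gap at precisely the two steps you flag as ``obstacles'', and it is the same missing ingredient in both: the paper does \emph{not} work with Zamfirescu's condition alone, but intersects it with a second Baire-generic property, namely strict convexity of $K$ (most convex bodies are strictly convex, \cite[Thm.~2.6.1]{SCHNEIDER}), and strict convexity is used essentially twice. First, your claim that $x\notin K$ is ``ruled out because a retract must have the same dimension $n-1$'' is not valid: for the cone over a simplex, the kernel of a perfectly good rank-$(n-1)$ retraction can be an extreme ray of $\CC(K)$, so $x$ is a vertex of $K$; nothing about the rank of $P$ alone keeps $\ker P$ off $\partial\,\CC(K)$. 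The paper excludes $x\in\partial K$ by showing that otherwise a nontrivial segment $[x,y]$ with $y\in\partial(K\cap W)$ would lie entirely in $\partial K$, which is absurd only because $K$ is strictly convex. ``Perturbing'' does not help either, since the body $K$ is fixed and you cannot perturb $P$ while keeping $P(\CC(K))=\CC(K)\cap E$.

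Second, the containment $\partial(K,x)\subseteq W$ also needs strict convexity. For $y\in\partial(K,x)\setminus E$ one writes $y-Py=\mu x$, $\mu\neq 0$, and the convex-combination computation only yields a nontrivial segment of $K$, with endpoint $y$, lying on the line $\aff\{x,y\}$. Without strict convexity this segment may simply sit inside $\partial K$, which is perfectly compatible with $y$ belonging to the shadow boundary, so no contradiction arises; it is strict convexity that forces the segment to meet $\inter(K)$ and hence contradicts $y\in\partial(K,x)$. (A smaller, fixable slip: your argument that $\ker P$ misses $\inter\,\CC(K)$ should conclude via salience of $\CC(K)\cap E$ --- if $\ker P$ met the interior, $P(\CC(K))$ would contain a relative neighbourhood of $0$ in $E$ --- rather than via ``$P(\CC(K))$ contains interior points of $C$'', which is not by itself incompatible with $\dim C'=n-1$.) So as written the proof does not close; it is repaired by adding strict convexity to the genericity statement and using it in both of the steps above, which is exactly what the paper does.
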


\begin{proof}
Most convex bodies obey~\eqref{Zamfirescu eq} by Zamfirescu's theorem. Moreover, it is known that most convex bodies are strictly convex~\cite[Thm.~2.6.1]{SCHNEIDER}, meaning that their boundary contains no nontrivial segment. Hence, most convex body are \emph{simultaneously} strictly convex \emph{and} obey~\eqref{Zamfirescu eq}. We proceed to show that any such convex body $K\in\mathcal{K}_{n-1}$ defines via~\eqref{eq:cone-over-K} an $n$-dimensional cone $\CC(K)$ that has no $(n-1)$-dimensional retracts. By Fact~\ref{fact:retract-as-section}, is it enough to show that there is no projection $P:\R^n \to V$ of rank $n-1$ satisfying $P(\CC(K))=\CC(K) \cap V$.

Consider $K$ as embedded in the affine subspace $H$ defined by setting the last coordinate of $\R^n$ to $1$. Assume that $\CC(K)$ admits a retract to an $(n-1)$-dimensional subspace $V$, and set $W\coloneqq V\cap H$. Call $P:\R^{n}\to V$ a projection that satisfies $P(\CC(K)) = \CC(K)\cap V$. Clearly, $\dim \ker P=1$, i.e.\ $\ker P$ is a straight line. Consider the point $x\in \P^{n-1}$ such that $\{x\} = H\cap \ker P$ (observe that $x$ can be at infinity, when $\ker P$ is parallel to $H$). Then we claim that: (i) $x\notin K$; and (ii) $\partial(K,x)\subseteq W$. This contradicts the assumption that $\dim \aff \partial (K,x)=n-1$.

To prove (i), assume that $x\in K = \inter (K) \cup \partial K$. If $x\in \inter(K)$, then it must be that $P\equiv 0$, which is naturally absurd. In fact, if $Py\neq 0$ for some $y\in\R^n$, using the fact that $ty+x\in \CC(K)$ for all $t\in (-\epsilon,\epsilon)$ (where $\epsilon>0$), we would obtain that $P(ty+x) = t Py \in P(\CC(K))=\CC(K)\cap V$ for $t$ in a neighbourhood of $0$, which is in contradiction with $\CC(K)$ being proper. If $x\in \partial K$, take $y\in \partial (K\cap W) \subseteq \partial K \cap W$. Since for all $\lambda\in [0,1]$ one has that $P\left(\lambda x+(1-\lambda)y\right) = (1-\lambda) y\in \partial \CC(K)$, and since $P(\CC(K))\subseteq \CC(K)$, we deduce that $\lambda x+(1-\lambda)y\in \partial \CC(K)\cap H = \partial K$ for all $\lambda\in [0,1]$; this contradicts the assumption that $K$ is strictly convex.

We now move on to (ii). Since $x\notin K$, the shadow boundary $\partial(K,x)$ is nonempty. Assume by contradiction that there is $y\in \partial(K,x)\setminus W$. This means that $y-Py = \mu x$ for some real $\mu\neq 0$. For $\lambda\in\R$, consider the point
\begin{equation*}
    z_\lambda \coloneqq \lambda x + (1-\lambda) y = \left( 1 - \lambda + \frac{\lambda}{\mu} \right) y - \frac{\lambda}{\mu} Py\, .
\end{equation*}
It is not difficult to check that since $\mu\neq 0$ one can satisfy both $1 - \lambda + \lambda/\mu \geq 0$ and $\lambda/\mu\leq 0$ for all $\lambda$ in a nontrivial left- or right-neighbourhood of $0$. For such values of $\lambda$ one obtains that $z_\lambda\in \CC(K)$, and since $z_\lambda\in H$ by construction it holds in fact that $z_\lambda\in K$. We have shown that there is a nontrivial segment in the straight line $\aff\{x,y\}$ (one of whose extremes is $y$) that is entirely contained in $K$. Since $K$ is strictly convex, it cannot be that this segment is entirely contained in the boundary $\partial K$. Hence, it must intersect the interior $\inter (K)$. This contradicts the assumption that $y\in\partial (K,x)$, and concludes the proof.
\end{proof}

%\bibliography{cones}

\end{document}